\newif\ifshort
\newif\iflong
\LetLtxMacro{\oldmathpar}{\mathpar}
\LetLtxMacro{\oldendmathpar}{\endmathpar}
\LetLtxMacro{\oldenumerate}{\enumerate}
\LetLtxMacro{\oldendenumerate}{\endenumerate}
\LetLtxMacro{\olddisplaymath}{\displaymath}
\LetLtxMacro{\oldenddisplaymath}{\enddisplaymath}
\LetLtxMacro{\oldalign}{\align}
\LetLtxMacro{\oldendalign}{\endalign}
\LetLtxMacro{\oldfigure}{\figure}
\LetLtxMacro{\oldendfigure}{\endfigure}
\LetLtxMacro{\oldcaption}{\caption}
\renewenvironment{mathpar}{
\begin{nop}\small\oldmathpar}{
\oldendmathpar\end{nop}\ignorespacesafterend}
\renewenvironment{figure}
{\oldfigure}
{\vspace{-2ex}\oldendfigure}
\renewcommand{\caption}[1]{\vspace{-\baselineskip}\oldcaption{#1}}
\LetLtxMacro{\oldtfonttext}{\tfonttext}
\LetLtxMacro{\oldsfonttext}{\sfonttext}
\LetLtxMacro{\oldbefonttext}{\befonttext}
\renewcommand{\tfonttext}[1]{\oldtfonttext{\small #1}}
\renewcommand{\sfonttext}[1]{\oldsfonttext{\small #1}}
\newcommand{\obseqv}{\approx^{\text{obs}}}
\newcommand{\obsapprox}{\sqsubseteq^{\text{obs}}}
\newcommand{\by}[2]{(\text{#1, \ref{#2}})}
\newcommand{\tlangname}{\lambda_{T,\mho}}
\newcommand{\glangname}{\lambda_{G}}
\newcommand{\dynr}{\sqsubseteq}
\newcommand{\ltdyn}{\sqsubseteq}
\newcommand{\obsltdyn}{\mathrel{\ltdyn}^{\text{obs}}}
\newcommand{\logltdyn}{\mathrel{\ltdyn}}
\newcommand{\ptmapprox}[6]{{{#3}\mathrel{#1}#4\mathrel{#2}#5:#6}}
\newcommand{\tmlogapprox}[4]{\ptmapprox{\vDash}{\logltdyn}{#1}{#2}{#3}{#4}}
\newcommand{\ptmdyn}[8]{\ptmapprox{#1}{#2}{{#3}\ltdyn{#4}}{#5}{#6}{{#7}\ltdyn{#8}}}
\newcommand{\tmdynr}[6]{\ptmdyn{\vdash}{\ltdyn}{#1}{#2}{#3}{#4}{#5}{#6}}
\newcommand{\semtmdynr}[6]{\ptmdyn{\vDash}{\ltdyn}{#1}{#2}{#3}{#4}{#5}{#6}}
\newcommand{\obstmdynr}[6]{\ptmdyn{\vDash}{\obsltdyn}{#1}{#2}{#3}{#4}{#5}{#6}}
\newcommand{\logtmdynr}[6]{\ptmdyn{\vDash}{\logltdyn}{#1}{#2}{#3}{#4}{#5}{#6}}
\newcommand{\semhomoltdyn}[4]{#1\mathrel{\vDash}#2\ltdyn#3:#4}
\newcommand{\isim}[1]{\mathrel{\sim^{#1}}}
\newcommand{\iprec}[1]{\mathrel{\prec^{#1}}}
\newcommand{\pletemb}[2]{#1{\sem{\senvtwo}}
                {\mE_{e,\senvone,\senvtwo}\hw{\sem{\senvone}}}
                {#2}}
\newcommand{\letemb}[1]{\pletemb{\mtletvert}{#1}}
\newcommand{\letembnovert}[1]{\pletemb{\mtlet}{#1}}
\newcommand{\pletprj}[2]{#1{\sem{\senvone}}
                {\mE_{p,\senvone,\senvtwo}\hw{\sem{\senvtwo}}}
                {#2}}
\newcommand{\letprj}[1]{\pletprj{\mtletvert}{#1}}
\newcommand{\letprjnovert}[1]{\pletprj{\mtlet}{#1}}
\newcommand{\theemb}[2]{\mE_{e,#1,#2}\hw}
\newcommand{\theprj}[2]{\mE_{p,#1,#2}\hw}
\newcommand{\gtdyn}{\sqsupseteq}
\newcommand{\equidyn}{\mathrel{\gtdyn\ltdyn}}
\newcommand{\ltdynlt}{\mathrel{\ltdyn\prec}}
\newcommand{\ltdyngt}{\mathrel{\ltdyn\succ}}
\newcommand{\ltdynsim}{\mathrel{\ltdyn\sim}}
\newcommand{\paramlrdynr}[4]{\mathrel{{#1}^{#3}_{#2,#4}}}
\newcommand{\ltdynr}{\mathrel{\dynr\prec}}
\newcommand{\gtdynr}{\mathrel{\dynr\succ}}
\newcommand{\simdynr}{\mathrel{\dynr\sim}}
\newcommand{\tltdynr}{\paramlrdynr{\ltdynr}{t}}
\newcommand{\vltdynr}{\paramlrdynr{\ltdynr}{v}}
\newcommand{\tgtdynr}{\paramlrdynr{\gtdynr}{t}}
\newcommand{\vgtdynr}{\paramlrdynr{\gtdynr}{v}}
\newcommand{\vsimdynr}{\paramlrdynr{\simdynr}{v}}
\newcommand{\tsimdynr}{\paramlrdynr{\simdynr}{t}}
\newcommand{\dotltdynr}{\paramlrdynr{\ltdynr}{\cdot}{\cdot}{\cdot}}
\newcommand{\dotgtdynr}{\paramlrdynr{\gtdynr}{\cdot}{\cdot}{\cdot}}
\newcommand{\stepsin}[1]{\mathrel{\mapsto^{#1}}}
\newcommand{\obcast}[2]{\sfontsym{\langle{#2}}\sbin{\Leftarrow}{#1}\sfontsym{\rangle}}
\newcommand{\sbin}[1]{\mathbin{\sfontsym{#1}}}
\newcommand{\serr}{\sfontsym{\mho}}
\newcommand{\parened}[1]{({#1})}
\newcommand{\errord}{\mathrel{\sqsubseteq_{\text{err}}}}
\newcommand{\errordof}[1]{\mathrel{\sqsubseteq_{#1}}}
\newcommand{\sdynty}{\sfont{\dyn}}
\newcommand{\sjudgtc}[3]{{#1}\vdash{}{#2}:{#3}}
\newcommand{\sjudgtyprec}[2]{{#1}\preciser{}{#2}}
\newcommand{\sempenv}{\sfont{\cdot}}
\newcommand{\senvext}[3]{{#1},{#2}:{#3}}
\newcommand{\senv}{\sfont{\Gamma}}
\newcommand{\mvinpr}[2][1]{\mvmetavar{#2}{}{#1}}
\newcommand{\sparened}[1]{\sfontsym{(}{#1}\sfontsym{)}}
\newcommand{\snatty}{\sfont{\mathbb{N}}}
\newcommand{\srecordty}[1]{\sfontsym{\{}{#1}\sfontsym{\}}}
\newcommand{\shw}[1]{{\sfontsym{\lbrack{}}{#1}\sfontsym{\rbrack{}}}}
\newcommand{\shole}{\shw{\sfontsym{\cdot}}}
\newcommand{\mkvar}{x}
\newcommand{\mkvaralt}{y}
\newcommand{\mkvaraltalt}{z}
\newcommand{\mkterm}{t}
\newcommand{\mktermalt}{s}
\newcommand{\mkty}{A}
\newcommand{\mktyalt}{B}
\newcommand{\mktyaltalt}{C}
\newcommand{\mkcon}{c}
\newcommand{\mkval}{v}
\newcommand{\mktpder}{d}
\newlanguage{\scolor}{\sfont}{\sfontsym}{s}
{}
{val/\mkval,v/\mkval,var/\mkvar,varalt/\mkvaralt,varaltalt/\mkvaraltalt,con/\mkcon,term/\mkterm,termalt/\mktermalt,tagty/G,ty/\mkty,A/A,B/B,C/C,tyalt/\mktyalt,tyaltalt/\mktyaltalt,env/\Gamma,ctxt/C,fld/\eta,fldset/H,alpha/\alpha,tpder/\mktpder,row/\rho,eta/\eta,t/t,ectxt/E,E/E,x/x,s/s,y/y,gamma/\gamma}
{fun,pair,unit,void,sum,mu}
{}
\newcommand{\sto}{\mathbin{\sfontsym{\to}}}
\newcommand{\mto}{\mathbin{\mfontsym{\to}}}
\newcommand{\mplus}{\mathbin{\mfontsym{+}}}
\newcommand{\splus}{\mathbin{\sfontsym{+}}}
\newcommand{\mtimes}{\mathbin{\mfontsym{\times}}}
\newcommand{\stimes}{\mathbin{\sfontsym{\times}}}
\newexprs{\sfontsym}{\sfont}{st}{}
{roll,unroll,ufix,voidcase,ufun,fun,app,let,case,unit,pair,matchpair,matchpairvert,inj,injpr,letvert}
\newcommand{\mcolor}[1]{\tcolor{#1}}
\newcommand{\mfont}[1]{\tfont{#1}}
\newcommand{\mfontsym}[1]{\tfontsym{#1}}
\newcommand{\mkmty}{A}
\newcommand{\mkmtyalt}{B}
\newcommand{\mkrel}{R}
\newcommand{\mkrelalt}{Q}
\newcommand{\mktag}{\epsilon}
\newlanguage{\mcolor}{\mfont}{\mfontsym}{m}
{}
{ix/i,ixalt/j,ixaltalt/k,set/I,setalt/J,setaltalt/K,val/\mkval, v/\mkval,uu/u,var/\mkvar,varalt/\mkvaralt,tag/\mktag,t/t,term/\mkterm,termalt/\mktermalt,ty/\mkmty,A/A,B/B,C/C,tyalt/\mkmtyalt,tyaltalt/\mktyaltalt,renv/\Theta,env/\Gamma,rel/\mkrel,relalt/\mkrelalt,ectxt/E,ctxt/C,tyvar/\alpha,alpha/\alpha,row/\rho,x/x,E/E,y/y,s/s,gamma/\gamma,eta/\eta}
{fun,pair,unit,void,sum,mu}
{}
\newexprs{\mfontsym}{\mfont}{mt}{}
{roll,unroll,elsecase,elsecasevert,ufix,voidcase,ufun,fun,app,let,letvert,sum,case,unit,pair,pairvert,matchpair,matchpairvert,inj,injpr,casevert}
\newcommand{\mhole}{{\mfont{\lbrack{}{\cdot}\rbrack{}}}}
\newcommand{\inholestep}[2]{\inholestepsin{0}{#1}{#2}}
\newcommand{\inholestepsin}[3]{\mectxt[{#2}] \stepsin{#1} \mectxt[{#3}]}
\newcommand{\sinholestep}[2]{\sectxt\hw{#1} \step \sectxt\hw{#2}}
\newcommand{\bigstep}[1]{\mathrel{\Mapsto^{#1}}}
\newcommand{\bigstepany}{\mathrel{\Mapsto}}
\newcommand{\merr}{\mfontsym{\mho}}
\newcommand{\mempenv}{\mfontsym{\cdot}}
\newcommand{\menvext}[3]{{#1},{#2}:{#3}}
\newcommand{\mjudgtc}[3]{\judg{#1}{#2}{#3}}
\newcommand{\eppair}{\mathrel{\triangleleft}}
\newcommand{\id}{\text{id}}
\newcommand{\preciser}{\sqsubseteq}
\newcommand{\dyn}{\mathord{?}}
\newcommand{\sem}{\semantics}
\newcommand{\semantics}[1]{\llbracket{}{#1}\rrbracket{}}
\newcommand{\floor}[1]{\sfontsym{\lfloor}{#1}\sfontsym{\rfloor}}
\def\slashedarrowfill@#1#2#3#4#5{%
  $\m@th\thickmuskip0mu\medmuskip\thickmuskip\thinmuskip\thickmuskip
\relax#5#1\mkern-7mu%
\cleaders\hbox{$#5\mkern-2mu#2\mkern-2mu$}\hfill
\mathclap{#3}\mathclap{#2}%
\cleaders\hbox{$#5\mkern-2mu#2\mkern-2mu$}\hfill
\mkern-7mu#4$%
}
\def\rightslashedarrowfill@{%
  \slashedarrowfill@\relbar\relbar\mapstochar\rightarrow}
\newcommand\xslashedrightarrow[2][]{%
  \ext@arrow 0055{\rightslashedarrowfill@}{#1}{#2}}
\newcommand{\defstartnegspace}{\vspace{-0.5ex}}
\newcommand{\defendnegspace}{\vspace{0ex}}
\newcommand{\etatagcases}[2]{
\begin{stackTL}
\kwopen{\mfont{case}} #1 \kwbin{\mfont{of}}\\
\quad \mtsum{\sunitty}{\mxin{\sunitty}} \mathpunct{\mfontsym{.}} #2{\mtsum{\sunitty}{\mxin{\sunitty}}}\\
\quad \mtsum{\splus}{\mxin{\splus}} \mathpunct{\mfontsym{.}} #2\mtsum{\splus}{\mxin{\splus}}\\
\quad \mtsum{\stimes}{\mxin{\stimes}} \mathpunct{\mfontsym{.}} #2\mtsum{\stimes}{\mxin{\stimes}}\\
\quad \mtsum{\sto}{\mxin{\sto}} \mathpunct{\mfontsym{.}} #2\mtsum{\sto}{\mxin{\sto}}\\
\quad \mtsum{\srecordty{\sdynty}}{\mxin{\srecordty{\sdynty}}} \mathpunct{\mfontsym{.}} #2\mtsum{\srecordty{\sdynty}}{\mxin{\srecordty{\sdynty}}}\\
\end{stackTL}
}
\newcommand{\colornote}{{In this paper, we use $\sfont{\textsf{blue}}$ to
    typeset our gradual cast calculus $\glangname$ and
    $\mfont{\textbf{red}}$ to typeset our typed language with errors
    $\tlangname$. The paper will be much easier to read if
    viewed/printed in color.}}
\begin{document}

\title{Graduality from Embedding-Projection Pairs}\titlenote{\colornote}
\iflong
\subtitle{Extended Version}
\fi

\author{Max S. New}
\affiliation{
  \institution{Northeastern University}            
}
\email{maxnew@ccs.neu.edu}          
\author{Amal Ahmed}
\affiliation{
  \institution{Northeastern University and Inria Paris}            
}
\email{amal@ccs.neu.edu}          

\begin{abstract}
  Gradually typed languages allow statically typed and dynamically
  typed code to interact while maintaining benefits of both styles.
  The key to reasoning about these mixed programs is
  Siek-Vitousek-Cimini-Boyland's 
  (dynamic) gradual guarantee, which
  says that giving components of a program more precise types only
  adds runtime type checking, and does not otherwise change behavior.
  In this paper, we give a semantic reformulation of the gradual
  guarantee called \emph{graduality}.
  We change the name to promote the analogy that graduality is to
  gradual typing what parametricity is to polymorphism.
  Each gives a local-to-global, syntactic-to-semantic reasoning
  principle that is formulated in terms of a kind of observational
  approximation.

  Utilizing the analogy, we develop a novel logical relation for
  proving graduality.
  We show that \emph{embedding-projection pairs (ep pairs)} are
  to graduality what relations are to parametricity.
  We argue that casts between two types where one is ``more dynamic''
  (less precise) than the other necessarily form an
  ep pair, and we use this to cleanly prove the
  graduality cases for casts from the ep-pair property.
  To construct ep pairs, we give an analysis of the \emph{type dynamism}
  relation---also known as type precision or na\"ive subtyping---that
  interprets the rules for type dynamism as compositional constructions on
  ep pairs, analogous to the coercion interpretation of subtyping. 
\end{abstract}

\begin{CCSXML}
<ccs2012>
<concept>
<concept_id>10011007.10011006.10011008</concept_id>
<concept_desc>Software and its engineering~General programming languages</concept_desc>
<concept_significance>500</concept_significance>
</concept>
<concept>
<concept_id>10003456.10003457.10003521.10003525</concept_id>
<concept_desc>Social and professional topics~History of programming languages</concept_desc>
<concept_significance>300</concept_significance>
</concept>
</ccs2012>
\end{CCSXML}

\ccsdesc[500]{Software and its engineering~General programming languages}
\ccsdesc[300]{Social and professional topics~History of programming languages}

\keywords{Gradual typing, keyword2, keyword3}  

\maketitle

\section{Introduction}
\label{sec:intro}

Gradually typed programming languages are designed to resolve the
conflict between static and dynamically typed programming styles
\citep{tobin-hochstadt06,tobin-hochstadt08,siek-taha06}.
A gradual language allows a smooth transition from dynamic to static
typing through gradual addition of types to dynamically typed
programs, and allows for safe interactions between more statically
typed and more dynamically typed components.
With such an enticing goal there has
been extensive research on gradual typing---e.g.,
\cite{siek-taha06,gronski06,wadler-findler09,Ina:2011zr,swamy14,Allende:2013aa}---with
recent work aimed at extending gradual typing to more advanced
language features, such as parametric
polymorphism~\cite{ahmed17,igarashipoly17}, effect
tracking~\cite{gradeffects2014}, typestate~\cite{Wolff:2011:GT},
session types~\cite{igarashisession17}, and refinement
types \linebreak \cite{lehmann17}.

Formalizing the idea of a ``smooth transition'', a key property that
every gradually typed language should satisfy is
\citeauthor*{refined}'s \emph{(dynamic)\footnote{The same work also
    introduces a \emph{static} gradual guarantee that says that
    changing the types in a program to be less dynamic means type
    checking becomes stricter. We do not consider this in our paper
    because we only consider the semantics of cast calculi, not the
    type systems of gradual surface languages. We discuss the
    relationship further in \secref{sec:rel:gradual-guarantee}}
  gradual guarantee}, which we refer to as \emph{graduality} (by
analogy with parametricity).
Graduality enables programmers to modify their program from a
dynamically typed to a statically typed style, and vice-versa, with
confidence that the program's behavior only changes in predictable
ways. 
Specifically, it says that changing the types in a
program to be ``less dynamic''/''more precise''---i.e., changing from
the dynamic type to some more precise type such as integers or
functions---either produces the same behavior as the original program
or causes a dynamic type error.
Conversely, if a program does not error and some types are made ``more
dynamic''/''less precise'' then the program has the exact same
behavior.  This is an important reasoning principle for programmers as
the alternative would be quite counterintuitive: for instance,
changing certain type annotations might cause a terminating program to
diverge, or a program that prints your calendar to tweet your home
address!
This distinguishes dynamic type checking in gradual typing from
exceptions: raising an exception is a valid program behavior that can
be caught and handled by a caller, whereas a dynamic type error is
always considered to be a bug, and terminates the program.

More formally, the notion of when a type $\sA$ is ``less dynamic''
than another type $\sB$ is specified by a \emph{type dynamism}
relation (also known as type precision or na\"ive subtyping), written
$\sA \ltdyn \sB$, which is defined for simple languages as the least
congruence relation such that the dynamic type $\sdynty$ is the
\emph{most} dynamic type: $\sA \ltdyn \sdynty$.
Then, \emph{term dynamism} (also known as term
precision) is the natural extension of type dynamism to terms, written
$\st \ltdyn \ss$.
The graduality theorem is then that if $\st \ltdyn \ss$, then the
behavior of $\st$ must be ``less dynamic'' than the behavior of
$\ss$---that is, either $\st$ produces a runtime type error or both
terms have the exact same behavior.
We say $\st$ is ``less dynamic'' in the sense that it has \emph{fewer
  behaviors}.

Unfortunately, for the majority of gradually typed languages, the
(dynamic) gradual guarantee is considered quite challenging to prove,
and there is only limited guidance about how to design new languages
so that they satisfy this property.  There are two notable exceptions:
Abstracting Gradual Typing (AGT) \cite{garcia16} and the Gradualizer
\cite{gradualizer16, gradualizer17} provide systematic methods and
formal tools, respectively, for deriving a gradually typed language
from a statically typed language, and they both provide the gradual
guarantee by construction.  However, while they provide a proof of the
gradual guarantee for languages produced in the respective frameworks,
most gradually typed languages are not produced in this way; for
instance, Typed Racket’s approach to gradual typing
\cite{tobin-hochstadt06,tobin-hochstadt08} is not explained by either system.
Furthermore, both Gradualizer and AGT base their semantics on static
type checking itself, but this is the reverse of the semantic view of
type checking.
In the semantic viewpoint, type checking should be justified by a
sensible semantics, and not the other way around.

\paragraph{Type Dynamism and Embedding-Projection Pairs}

While the gradual guarantee as presented in \citet{refined} makes type
dynamism a central component, the semantic meaning of type dynamism is
unclear.
This is not just a philosophical question: it is unclear how to extend
type dynamism to new language features.
For instance, polymorphic gradually typed languages have been
developed recently by \citet{ahmed17} and \citet{igarashipoly17},
but the two papers have different definitions of type dynamism, 
and neither attempts a proof of the (dynamic) gradual guarantee.
The AGT \cite{garcia16} approach gives a systematic definition of
type dynamism in terms of sets of static types, but that definition is
difficult to separate from the rest of their framework, whereas we
would like a definition that can be interpreted in any gradually typed
language.
At present, the best guidance we have comes from the gradual guarantee
itself: the dynamic type should be the greatest element, and the
gradual guarantee should hold.

We propose a semantic definition for type dynamism that naturally
leads to a clean formulation and proof of the gradual guarantee:
An ordering $\sA \ltdyn \sB$ should hold when the casts between the
two types form an \emph{embedding-projection pair}.

What does this mean?
First, in order to support interaction between statically typed and
dynamic typed code while still maintaining the guarantees of the
static types, gradually typed languages include
\emph{casts}\footnote{It is not literally true that every gradual
  language uses this presentation of casts from cast calculi, but in
  order for a language to be gradually typed, some means of casting
  between types must be available, such as embedding dynamic code in
  statically typed code, or type annotations. We argue that the
  properties of casts we identify here should apply to those
  presentations as well.} $\obcast\sA\sB$ that dynamically check if a
value of type $\sA$ corresponds to a valid inhabitant of the type
$\sB$, and if so, transform its value to have the right type.
Then if $\sA \ltdyn \sB$, we say that the casts $\obcast\sA\sB$ and
$\obcast\sB\sA$ form an \emph{embedding-projection pair}, which means
that they satisfy the following two properties that describe acceptable
behaviors when casting between the two types: \emph{retraction} and
\emph{projection}. 

First, $\sA$ should be a \emph{stricter} type than $\sB$, so anything
satisfying $\sA$ should also  satisfy $\sB$.
This is captured in the \emph{retraction} property: if we cast a value
$\sv : \sA$ from $\sA$ to $\sB$ and then back down to $\sA$, we should
get back an equivalent value because $\sv$ should satisfy the type of
$\sA$ and $\sB$.
Formally, $\obcast\sB\sA\obcast\sA\sB\st \approx \st$ where $\approx$
means \emph{observational equivalence} of the programs: when placed in
the same spot in a program, they produce the same behavior.

Second, casts should only be doing type \emph{checking}, and not
otherwise changing the behavior of the term.
Since $\sB$ is a weaker property than $\sA$, if we cast a value of
$\sv : \sB$ down to $\sA$, there may be a runtime type error.
However, if $\sv$ really does satisfy $\sA$ the cast succeeds, and if
we cast back to $\sB$ we should get back a value with similar behavior
to $\sv$.
If $\sB$ is a first-order type like booleans or numbers, we should get
back exactly the same value.
However, if $\sA, \sB$ are higher-order types like functions or
objects, then it is impossible to check if a value of one type $\sB$
satisfies $\sA$.
For instance, if $\sB = \sdynty \sto \sdynty$ and $\sA = \sdynty \sto
\snatty$, then it is not decidable whether or not a value of $\sB$
will always return a number on every input.
Instead, following \cite{findler-felleisen02}, gradual type casts
\emph{wrap} the function with a cast on its outputs and if at any
point it returns something that is not a number, a type error is
raised.
So if $\sv : \sB$ is cast to $\sA$ and back, we cannot expect to
always get an equivalent value back, but the result should \emph{error
  more}---that is, either the cast to $\sA$ raises an error, or we get
back a new value $\svpr : \sB$ that has the same behavior as $\sv$
except it sometimes raises a type error.
We formalize this as \emph{observational error approximation} and write the
ordering $\st \sqsubseteq \stpr$ as ``$\st$ errors more than
$\stpr$''.
We then use this to formalize the \emph{projection} property:
$\obcast\sA\sB\obcast\sB\sA\ss \sqsubseteq \ss$.
%

Notice how the justification for the projection property uses the same
intuition as graduality: that casts should only be doing
\emph{checking} and not completely changing a program's behavior.
This is the key to why embedding-projection pairs help to formulate
and prove graduality: we view graduality as the natural extension of
the projection property from a property of casts to a property of
arbitrary gradually typed programs.

This gives us nice properties of some casts, but what do we know about
casts that are \emph{not} upcasts or downcasts?
In traditional formulations, gradual typing includes casts
$\obcast\sA\sB$ between types that are \emph{shallowly
  compatible}---i.e, that are not guaranteed to fail. For instance, we
can cast a pair where the left side is known to be a number $\snatty
\stimes \sdynty$ to a type where the right side is known to be a number $\sdynty
\stimes \snatty$ with casts succeeding on values where both sides are
numbers.
The resulting cast
$\obcast{\snatty\stimes\sdynty}{\sdynty\stimes\snatty}$ is neither an
upcast nor a downcast.
We argue that the formulation based on these ``general'' casts is ill
behaved from a meta-theoretic perspective: you are quite limited in
your ability to break casts for larger types into casts for smaller
types.
Most notably, the \emph{composition} of two general casts is very
rarely the same as the direct cast.
For instance, casting from $\snatty$ to $\sdynty\sto\sdynty$ and back
to $\snatty$ always errors, but obviously the direct cast $\obcast\snatty\snatty$
is the identity.
We show that upcast and downcasts on the other hand satisfy a
\emph{decomposition} theorem: if $\sAone \ltdyn \sAtwo\ltdyn
\sAin{3}$, then the upcast from $\sAone$ to $\sAin{3}$ factors through
$\sAtwo$ and similarly for the downcast.

Furthermore, if we disregard \emph{performance} of the casts, and only
care about the observational behavior, we show that any ``general''
cast is the composition of an upcast followed by a
downcast.\footnote{Note that this is not the same as the factorization
  of casts known as ``threesomes'', see \secref{section:related:casts} for a
  comparison.}
For instance, our cast from before
$\obcast{\snatty\stimes\sdynty}{\sdynty\stimes\snatty}$ is
observationally equivalent to the composition of first
\emph{up}casting to a pair where both sides are dynamically typed
$\sdynty\stimes \sdynty$ and then \emph{down}casting:
$\obcast{\sdynty\stimes\sdynty}{\sdynty\stimes\snatty}\obcast{\snatty\stimes\sdynty}{\sdynty\stimes\sdynty}$.
We show that \emph{all} the casts in a standard gradually typed
language exhibit this factorization, which means that for the purposes
of formulating and proving graduality, we need only discuss upcasts
and downcasts.
For implementation, it is more convenient to have a primitive notion
of coercion/direct cast to eliminate/collapse casts
\cite{herman-tomb-flanagan-2010,siek-wadler10}, but we argue that the
correctness of such an implementation should be justified by a
simulation relation with a simpler semantics, meaning the
implementation would inherit a proof of graduality from the simpler
semantics as well.

To prove these equivalence and approximation results, we develop a
novel step-indexed logical relation that is sound for observational
error approximation.
We also develop high-level reasoning principles from the relation so
that our main lemmas do not involve any manual step-manipulation. 

Finally, based on our semantic interpretation of type dynamism as
embedding-projection pairs, we provide a refined analysis of the proof
theory of type dynamism as a \emph{syntax for building ep pairs}.
We give a semantics for these proof terms analogous to the coercion
interpretation of subtyping derivations.
Similar to subtyping, we prove a \emph{coherence} theorem which gives, 
as a corollary, our decomposition theorem for upcasts and downcasts.

\paragraph{Graduality}
In \citet{refined}, they prove the (dynamic) gradual guarantee by
an operational simulation argument whose details are quite tied to the 
specific cast calculus used.
Using the ep pairs, we provide a more semantic formulation and proof
of graduality.
First, we use our analysis of type dynamism as denoting ep pairs to
define graduality as a kind of observational error approximation
\emph{up to upcast/downcast}, building on the axiomatic semantics of
graduality in \citet{newlicata2018}.
We then prove the graduality theorem using our logical relation for
error approximation.
Notably, the decomposition theorem for ep pairs leads to a clean,
uniform proof of the cast case of graduality.

\paragraph{Overview of Technical Development and Contributions}
In this paper, we show how to prove graduality for a standard gradually typed cast
calculus by translating it into a simple typed language with recursive
types and errors.  Specifically, our development proceeds as follows: 
\begin{enumerate}
\item We present a standard gradually typed cast calculus
  ($\glangname$) and its operational semantics, using ``general''
  casts (\secref{sec:gradual}).
\item We present a simple typed language with recursive types and a
  type error ($\tlangname$), into which we translate the cast
  calculus.  Casts in $\glangname$ are translated to contracts
  implemented in the typed language (\secref{sec:typed}). 
\item We develop a novel step-indexed logical relation that is
  sound for our notion of observational error approximation
  (\secref{sec:logrel}).  We prove transitivity of the logical
  relation and other high-level reasoning principles so that our main
  lemmas for ep-pairs and graduality do not involve any manual
  step-manipulation.   
\item We present a novel analysis of type dynamism as a coherent syntax
  for ep pairs and show that all of the casts of the gradual language
  can be factorized as an upcast followed by a downcast (\secref{sec:ep-pairs}). 
\item We give a semantic formulation of graduality and then prove it using
  our error-approximation logical relation and ep pairs
  (\secref{sec:graduality}). 
\end{enumerate}
\ifshort
Proofs and definitions elided from this paper are presented in
full in the extended version of the paper \cite{newahmed2018-extended}.
\fi

\section{Gradual Cast Calculus}
\label{sec:gradual}

Our starting point is a fairly typical gradual cast calculus, called
$\glangname$, in the style of \citet{wadler-findler09} and
\citet{refined}.
A cast calculus is usually the target of an elaboration pass from a
gradually typed surface language.
The gradually typed surface language makes mixing static and dynamic
code seamless, for instance a typed function on numbers $f :
\mathbb{N} \to \mathbb{N}$ can be applied to a dynamically typed value
$x :\dyn$ and the result is well typed $f(x) : \mathbb{N}$.
Since $x$ is not known to be a number, at runtime a dynamic check is
performed: if $x$ is a number, $f$ is run with its value and otherwise
a dynamic type error is raised.
In the surface language, this checking behavior takes place at every
elimination form: pattern matching, referencing a field, etc.
The cast calculus makes the dynamic type checking separate from the
elimination forms using explicit cast forms.
If $\st : \sA$ in the cast calculus, then we can cast it to another
type $\sB$ using the cast form $\obcast\sA\sB\st$.
This means we can use the ordinary typed reduction rules for
elimination forms, and all the details of checking are isolated to the
cast reductions.
We choose to use a cast calculus, rather than a gradual surface
language, since we are chiefly concerned with the semantics of the
language, rather than gradual type checking.

\begin{figure}
  \begin{mathpar}
    \begin{array}{lrcl}
      \mbox{Types} & \sty,\styalt & \bnfdef &
    {\sdynty} \bnfalt
    {{\sunitty}}\bnfalt
    {{\spairty{\sty}{\styalt}}}\bnfalt
    {{\ssumty{\sty}{\styalt}}}\bnfalt
    {{\sfunty{\sty}{\styalt}}}\\
      \mbox{Tags} & \stagty & \bnfdef & \sunitty\bnfalt\sdynty\stimes\sdynty\bnfalt\sdynty\splus \sdynty\bnfalt\sdynty\sto\sdynty\\

      \mbox{Terms} & \sterm,\stermalt & \bnfdef &
    \serr \bnfalt {\svar} \bnfalt
    \obcast{\sA}{\sB}\st\bnfalt
    {\stunit} \bnfalt
    {\stpair{\stermone}{\stermtwo}}\bnfalt
    {\stmatchpair{\sx}{\sy}{\st}{\ss}}\\
      & & \bnfalt &
    {\stinj{\sterm}}\bnfalt \stinjpr{\sterm}\bnfalt
    {{\stcase{\sterm}{\svarone}{\stermone}{\svartwo}{\stermtwo}}} \bnfalt
    {\stfun{\svar}{\sA}{\sterm}} \bnfalt
    {\stapp{\st}{\ss}}\\
      \mbox{Values} & \sval & \bnfdef & \obcast{\stagty}{\sdynty}{\sval} \bnfalt
    {\stunit} \bnfalt
    {\stpair{\svalone}{\svaltwo}} \bnfalt
    {\stinj{\sval}} \bnfalt
    {\stinjpr{\sval}} \bnfalt
    {\stfun{\svar}{\sty}{\sterm}} 
    \\
    \mbox{Evaluation Contexts} & \sectxt & \bnfdef &
    \shole \bnfalt
    \obcast{\sA}{\sB}{\sectxt}\bnfalt
    \stpair{\sectxt}{\ss}\bnfalt
    \stpair{\sv}{\sectxt} \bnfalt
    \stmatchpair{\sx}{\sy}{\sectxt}{\ss}\\
    &&\bnfalt&
    \stinj{\sectxt}\bnfalt
    \stinjpr{\sectxt}\bnfalt
    \stcase{\sectxt}{\sxone}{\stone}{\sxtwo}{\sttwo}\bnfalt
    \sectxt\,\ss\bnfalt
    \sv\,\sectxt\\
    \mbox{Environments} & \senv & \bnfdef & \sempenv \bnfalt \senvext{\senv}{\svar}{\sty}\\
    \mbox{Substitutions} & \sgamma & \bnfdef & \cdot \bnfalt \sgamma, \sv/\sx
    \end{array}
  \end{mathpar}
  \caption{$\glangname$ Syntax}
  \label{fig:glang:syntax}
\end{figure}

\begin{figure}
\flushleft{\fbox{\small{$\sjudgtc{\senv}{\st}{\sA}$}}}

\vspace{-2ex}
    \begin{mathpar}
    \inferrule{~}{\sjudgtc{\senv}{\serr}{\sA}}\and
    \inferrule{~}{\sjudgtc{\senv,\sx:\sA,\senvpr}{\sx}{\sA}}\and
    \inferrule{\sjudgtc{\senv}{\st}{\sA}}
    {\sjudgtc{\senv}{\obcast{\sA}{\sB}{\st}}{\sB}}\and
    \inferrule{~}{\sjudgtc{\senv}{\stunit}{\sunitty}}\and
    \inferrule
    {\sjudgtc{\senv}{\stermone}{\styone}\and
    \sjudgtc{\senv}{\stermtwo}{\stytwo}}
    {\sjudgtc{\senv}{\stpair{\stermone}{\stermtwo}}{\spairty{\styone}{\stytwo}}}\and
    \inferrule
    {\sjudgtc{\senv}{\st}{\sAone\stimes\sAtwo}\and
      \sjudgtc{\senv,\sx:\sAone,\sy:\sAtwo}{\ss}{\sB}}
    {\sjudgtc{\senv}{\stmatchpair{\sx}{\sy}{\st}{\ss}}{\sB}}\and
\iflong
    \inferrule
    {\sjudgtc{\senv}{\st}{\sA}}
    {\sjudgtc{\senv}{\stinj{\st}}{\ssumty{\sA}{\sApr}}}\and
    \inferrule
    {\sjudgtc{\senv}{\st}{\sApr}}
    {\sjudgtc{\senv}{\stinjpr{\st}}{\ssumty{\sA}{\sApr}}}\and
    \inferrule{\sjudgtc{\senv}{\st}{\sA\splus\sApr}\and
      \sjudgtc{\senvext{\senv}{\sx}{\sA}}{\ss}{\sB}\and
      \sjudgtc{\senvext{\senv}{\sxpr}{\sApr}}{\sspr}{\sB}}
    {\sjudgtc{\senv}{\stcase{\st}{\sx}{\ss}{\sxpr}{\sspr}}{\sB}}\quad
\fi
    \inferrule
    {\sjudgtc{\senv,\sx:\sA}{\st}{\sB}}
    {\sjudgtc{{\senv}}{\stfun{\sx}{\sA}{\st}}{\sA\sto\sB}}\and
    \inferrule
    {\sjudgtc{\senv}{\st}{\sA\sto\sB}\and
      \sjudgtc{\senv}{\ss}{\sA}}
    {\sjudgtc{\senv}{\st\,\ss}{\sB}}
    \end{mathpar}
    \caption{$\glangname$ Typing Rules \ifshort (excerpt)\fi}
    \label{fig:glang:typing:extended}
\end{figure}

We present the syntax of $\glangname$ (pronounced ``lambda gee'' and
typeset in $\sfont{\textsf{blue sans-serif font}}$) in
\figref{fig:glang:syntax}, and \ifshort most of\fi the typing rules in
\figref{fig:glang:typing:extended}. 
The language is call-by-value and includes standard type formers,
namely, the unit type $\sunitty$, product type $\stimes$, sum type
$\splus$, and function type $\sto$, with standard typing rules.
The language also includes some features specific to
gradual typing: a dynamic type $\sdynty$, a dynamic type error $\serr$
and casts $\obcast\sA\sB\st$.
Following previous work, the interface for the dynamic type $\sdynty$
is given by the casts themselves, and not distinct introduction and
elimination forms.
The values of the dynamic type are of the form
$\obcast{\stagty}{\sdynty}\sv$ where $\stagty$ ranges over \emph{tag types},
defined in \figref{fig:glang:syntax}.
The tag types are so called because they represent the ``tags'' used
to distinguish between the basic sorts of dynamically typed values.
Every type except $\sdynty$ has an ``underlying'' tag type we write as
$\floor\sA$ and define in \figref{fig:glang:tag}.
These tag types are the cases of the dynamic type $\sdynty$ seen as a
sum type, which is how we model it in \secref{sec:typed:translated}.
For any two types $\sA, \sB$, we can form the cast $\obcast\sA\sB$
which at runtime will attempt to coerce a term $\sv : \sA$
into a valid term of type $\sB$.
If the value cannot sensibly be interpreted as a value in $\sB$, the
cast \emph{fails} and reduces to the dynamic type error $\serr$.
The type error is like an uncatchable exception, modeling the fact
that the program crashes with an error message when a dynamic type
error is encountered. In this paper we consider all type errors to be
equivalent.
The calculus is based on that of \citet{wadler-findler09}, but does not have
blame and removes the restriction that types must be compatible in
order to define a cast.

\begin{figure}
\flushleft{\fbox{\small{$\floor{\sA} \defeq \stagty$}}~\mbox{\small{where~$\sA \neq \sdynty$}}}

\vspace{-4ex}
  \begin{mathpar}
    \begin{array}{rcl}
      \floor{\sunitty} & \defeq & \sunitty\\
      \floor{\sA \stimes \sB} & \defeq & \sdynty \stimes \sdynty \\
      \floor{\sA \splus \sB} & \defeq & \sdynty \splus \sdynty \\
      \floor{\sA \sto \sB} & \defeq & \sdynty \sto \sdynty \\
    \end{array}
  \end{mathpar}
  \caption{$\glangname$: Tag of a (non-dynamic) Type}
  \label{fig:glang:tag}
\end{figure}

\begin{figure}
\begin{mathpar}
    \sinholestep
      {\stcase{\parened{\stinj{\sv}}}{\sx}{\st}{\sxpr}{\stpr}}
      {\st{}[\sv/\sx]}

    \inferrule
    {}
    {\sinholestep
      {\stcase{\parened{\stinjpr{\sv}}}{\sx}{\st}{\sxpr}{\stpr}}
      {\stpr{}[\sv/\sxpr]}}

    \inferrule
    {}
    {\sinholestep
      {\stmatchpair{\sxone}{\sxtwo}{\stpair{\svone}{\svtwo}}{\st}}
      {\st[\svone/\sxone,\svtwo/\sxtwo]}}

    \inferrule{}
    {\sinholestep{\stapp{\parened{\stufun{\svar}{\sterm}}}{\sval}}{\subst{\sterm}{\sval}{\svar}}}

    \inferrule{}
    {\sectxt\hw\serr \step \serr}
  \end{mathpar}\\
\hrule
%
%
  \begin{mathpar}
    \inferrule*[right=DynDyn]
    {}
    {\sinholestep{\obcast{\sdynty}{\sdynty}{\sv}}{\sv}}

    \inferrule*[right=TagUp]
    {\sA \neq \sdynty \and \floor\sA \neq \sA}
    {\sinholestep
      {\obcast{\sA}{\sdynty}{\sv}}
      {\obcast{\floor\sA}{\sdynty}{\sparened{\obcast{\sA}{\floor\sA}{\sv}}}}}

    \inferrule*[right=TagDn]
    {\sA \neq \sdynty \and \floor\sA \neq \sA}
    {\sinholestep
      {\obcast{\sdynty}{\sA}{\sv}}
      {\obcast{\floor\sA}{\sA}\obcast{\sdynty}{\floor\sA}{\sv}}}

    \inferrule*[right=TagMatch]
    {}
    {\sinholestep{{\obcast{\sdynty}{\stagty}{\obcast{\stagty}{\sdynty}{\sv}}}}{\sv}}

    \inferrule*[right=TagMismatch]
    {\stagty \neq \stagtypr}
    {\sectxt\hw{\obcast{\sdynty}{\stagty}{\obcast{\stagtypr}{\sdynty}{\sv}}} \step \serr}

    \inferrule*[right=TagMismatch']
    {\sA,\sB \neq \sdynty\and
      \floor\sA\neq\floor\sB}
    {\sectxt\hw{\obcast{\sA}{\sB}\sv} \step \serr}
    
    \inferrule*[right=Pair]
    {}
    {\sinholestep
      {\obcast{\spairty{\sAone}{\sBone}}{\spairty{\sAtwo}{\sBtwo}}{\stpair{\sv}{\svpr}}}
      {\stpair{\obcast{\sAone}{\sAtwo}{\sv}}{\obcast{\sBone}{\sBtwo}{\svpr}}}}

    \inferrule*[right=Sum]
    {}
    {\sinholestep
    {\obcast{\ssumty{\sAone}{\sBone}}{\ssumty{\sAtwo}{\sBtwo}}{\stinj{\sv}}}
    {\obcast{\sAone}{\sAtwo}{\sv}}}

    \inferrule*[right=Sum']
    {}
    {\sinholestep
    {\obcast{\ssumty{\sAone}{\sBone}}{\ssumty{\sAtwo}{\sBtwo}}{\stinjpr{\sv}}}
    {\obcast{\sBone}{\sBtwo}{\sv}}}

    \inferrule*[right=Fun]
        {}
        {\sinholestep{\obcast{\sfunty{\sAone}{\sBone}}{\sfunty{\sAtwo}{\sBtwo}}{\sv}}{\stfun{\sx}{\sAtwo}{\obcast{\sBone}{\sBtwo}{\sparened{\stapp{\sv}{\sparened{\obcast{\sAtwo}{\sAone}{\sx}}}}}}}}

  \end{mathpar}
  \caption{$\glangname$ Operational Semantics: non-casts (top) and
    casts (bottom)}
 \label{fig:glang:opsem}
\end{figure}

\figref{fig:glang:opsem} presents the operational semantics of the
gradual language in the style of \citet{felleisen-hieb}, using
\emph{evaluation contexts} $\sectxt$ to specify a left-to-right,
call-by-value evaluation order. 
The top of the figure shows the reductions \emph{not} involving
casts.
This includes the standard reductions for pairs, sums, and functions
using the obvious notion of substitution $\st{}[\sgamma]$, in addition
to a reduction $\sectxt\hw{\serr} \step \serr$ to propagate a dynamic
type error to the top level.

More importantly, the bottom of the figure shows the reductions of
\emph{casts}, specifying the dynamic type checking necessary for
gradual typing.
First (\textsc{DynDyn}), casting from dynamic to itself is the identity.
For any type $\sA$ that is not a tag type (checked by $\floor\sA \neq
\sA$) or the dynamic type, casting to the dynamic type first casts to
its underlying tag type $\floor\sA$ and then tags it at that type (\textsc{TagUp}).
Similarly, casting down from the dynamic type first casts to the
underlying tag type (\textsc{TagDn}).
The next two rules are the primitive reductions for tags: if you project at
the correct tag type, you get the underlying value out (\textsc{TagMatch}) and otherwise a
dynamic type error is raised (\textsc{TagMismatch}).
Similarly, the next rule (\textsc{TagMismatch'}) says that if two
types are incompatible in that they have distinct tag types and
neither is dynamic, then the cast errors.
The next three (\textsc{Fun, Pair, Sum}) are the standard
``wrapping'' implementations of contracts/casts \cite{findler-felleisen02}, also familiar from
subtyping.
For the function cast $\obcast{\sAone\sto\sBone}{\sAtwo\sto\sBtwo}$,
note that while the output type is the same direction
$\obcast{\sBone}{\sBtwo}$, the input cast is flipped:
$\obcast{\sAtwo}{\sAone}$.

We note that this standard operational semantics is quite complex for
such a \emph{small} language.
In particular, it is more complicated than the operational semantics
of typed and dynamically typed languages of similar size.
Typed languages have reductions for each elimination form and
dynamically typed languages add only the possibility of type error to
those reductions.
Here on the other hand, the semantics is \emph{not} modular in the
same way: there are five rules involving the dynamic type and four of them
involve comparing arbitrary types.

For these reasons, we find the cast calculus presentation inconvenient
for semantic analysis, and we choose not to develop our theory of
graduality or even prove type safety \emph{directly} for this language.
Instead, we will \emph{translate} the cast calculus into a typed
language where the casts are translated to functions implemented in
the language, i.e. contracts \cite{findler-felleisen02}.
This has the advantage of reducing the size of the language, making
``language-level'' theorems like type safety and soundness of a
logical relation easier to prove.
Finally, note that our central theorems are still about the gradual
language, but we will prove them by lifting results about their
translations using an \emph{adequacy} theorem (\cref{lem:adequacy}).

\section{Translating Gradual Typing}
\label{sec:typed}

We now translate our cast calculus into a simpler, non-gradual typed
language with errors.
We then prove an adequacy theorem that enables us to prove theorems
about gradual programs by reasoning about their translations.

\subsection{Typed Language with Errors}

The typed language we will translate into is $\lambda_{T,\mho}$
(pronounced ``lambda tee error'' and typeset in $\mfont{\textbf{bold
    red serif font}}$),
a call-by-value typed lambda calculus with iso-recursive types and an
uncatchable error. 
\figref{fig:tlang:syntax} shows the syntax of the language.  
\figref{fig:tlang:typing} shows some of the typing rules; the rest are
completely standard.

\begin{figure}
    \begin{mathpar}
    \begin{array}{lrcl}
      \mbox{Types} & \mty,\mtyalt & \bnfdef &
        \mmuty{\malpha}{\mty} \bnfalt \malpha \bnfalt \munitty\bnfalt \mA \mtimes \mB\bnfalt \msumty{\mty}{\mtyalt} \bnfalt \mfunty{\mty}{\mtyalt} 
      \\
      \mbox{Terms} & \mt,\ms & \bnfdef &
        {\merr} \bnfalt
        {\mvar} \bnfalt
        \mtlet\mx\mt\ms\bnfalt
        {\mtroll{\mA}{\mterm}} \bnfalt
        {\mtunroll{\mterm}} \bnfalt
        \mtunit\bnfalt
        \mtpair\mt\ms\\
          & & \bnfalt &
        \mtmatchpair{\mx}{\my}{\mt}{\ms}\bnfalt
        {\mtinj{\mterm}} \bnfalt
        {\mtinjpr{\mterm}} \\
          & & \bnfalt &
        {\mtcase{\mterm}{\mvarone}{\mtermone}{\mvartwo}{\mtermtwo}}\bnfalt
        {\mtfun{\mvar}{\mA}{\mterm}} \bnfalt
        {\mtapp{\mterm}{\mtermalt}} \\
      \mbox{Values} & \mval & \bnfdef & \mvar \bnfalt \mtroll{\mA}{\mval} \bnfalt \mtunit \bnfalt \mtpair\mv\mv \bnfalt\mtinj{\mval} \bnfalt\mtinjpr{\mval} \bnfalt \mtfun{\mvar}{\mA}{\mterm}\\
      \mbox{Evaluation Contexts} & \mectxt & \bnfdef & 
        {\mhole} \bnfalt
        \mtlet\mx\mE\ms\bnfalt
        {\mtroll{\mA}{\mectxt}} \bnfalt
        {\mtunroll{\mectxt}} \bnfalt
        {\mtpair\mE\mt}\bnfalt
        \mtpair\mv\mE\bnfalt
        \mtmatchpair{\mx}{\my}{\mE}{\ms}\\
          & & \bnfalt &
        {\mtinj{\mectxt}} \bnfalt
        {\mtinjpr{\mectxt}} \bnfalt
        {\mtcase{\mectxt}{\mvarone}{\mtermone}{\mvartwo}{\mtermtwo}}\bnfalt
        {\mtapp{\mectxt}{\mtermalt}} \bnfalt
        {\mtapp{\mval}{\mectxt}} \\
        \iflong
      \mbox{Contexts} & \mctxt & \bnfdef & 
        {\mhole} \bnfalt
        \mtlet\mx\mC\ms\bnfalt
        \mtlet\mx\mt\mC\bnfalt
        {\mtroll{\mA}{\mctxt}} \bnfalt
        {\mtunroll{\mctxt}}\\
        &&\bnfalt&
        {\mtpair{\mctxt}{\mt}}\bnfalt
        {\mtpair{\mt}{\mctxt}}\bnfalt
        {\mtmatchpair{\mx}{\my}{\mctxt}{\mt}}\bnfalt
        {\mtmatchpair{\mx}{\my}{\mt}{\mctxt}}
        \\ & & \bnfalt &
        {\mtinj{\mctxt}} \bnfalt
        {\mtinjpr{\mctxt}}\bnfalt
        {\mtcase{\mctxt}{\mvarone}{\mtermone}{\mvartwo}{\mtermtwo}}
        \\ &&\bnfalt&
        {\mtcase{\mterm}{\mvarone}{\mctxtone}{\mvartwo}{\mtermtwo}}\bnfalt
        {\mtcase{\mterm}{\mvarone}{\mtermone}{\mvartwo}{\mctxttwo}}
        \\ & & \bnfalt &
        {\mtfun{\mvar}{\mA}{\mctxt}} \bnfalt
        {\mtapp{\mctxt}{\mtermalt}} \bnfalt
        {\mtapp{\mterm}{\mctxt}}\\
        \fi
    \mbox{Environments} & \menv & \bnfdef & \mempenv \bnfalt \menvext{\menv}{\mvar}{\mty}\\
    \mbox{Substitutions} & \mgamma & \bnfdef & \cdot \bnfalt \mgamma, \mv/\mx
    \end{array}
    \end{mathpar}
    \caption{$\tlangname$ Syntax}
    \label{fig:tlang:syntax}
\end{figure}
\begin{figure}
\flushleft{\fbox{\small{$\mjudgtc{\menv}{\mt}{\mty}$}}}

\vspace{-2ex}
  \begin{mathpar}
      \inferrule{~}{\mjudgtc{\menv}{\merr}{\mty}}\and
      \inferrule
      {\mvar : \mty \in \menv}
      {\mjudgtc{\menv}{\mvar}{\mty}}\and
\iflong
      \inferrule
      {\mjudgtc{\menv}{\mt}{\mA} \and
      \mjudgtc{\menv,\mx:\mA}{\ms}{\mB}}
      {\mjudgtc{\menv}{\mtlet{\mx}{\mt}{\ms}}{\mB}}\and
\fi
      \inferrule{\mjudgtc{\menv}{\mterm}{\mA[\mmuty{\malpha}{\mA}/\malpha]}}
      {\mjudgtc{\menv}{\mtroll{\mmuty{\malpha}{\mA}}{\mterm}}{\mmuty{\malpha}{\mA}}}\and
      \inferrule{\mjudgtc{\menv}{\mterm}{\mmuty{\malpha}{\mA}}}
      {\mjudgtc{\menv}{\mtunroll{\mterm}}{\mA[\mmuty{\malpha}{\mA}/\malpha]}}\and
\iflong
      \inferrule
      {~}
      {\mjudgtc{\menv}{\mtunit}{\munitty}}\and
      \inferrule
      {\mjudgtc{\menv}{\mt}{\mA}\and \mjudgtc{\menv}{\ms}{\mB}}
      {\mjudgtc{\menv}{\mtpair{\mt}{\ms}}{\mA \mtimes \mB}}\and
      \inferrule
      {\mjudgtc{\menv}{\mt}{\mAone\mtimes \mAtwo}\and
        \mjudgtc{\menv,\mx:\mAone,\my:\mAtwo}{\ms}{\mB}}
      {\mjudgtc{\menv}{\mtmatchpair{\mx}{\my}{\mt}{\ms}}{\mB}}\and
      \inferrule
      {\mjudgtc{\menv}{\mterm}{\mA}}
      {\mjudgtc{\menv}{\mtinj{\mterm}}{\mA \mplus \mApr}}\and
      \inferrule
      {\mjudgtc{\menv}{\mterm}{\mApr}}
      {\mjudgtc{\menv}{\mtinjpr{\mterm}}{\msumty{\mA}{\mApr}}}\and
      \inferrule
      {\mjudgtc{\menv}{\mterm}{\mA \mplus \mApr} \and
        \mjudgtc{\menv,\mx:\mA}{\ms}{\mB} \and
      \mjudgtc{\menv,\mxpr:\mApr}{\mspr}{\mB}}
      {\mjudgtc{\menv}{\mtcase{\mterm}{\mx}{\ms}{\mxpr}{\mspr}}{\mB}}\quad
      \inferrule
      {\mjudgtc{\menv,\mx:\mA}{\mt}{\mB}}
      {\mjudgtc{\menv}{\mtfun{\mx}{\mA}{\mt}}{\mA \mto \mB}}\quad
      \inferrule
      {\mjudgtc{\menv}{\mt}{\mA \mto \mB} \and
      \mjudgtc{\menv}{\ms}{\mA}}
      {\mjudgtc{\menv}{\mt\,\ms}{\mB}}
\fi
  \end{mathpar}
    \caption{$\tlangname$ Typing Rules \ifshort (excerpt)\fi}
    \label{fig:tlang:typing}
\end{figure}

The types of the language are similar to the cast calculus: they
include the standard type formers of products, sums, and functions.
Rather than the specific dynamic type, we include the more general,
but standard, iso-recursive type $\mmuty{\malpha}{\mA}$,
which is isomorphic to the unfolding
$\mA[\mmuty{\malpha}{\mA}/\malpha]$ by the terms
$\mtroll{\mmuty{\malpha}{\mA}}{\cdot}$ and $\mtunroll{\cdot}$.
As in the source language we have an uncatchable error $\merr$.

\begin{figure}
  \begin{mathpar}
    \inferrule{}
    {\mectxt[\merr] \stepsin{0} \merr}\and
    \inferrule{}
    {\inholestep{\mtlet\mx\mv\ms}{\ms{}[\mv/\mx]}}\and
    \inferrule
    {}
    {\inholestepsin{1}
      {\mtunroll{(\mtroll{\mA}{\mval})}}
      {\mval}}\and
    \inferrule
    {}
    {\inholestep
      {\mtmatchpair{\mxone}{\mxtwo}{\mtpair{\mvone}{\mvtwo}}{\mt}}
      {\mt[\mvone/\mxone,\mvtwo/\mxtwo]}}\and

    \inferrule
    {}
    {\inholestep{(\mtfun{\mx}{\mA}{\mt})\,\mv}{\mt{}[\mv/\mx]}}\and
    \inferrule
    {}
    {\inholestep
      {\mtcase{\parened{\mtinj{\mv}}}{\mx}{\mt}{\mxpr}{\mtpr}}
      {\mt{}[\mv/\mx]}}\and
    \inferrule
    {}
    {\inholestep
      {\mtcase{\parened{\mtinjpr{\mv}}}{\mx}{\mt}{\mxpr}{\mtpr}}
      {\mtpr{}[\mv/\mxpr]}}
  \end{mathpar}
  \begin{mathpar}
    \inferrule
    {~}
    {\mt \bigstep{0} \mt}\and
    \inferrule
    {\mt \stepsin{i} \mtpr \and \mtpr \bigstep{j} \mtpr[2]}
    {\mt \bigstep{i + j} \mtpr[2]}
  \end{mathpar}
  \caption{$\tlangname$ Operational Semantics}
  \label{fig:tlang:opsem}
\end{figure}

\figref{fig:tlang:opsem} presents the operational semantics of
the language.
For the purposes of later defining a step-indexed logical relation, we
assign a weight to each small step of the operational semantics that 
is $1$ for unrolling a value of recursive type and $0$ for other
reductions.
We then define a ``quantitative'' reflexive, transitive closure of
the small-step relation $\mt \bigstep{i} \mtpr$
that adds the weights of its constituent small steps.
When the number of steps is irrelevant, we just use $\step$ and 
$\bigstepany$. 
We can then establish some simple facts about this operational
semantics.

\begin{lemma}[Subject Reduction]
  If $\cdot\vdash \mt : \mA$ and $\mt \bigstepany \mtpr$ then
  $\cdot\vdash\mtpr : \mA$.
\end{lemma}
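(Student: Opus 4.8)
The plan is to prove the statement by first establishing single-step preservation and then lifting it along the reflexive–transitive closure $\bigstepany$. Since the weights attached to small steps play no role in typing, I would work with the unweighted step relation $\step$ throughout and induct on the number of steps: the reflexive case is immediate, and the inductive case composes one application of single-step preservation with the induction hypothesis. So the whole theorem reduces to the claim that if $\cdot \vdash \mt : \mA$ and $\mt \step \mtpr$, then $\cdot \vdash \mtpr : \mA$.

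For single-step preservation I would exploit the evaluation-context presentation of the operational semantics. Every step factors as $\mt = \mectxt[\mt_0] \step \mectxt[\mt_1] = \mtpr$, where $\mt_0 \step \mt_1$ is one of the primitive reductions (or the step is an instance of the rule $\mectxt[\merr] \step \merr$). The error-propagation case is immediate, because the typing rule for $\merr$ assigns it every type, so $\merr : \mA$. For the remaining cases I would prove two supporting lemmas. First, a context-typing/decomposition lemma: if $\cdot \vdash \mectxt[\mt_0] : \mA$, then there is a type $\mB$ with $\cdot \vdash \mt_0 : \mB$ and $\mjudgtcectxt{\mectxt}{\mB}{\mA}$ (the context turns a hole of type $\mB$ into a term of type $\mA$); note that, since none of the evaluation contexts place the hole under a binder, $\mt_0$ is closed. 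Second, a replacement lemma: given the same context at hole-type $\mB$, any $\mt_1$ with $\cdot \vdash \mt_1 : \mB$ yields $\cdot \vdash \mectxt[\mt_1] : \mA$. Both are proved by a routine induction on the structure of $\mectxt$, using inversion on the typing rule for each term former. Together they reduce single-step preservation to showing that each primitive redex reduction preserves the type of the redex itself.

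The primitive reductions are then checked one at a time. The $\beta$-style reductions---let, pair-matching, function application, and the two case reductions---all follow from the standard substitution lemma: if $\menv, \mx : \mA \vdash \mt : \mB$ and $\cdot \vdash \mv : \mA$ then $\cdot \vdash \mt[\mv/\mx] : \mB$ (together with its two-variable variant for pair matching), which I would prove separately by induction on the typing derivation of $\mt$. Each reduction case is then an inversion on the redex's typing followed by one or two applications of this lemma. The only non-$\beta$ case is $\mtunroll{(\mtroll{\mB}{\mv})} \step \mv$: inversion on the unroll rule gives $\mtroll{\mB}{\mv} : \mmuty{\malpha}{\mA}$ with result type the unfolding $\mA[\mmuty{\malpha}{\mA}/\malpha]$, and inversion on the roll rule then forces $\mB = \mmuty{\malpha}{\mA}$ and $\mv : \mA[\mmuty{\malpha}{\mA}/\malpha]$, so $\mv$ already has the required type.

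None of the steps is genuinely hard; the real content lives in the substitution lemma, whose proof requires the usual care over variable binding and an accompanying weakening lemma. The one place demanding attention is the recursive-type reduction, where I must ensure that the inversion principles for roll and unroll line up so that the folded and unfolded types match exactly---but, as sketched above, this falls out cleanly from the typing rules for $\mmuty{\malpha}{\mA}$.
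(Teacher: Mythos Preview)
Your proposal is correct and follows the standard textbook approach to subject reduction. The paper itself omits the proof entirely, treating the result as routine; your decomposition into single-step preservation via evaluation-context factoring, a substitution lemma for the $\beta$-style reductions, and inversion for the roll/unroll case is exactly what one would expect and goes through without difficulty.
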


\begin{lemma}[Progress]
  If $\cdot \vdash \mt : \mA$ and $\mt$ is not a value or $\merr$,
  then there exists $\mtpr$ with $\mt \step \mtpr$.
\end{lemma}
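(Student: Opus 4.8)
The plan is to prove Progress by induction on the typing derivation $\cdot \vdash \mt : \mA$, with the help of a standard \emph{Canonical Forms} lemma that I would prove first. Canonical Forms states that a \emph{closed} value $\mv$ with $\cdot \vdash \mv : \mA$ is determined by the shape of $\mA$: if $\mA = \munitty$ then $\mv = \mtunit$; if $\mA = \mAone \mtimes \mAtwo$ then $\mv = \mtpair{\mvone}{\mvtwo}$; if $\mA = \msumty{\mAone}{\mAtwo}$ then $\mv$ is $\mtinj{\mvone}$ or $\mtinjpr{\mvtwo}$; if $\mA = \mfunty{\mAone}{\mAtwo}$ then $\mv = \mtfun{\mx}{\mAone}{\mt}$; and if $\mA = \mmuty{\malpha}{\mB}$ then $\mv = \mtroll{\mmuty{\malpha}{\mB}}{\mvone}$. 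Each case follows by inspecting which value forms can be assigned the given type, using that the empty environment rules out the variable form.

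Two structural facts about the operational semantics of \figref{fig:tlang:opsem} would organize the main induction. First, every reduction rule is stated as $\mectxt[\text{lhs}] \step \mectxt[\text{rhs}]$ for an arbitrary evaluation context $\mectxt$; since evaluation contexts compose (plugging one into a frame of \figref{fig:tlang:syntax} yields another evaluation context), a step of an \emph{evaluating subterm} lifts to a step of the whole term. Second, whenever an evaluating subterm is $\merr$, the enclosing frame is itself an evaluation context $\mectxt$, so $\mectxt[\merr] \step \merr$ propagates the error. Consequently, for each term former I analyze its evaluating subterms in the left-to-right order dictated by the grammar of $\mectxt$, and for each such subterm $\ms$ I split via the induction hypothesis into three cases: $\ms$ is a value, $\ms = \merr$ (propagate), or $\ms \step \ms'$ (lift).

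The main induction then proceeds by cases on the last typing rule. The hypothesis excludes $\merr$; the variable case is vacuous because the context is empty; and the introduction forms $\mtunit$ and $\mtfun{\mx}{\mA}{\mt}$ are already values, so they are excluded as well. For the introduction forms with evaluating subterms --- $\mtpair{\mt}{\ms}$, $\mtinj{\mt}$, $\mtinjpr{\mt}$, and $\mtroll{\mmuty{\malpha}{\mB}}{\mt}$ --- if the three-way split yields ``value'' for \emph{every} evaluating subterm, the whole term is itself a value, contradicting the hypothesis; otherwise the first non-value subterm either propagates an error or lifts a step. The only genuinely new content lies in the elimination forms, where the ``value'' subcase fires a primitive redex via Canonical Forms: $\mtlet{\mx}{\mv}{\ms}$ reduces directly; $\mtunroll{\mt}$ with $\mt$ a value meets a roll, giving $\mtunroll{(\mtroll{\mA}{\mv})} \stepsin{1} \mv$; $\mtmatchpair{\mx}{\my}{\mt}{\ms}$ meets a pair; $\mtcase{\mt}{\mx}{\ms}{\mxpr}{\ms'}$ meets an injection; and an application $(\mtfun{\mx}{\mA}{\mt})\,\mv$ fires $\beta$-reduction.

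I expect essentially no conceptual obstacle here: the proof is bookkeeping once Canonical Forms is in hand, and Canonical Forms is the only lemma doing real work. The one point requiring genuine care is aligning the order in which the induction hypothesis is applied to subterms with the left-to-right evaluation order built into the grammar of evaluation contexts, so that the frame used for error propagation and for lifting a subterm step is in each case a legitimate $\mectxt$ (this is exactly the composition-of-contexts observation above). The step weights play no role, since the existence of a step is insensitive to the $0$/$1$ annotation, so I would simply ignore them throughout.
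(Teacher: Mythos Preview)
Your proposal is correct and takes the same approach as the paper, which simply says ``by induction on the typing derivation for $\mt$.'' You have merely spelled out the standard details (Canonical Forms, evaluation-context lifting, error propagation) that the paper leaves implicit.
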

\iflong
\begin{proof}
  By induction on the typing derivation for $\mt$.
\end{proof}
\fi

\begin{lemma}[Determinism]
  If $\mt \step \ms$ and $\mt \step \mspr$, then $\ms = \mspr$.
\end{lemma}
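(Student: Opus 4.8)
The plan is to derive determinism from the standard decomposition discipline built into the evaluation contexts of \figref{fig:tlang:opsem}. Any step $\mt \step \ms$ factors as a choice of \emph{where} to reduce (an evaluation context $\mectxt$ and a subterm $r$ sitting in its hole) together with a choice of \emph{which} primitive rule fires on $r$. Determinism will follow once I show that neither choice has any freedom: the decomposition $\mt = \mectxt[r]$ is unique, and the primitive reduction of a given $r$ is unique.

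First I would record that head reduction is deterministic. Inspecting \figref{fig:tlang:opsem}, the left-hand sides of the primitive reductions are the shapes $\mtlet\mx\mv\ms$, $\mtunroll{(\mtroll\mA\mv)}$, $\mtmatchpair{\mx}{\my}{\mtpair{\mvone}{\mvtwo}}{\mt}$, $(\mtfun{\mx}{\mA}{\mt})\,\mv$, the two \emph{case}-on-injection redexes, and $\merr$. These are pairwise non-overlapping: they have distinct outermost constructors, the two case redexes are separated by whether the scrutinee is $\mtinj\mv$ or $\mtinjpr\mv$, and the value-guards (e.g.\ the bound term of a reducible $\mathbf{let}$ must be a value) mean that a term matches at most one shape. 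Each shape, moreover, determines its reduct outright. So if a term is reduced ``at the root'' it is reduced in exactly one way.

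The crux is unique decomposition: every term $\mt$ is \emph{either} a value \emph{or} equal to $\mectxt[r]$ for a unique evaluation context $\mectxt$ and a unique redex $r$ (taking the redexes to be the head shapes above together with $\merr$). I would prove this by induction on $\mt$, reading off the context grammar. The essential point is that the value-guarded positions $\mtpair\mv\mectxt$ and $\mtapp\mv\mectxt$ force a canonical left-to-right search: in $\mtpair\mt\ms$ the redex must lie in $\mt$ whenever $\mt$ is not a value, and only when $\mt$ is a value may we descend into $\ms$; the induction hypothesis then makes the located context and redex unique. The same guards give the mutual exclusivity of the two cases, since a value is itself never of the form $\mectxt[r]$ (its immediate subterms are already values, which by the induction hypothesis do not decompose). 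The error redex slots into this search uniformly: $\merr$ is disjoint from the values and from every other redex shape, so the leftmost--outermost occurrence of $\merr$, if present, is pinned down by exactly the same descent, yielding a unique $\mectxt$ with $\mt = \mectxt[\merr]$.

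Finally I would combine the two facts. Suppose $\mt \step \ms$ and $\mt \step \mspr$. Each step exhibits a decomposition of $\mt$ and applies the matching rule---$\mectxt[\merr]\step\merr$ when the redex is $\merr$, and $\mectxt[r]\step\mectxt[r']$ with $r \step r'$ the head reduction otherwise. Unique decomposition forces both steps to use the same $\mectxt$ and the same $r$, and head determinism forces the same reduct, so $\ms = \mspr$. I expect the unique-decomposition induction to be the only real work, and within it the bookkeeping around the value-guarded context positions $\mtpair\mv\mectxt$ and $\mtapp\mv\mectxt$: one must appeal to the ``values do not decompose'' fact to rule out a spurious second decomposition that would reduce inside an already-evaluated position. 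The error rule, by contrast, is a minor wrinkle rather than a genuine obstacle, since $\merr$ is syntactically disjoint from both the values and the other redexes.
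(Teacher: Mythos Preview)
Your proposal is correct and follows the standard unique-decomposition argument. The paper, however, does not prove this lemma at all: it is stated without proof, treated as an evident property of the Felleisen--Hieb style operational semantics in \figref{fig:tlang:opsem}. So there is nothing in the paper to compare against; your argument is exactly the routine verification one would supply if asked to fill in the omitted details.
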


\subsection{Translating Gradual Typing}
\label{sec:typed:translated}

Next we translate the cast calculus into our typed language, and prove
that the cast calculus semantics is in a simulation relation with the
typed language.
Since the two languages share so much of their syntax, most of the
translation is a simple ``color change'', only the parts that are
truly components of gradual typing need much translation.

\begin{figure}
  \begin{mathpar}
    \begin{array}{rcl}
      \sem{\sdynty} &\defeq& \mmuty{\malpha}{\munitty \mplus (\mpairty{\malpha}{\malpha}) \mplus (\malpha \mplus \malpha) \mplus  (\mfunty{\malpha}{\malpha})} \\
      \sem{\sunitty} &\defeq& \munitty\\
      \sem{\spairty{\sA}{\sB}} &\defeq& \mpairty{\sem{\sA}}{\sem{\sB}}\\
      \sem{\ssumty{\sA}{\sB}} &\defeq& \msumty{\sem{\sA}}{\sem{\sB}}\\
      \sem{\sfunty{\sA}{\sB}} &\defeq& \mfunty{\sem{\sA}}{\sem{\sB}}\\
    \end{array}
  \end{mathpar}
  \caption{Type Translation}
  \label{fig:type-translation}
\end{figure}

Our translation is type preserving, so we first define a \emph{type
  translation} in \figref{fig:type-translation}.
The dynamic type is interpreted as a recursive sum of the translations
of the tag types of the gradual language.
The unit, pair, sum and function types are all interpreted as the
corresponding connectives in the typed language.

\begin{figure}
  \fbox{\small{$\sem{\st}$}} \small{$~~$where if $~\sxone:\sAone,\ldots,\sxn:\sAn \vdash \st : \sA~$ then
$~\mxone:\sem{\sAone},\ldots,\mxn : \sem{\sAn} \vdash \sem{\st} : \sem{\sA}$} \hfill
  \begin{mathpar}
    \begin{array}{rcl}
      \sem{\svar} & \defeq & \mvar\\
      \sem{\obcast\sA\sB\st} & \defeq & \mectxt_{\obcast\sA\sB}\hw{\sem{\st}}\\
      \sem\stunit & \defeq & \mtunit\\
      \sem{\stpair{\stone}{\sttwo}} & \defeq& \mtpair{\sem{\stone}}{\sem{\sttwo}}\\
      \sem{\stmatchpair\sx\sy\st\ss} & \defeq & \mtmatchpair\mx\my{\sem\st}{\sem\ss}\\
      \sem{\stinj\st} & \defeq & \mtinj{\sem\st}\\
      \sem{\stinjpr\st} & \defeq & \mtinjpr{\sem\st}\\
      \sem{\stcase{\st}{\sx}{\ss}{\sxpr}{\sspr}} & \defeq & \mtcase{\sem\st}{\mx}{\sem\ss}{\mxpr}{\sem\sspr}\\
      \sem{\stfun{\sx}{\sA}{\st}} & \defeq & \mtfun{\mx}{\sem{\sA}}{\sem{\st}}\\
      \sem{\st\,\ss} & \defeq & \sem{\st}\,\sem{\ss}\\
    \end{array}
  \end{mathpar}
  \caption{Term Translation}
  \label{fig:term-translation}
\end{figure}
\begin{figure}
  \fbox{\small{$\mectxt_{\obcast{\sA}{\sB}}$}} \small{$~~$where $~\mx : \sem \sA \vdash \mE_{\obcast{\sA}{\sB}}\hw{\mx} : \sem \sB$} \hfill
  \begin{mathpar}
    \begin{array}{rcl}
      \mectxt_{\obcast{\sdynty}{\sdynty}} & \defeq & \mhole\\
      \mectxt_{\obcast{\sAone\stimes\sBone}{\sAtwo\stimes\sBtwo}} & \defeq &
      \mectxt_{\obcast{\sAone}{\sAtwo}} \mtimes \mectxt_{\obcast{\sBone}{\sBtwo}}\\
      \mectxt_{\obcast{\sAone\splus\sBone}{\sAtwo\splus\sBtwo}} & \defeq &
      \mectxt_{\obcast{\sAone}{\sAtwo}} \mplus \mectxt_{\obcast{\sBone}{\sBtwo}}\\
      \mectxt_{\obcast{\sAone \sto \sBone}{\sAtwo\sto\sBtwo}} & \defeq &
      \mectxt_{\obcast{\sAtwo}{\sAone}} \mto \mectxt_{\obcast{\sBone}{\sBtwo}}\\
      \mectxt_{\obcast{\stagty}{\sdynty}} & \defeq & \mtroll{\sem\sdynty}{\mtsum{\stagty}\mhole}\\
      \mectxt_{\obcast{\sdynty}{\stagty}} & \defeq & \mtelsecase{(\mtunroll\mhole)}{\stagty}{\mx}{\mx}{\merr}\\
      \mectxt_{\obcast{\sA}{\sdynty}} & \defeq & 
           \mectxt_{\obcast{\floor\sA}{\sdynty}}\hw{\mectxt_{\obcast{\sA}{\floor\sA}}\mhole}
           \qquad \mbox{if $\sA \neq \sdynty, \floor\sA$}\\
      \mectxt_{\obcast{\sdynty}{\sA}} & \defeq & 
           {\mectxt_{\obcast{\floor\sA}{\sA}}}\hw{\mectxt_{\obcast{\sdynty}{\floor\sA}}\mhole}
           \qquad \mbox{if $\sA \neq \sdynty,\floor\sA$}\\
      \mectxt_{\obcast{\sA}{\sB}} & \defeq &
           \mtlet{\mx}{\mhole}{\merr}
           \qquad \mbox{if $\sA,\sB\neq\sdynty$ and $\floor\sA\neq\floor\sB$}\\
    \end{array}
  \end{mathpar}
  \caption{Direct Cast Translation}
  \label{fig:direct-cast-translation}
\end{figure}

\begin{figure}
      \begin{mathpar}
        \begin{array}{rcl}
          \mectxt \mtimes \mectxtpr & \defeq &
          \mtmatchpair{\mx}{\mxpr}{\mhole}{\mtpair{\mectxt\hw\mx}{\mectxtpr\hw\mxpr}}\\

          \mectxt \mplus \mectxtpr & \defeq &
          \mtcase{\mhole}
          {\mx}{\mectxt\hw\mx}
          {\mxpr}{\mectxtpr\hw\mxpr}\\

          \mectxt \mto \mectxtpr & \defeq &
          \mtlet{\mvarin{f}}{\mhole}{\mtufun{\mvarin{a}}{\mectxtpr\hw{\mvarin{f}\,(\mectxt\hw{\mvarin{a}})}}}
        \end{array}
      \end{mathpar}
      \caption{Functorial Action of Type Connectives}
      \label{fig:functor}
\end{figure}

Next, we define the translation of terms in
\figref{fig:term-translation}, which is type preserving in that if
$\sxone:\sAone,\ldots,\sxn:\sAn \vdash \st : \sA$ then
$\mxone:\sem{\sAone},\ldots,\mxn : \sem{\sAn} \vdash \sem{\st} :
\sem{\sA}$.
Again, most of the translation is just a change of hue.
The most important rule of the term translation is that of casts.
%
A cast $\obcast\sA\sB$ is translated to an \emph{evaluation context}
$\mectxt_{\obcast\sA\sB}$ of the appropriate type, which are defined
in \figref{fig:direct-cast-translation}.
Each case of the definition corresponds to one or more rules of the
operational semantics.
The product, sum, and function rules use the definitions of functorial
actions of their types from \figref{fig:functor}.
We separate them because we will use the functoriality property in
several definitions, theorems, and proofs later.

\subsection{Operational Properties}

Next, we consider the relationship between the operational semantics
of the two languages and how to lift properties of the typed language
to the gradual language.
We want to view the translation of the cast calculus into the typed
language as \emph{definitional}, and in that regard view the
operational semantics of the source language as being based on the
typed language. 
We capture this relationship in the following \emph{forward}
simulation theorem, which says that any reduction in the cast calculus
corresponds to (and is \emph{justified by}) multiple steps in the target:

\begin{lemma}[Translation Preserves Values, Evaluation Contexts]
\label{lem:sem-val-evctx}{~}
  \begin{enumerate}
  \item For any value $\sv$, $\sem{\sv}$ is a value.
  \item For any evaluation context $\sectxt$, $\sem\sectxt$ is an
    evaluation context.
  \end{enumerate}
\end{lemma}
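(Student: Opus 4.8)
The plan is to first extend the term translation of \figref{fig:term-translation} to evaluation contexts in the evident way, setting $\sem\shole \defeq \mhole$ and translating every other context former homomorphically, so that in particular $\sem{\obcast\sA\sB\sectxt} = \mectxt_{\obcast\sA\sB}[\sem\sectxt]$. Both parts then go through by structural induction, and I would prove part~(1) first, since part~(2) relies on it. Throughout I will use the standard fact that plugging one evaluation context into the hole of another again yields an evaluation context.

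For part~(1) I would induct on the structure of the value. The purely structural formers are immediate: $\stunit$ maps to $\mtunit$; a pair $\stpair{\sv}{\svpr}$ maps to $\mtpair{\sem\sv}{\sem\svpr}$, a value since $\sem\sv$ and $\sem\svpr$ are values by the induction hypotheses; the injections $\stinj\sv$ and $\stinjpr\sv$ map to injections of values; and $\stfun{\svar}{\sty}{\sterm}$ maps to a $\lambda$-abstraction, which is a value irrespective of its body. The only case that touches the cast translation is the dynamic-type value $\obcast{\stagty}{\sdynty}\sv$: unfolding \figref{fig:direct-cast-translation}, its translation is $\mtroll{\sem\sdynty}{\mtsum{\stagty}{\sem\sv}}$, i.e.\ a roll applied to a tagged injection applied to $\sem\sv$. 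Since $\sem\sv$ is a value by the induction hypothesis, the tagged injection is a value, and a roll of a value is a value.

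For part~(2) I would first establish, as an auxiliary sub-lemma, that every cast context $\mectxt_{\obcast\sA\sB}$ is itself an evaluation context, by induction on the clauses defining it in \figref{fig:direct-cast-translation}. The identity clause is $\mhole$. For the functorial product, sum, and function clauses of \figref{fig:functor} the outermost former is respectively a match-pair, a case, and a let, each of which places the hole in its scrutinee/bound position, which is exactly an evaluation position in the grammar of \figref{fig:tlang:syntax}; the component casts occur only inside the bodies, in non-evaluation positions and with their own holes already filled by bound variables, so they do not disturb the single-hole shape. The tag clauses are analogous: $\mectxt_{\obcast\stagty\sdynty}$ is a roll of a tagged injection of the hole, and $\mectxt_{\obcast\sdynty\stagty}$ puts the hole beneath an $\mtunroll{\cdot}$ in the scrutinee of the decoding case, both evaluation positions. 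Finally, the two clauses routing through $\floor\sA$ are compositions of contexts already shown to be evaluation contexts, and the failure clause is $\mtlet\mx\mhole\merr$, a let with the hole in evaluation position. With the sub-lemma in hand, the main induction on $\sectxt$ is routine: $\shole$ maps to $\mhole$; each congruence former maps to the corresponding target evaluation-context former via the induction hypothesis on the subcontext; the cast former $\obcast\sA\sB\sectxt$ maps to $\mectxt_{\obcast\sA\sB}[\sem\sectxt]$, an evaluation context by the sub-lemma and closure under composition; and the two forms $\stpair{\sv}{\sectxt}$ and $\sv\,\sectxt$, in which a value sits beside the hole, invoke part~(1) to know $\sem\sv$ is a value, so that $\mtpair{\sem\sv}{\sem\sectxt}$ and $\mtapp{\sem\sv}{\sem\sectxt}$ match the evaluation-context formers $\mtpair\mv\mE$ and $\mtapp{\mval}{\mectxt}$.

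The main obstacle is the cast-context sub-lemma, and within it the functorial function clause: there the hole is the let-bound scrutinee while both component casts appear inside the returned $\lambda$ with their holes filled by $\lambda$- and let-bound variables, and I must confirm that this really instantiates the $\mtlet\mx\mE\ms$ former with a single residual hole rather than accidentally placing a hole under a binder. Checking the tag-decoding clause (that its desugared case evaluates its scrutinee) is the other point that needs care. The remaining ingredients---closure of evaluation contexts under composition and the purely congruent cases---are bookkeeping.
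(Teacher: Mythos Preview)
The paper states this lemma without proof, so there is nothing to compare against; your argument is correct and is exactly the natural one. The structural inductions you outline go through as you describe, and your sub-lemma that each $\mectxt_{\obcast\sA\sB}$ is already an evaluation context (by inspection of \figref{fig:direct-cast-translation} and \figref{fig:functor}) is the right way to handle the cast case.
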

\begin{lemma}[Simulation of Operational Semantics]
\label{lem:forward-simulation}{~}\\
  If $\st \step \stpr$ then there exists $\ms$ with $\sem{\st} \step
  \ms$ and $\ms \bigstepany \sem{\stpr}$.
\end{lemma}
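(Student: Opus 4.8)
The plan is to prove the statement by case analysis on the reduction rule deriving $\st \step \stpr$, after first isolating the structural facts that make the translation well behaved. Every source reduction has the form $\sectxt\hw{r} \step \sectxt\hw{r'}$ for a head redex $r$ (or $\sectxt\hw{\serr} \step \serr$ for error propagation), so I would begin by establishing two \emph{compositionality lemmas}: (i) translation commutes with plugging into evaluation contexts, $\sem{\sectxt\hw{\st}} = \sem{\sectxt}\hw{\sem{\st}}$, by induction on $\sectxt$; and (ii) translation commutes with substitution, $\sem{\st[\sv/\sx]} = \sem{\st}[\sem{\sv}/\sx]$, by induction on $\st$. Both are routine, since the term translation is almost entirely a change of hue. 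Together with \cref{lem:sem-val-evctx} (giving that $\sem{\sectxt}$ is again an evaluation context and $\sem{\sv}$ again a value) and closure of target reduction under evaluation contexts, lemma (i) reduces the whole problem to the head redexes: it suffices to show $\sem{r} \bigstepany \sem{r'}$ for each contraction $r$ to $r'$, and then reapply $\sem{\sectxt}$.

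For the \emph{non-cast} redexes -- application, \texttt{case} of an injection, and \texttt{matchpair} of a pair -- $\sem{r}$ is exactly the analogous target redex on translated values, so it contracts in one target step to the translated contractum; here lemma (ii) is precisely what identifies the target's substitution result with $\sem{r'}$. Error propagation is handled at the context level: $\sem{\sectxt\hw{\serr}} = \sem{\sectxt}\hw{\merr} \step \merr = \sem{\serr}$, using the target rule $\mectxt\hw{\merr} \stepsin{0} \merr$ and \cref{lem:sem-val-evctx}.

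The substance lies in the \emph{cast} redexes $r = \obcast{\sA}{\sB}{\sv}$, where $\sem{r} = \mectxt_{\obcast{\sA}{\sB}}\hw{\sem{\sv}}$ and I must unfold the cast evaluation context and compute its reduction on a translated value. The functorial cases (\textsc{Pair}, \textsc{Sum}, \textsc{Fun}) fire the administrative redex of the corresponding functorial action -- a \texttt{matchpair}, a \texttt{case}, or a \texttt{let} -- on $\sem{\sv}$, landing after a single step on the translation of the wrapped contractum (\textsc{Fun} additionally uses $\alpha$-renaming of the bound variable). The genuine dynamic checks are \textsc{TagMatch} and \textsc{TagMismatch}, where $\sem{r}$ first performs an \texttt{unroll} of a \texttt{roll} -- the one reduction carrying weight $1$ -- and then an \texttt{else}-case, returning $\sem{\sv}$ when the tags agree and falling through to $\merr$ when they differ, matching the source in each case. \textsc{TagMismatch'} reduces its context $\mtlet{\mx}{\mhole}{\merr}$ to $\merr$ and then propagates.

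The main obstacle -- and the spot demanding the most care -- is reconciling step counts for the \emph{dynamic-type} rules \textsc{DynDyn}, \textsc{TagUp}, and \textsc{TagDn}. Because $\mectxt_{\obcast{\sdynty}{\sdynty}} = \mhole$, and the up/down casts through $\sdynty$ are \emph{defined by composition} through the tag type $\floor{\sA}$, in each of these cases $\sem{r}$ is \emph{syntactically identical} to $\sem{r'}$, so the head contraction contributes no target step. I would therefore make explicit that the essential content needed downstream -- that $\sem{r} \bigstepany \sem{r'}$, and hence $\sem{\st} \bigstepany \sem{\stpr}$ -- still holds with a possibly empty leading step, and that this stuttering is harmless for the adequacy argument, since none of these three casts can fire infinitely often in succession (\textsc{DynDyn} deletes a cast, while \textsc{TagUp}/\textsc{TagDn} are immediately followed by a genuine functorial or tag reduction). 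Nailing down this bookkeeping, rather than any individual computation, is where I expect the real work to be.
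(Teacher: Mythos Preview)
Your approach—case analysis on the reduction rule, preceded by the compositionality lemmas for evaluation contexts and substitution—is exactly the paper's, and your handling of the non-cast redexes and of \textsc{Pair}, \textsc{Sum}, \textsc{Fun}, \textsc{TagMatch}, \textsc{TagMismatch}, \textsc{TagMismatch'} matches it case for case.

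Your worry about the zero-step cases is well placed and is in fact where the paper itself wobbles. For \textsc{TagUp} and \textsc{TagDn} the paper's own proof simply records that the two translations are \emph{syntactically equal} (because $\mectxt_{\obcast{\sA}{\sdynty}}$ and $\mectxt_{\obcast{\sdynty}{\sA}}$ are \emph{defined} as the composites through $\floor{\sA}$) and calls the case ``trivial'', tacitly conceding no target step. For \textsc{DynDyn} the paper's proof writes $\mE_{\obcast{\sdynty}{\sdynty}}\hw{\sem{\sv}} = \mtlet{\mx}{\sem{\sv}}{\mx} \step \sem{\sv}$, which does not match the translation table's $\mectxt_{\obcast{\sdynty}{\sdynty}} \defeq \mhole$; under the tabulated definition this case too is a zero-step equality. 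So the lemma as literally stated (with a mandatory leading $\step$) is not established on these three rules, and your proposed reading—$\sem{\st} \bigstepany \sem{\stpr}$—is the right repair. It is also sufficient: forward simulation is invoked only for the value and error directions of adequacy, where $\bigstepany$ is all that is needed; both divergence directions of adequacy go through backward simulation, so your bounded-stuttering remark, while true, is not actually required for the downstream argument.
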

\iflong\begin{proof}
  By cases of $\st \step \stpr$. The non-cast cases are clear by
  \cref{lem:sem-val-evctx}.
  \begin{enumerate}
  \item DynDyn
    \begin{align*}
      \mE_{\obcast\sdynty\sdynty}\hw{\sem\sv} &=
      \mtlet\mx{\sem\sv}\mx\\
      &\step
      \sem\sv
    \end{align*}
  \item TagUp: Trivial because $\sem{\obcast\sA\sdynty\sv} =
    \sem{\obcast{\floor\sA}{\sdynty}\obcast\sA{\floor\sA}\sv}$.
  \item TagDn: Trivial because $\sem{\obcast\sdynty\sA\sv} =
    \sem{\obcast{\floor\sA}\sA\obcast{\sdynty}{\floor\sA}\sv}$.
  \item (TagMatch) Valid because
    \begin{align*}
    \mtelsecasevert{\mtunroll\mtroll{\sem\sdynty}{\mtsum{\stagty}{\sem\sv}}}{\stagty}{\mx}{\mx}{\merr}
    &\step
    \mtelsecasevert{{\mtsum{\stagty}{\sem\sv}}}{\stagty}{\mx}{\mx}{\merr}\\
    &\step
    \sem\sv
    \end{align*}
  \item (TagMismatch) Valid because
    \begin{align*}
    \mtelsecasevert{\mtunroll\mtroll{\sem\sdynty}{\mtsum{\stagtypr}{\sem\sv}}}{\stagty}{\mx}{\mx}{\merr}
    &\step
    \mtelsecasevert{{\mtsum{\stagtypr}{\sem\sv}}}{\stagty}{\mx}{\mx}{\merr}\\
    &\step
    \merr\\
    & = \sem\serr
    \end{align*}
  \item (TagMismatch') Valid because
    \[ \mtlet\mx{\sem\sv}\merr \step \merr \]
  \item Pair Valid by
    \begin{align*}
      \mectxt_{\obcast{\sAone\stimes\sBone}{\sAtwo\stimes\sBtwo}}\hw{\mtpair{\sem\sv}{\sem\svpr}}
      &=
      \mtlet{\mx}{\my}{\mtpair{\sem\sv}{\sem\svpr}}{\mtpair{\mectxt_{\obcast{\sAone}{\sAtwo}}\hw\mx}{\mectxt_{\obcast{\sBone}{\sBtwo}}\hw\my}}\\
      &\step {\mtpair{\mectxt_{\obcast{\sAone}{\sAtwo}}\hw{\sem\sv}}{\mectxt_{\obcast{\sBone}{\sBtwo}}\hw{\sem\svpr}}}\\
      &= \sem{\stpair{{\obcast{\sAone}{\sAtwo}}{\sv}}{{{\obcast{\sBone}{\sBtwo}}{\svpr}}}}
    \end{align*}
  \item Sum
    \begin{align*}
      \mectxt_{\obcast{\sAone\stimes\sBone}{\sAtwo\stimes\sBtwo}}\hw{\mtinj{\sem\sv}}
      &=
      \mtcasevert{\mtinj{\sem\sv}}
      {\mx}{\mE_{\obcast{\sAone}\sAtwo}\hw{\mx}}
      {\mxpr}{\mE_{\obcast{\sBone}\sBtwo}\hw{\mxpr}}\\
      &\step
      \mE_{\obcast\sAone\sAtwo}\hw{\sem\sv}\\
      &= \sem{\obcast\sAone\sAtwo\sv}
    \end{align*}
  \item Sum'
    \begin{align*}
      \mectxt_{\obcast{\sAone\stimes\sBone}{\sAtwo\stimes\sBtwo}}\hw{\mtinjpr{\sem\sv}}
      &=
      \mtcasevert{\mtinjpr{\sem\sv}}
      {\mx}{\mE_{\obcast{\sAone}\sAtwo}\hw{\mx}}
      {\mxpr}{\mE_{\obcast{\sBone}\sBtwo}\hw{\mxpr}}\\
      &\step
      \mE_{\obcast\sBone\sBtwo}\hw{\sem\sv}\\
      &= \sem{\obcast\sBone\sBtwo\sv}
    \end{align*}
  \item (Fun) Valid because
    \begin{align*}
      \mectxt_{\obcast{\sAone\sto\sBone}{\sAtwo\sto\sBtwo}}\hw{\sem\sv}
      &\step
      \mtletvert{\mxin f}{\sem\sv}{}\\
      &\step
      \mtfun{\mxin a}{\mAtwo}{\mectxt_{\obcast\sBone\sBtwo}\hw{\sem\sv\,(\mectxt_{\obcast\sAtwo\sAone}\hw{\mxin a})}}\\
      &=
      \sem{\stfun{\sxin a}{\sAtwo}
        {\obcast{\sBone}{\sBtwo}
          {({\sv}\,({\obcast{\sAtwo}{\sAone}{\sxin{a}}}))}}}
    \end{align*}
  \end{enumerate}
\end{proof}\fi

To lift theorems for the gradual language from the typed language, we
need to establish an \emph{adequacy} theorem, which says
that the operational behavior of a translated term determines the
behavior of the original source term.
To do this we use the following backward simulation theorem.
\ifshort
\begin{lemma}[Backward Simulation]{~}
  \begin{enumerate}
  \item If $\sem{\st}$ is a value, $\st \stepstar \sv$ for some $\sv$
    with $\sem\st=\sem\sv$.
  \item If $\sem{\st} = \merr$, then $\st \stepstar \serr$.
  \item If $\sem\st \step \ms$ then there exists $\sspr$ with $\st \step
  \sspr$ and $\ms \bigstepany \sem\sspr$.
  \end{enumerate}
\end{lemma}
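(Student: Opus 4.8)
The plan is to obtain all three clauses from a single case analysis on the source term $\st$, combining the already-established forward simulation (\cref{lem:forward-simulation}) with determinism of $\tlangname$ and a source-level \emph{progress} property. Throughout I take $\st$ to be a closed, well-typed program. The two facts I would establish first are a progress trichotomy for the source, and the syntactic observation that $\sem\st = \merr$ holds exactly when $\st = \serr$.

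For progress, I would show by induction on the typing derivation (using canonical forms) that every closed well-typed $\st$ is either a value $\sv$, the error $\serr$, or reduces, $\st \step \sspr$ for some $\sspr$. Essentially all of this is routine except the cast forms: for a value in a cast $\obcast\sA\sB\sv$ I must check that the reduction rules are \emph{exhaustive} over all pairs $(\sA,\sB)$, matching each with exactly one of \textsc{DynDyn}, \textsc{TagUp}, \textsc{TagDn}, \textsc{TagMatch}, \textsc{TagMismatch}, \textsc{TagMismatch'}, \textsc{Pair}, \textsc{Sum}/\textsc{Sum'}, or \textsc{Fun}, while remembering that $\obcast\stagty\sdynty\sv$ is itself a value rather than a redex. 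I expect this exhaustiveness check to be the main obstacle, since it is the one place where the many interacting cast cases must all be accounted for. The syntactic observation, by contrast, is an easy induction on $\st$: the translation preserves the top-level connective of each non-cast form, and every cast evaluation context has a non-error head --- a \textsf{let}, \textsf{roll}, \textsf{case}, or pair, including the tag-mismatch context $\mtlet\mx\mhole\merr$ and the \textsc{DynDyn} context, which must contribute a reduction for \cref{lem:forward-simulation} to hold --- so no compound $\st$ can translate to the bare $\merr$.

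With these in hand the three parts follow by dispatching the trichotomy. For clause (2), the syntactic observation gives $\st = \serr$ directly, so $\st \stepstar \serr$ in zero steps. For clause (1), if $\sem\st$ is a value then $\st$ cannot be $\serr$ (as $\sem\serr = \merr$ is not a value) and cannot reduce (else \cref{lem:forward-simulation} would make $\sem\st$ step, but values do not step); hence $\st$ is already a value and we take $\sv = \st$. For clause (3), suppose $\sem\st \step \ms$. The value case is impossible, since $\sem\st$ would be a value by \cref{lem:sem-val-evctx}; if $\st = \serr$ then determinism forces $\ms = \merr$ and we take $\sspr = \serr$, using $\sectxt\hw\serr \step \serr$ and $\merr \bigstepany \merr$; and if $\st \step \sspr$, then \cref{lem:forward-simulation} yields a first step $\sem\st \step \mspr$ with $\mspr \bigstepany \sem\sspr$, whereupon determinism of $\tlangname$ identifies the given step with it, $\ms = \mspr$, giving $\ms \bigstepany \sem\sspr$ as required.

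The conceptual point worth emphasizing is that backward simulation does \emph{not} follow from forward simulation alone: one additionally needs source progress, to guarantee the source can move whenever its translation does, and determinism of the target, to pin the observed step $\ms$ to the unique step that forward simulation predicts. Once these two ingredients are isolated the argument is pure bookkeeping, and the resulting one-step clauses are exactly what is later assembled --- by an induction on the target step count that again appeals to determinism --- into the adequacy theorem.
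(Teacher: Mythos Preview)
Your approach is correct and takes a genuinely different route from the paper's. The paper argues by structural induction on $\st$: parts (1)--(2) are packaged as a separate ``Translation Reflects Results'' lemma proved by cases on $\st$, and part (3) is another induction on $\st$ that invokes (1)--(2) whenever a subterm's translation is already a value and otherwise replays the forward-simulation case analysis for the outer constructor. You instead factor through a source-level progress lemma and derive all three clauses uniformly from progress, forward simulation, and target determinism, with no second induction on $\st$.

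Your route is more modular---it isolates exactly the three ingredients that force backward simulation and makes the argument largely translation-independent once those are established---while the paper's is more direct but essentially duplicates the cast case analysis already done for forward simulation. One trade-off worth noting: the paper's induction is robust to the \textsc{DynDyn} cast translating to the bare hole $\mhole$ (as the translation table in fact states), explicitly handling $\obcast\sdynty\sdynty\ss$ when $\sem\ss$ is already a value or $\merr$. Your syntactic observation that $\sem\st = \merr$ forces $\st = \serr$, and your clause-(1) claim that $\st$ cannot step, both fail under that translation---but, as you correctly observe, so would forward simulation as stated, and the paper's own forward-simulation proof actually uses a $\mfont{let}$ form at \textsc{DynDyn}. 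So your assumption is consistent with the lemma you invoke, and the discrepancy is a wrinkle in the paper's presentation rather than a gap in your argument.
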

\else
\begin{lemma}[Translation reflects Results]
\label{lem:reflect}
  \begin{enumerate}
  \item If $\sem{\st}$ is a value, $\st \stepstar \sv$ for some $\sv$
    with $\sem\st=\sem\sv$.
  \item If $\sem{\st} = \merr$, then $\st \stepstar \serr$.
  \end{enumerate}
\end{lemma}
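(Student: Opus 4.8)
The plan is to prove both parts by induction on the derivation of $\cdot \vdash \st : \sA$, reading the outermost shape of $\sem\st$ off the term translation (\figref{fig:term-translation}) together with the cast-context translation (\figref{fig:direct-cast-translation} and \figref{fig:functor}). The organizing observation is that the translation is essentially the identity on term structure: every source term former that is not a value is sent to a target term headed by an elimination form, a let-binding, a case split, or an application, none of which is a value; and among the cast contexts $\mectxt_{\obcast\sA\sB}$ only the identity context $\mhole$ is ``transparent,'' so the only clause producing a term syntactically equal to $\merr$ is $\sem\serr = \merr$ (possibly wrapped in identity casts). Hence each hypothesis sharply constrains the shape of $\st$, and the bulk of the cases are vacuous because their translations can be neither a value (part 1) nor $\merr$ (part 2). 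One could instead argue from \cref{lem:forward-simulation} and determinism of the target that $\st$ must be a normal form, but that route additionally requires a canonical-forms/progress lemma for $\glangname$, which we are trying to avoid; the structural induction below establishes just enough of that fact inline.

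For part 1, suppose $\sem\st$ is a value. If $\st$ is already a value---$\stunit$, a $\lambda$-abstraction, or a tagged value $\obcast\stagty\sdynty\sv$---we are done in zero steps. The only remaining formers whose translation can possibly be a value are $\stpair{\stone}{\sttwo}$, $\stinj{\st'}$, $\stinjpr{\st'}$, the tag-up cast $\obcast\stagty\sdynty\st'$, and the identity cast $\obcast\sdynty\sdynty\st'$; in each of these the translation is a value exactly when the translations of the immediate subterms are. I therefore apply the induction hypothesis to each such subterm and lift the resulting source reductions through the corresponding source evaluation contexts of \figref{fig:glang:syntax} (e.g. $\stpair{\sectxt}{\ss}$ then $\stpair{\sv}{\sectxt}$ for pairs, and $\obcast\sA\sB{\sectxt}$ for casts). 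For the identity cast I close with one application of \textsc{DynDyn} ($\obcast\sdynty\sdynty\sv \step \sv$), and for the tag-up cast I simply observe that $\obcast\stagty\sdynty\sv$ is already a value of $\glangname$; in both cases the translations match by definition. Every other former---applications, pattern-matches on pairs and sums, the product/sum/function (functorial) casts, and the general up/down casts that factor through $\floor\sA$---has a let-binding, a case split, or a pair-match at its head, so its translation is not a value and the case is vacuous.

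For part 2, suppose $\sem\st = \merr$ as a syntactic identity of target terms. Inspecting the clauses, $\sem\st$ equals $\merr$ only if $\st = \serr$, or if $\st = \obcast\sdynty\sdynty\st'$ with $\sem{\st'} = \merr$ (since $\mectxt_{\obcast\sdynty\sdynty} = \mhole$ forwards its argument unchanged, while every other cast context and every other term former wraps its argument in a strictly larger constructor; the incompatible cast, for instance, yields $\mtlet\mx{\sem{\st'}}{\merr}$, a let-binding rather than $\merr$). Thus $\st$ is a finite stack of identity dynamic casts over $\serr$. The base case $\st = \serr$ reduces to $\serr$ in zero steps, and each wrapping layer is discharged by the induction hypothesis followed by error propagation in the cast evaluation context, $\obcast\sdynty\sdynty\st' \stepstar \obcast\sdynty\sdynty\serr \step \serr$ (using $\sectxt\hw\serr \step \serr$).

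The one place that demands genuine care---and the main obstacle---is the cast case of the induction, since $\mectxt_{\obcast\sA\sB}$ is built compositionally. One must expand each family of casts (identity, tag-up and tag-down, the product/sum/function actions of \figref{fig:functor}, the general up- and down-casts that compose a structural cast with a tag cast through $\floor\sA$, and the incompatible cast $\mtlet\mx\mhole\merr$) and, for each, determine the outermost constructor of the resulting composite target term. The decisive sub-facts are that every structural cast presents a pair-match, a case split, or a let-binding at its head (hence is never a value and never $\merr$), that the tag-down cast is itself a case split, and that the tag-up cast produces a roll of an injection that is a value precisely when its argument is. Once these are tabulated, the non-vacuous cases collapse to exactly the short lists handled in parts 1 and 2, and the induction goes through.
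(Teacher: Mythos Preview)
Your proposal is correct and follows essentially the same approach as the paper: induction on $\st$, observing that only a handful of term formers and cast contexts (the identity cast $\obcast\sdynty\sdynty$ and the tag-up cast $\obcast\stagty\sdynty$) can yield a target value, and only $\serr$ wrapped in identity casts can yield $\merr$. Your write-up is considerably more explicit than the paper's---spelling out the vacuous cases and the head-constructor analysis for composite cast contexts---but the skeleton and key observations are the same.
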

\begin{proof}
  By induction on $\st$. For the non-casts, follows by inductive
  hypothesis. For the casts, only two cases can be values:
  \begin{enumerate}
  \item $\obcast\sdynty\sdynty\st$: if $\sem{\obcast\sdynty\sdynty\st}
    = \sem\st$ is a value then by inductive hypothesis, $\st$ is a
    value, so $\obcast\sdynty\sdynty\sv \step \sv$.
  \item $\obcast{\stagty}\sdynty\st$: if
    $\mtroll{\sem\sdynty}{\mtsum{\stagty}{\sem\st}}$ is a value, then
    $\sem\st$ is a value so by inductive hypothesis $\st \stepstar
    \sv$ so $\obcast{\stagty}{\sdynty}\st \stepstar
    \obcast{\stagty}{\sdynty}\sv$.
  \end{enumerate}

  For the error case, there is only one case where it is possible for
  $\sem\st = \merr$ without $\st = \serr$:
  \begin{enumerate}
  \item For $\obcast\sdynty\sdynty\ss$, if
    $\sem{\obcast\sdynty\sdynty}\sem\ss = \sem\ss$ is an error then
    clearly $\sem\ss = \merr$ so by inductive hypothesis $\ss \stepstar \serr$
    and because casts are strict,
    \[ \obcast\sdynty\sdynty\ss \stepstar \ss \]
  \end{enumerate}
\end{proof}
\begin{lemma}[Backward Simulation]
\label{lem:backward-simulation}
  If $\sem\st \step \ms$ then there exists $\sspr$ with $\st \step
  \sspr$ and $\ms \bigstepany \sem\sspr$.
\end{lemma}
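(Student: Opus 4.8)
The plan is to reduce the statement to forward simulation (\cref{lem:forward-simulation}) together with determinism of $\tlangname$, so that the only genuine work is a \emph{progress}-style fact about the source. Concretely: suppose I can show that $\st$ itself takes a step, say $\st \step \sspr$. Then \cref{lem:forward-simulation} supplies a reduction $\sem\st \step \ms'$ with $\ms' \bigstepany \sem\sspr$, while the hypothesis gives a second first step $\sem\st \step \ms$ out of the same term; determinism of $\tlangname$ forces $\ms = \ms'$, and therefore $\ms \bigstepany \sem\sspr$, which is exactly what is required. Thus, for closed well-typed $\st$ (the standing assumption), the entire content of the lemma is: \emph{whenever $\sem\st$ can step, $\st$ can step}.

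Before attacking that, I would record one preliminary observation: if $\sem\st$ is a value, then $\st$ is a value. This follows from \cref{lem:reflect}: such a $\st$ satisfies $\st \stepstar \sv$ with $\sem\st = \sem\sv$, and were $\st$ not already a value this sequence would be nonempty, so $\st$ would take a step, which by \cref{lem:forward-simulation} would make the value $\sem\st$ step --- impossible. This is the crucial ingredient, because a \emph{non-value} source term can have a value-shaped subterm structure after translation (e.g.\ the identity cast $\obcast\sdynty\sdynty\sv$); the observation lets me always read off from the shape of $\sem\st$ exactly which source subterm is in evaluation position.

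With this in hand I would prove the lemma by induction on the structure of $\st$. The case $\st = \serr$ is immediate. For every other form, $\st$ decomposes as $\sectxt[\st_0]$ with $\st_0$ in evaluation position, and the translation is compositional: $\sem\st$ is $\sem{\st_0}$ placed into the corresponding target evaluation context (those of \cref{fig:direct-cast-translation} and \cref{fig:functor} for casts, and the evident ones for the other term formers). If $\st_0$ is not a value, then by the preliminary observation $\sem{\st_0}$ is not a value either, so the observed step $\sem\st \step \ms$ must take place strictly inside $\sem{\st_0}$; I apply the induction hypothesis to $\st_0$ and close up under the target evaluation context, using that $\bigstepany$ is a congruence for evaluation contexts. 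If instead every subterm in evaluation position is a value, then $\st$ is itself a source redex (by the value grammar together with inversion on the typing of $\st$, i.e.\ canonical forms), so $\st \step \sspr$ and I conclude by the forward-simulation-plus-determinism argument of the first paragraph.

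The main obstacle is the cast case, and specifically the mismatch between source values and target values: a cast on a value such as $\obcast\stagty\sdynty\sv$ is a source value, whereas $\obcast\sdynty\sdynty\sv$ is not, yet these can translate to superficially similar terms. The preliminary observation ($\sem\st$ a value implies $\st$ a value), which itself relies on the strictness of the cast translations in \cref{fig:direct-cast-translation}, is what neutralizes this; after that, each cast subcase is a routine check matching the single target step against the corresponding source cast reduction of \cref{fig:glang:opsem} (the value-indexed tag analysis realized by \textsc{DynDyn}, \textsc{TagMatch}, \textsc{TagMismatch}, \textsc{Pair}, \textsc{Sum}, \textsc{Fun}, etc.). A secondary bookkeeping point is to confirm, case by case, that the hole of each $\mectxt_{\obcast\sA\sB}$ sits in a target evaluation position, so that an internal step of $\sem{\st_0}$ genuinely is the unique next step of $\sem\st$.
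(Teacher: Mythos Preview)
Your approach is correct and takes a genuinely different route from the paper's. The paper argues by direct induction on $\st$: at each term former it case-splits on whether the translation of the principal subterm is already a value, and when it is (but the source subterm is not) it invokes \cref{lem:reflect} directly and then replays the corresponding case of the forward-simulation proof. You instead factor the lemma into a progress-style claim (``if $\sem\st$ steps then $\st$ steps'') plus a uniform closing argument via \cref{lem:forward-simulation} and target determinism; the progress claim is in turn driven by your preliminary observation, which is a strict strengthening of \cref{lem:reflect} (source \emph{is} a value, rather than merely $\stepstar$-reduces to one) obtained by combining reflect with strict forward simulation. Your decomposition is cleaner and avoids re-doing the forward-simulation case analysis inside the backward proof; the paper's direct argument is more self-contained in that it never appeals to target determinism.

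Two small points to tighten your argument. First, you should also record the error analogue of the preliminary observation ($\sem{\st_0}=\merr$ forces $\st_0=\serr$, by \cref{lem:reflect} part~2 plus strict forward simulation), so that when the subterm in evaluation position translates to $\merr$ you can take the source error-propagation step. Second, your reduction leans essentially on \cref{lem:forward-simulation} producing at least one target step per source step; be aware that if any cast translation were the bare hole (so that an administrative source reduction like \textsc{DynDyn} corresponded to zero target steps), your preliminary observation would fail --- this is exactly why the paper phrases the subterm case via \cref{lem:reflect} rather than via ``value implies value''.
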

\begin{proof}
  By induction on $\st$. We show two illustrative cases, the rest
  follow by the same reasoning.
  \begin{enumerate}
  \item $\sem{\stmatchpair{\sx}{\sy}{\st}{\ss}} =
    \mtmatchpair{\mx}{\my}{\sem\st}{\sem\ss}$. If $\sem\st$ is not a
    value, then we use the inductive hypothesis. If $\sem\st$ is a
    value and $\st$ is then by \lemref{lem:reflect} $\st
    \stepstar \sv$ and then we can reduce the pattern-match in source
    and target.
  \item $\sem{\obcast{\sAone\sto\sBone}{\sAtwo\sto\sBtwo}\st} =
    (\mE_{\obcast{\sAtwo}{\sAone}} \mto
    \mE_{\obcast{\sBone}{\sBtwo}})\hw{\sem\st}$.  If $\sem\st$ is
    not a value, we use the inductive hypothesis.  Otherwise, if it is
    a value and $\st$ is not, we use \lemref{lem:reflect} to get
    ${\obcast{\sAone\sto\sBone}{\sAtwo\sto\sBtwo}\st}
    \stepstar{\obcast{\sAone\sto\sBone}{\sAtwo\sto\sBtwo}\sv}$. Then
    we use the same argument as the proof of
    \lemref{lem:forward-simulation}.
  \item
    $\sem{\obcast{\sA}{\sdynty}\st}=\mectxt_{\obcast{\floor\sA}{\sdynty}}\hw{\mectxt_{\obcast{\sA}{\floor\sA}}\hw{\sem\st}}$
    then we use the same argument as the case for
    $\obcast{\sA}{\floor\sA}\st$, e.g., the function case above.
  \end{enumerate}
\end{proof}
\fi

\begin{theorem}[Adequacy]
\label{lem:adequacy}{~}
  \begin{enumerate}
  \item $\sem\st \bigstepany \mv$ if and only if $\st \stepstar \sv$ with
    $\sem\sv = \mv$.
  \item$\sem\st \bigstepany \merr$ if and only if $\st \stepstar \serr$.
  \item $\sem\st$ diverges if and only if $\st$ diverges
  \end{enumerate}
\end{theorem}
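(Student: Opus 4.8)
The plan is to treat each biconditional as a forward (source-to-target) direction, powered by the simulation lemma \cref{lem:forward-simulation}, and a backward (target-to-source) direction, powered by \cref{lem:backward-simulation} together with \cref{lem:reflect}. For the $\Leftarrow$ directions of (1) and (2) I would iterate \cref{lem:forward-simulation} along a source reduction sequence $\st = \st_0 \step \st_1 \step \cdots \step \st_n$: each step is simulated by $\sem{\st_i} \step \ms_i \bigstepany \sem{\st_{i+1}}$, hence $\sem{\st_i} \bigstepany \sem{\st_{i+1}}$, and composing yields $\sem\st \bigstepany \sem{\st_n}$. Taking $\st_n = \sv$ and using \cref{lem:sem-val-evctx} to know $\sem\sv$ is a value establishes (1)$\Leftarrow$; taking $\st_n = \serr$, where $\sem\serr = \merr$ by definition of the translation, establishes (2)$\Leftarrow$.

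The backward directions are the crux. I would prove, by strong induction on the number $N$ of small steps in the target run, the statement: if $\sem\st \bigstepany \mv$ then $\st \stepstar \sv$ with $\sem\sv = \mv$, and if $\sem\st \bigstepany \merr$ then $\st \stepstar \serr$. If $N = 0$ then $\sem\st$ is itself a value or $\merr$, and \cref{lem:reflect} hands back the corresponding source result directly. If $N > 0$ then $\sem\st \step \ms \bigstepany \mv$ (resp.\ $\merr$), and \cref{lem:backward-simulation} yields $\sspr$ with $\st \step \sspr$ and $\ms \bigstepany \sem\sspr$. Because $\tlangname$ is deterministic and the terminal state cannot reduce further, both it and $\sem\sspr$ lie on the unique reduction path out of $\ms$; hence $\sem\sspr \bigstepany \mv$ (resp.\ $\merr$) in strictly fewer than $N$ steps. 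The induction hypothesis applied to $\sspr$ gives $\sspr \stepstar \sv$ (resp.\ $\serr$), and prepending $\st \step \sspr$ closes the case. Instantiating at $\mv$ gives (1)$\Rightarrow$ (the disequality $\sem\serr = \merr \neq \mv$ rules out the error outcome), and at $\merr$ gives (2)$\Rightarrow$.

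For item (3), the $\Rightarrow$ direction is immediate: by \cref{lem:forward-simulation} every source step forces at least one target step, so an infinite source reduction yields an infinite target reduction. For $\Leftarrow$ I would construct a divergent source run directly rather than argue by elimination. Given an infinite target reduction $\sem\st \step \ms_1 \step \ms_2 \step \cdots$, \cref{lem:backward-simulation} produces $\sspr_1$ with $\st \step \sspr_1$ and $\ms_1 \bigstepany \sem{\sspr_1}$; by determinism $\sem{\sspr_1}$ sits on the infinite path and therefore itself diverges, so the argument iterates to build $\st \step \sspr_1 \step \sspr_2 \step \cdots$. I avoid deriving (3) from (1) and (2) by a trichotomy argument precisely because that would require a progress/type-safety result for $\glangname$, which the paper deliberately does not establish directly.

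The main obstacle is the \emph{stuttering} character of \cref{lem:backward-simulation}: one target step is reflected only up to an a priori unbounded number of further target steps ($\ms \bigstepany \sem\sspr$), so an unguarded induction on target steps need not decrease. The fix is to take the length of the target run to termination as the induction measure and to use determinism of $\tlangname$ to re-anchor $\sem\sspr$ onto the unique path to the terminal state, which is what forces the measure to drop; the analogous re-anchoring keeps each $\sem{\sspr_i}$ on the infinite path in the divergence argument. These determinism-driven realignments carry essentially all the content of the theorem.
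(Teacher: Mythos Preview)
Your proposal is correct and matches the paper's argument for items (1) and (2): both iterate \cref{lem:forward-simulation} for the source-to-target direction and run \cref{lem:backward-simulation} together with \cref{lem:reflect} under induction on the length of the target run for the other, with determinism of $\tlangname$ doing exactly the re-anchoring you describe (the paper's proof sketch suppresses this detail, but it is indeed needed). One cosmetic slip: in your discussion of (3) you have the $\Rightarrow$/$\Leftarrow$ labels reversed relative to how the statement is written---the argument ``every source step forces at least one target step'' establishes that $\st$ diverges implies $\sem\st$ diverges, which is the $\Leftarrow$ half.

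The one substantive difference is how that $\Leftarrow$ half of (3) is handled. The paper argues by elimination: if $\st$ diverges then, by the already-proved target-to-source halves of (1) and (2), $\sem\st$ cannot reach a value or $\merr$, whence type safety of $\tlangname$ forces divergence. Your iterated forward-simulation construction is more direct and avoids invoking type safety of the target; the paper's route instead reuses parts (1) and (2). For the $\Rightarrow$ half of (3) ($\sem\st$ diverges implies $\st$ diverges) you and the paper give the same coinductive construction via backward simulation, and your remark that a trichotomy argument here would illicitly require progress for $\glangname$ is exactly why both proofs take the direct route.
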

\iflong\begin{proof}
  The forward direction for values and errors is given by forward
  simulation \lemref{lem:forward-simulation}. The backward direction for values
  and errors is given by induction on $\sem\st\bigstepany \mtpr$,
  backward simulation \lemref{lem:backward-simulation} and reflection of results
  \cref{lem:reflect}.

  If $\st$ diverges, then by the backward value and error cases, it
  follows that $\sem\st$ does not run to a value or error. By type safety
  of the typed language, $\sem\st$ diverges.

  Finally, if $\sem\st$ diverges, we show that $\st$ diverges. If
  $\sem\st \step \ms$, then by backward simulation, there exists
  $\sspr$ with $\sem\st \step \sspr$ and $\ms \bigstepany
  \sem\sspr$. Since $\sem\st \bigstepany \sem\sspr$, we know
  $\sem\sspr$ diverges, so by coinduction $\sspr$ diverges and
  therefore $\st$ diverges.
\end{proof}\fi

While this has reduced the number of primitives of the language,
reasoning about the behavior of the translated casts isn't any simpler
than the original operational semantics since they have the same
behavior.
For simpler reasoning about cast behavior, we will move further away
from a direct simulation of the source operational semantics, to a
second semantics based on ep pairs that is observationally equivalent
but also conceptually simpler and helps prove the gradual guarantee.
However, in order to prove that the second semantics is equivalent, we
first need to develop a usable theory of observational equivalence and approximation.

\section{A Logical Relation for Error Approximation}
\label{sec:logrel}

Next, we define observational equivalence and error approximation of
programs in the gradual and typed languages, the two properties with
which we formulate embedding-projection pairs.
To facilitate proofs of error approximation, we develop a novel step-indexed
logical relation. 
Since our notion of approximation is non-standard, the use of
step-indexing in our logical relation is inconvenient to use directly.
So, on top of the ``implementation'' of the logical relation as a
step-indexed relation, we prove many high-level lemmas so that all
proofs in the next sections are performed relative to these lemmas,
and none manipulate step indices directly.

\subsection{Observational Equivalence and Approximation}

A suitable notion of equivalence for programs is \emph{observational
  equivalence}.
We say $\st$ is observationally equivalent to $\stpr$ if replacing one
with the other in the context of a larger program produces the same
result (termination, error, or divergence).
We formalize this saying a program \emph{context} $\sC$ is a term
with a single hole $\hole$.
A context is typed $\senvpr \vdash \sC\hw{\senv\vdash \cdot : \sA} :
\sApr$ when for any term $\senv\vdash \st : \sA$, replacing the hole
with $\st$ results in a well-typed $\senvpr\vdash\sC\hw{\st} : \sApr$

While this notion of observational equivalence is entirely standard,
the notion of \emph{approximation} we use---which we call error
approximation---is not the standard notion of observational
approximation.
Usually, we would say $\st$ observationally approximates $\stpr$ if, when
placing them into the same context $\sC$, either $\sC[\st]$ diverges or
they both terminate or both error.
We call this form of approximation \emph{divergence approximation}.
However, for \emph{gradual typing} we are not particularly interested
in when one program diverges more than another, but rather when it
produces more \emph{type errors}.
We might be tempted to conflate the two, but their behavior is quite
distinct!
We can never truly know if a black-box program will continue
indefinitely: it would frustrate any programmer to use a language that
runs forever when accidentally using a function as a number.
The reader should keep this difference in mind when seeing how our
logical relation differs form the standard treatment.
In the rest of this paper, when discussing the two together we will
clearly distinguish between divergence and error approximation, but
when there is no qualifier, 
approximation is meant as \emph{error approximation}. 

\begin{definition}[Gradual Observational Equivalence, Error
  Approximation]
  For any well typed terms $\senv \vdash \st, \stpr : \sA$,
  \begin{enumerate}
  \item Define $\senv \vDash \st \obseqv \stpr : \sA$, pronounced
    ``$\st$ is observationally equivalent to $\stpr$'' to hold when for
    any $\cdot \vdash \sC\hw{\senv\vdash\cdot:\sA} : \sB$, either $\sC[\st]$ and
    $\sC[\stpr]$ both reduce to a value, both reduce to an error, or
    both diverge.
  \item Define $\senv \vDash \st \obsapprox \stpr : \sA$, pronounced
    ``$\st$ observationally (error) approximates $\stpr$'' to hold
    when for any $\cdot \vdash \sC\hw{\senv\vdash\cdot:\sA} :
    \sB$, either $\sC[\st]$ reduces to $\serr$ or both $\sC[\st]$
    and $\sC[\stpr]$ reduce to a value or both diverge.
  \end{enumerate}  
\end{definition}

As with divergence approximation, we can prove two programs
are observationally equivalent by showing each error approximates the other.
\begin{lemma}[Equivalence is Approximation Both Ways]
  $\senv \vDash \stone \obseqv \sttwo : \sA$ if and only if both
  $\senv \vDash \stone \obsapprox \sttwo : \sA$ and $\senv \vDash
  \sttwo \obsapprox \stone : \sA$.
\end{lemma}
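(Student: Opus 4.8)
The plan is to prove both directions by simply unfolding the two definitions, with the only external ingredient being the \emph{trichotomy} that every closed, well-typed gradual program $\cdot \vdash \sC[\stone] : \sB$ reduces to a value, reduces to $\serr$, or diverges, and that these three outcomes are mutually exclusive. Mutual exclusivity and exhaustiveness follow from determinism of reduction together with type safety; in the gradual setting this can be obtained from the corresponding facts for $\tlangname$ by transporting each outcome across the translation using the adequacy theorem (\cref{lem:adequacy}). With the trichotomy in hand, everything reduces to a finite case analysis on the behavior of $\sC[\stone]$.

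For the forward direction, I would assume $\senv \vDash \stone \obseqv \sttwo : \sA$ and fix an arbitrary closing context $\sC$. By definition of $\obseqv$, the programs $\sC[\stone]$ and $\sC[\sttwo]$ either both reduce to a value, both reduce to $\serr$, or both diverge. In the first case the ``both reduce to a value'' clause of error approximation is met; in the second case $\sC[\stone]$ reduces to $\serr$, which is exactly the first disjunct of error approximation; in the third case both diverge. Hence $\senv \vDash \stone \obsapprox \sttwo : \sA$. Since the three outcomes of $\obseqv$ are symmetric in $\stone$ and $\sttwo$, the identical argument gives $\senv \vDash \sttwo \obsapprox \stone : \sA$.

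For the backward direction, I would assume both $\senv \vDash \stone \obsapprox \sttwo : \sA$ and $\senv \vDash \sttwo \obsapprox \stone : \sA$, fix $\sC$, and case split on the behavior of $\sC[\stone]$. If $\sC[\stone]$ reduces to a value, then in $\stone \obsapprox \sttwo$ the error disjunct is excluded (a value is not $\serr$) and the divergence disjunct is excluded (by determinism a program cannot both converge and diverge), so the value disjunct holds and both programs reduce to values. If $\sC[\stone]$ diverges, then both the value and the error disjuncts of $\stone \obsapprox \sttwo$ are excluded, forcing the divergence disjunct, so both diverge. The remaining case, where $\sC[\stone]$ reduces to $\serr$, is the one place the argument must switch to the \emph{other} approximation $\sttwo \obsapprox \stone$: its value and divergence disjuncts each require $\sC[\stone]$ to converge or to diverge, contradicting $\sC[\stone]$ reducing to $\serr$, so its remaining disjunct---that $\sC[\sttwo]$ reduces to $\serr$---must hold, and both programs error. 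In every case one of the three clauses of $\obseqv$ is satisfied, yielding $\senv \vDash \stone \obseqv \sttwo : \sA$.

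The proof is essentially routine, so I do not expect a serious obstacle; the one subtle point to get right is the error case of the backward direction, where one cannot stay within a single approximation and must instead invoke the reverse one to pin down the behavior of $\sC[\sttwo]$. Everything else is immediate once the mutual exclusivity of the three observable outcomes is established, which is precisely where determinism and type safety (via adequacy) do all the work.
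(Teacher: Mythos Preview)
Your proposal is correct. The paper does not supply a proof for this lemma at all---it is stated without proof as an immediate consequence of the definitions---so there is nothing substantive to compare against; your case analysis on the trichotomy of outcomes, with the switch to the reverse approximation in the error case, is exactly the routine argument the paper is implicitly relying on.
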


We define typed observational equivalence $\menv \vDash \mt \obseqv
\mtpr : \mA$ and observational error approximation $\menv \vDash \mt
\obsapprox \mtpr : \mA$ with the exact same definition as for the
gradual language above, but in $\mfont{\textbf{red}}$ instead of
$\sfont{\textsf{blue}}$.
We rarely work with the gradual language directly, instead we prove
approximation results for their translations.
\iflong
\begin{definition}[Typed Observational equivalence, Error Approximation]
  For any well typed terms $\menv \vdash \mt, \mtpr : \mA$,
  \begin{enumerate}
  \item Define $\menv \vDash \mt \obseqv \mtpr : \mA$, pronounced
    ``$\mt$ is observationally equivalent to $\mtpr$'' to hold when for
    any $\mC[\cdot:\mA] : \sB$, either $\mC[\mt]$ and
    $\mC[\mtpr]$ both reduce to a value, both reduce to an error, or
    both diverge.
  \item Define $\menv \vDash \mt \obsapprox \mtpr : \mA$, pronounced
    ``$\mt$ observationally (error) approximates to $\mtpr$'' to hold when
    for any $\mC[\cdot:\mA] : \sB$, either $\mC[\mt]$ reduces to
    $\merr$ or both $\mC[\mt]$ and $\mC[\mtpr]$ reduce to a value or
    both diverge.
  \end{enumerate}  
\end{definition}
\fi
This is justified by the following lemma, a consequence of our
\emph{adequacy} result (\cref{lem:adequacy}).
\begin{lemma}[Typed Observational Approximation implies Gradual Observational Approximation]
\label{lem:typed-gradual-approx}
  If $\sem\senv \vDash \sem\stone \obsapprox \sem\sttwo : \sem\sA$ then
  $\senv \vDash \stone \obsapprox \sttwo : \sA$
\end{lemma}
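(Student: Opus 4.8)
The plan is to reduce the gradual approximation to the typed approximation hypothesis by translating contexts, using compositionality of the translation $\sem{\cdot}$ together with the adequacy theorem (\cref{lem:adequacy}).

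First I would extend the term translation $\sem{\cdot}$ to gradual contexts $\sC$, defining $\sem{\sC}$ by structural recursion in the same homomorphic fashion as the term translation, with a cast surrounding the hole translating to the corresponding evaluation context $\mectxt_{\obcast\sA\sB}$. The key lemma is compositionality: for every gradual context $\sC$ and gradual term $\st$, $\sem{\sC[\st]} = \sem{\sC}[\sem{\st}]$. This follows by a routine induction on the structure of $\sC$; the only nontrivial case is when the hole sits directly beneath a cast, which is handled by the defining equation $\sem{\obcast\sA\sB\st} = \mectxt_{\obcast\sA\sB}\hw{\sem{\st}}$ of the translation. I would also note, by type preservation of the translation, that a well-typed gradual context $\cdot \vdash \sC\hw{\senv\vdash\cdot:\sA} : \sB$ translates to a well-typed typed context $\cdot \vdash \sem{\sC}\hw{\sem\senv\vdash\cdot:\sem\sA} : \sem\sB$.

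With these in hand, I would unfold the definition of gradual approximation: fix an arbitrary well-typed gradual context $\cdot \vdash \sC\hw{\senv\vdash\cdot:\sA} : \sB$ and establish the required trichotomy for $\sC[\stone]$ and $\sC[\sttwo]$. Since $\sem{\sC}$ is a well-typed typed context, the hypothesis $\sem\senv \vDash \sem\stone \obsapprox \sem\sttwo : \sem\sA$ applies to it and yields that either $\sem{\sC}[\sem\stone]$ reduces to $\merr$, or both $\sem{\sC}[\sem\stone]$ and $\sem{\sC}[\sem\sttwo]$ reduce to values, or both diverge. By compositionality, these are exactly statements about the translations $\sem{\sC[\stone]}$ and $\sem{\sC[\sttwo]}$.

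Finally I would transfer each case back to the gradual language with adequacy. The error case $\sem{\sC[\stone]} \bigstepany \merr$ gives $\sC[\stone] \stepstar \serr$ by \cref{lem:adequacy}(2); the value case gives that $\sC[\stone]$ and $\sC[\sttwo]$ both reduce to values by part (1); and the divergence case gives that both diverge by part (3). This is precisely $\senv \vDash \stone \obsapprox \sttwo : \sA$. The only real work here is the compositionality lemma, which is conceptually simple but requires first setting up the context translation and checking that translating casts to evaluation contexts commutes with hole-plugging; the remaining steps are a direct consequence of adequacy.
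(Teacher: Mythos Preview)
Your proposal is correct and follows essentially the same approach as the paper: use compositionality of the translation on contexts to rewrite $\sem{\sC[\stin{i}]}$ as $\sem{\sC}[\sem{\stin{i}}]$, apply the typed approximation hypothesis to the translated context, and then transfer each of the three cases back via adequacy (\cref{lem:adequacy}). The paper's proof is terser---it simply invokes compositionality without spelling out the context translation or the cast case---but the argument is the same.
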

\iflong\begin{proof}
  For any $\sC$, by compositionality of the translation,
  $\sem{\sC[\stone]} = \sem{\mC}[\sem\stone]$ and $\sem{\sC[\sttwo]} =
  \sem{\sC}[\sem\sttwo]$.
  Then we analyze $\sem\stone \obsapprox
  \sem\sttwo$
  \begin{enumerate}
  \item If $\sem\sC[\sem\stone] \bigstep \merr$ then \cref{lem:adequacy}
    states that $\sC[\stone] \bigstep \serr$.
  \item If $\sem\sC[\sem\stone]$ diverges, then $\sem\sC[\sem\sttwo]$
    also diverges and therefore by \cref{lem:adequacy}, $\sC[\stone]$
    and $\sC[\sttwo]$ diverge.
  \item If $\sem\sC[\sem\stone] \bigstepany \svone$, then
    $\sem\sC[\sem\sttwo] \bigstepany \svtwo$ and therefore by
    \cref{lem:adequacy}, $\sC[\stone] \stepstar \svone$ and
    $\sC[\sttwo] \stepstar \svtwo$.
  \end{enumerate}  
\end{proof}\fi

\subsection{Logical Relation}  Observational equivalence and approximation are extremely difficult to
prove directly, so we use the usual method of proving observational
results by using a \emph{logical relation} that we prove \emph{sound}
with respect to observational approximation.
Due to the non-well-founded nature of recursive types (and the dynamic
type specifically), we develop a \emph{step-indexed} logical relation
following \citet{ahmed06:lr}.
We define our logical relation for error approximation in
\figref{fig:lr}.
Because our notion of error approximation is not the standard notion
of approximation, the definition is a bit unusual, but this is
necessary for technical reasons.

It is key to compositional reasoning about embedding-projection pairs
that approximation be \emph{transitive} and care must be taken to show
transitivity for a step-indexed relation.
However, for standard definitions of logical relations for
observational equivalence, it is difficult to prove
transitivity directly.  Therefore, it is often established through indirect
reasoning---e.g., by setting up a biorthogonal ($\top\top$-closed) 
logical relation so one can easily show it is complete with respect to
observational equivalence, which in turn implies that it must be
transitive since observational equivalence is easily proven transitive.
The reason establishing transitivity is tricky is that a step-indexed relation
is \emph{not} transitive at a \emph{fixed} index, i.e., if $e_1 \isim i e_2$ and
$e_2 \isim i e_3$ it is not necessarily the case that $e_1 \isim i
e_3$.  For instance, $e_1 \isim i e_2$ might be related because $e_1$ terminates
in less than $i$ steps and has the same behavior as $e_2$ which takes more than
$i$ steps to terminate, whereas $e_2 \isim i e_3$ are related because they both
take $i$ steps of reduction so cannot be distinguished in $i$ steps but have
different behavior when run for more steps. 
%
One direct method for proving transitivity, originally presented in
\citet{ahmed06:lr}, is to observe that two terms are observationally
equivalent when each divergence approximates the other, and then use a
step-indexed relation for divergence approximation.
Because a conjunction of transitive relations is transitive, this
proves transitivity of equivalence.
A step-indexed relation for divergence approximation can be shown to
have a kind of ``half-indexed'' transitivity, i.e., if $e_1 \iprec i
e_2$ and for every natural $j$, we know $e_2 \iprec j e_3$ then $e_1
\iprec i e_3$.
We have a similar issue with error approximation: the na\"ive logical
relation for error approximation is not clearly transitive.
Inspired by the case of observational equivalence, we similarly
``split'' our logical relation into two relations that can be proven
transitive by an argument similar to divergence approximation.
However, unlike for observational equivalence, our two relations are not the
same.  Instead, one $\ltdynr$ is error approximation up to divergence on the
\emph{left} and the other $\gtdynr$ is error approximation up to
divergence on the \emph{right}.

For a given natural number $i \in \mathbb{N}$ and type $\mA$, and
\emph{closed terms} $\mtone,\mttwo$ of type $\mA$, $\mtone
\tltdynr{i}{\mA} \mttwo$ intuitively means that, if we only inspect
$\mtone$'s behavior up to $i$ uses of $\mtunroll\cdot$, then it
appears that $\mtone$ error approximates $\mttwo$.
Less constructively, it means that we cannot show that $\mtone$ does
\emph{not} error approximate $\mttwo$ when limited to $i$ uses of
$\mtunroll\cdot$.
However, even if we knew $\mtone \tltdynr{i}{\mA} \mttwo$ for
\emph{every} $i \in \mathbb{N}$, it still might be the case that
$\mtone$ diverges, since no finite number of unrolling can ever
exhaust $\mtone$'s behavior.
So we also require that we know $\mtone \tgtdynr{i}{\mA} \mttwo$,
which means that up to $i$ uses of unroll on $\mttwo$, it appears that
$\mtone$ error approximates $\mttwo$.

\begin{figure}
  \begin{mathpar}
    \begin{array}{rcl}
      \tltdynr{i}{\mA}, \tgtdynr{i}{\mA} & \subseteq & \{\mt \mathrel{|} \cdot \vdash \mt : \mA\}^2\\
     \mtone \tltdynr{i}{\mA} \mttwo & \defeq & (\exists \mtonepr.~ \mtone \bigstep{i+1}\mtonepr)\\
      & & \vee (\exists j\leq i.~ \mtone \bigstep{j} \merr)\\
      & & \vee (\exists j\leq i, \mvone \vltdynr{i-j}{\mA} \mvtwo.~ \mtone \bigstep{j} \mvone \wedge \mttwo \bigstepany \mvtwo)\\
      \mtone \tgtdynr{i}{\mA} \mttwo & \defeq & (\exists \mttwopr.~ \mttwo \bigstep{i+1}\mttwopr)\\
      & & \vee (\exists j\leq i.~ \mttwo \bigstep{j} \merr \wedge \mtone \bigstepany \merr)\\
      & & \vee (\exists j\leq i, \mvtwo.~ \mttwo \bigstep{j} \mvtwo \wedge\\
      & & \qquad (\mtone \bigstepany \merr \vee \exists \mvone.~ \mtone \bigstepany \mvone \wedge \mvone \vgtdynr{i-j}{\mA} \mvtwo)\\\\

     \vltdynr{i}{\mA}, \vgtdynr{i}{\mA} & \subseteq & \{\mv \mathrel{|} \cdot \vdash \mv : \mA \}^2
          \qquad \text{where } {\simdynr} \in \{{\dotltdynr},{\dotgtdynr} \}\\
  \mvone \vsimdynr{0}{\mmuty{\malpha}{\mA}} \mvtwo & \defeq&  \top\\
  \mtroll{\mmuty{\malpha}{\mA}}{\mvone} \vsimdynr{i+1}{\mmuty{\malpha}{\mA}} \mtroll{\mmuty{\malpha}{\mA}}{\mvtwo} & \defeq &
  \mvone \vsimdynr{i}{\mA[\malpha\mapsto \mmuty{\malpha}{\mA}]} \mvtwo\\
  \mtunit \vsimdynr{i}{\munitty} \mtunit & \defeq & \top\\
  \mtpair{\mvone}{\mvonepr} \vsimdynr{i}{\mpairty{\mA}{\mApr}} \mtpair{\mvtwo}{\mvtwopr} & \defeq & \mvone \vsimdynr{i}{\mA} \mvtwo \wedge \mvonepr \vsimdynr{i}{\mApr} \mvtwopr\\
  \mvone \vsimdynr{i}{\msumty{\mA}{\mB}} \mvtwo & \defeq &
  (\exists (\mvonepr \vsimdynr{i}{\mA} \mvtwopr) \wedge \mvone = \mtinj{\mvonepr}\wedge \mvtwo = \mtinj{\mvtwopr}) \\
  & & \vee (\exists (\mvonepr \vsimdynr{i}{\mB} \mvtwopr) \wedge \mvone = \mtinjpr{\mvonepr}\wedge \mvtwo = \mtinjpr{\mvtwopr})\\
  \mvone \vsimdynr{i}{\mfunty{\mA}{\mB}} \mvtwo & \defeq & \forall j \leq i. \forall (\mvonepr \vsimdynr{j}{\mA} \mvtwopr).~ \mtapp{\mvone}{\mvonepr} \tsimdynr{i}{\mB} \mtapp{\mvtwo}{\mvtwopr}\\\\
  \cdot \vsimdynr{i}{\cdot} \cdot & \defeq & \top\\
    \mgammain{1},\mvin{1}/\mx \vsimdynr{i}{\menv,\mvar:\mA} \mgammain{2},\mvin{2}/\mx & \defeq & \mgammain{1} \vsimdynr{i}{\menv} \mgammain{2} \wedge \mvin{1} \vsimdynr{i}{\mA} \mvin{2}\\\\
     \menv \vDash \mtone \simdynr \mttwo : \mA & \defeq & \forall i \in \mathbb{N}, (\mgammain{1}
      \vsimdynr{i}{\menv} \mgammain{2}).~ \mtone[\mgammain{1}]
      \tsimdynr{i}{\mA} \mttwo[\mgammain{2}]\\\\
      \menv \vDash \mtone \dynr \mttwo : \mA & \defeq & \menv \vdash
      \mtone \ltdynr \mttwo : \mA \wedge \menv \vdash \mtone \gtdynr
      \mttwo : \mA
    \end{array}
  \end{mathpar}
  \caption{$\tlangname$ Error Approximation Logical Relation}
  \label{fig:lr}
\end{figure}

The above intuition should help to understand the definition of error
approximation for terms (i.e., the relations $\ltdynr_t$ and $\gtdynr_t$).
The relation $\mtone\tltdynr{i}{\mA} \mttwo$ is defined by inspection
of $\mtone$'s behavior: it holds if $\mtone$ is still running after
$i+1$ unrolls; or if it steps to an error in fewer than $i$ unrolls; or
if it results in a value in fewer than $i$ unrolls and also $\mttwo$
runs to a value and those values are related for the remaining steps.
The definition of $\mtone \tgtdynr{i}{\mA} \mttwo$ is
defined by inspection of $\mttwo$'s behavior: it holds if $\mttwo$ is
still running after $i+1$ unrolls; or if $\mttwo$ steps to an error in
fewer than $i$ steps then $\mtone$ errors as well; or if $\mttwo$
steps to a value, either $\mtone$ errors or steps to a value 
related for the remaining steps.

While the relations 
and $\gtdynr_t$ on terms are different, fortunately, the relations on
values are essentially the same, so we abstract over the cases by
having the symbol $\simdynr$ to range over either $\ltdynr$ or
$\gtdynr$.
For values of recursive type, if the step-index is $0$, we consider
them related, because otherwise we would need to perform an unroll to 
inspect them further.
Otherwise, we decrement the index and check if they are related.
Decrementing the index here is exactly what makes the definition of
the relation well-founded.
For the standard types, the value relation definition is indeed
standard: pairs are related when the two sides are related, sums must
be the same case and functions must be related when applied to any
related values in the future (i.e., when we may have exhausted some of
the available steps). 

Finally, we extend these relations to \emph{open} terms in the
standard way: we define substitutions to be related pointwise (similar
to products) and then say that $\menv \vDash \mtone \simdynr \mttwo :
\mA$ holds if for every pair of substitutions $\mgammaone, \mgammatwo$
related for $i$ steps, the terms after substitution, written
$\mtone[\mgammaone]$ and $\mttwo[\mgammatwo]$, are related for $i$ steps.
Then our resulting relation $\menv \vDash \mtone \dynr \mttwo$ is
defined to hold when $\mtone$ error approximates $\mttwo$ up to
divergence of $\mtone$ ($\ltdynr$), \emph{and} up to divergence of
$\mttwo$ ($\gtdynr$).

\iflong
\begin{figure}
  \begin{mathpar}
    \inferrule{~}{\semhomoltdyn{\menv}{\merr}{\merr}{\mty}}
    
      \inferrule
      {\mvar : \mty \in \menv}
      {\semhomoltdyn{\menv}{\mvar}{\mvar}{\mty}}

      \inferrule
      {\semhomoltdyn{\menv}{\mtone}{\mttwo}{\mA} \and
      \semhomoltdyn{\menv,\mx:\mA}{\msone}{\mstwo}{\mB}}
      {\semhomoltdyn{\menv}{\mtlet{\mx}{\mtone}{\msone}}{\mtlet{\mx}{\mttwo}{\mstwo}}{\mB}}

      \inferrule{\semhomoltdyn{\menv}{\mtermone}{\mtermtwo}{\mA[\mmuty{\malpha}{\mA}/\malpha]}}
      {\semhomoltdyn{\menv}{\mtroll{\mmuty{\malpha}{\mA}}{\mtermone}}{\mtroll{\mmuty{\malpha}{\mA}}{\mtermtwo}}{\mmuty{\malpha}{\mA}}}

      \inferrule{\semhomoltdyn{\menv}{\mtermone}{\mtermtwo}{\mmuty{\malpha}{\mA}}}
      {\semhomoltdyn{\menv}{\mtunroll{\mtermone}}{\mtunroll{\mtermtwo}}{\mA[\mmuty{\malpha}{\mA}/\malpha]}}

      \inferrule
      {~}
      {\semhomoltdyn{\menv}{\mtunit}{\mtunit}{\munitty}}

      \inferrule
      {\semhomoltdyn{\menv}{\mtone}{\mttwo}{\mA}\and \semhomoltdyn{\menv}{\msone}{\mstwo}{\mB}}
      {\semhomoltdyn{\menv}{\mtpair{\mtone}{\msone}}{\mtpair{\mttwo}{\mstwo}}{\mA \mtimes \mB}}

      \inferrule
      {\semhomoltdyn{\menv}{\mtone}{\mttwo}{\mAone\mtimes \mAtwo}\and
        \semhomoltdyn{\menv,\mx:\mAone,\my:\mAtwo}{\msone}{\mstwo}{\mB}}
      {\semhomoltdyn{\menv}{\mtmatchpair{\mx}{\my}{\mtone}{\msone}}{\mtmatchpair{\mx}{\my}{\mttwo}{\mstwo}}{\mB}}
      
      \inferrule
      {\semhomoltdyn{\menv}{\mtermone}{\mtermtwo}{\mA}}
      {\semhomoltdyn{\menv}{\mtinj{\mtermone}}{\mtinj{\mtermtwo}}{\mA \mplus \mApr}}

      \inferrule
      {\semhomoltdyn{\menv}{\mtermone}{\mtermtwo}{\mApr}}
      {\semhomoltdyn{\menv}{\mtinjpr{\mtermone}}{\mtinjpr{\mtermtwo}}{\msumty{\mA}{\mApr}}}

      \inferrule
      {\semhomoltdyn{\menv}{\mtermone}{\mtermtwo}{\mA \mplus \mApr} \and
        \semhomoltdyn{\menv,\mx:\mA}{\msone}{\mstwo}{\mB} \and
      \semhomoltdyn{\menv,\mxpr:\mApr}{\msonepr}{\mstwopr}{\mB}}
      {\semhomoltdyn{\menv}{\mtcase{\mtermone}{\mx}{\msone}{\mxpr}{\msonepr}}{\mtcase{\mtermtwo}{\mx}{\mstwo}{\mxpr}{\mstwopr}}{\mB}}

      \inferrule
      {\semhomoltdyn{\menv,\mx:\mA}{\mtone}{\mttwo}{\mB}}
      {\semhomoltdyn{\menv}{\mtufun{\mx}{\mtone}}{\mtufun{\mx}{\mttwo}}{\mA \mto \mB}}

      \inferrule
      {\semhomoltdyn{\menv}{\mtone}{\mtone}{\mA \mto \mB} \and
      \semhomoltdyn{\menv}{\msone}{\mstwo}{\mA}}
      {\semhomoltdyn{\menv}{\mtone\,\msone}{\mttwo\,\mstwo}{\mB}}
  \end{mathpar}
  \caption{$\tlangname$ Error Approximation Congruence Rules}
  \label{fig:congruence}
\end{figure}
\fi

We need the following standard lemmas.
\begin{lemma}[Downward Closure]
  \label{lem:downward-closed}
  If $j \leq i$ then
  \begin{enumerate}
  \item If $\mtone \tsimdynr{i}{\mA} \mttwo$ then $\mtone
    \tsimdynr{j}{\mA} \mttwo$
  \item If $\mvone \vsimdynr{i}{\mA} \mvtwo$ then $\mvone
    \vsimdynr{j}{\mA} \mvtwo$.
  \end{enumerate}
\end{lemma}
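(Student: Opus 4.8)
The plan is to prove both parts simultaneously by a well-founded induction whose metric mirrors the recursive structure of the definition in \figref{fig:lr}: lexicographically on the pair consisting of the index $i$ (primary) and the structure of the type $\mA$ (secondary). At each node $(i,\mA)$ I would first establish the value-relation part (2) and only then the term-relation part (1). The reason for this ordering is that, as the analysis below shows, the term relation at $(i,\mA)$ may appeal to the value relation at the \emph{same} $(i,\mA)$, whereas the value relation at $(i,\mA)$ only ever refers to the value or term relation at a strictly smaller metric; so there is no genuine circularity. Throughout I treat $\simdynr$ uniformly and only split into $\ltdynr$/$\gtdynr$ where the term relation forces it.

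For part (2) I would case on $\mA$. The recursive type $\mmuty{\malpha}{\mA}$ is the only case that consumes the index: if $j=0$ the target $\vsimdynr{0}{\mmuty{\malpha}{\mA}}$ is $\top$ and there is nothing to prove; if $j=j'+1\le i=i'+1$ then $\mvone,\mvtwo$ are rolls whose contents satisfy $\vsimdynr{i'}{\mA[\malpha\mapsto\mmuty{\malpha}{\mA}]}$, and I apply the inductive hypothesis at the strictly smaller index $i'$ to descend to $\vsimdynr{j'}{\mA[\malpha\mapsto\mmuty{\malpha}{\mA}]}$. The $\munitty$ case is immediate; for $\mpairty{\mA}{\mApr}$ and $\msumty{\mA}{\mB}$ I apply the inductive hypothesis at the same index but structurally smaller component types. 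For $\mfunty{\mA}{\mB}$, given a related argument at some $k\le j\le i$, the hypothesis at index $i$ already yields relatedness of the applications at $\tsimdynr{i}{\mB}$ (since $k\le i$), and I descend to $\tsimdynr{j}{\mB}$ using the term-relation inductive hypothesis at the smaller type $\mB$ at the same index, which is available because $(i,\mB)$ sits strictly below $(i,\mfunty{\mA}{\mB})$ in the metric. Crucially, this case never invokes the term relation at $\mA$ itself, so part (2) at $(i,\mA)$ is self-contained.

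For part (1) I would unfold $\mtone\tltdynr{i}{\mA}\mttwo$ (the $\tgtdynr$ direction is symmetric, inspecting $\mttwo$ instead of $\mtone$). The essential operational fact I rely on is that each small step carries weight $0$ or $1$, so any reduction of weight $n$ has an initial segment of every weight $m\le n$. Hence if the "still running" disjunct holds, i.e.\ $\mtone\bigstep{i+1}\mtonepr$ for some $\mtonepr$, then since $j+1\le i+1$ the same disjunct holds at $j$. If instead $\mtone$ reaches $\merr$ or a value in $j'$ weighted steps, there are two subcases: when $j'>j$ the term is still running at weight $j+1\le j'$, so the first disjunct applies again; and when $j'\le j$ I keep the same witness $j'$ and, in the value case, must strengthen $\mvone\vltdynr{i-j'}{\mA}\mvtwo$ to $\mvone\vltdynr{j-j'}{\mA}\mvtwo$, which follows from part (2) since $j-j'\le i-j'$. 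It is exactly here — when $j'=0$, so $\mtone$ is already a value — that the term-relation argument at $(i,\mA)$ invokes the value-relation part at the same $(i,\mA)$, which is why part (2) is established first at each node.

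The main obstacle is not any single case — each is a short calculation — but the well-foundedness bookkeeping that licenses the induction: verifying that value downward closure at $(i,\mA)$ appeals only to strictly smaller metrics (in particular never to the term relation at $\mA$), so that it can be proved before, and then used by, the term-relation part at the same node. The only genuinely operational ingredient is the initial-segment property of the weighted multi-step relation $\bigstep{}$, which lets the "still running" disjunct absorb every case where the relevant side terminates later than the target index $j$.
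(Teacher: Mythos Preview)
Your proposal is correct and follows essentially the same approach as the paper, which records only the one-line proof ``By lexicographic induction on the pair $(i,\mA)$.'' Your elaboration---establishing part (2) before part (1) at each node so that the term case with $j'=0$ may invoke the value case at the same $(i,\mA)$, and using the initial-segment property of the weighted reduction relation to handle the ``still running'' disjunct---fills in exactly the details the paper omits.
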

\iflong
\begin{proof}
  By lexicographic induction on the pair $(i,\mA)$.
\end{proof}
\fi
\begin{lemma}[Anti-Reduction]
  \label{lem:anti-red}
  This theorem is different for the two relations as we allow arbitrary
  steps on the ``divergence greater-than'' side.
  \begin{enumerate}
  \item If $\mtone \ltdynlt^{i}_{t,\mA} \mttwo$ and $\mtonepr
    \bigstep{j} \mtone$ and $\mttwopr \bigstepany \mttwo$ then
    $\mtonepr \ltdynlt^{i + j}_{t,\mA} \mttwopr$.
  \item If $\mtone \ltdyngt^{i}_{t,\mA} \mttwo$ and $\mttwopr
    \bigstep{j} \mttwo$ and $\mtonepr \bigstepany \mtone$, then
    $\mtonepr \ltdyngt^{i + j}_{t,\mA} \mttwopr$.
  \end{enumerate}
  \iflong
  A simple corollary that applies in common cases to both relations is
  that if $\mtone \ltdynsim^{i}_{t,\mA} \mttwo$ and $\mtonepr
  \bigstep{0} \mtone$ and $\mttwopr \bigstep{0} \mttwo$, then
  $\mtonepr \ltdynsim^{i}_{t,\mA} \mttwopr$.
  \fi
\end{lemma}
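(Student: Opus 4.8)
The plan is to prove each of the two statements directly, by case analysis on which disjunct of the hypothesis holds, relying only on the fact that the weighted big-step relation composes additively (if $\mtone \bigstep{j} \ms$ and $\ms \bigstep{k} \mttwo$ then $\mtone \bigstep{j+k} \mttwo$, as is immediate from its inductive definition in \figref{fig:tlang:opsem}, and $\bigstep{j} \subseteq \bigstepany$ with $\bigstepany$ transitive). No induction on $i$ or on $\mA$ is needed, because the value relation is never unfolded: in the value-producing disjunct the residual index is preserved exactly, so the same value-relatedness fact is reused verbatim.

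For the first part, assume $\mtone \tltdynr{i}{\mA} \mttwo$, $\mtonepr \bigstep{j} \mtone$, and $\mttwopr \bigstepany \mttwo$, and case on the definition of $\mtone \tltdynr{i}{\mA} \mttwo$. If $\mtone$ is still running after $i+1$ steps, i.e.\ $\mtone \bigstep{i+1} \ms$, then prepending $\mtonepr \bigstep{j} \mtone$ gives $\mtonepr \bigstep{(i+j)+1} \ms$, witnessing the first disjunct of $\mtonepr \tltdynr{i+j}{\mA} \mttwopr$. If $\mtone \bigstep{k} \merr$ with $k \le i$, then $\mtonepr \bigstep{j+k} \merr$ with $j+k \le i+j$. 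Finally, if $\mtone \bigstep{k} \mvone$ with $k \le i$, $\mttwo \bigstepany \mvtwo$, and $\mvone \vltdynr{i-k}{\mA} \mvtwo$, then $\mtonepr \bigstep{j+k} \mvone$ and $\mttwopr \bigstepany \mttwo \bigstepany \mvtwo$; since $(i+j)-(j+k) = i-k$, the hypothesis $\mvone \vltdynr{i-k}{\mA} \mvtwo$ already witnesses the third disjunct at index $i+j$.

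The second part is symmetric with the roles of the two terms exchanged: now $\mttwopr \bigstep{j} \mttwo$ is the reduction that raises the budget, while $\mtonepr \bigstepany \mtone$ is arbitrary. Casing on $\mtone \tgtdynr{i}{\mA} \mttwo$, the still-running and error cases prepend $\mttwopr \bigstep{j} \mttwo$ exactly as before (and in the error case one additionally composes $\mtonepr \bigstepany \mtone \bigstepany \merr$ to recover $\mtonepr \bigstepany \merr$); in the value case $\mttwopr \bigstep{j+k} \mvtwo$, and the two sub-alternatives ($\mtone \bigstepany \merr$, or $\mtone \bigstepany \mvone$ with $\mvone \vgtdynr{i-k}{\mA} \mvtwo$) are reestablished for $\mtonepr$ by prepending $\mtonepr \bigstepany \mtone$, again using $(i+j)-(j+k)=i-k$ so the value index is unchanged.

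The only thing requiring real care --- and the reason the two statements must be phrased asymmetrically --- is matching which side is measured by weighted steps to which relation we are in: in $\tltdynr{}{}$ the \emph{left} term is inspected up to $i$ unrolls, so only prepending weighted steps on the left may raise the index, whereas the right may be extended by arbitrarily many steps; in $\tgtdynr{}{}$ these roles flip. This is precisely what lets the budget grow from $i$ to $i+j$ on the inspected side while leaving the residual index $i-k$ in the value case untouched. I expect no deeper obstacle; the work is the bookkeeping of these indices. The stated corollary then follows by instantiating both parts with $j=0$ (using $\bigstep{0} \subseteq \bigstepany$), which shows that weight-$0$ reductions on either side preserve relatedness at the same index $i$, for $\tltdynr{}{}$ and $\tgtdynr{}{}$ and hence for $\dynr$.
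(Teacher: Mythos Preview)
Your proposal is correct and follows the same approach as the paper's proof, which is simply ``By direct inspection and downward closure (\lemref{lem:downward-closed}).'' Your detailed case analysis is exactly the ``direct inspection'' the paper alludes to; notably, you observe that downward closure is not actually required because in the value case the residual index $(i+j)-(j+k) = i-k$ matches exactly, which is a sharper observation than the paper's terse citation of \lemref{lem:downward-closed}.
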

\iflong
\begin{proof}
  By direct inspection and downward closure
  (\lemref{lem:downward-closed}).
\end{proof}
\fi
\begin{lemma}[Monadic Bind]
\label{lem:monadic-bind}
  For any $i \in \mathbb{N}$, if for any $j \leq i$ and $\mvone
  \tsimdynr{j}{\mA} \mvtwo$, we can show $\mectxtone\hw{\mvone}
  \tsimdynr{j}{\mA} \mectxttwo\hw{\mvtwo}$ holds, then for any $\mtone
  \tsimdynr{i}{\mA} \mttwo$, it is the case that
  $\mectxtone\hw{\mvone} \tsimdynr{i}{\mA}\mectxttwo\hw{\mvtwo}$.
\end{lemma}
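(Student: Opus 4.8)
The plan is to prove the two instances $\simdynr \in \{\ltdynr,\gtdynr\}$ separately, in each case unfolding the definition of $\mtone \tsimdynr{i}{\mA} \mttwo$ from \cref{fig:lr} and doing case analysis on its three disjuncts to establish the goal $\mectxtone\hw{\mtone} \tsimdynr{i}{\mA} \mectxttwo\hw{\mttwo}$. The two workhorses are Anti-Reduction (\cref{lem:anti-red}) and the standard fact that filling an evaluation context preserves reduction sequences together with their weights, so that $\mt \bigstep{i} \mtpr$ implies $\mectxt\hw{\mt} \bigstep{i} \mectxt\hw{\mtpr}$, combined with the error-propagation step $\mectxt\hw{\merr} \step \merr$ of weight $0$. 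I also use that on values the term relation coincides with the value relation, so the value-related witnesses produced by the third disjunct can be fed directly to the assumption.

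For $\simdynr = \ltdynr$ I split $\mtone \tltdynr{i}{\mA} \mttwo$ into its three cases. If $\mtone \bigstep{i+1} \mtonepr$, then $\mectxtone\hw{\mtone} \bigstep{i+1} \mectxtone\hw{\mtonepr}$, which is exactly the first disjunct for the plugged terms and requires nothing about the right-hand side. If $\mtone \bigstep{j} \merr$ with $j \leq i$, then $\mectxtone\hw{\mtone} \bigstep{j} \mectxtone\hw{\merr} \step \merr$, giving the second disjunct. The interesting case is $\mtone \bigstep{j} \mvone$, $\mttwo \bigstepany \mvtwo$, and $\mvone \vltdynr{i-j}{\mA} \mvtwo$: here $\mectxtone\hw{\mtone} \bigstep{j} \mectxtone\hw{\mvone}$ and $\mectxttwo\hw{\mttwo} \bigstepany \mectxttwo\hw{\mvtwo}$, and applying the assumption at index $i-j \leq i$ to $\mvone,\mvtwo$ yields $\mectxtone\hw{\mvone} \tltdynr{i-j}{\mA} \mectxttwo\hw{\mvtwo}$. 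The first part of \cref{lem:anti-red}, which permits arbitrary reduction on the right, then reassembles these into $\mectxtone\hw{\mtone} \tltdynr{i}{\mA} \mectxttwo\hw{\mttwo}$.

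For $\simdynr = \gtdynr$ the ``still running'' and ``error'' cases are symmetric, now inspecting $\mttwo$: $\mttwo \bigstep{i+1}\mttwopr$ gives the first disjunct via $\mectxttwo\hw{\mttwo} \bigstep{i+1}\mectxttwo\hw{\mttwopr}$, and $\mttwo \bigstep{j}\merr$ with $\mtone \bigstepany \merr$ propagates to $\mectxttwo\hw{\mttwo} \bigstep{j}\merr$ and $\mectxtone\hw{\mtone}\bigstepany\merr$. In the value case $\mttwo \bigstep{j} \mvtwo$ I split on the inner disjunction. When $\mtone \bigstepany \merr$, the context forces $\mectxtone\hw{\mtone} \bigstepany \merr$; since $\merr \tgtdynr{i}{\mA} \ms$ holds for every $\ms$ (the error-on-the-left branch of $\tgtdynr$ is always available), the second part of \cref{lem:anti-red}, which here permits arbitrary reduction on the \emph{left}, upgrades $\merr \tgtdynr{i}{\mA}\mectxttwo\hw{\mvtwo}$ to the goal. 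When instead $\mtone \bigstepany \mvone$ with $\mvone \vgtdynr{i-j}{\mA}\mvtwo$, the assumption at index $i-j$ gives $\mectxtone\hw{\mvone}\tgtdynr{i-j}{\mA}\mectxttwo\hw{\mvtwo}$, and anti-reduction reassembles it as before.

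The main obstacle is precisely this last $\gtdynr$ sub-case, where $\mtone$ errors and no related value for $\mvtwo$ is available: the argument cannot route through the hypothesis and must instead exploit that error-on-the-left is maximally permissive for $\tgtdynr$. This is also where the asymmetry of \cref{lem:anti-red} is essential---each relation tolerates unbounded reduction on exactly the side whose divergence it ignores---so the two instances genuinely require the two different anti-reduction statements rather than a single symmetric one. Finally, Downward Closure (\cref{lem:downward-closed}) tidies up the index arithmetic, for instance weakening the $i+j$ produced by anti-reduction in the error sub-case back down to $i$.
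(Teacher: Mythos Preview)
Your proof is correct and follows essentially the same approach as the paper's: unfold the term relation, case-split on its disjuncts, and use anti-reduction to reassemble in the value case. You are in fact more thorough than the paper, which only sketches the $\gtdynr$ direction and silently skips the sub-case where $\mttwo \bigstep{j} \mvtwo$ but $\mtone \bigstepany \merr$; your route through ``$\merr \tgtdynr{i}{\mA} \ms$ for every $\ms$'' plus anti-reduction and downward closure handles it correctly, though note that this least-element claim is not a single ``error-on-the-left branch'' of the definition---it holds because every closed well-typed $\ms$ falls into one of the three behaviors (runs $i{+}1$ weighted steps, errors, or reaches a value), which is the standard termination-or-unroll fact underlying this style of step-indexing.
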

\iflong
\begin{proof}
  We consider the proof for $\tgtdynr{i}{\mA}$, the other is
  similar/easier.  By case analysis of $\mtone \tgtdynr{i}{\mA}
  \mttwo$.
  \begin{enumerate}
  \item If $\mttwo$ takes $i+1$ steps, so does $\mectxttwo\hw{\mttwo}$.
  \item If $\mttwo \stepsin{j\leq i} \merr$ and $\mtone \stepstar
    \merr$, then first of all $\mectxttwo\hw{\mttwo}
    \stepsin{j}\mectxttwo\hw{\merr}\step \merr$. If $j+1 = i$, we are
    done. Otherwise $\mectxttwo\hw{\mttwo} \stepsin{j+1\leq i} \merr$
    and $\mectxtone\hw{\mtone} \stepstar \merr$.
  \item Assume there exist $j \leq i$, $\mvone \vgtdynr{i - j}{\mA}
    \mvtwo$ and $\mttwo \stepsin{j} \mvtwo$ and $\mtone \stepstar
    \mvone$.
    Then by assumption, $\mectxtone\hw{\mvone} \tgtdynr{i - j} \mectxttwo\hw{\mvtwo}$. Then by antireduction
    (\lemref{lem:anti-red}),
    $\mectxtone\hw{\mtone}\tgtdynr{i}\mectxttwo\hw{\mttwo}$.
    
  \end{enumerate}
\end{proof}
\fi

We then prove that our logical relation is sound for observational
error approximation by showing that it is a congruence relation \ifshort(see
the extended version \cite{newahmed2018-extended})\fi and showing that if
we can prove error approximation up to divergence on the
left \emph{and} on the right, then we have true error approximation.
\iflong
\begin{lemma}[Congruence for Logical Relation]
  \label{lem:cong}
  All of the congruence rules in \figref{fig:congruence} are valid.
\end{lemma}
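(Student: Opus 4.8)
The plan is to prove each rule of \figref{fig:congruence} independently as a compatibility lemma. Since $\sqsubseteq$ abbreviates the conjunction of $\ltdynr$ and $\gtdynr$ (bottom of \figref{fig:lr}), for each rule I would establish the conclusion separately for the two term relations $\ltdynr$ and $\gtdynr$. The value relations for the two directions coincide---both are written $\vsimdynr{i}{\mA}$---so the introduction forms and all of the ``structural'' reasoning are uniform in $\simdynr$, and only the final reduction/approximation steps differ. In every case I first unfold the open relation: I fix an index $i$ and substitutions $\mgammaone \vsimdynr{i}{\menv} \mgammatwo$, reducing the goal to relating the two \emph{closed} terms obtained by substitution at index $i$. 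I would also use freely the basic fact that a pair of related values is related as terms, which is immediate from the third disjunct of each term relation taking $j = 0$, together with Downward Closure (\cref{lem:downward-closed}).

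The introduction forms---$\mtunit$, pairing, the injections, $\lambda$-abstraction, and roll---all follow the same pattern. Using Monadic Bind (\cref{lem:monadic-bind}) I reduce each immediate subterm to a related value, after which it suffices to exhibit the compound value in the value relation. For $\mtunit$, pairs, and injections this is exactly the corresponding clause of $\vsimdynr{i}{\mA}$. For roll the value clause decrements the index, so I appeal to Downward Closure to pass from the hypothesis at $i$ to the required relation at $i-1$, the case $i = 0$ being vacuously $\top$. For $\lambda$-abstraction, unfolding the function value relation requires relating the applications $\mtapp{\mvone}{\mvonepr}$ and $\mtapp{\mvtwo}{\mvtwopr}$ at all future indices $j \le i$ and all arguments related at $j$; this is discharged by instantiating the premise (the body, related as an open term) at $j$ and the extended substitutions $\mgammaone, \mvonepr/\mx$ and $\mgammatwo, \mvtwopr/\mx$, using Downward Closure on the ambient substitutions.

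The elimination forms---let-binding, unroll, pair-elimination, case-analysis, and application---again begin with \cref{lem:monadic-bind} to bind the scrutinees to related values. The shape of those values is then dictated by the value relation: a sum value forces the same injection on both sides, a pair value forces both sides to be pairs, and a function value directly yields relatedness of the applications. The corresponding reduction can therefore be taken on both sides, and I invoke Anti-Reduction (\cref{lem:anti-red}) to transport relatedness of the reducts back across the (weight-$0$) $\beta$- and projection-steps, concluding from the relevant premise. The one case that genuinely consumes a step index is unroll: here $\mtunroll{(\mtroll{\mmuty{\malpha}{\mA}}{\mvonepr})} \bigstep{1} \mvonepr$, so I use the weight-$1$ form of Anti-Reduction, and at $i = 0$ the goal already holds because the left (resp.\ right) term is still running after one unroll, matching the first disjunct of $\ltdynr$ (resp.\ $\gtdynr$).

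The main obstacle is the bookkeeping forced by the two asymmetric term relations in the presence of step indices. Anti-Reduction behaves differently on the two sides (\cref{lem:anti-red}): for $\ltdynr$ only the left term may take counted steps while the right may take arbitrarily many, and dually for $\gtdynr$, so each elimination case must be checked twice, with the step accounting arranged to match the applicable form of the lemma. The most delicate instance is application, where the function value relation is itself quantified over future indices $j \le i$; making the index handed to that relation, the index at which the argument is bound by \cref{lem:monadic-bind}, and the residual index produced by Anti-Reduction all agree is where the argument is least mechanical. Once these indices are lined up, every case reduces to a direct appeal to \cref{lem:monadic-bind}, \cref{lem:anti-red}, and \cref{lem:downward-closed}, so that---as promised by the surrounding development---no raw step-index manipulation survives in the individual cases.
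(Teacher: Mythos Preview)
Your proposal is correct and follows essentially the same approach as the paper: prove each congruence rule separately for $\ltdynr$ and $\gtdynr$, using Monadic Bind (\cref{lem:monadic-bind}) to reduce scrutinees to related values, Downward Closure (\cref{lem:downward-closed}) to adjust indices, and Anti-Reduction (\cref{lem:anti-red}) to transport relatedness across $\beta$-steps. The paper's proof sketch is terser than yours---it only spells out the $\lambda$-abstraction case---but your more detailed treatment of the introduction/elimination dichotomy and the index bookkeeping for unroll is exactly in line with what the paper intends.
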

\begin{proof}
  Each case is done by proving the implication for $\ltdynlt^i$ and
  $\ltdyngt^i$. Most cases follow by monadic bind
  (\lemref{lem:monadic-bind}), downward closure
  (\lemref{lem:downward-closed}) and direct use of the inductive
  hypotheses.
  We show some illustrative cases.
  \begin{enumerate}
  \item Given $\mgammaone \simdynr{i}{\menv} \mgammatwo$, we need to
    show $\mtfun{\mx}{\mA}{\mtone[\mgammaone]} \tsimdynr{i}{\mA\mto\mB}
    \mtfun{\mx}{\mA}{\mttwo[\mgammatwo]}$. Since they are values, we
    show they are related values. Given any $\mvone \vsimdynr{j}{\mA}
    \mvtwo$ with $j\leq i$, each side $\beta$ reduces in $0$ unroll
    steps so it is sufficient to show
    \[ \mtone[\mgammaone,\mvonepr/\mx] \tltdynr{j}{\mB} \mttwo[\mgammatwo,\mvtwopr/\mx]\]
    Which follows by inductive hypothesis and downward-closure and the
    substitution relation.
  \end{enumerate}
\end{proof}
\fi

\begin{theorem}[Logical Relation implies Observational Error Approximation]
\label{thm:log-to-obs}
  If $\menv \vDash \mtone \ltdyn \mttwo : \mA$, then $\menv \vDash
  \mtone \obsapprox \mttwo : \mA$
\end{theorem}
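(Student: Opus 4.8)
The plan is the standard two-step soundness argument: lift the hypothesis through an arbitrary observation context using congruence, and then read off the observational conclusion from the resulting closed instance (the adequacy of the logical relation).

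First I would lift the relation through contexts. By the congruence lemma (\cref{lem:cong}), the combined relation $\ltdyn$ is compatible with every term former of $\tlangname$; moreover, applying those same rules to a single term and itself gives reflexivity, i.e.\ the fundamental property $\menv \vDash \ms \ltdyn \ms : \mB$ for every well-typed $\ms$, by induction on its typing derivation. Given a closing context $\mC$ with $\cdot \vdash \mC[\menv \vdash \cdot : \mA] : \mB$, a straightforward induction on the structure of $\mC$ then yields $\cdot \vDash \mC[\mtone] \ltdyn \mC[\mttwo] : \mB$: at the hole I invoke the hypothesis, and at each surrounding former I apply the corresponding congruence rule, using reflexivity for the subterms of $\mC$ that do not contain the hole.

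Next I would unfold this closed instance. Since the environment is empty, the only related substitutions are the empty ones, so by definition I obtain, for every $i \in \mathbb{N}$, both $\mC[\mtone] \tltdynr{i}{\mB} \mC[\mttwo]$ and $\mC[\mtone] \tgtdynr{i}{\mB} \mC[\mttwo]$. By Progress, Subject Reduction, and Determinism, $\mC[\mtone]$ has exactly one of three behaviors, and I case on it. If $\mC[\mtone] \bigstepany \merr$, the first disjunct of $\obsapprox$ holds. If $\mC[\mtone] \bigstep{j} \mvone$ for a value $\mvone$, I instantiate $\tltdynr{j}{\mB}$ at index $i = j$: determinism rules out its first disjunct (the full computation of $\mC[\mtone]$ has weight $j$, not $j+1$) and its second (it does not reach $\merr$), forcing the third, which provides $\mvtwo$ with $\mC[\mttwo] \bigstepany \mvtwo$; hence both sides reduce to values. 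If $\mC[\mtone]$ diverges, I use $\tgtdynr{k}{\mB}$ to show $\mC[\mttwo]$ must diverge as well: were $\mC[\mttwo]$ to reach $\merr$ or a value at some weight $k$, instantiating at $i = k$ makes all three disjuncts of $\tgtdynr{k}{\mB}$ fail, since its error and value disjuncts each require $\mC[\mtone]$ to converge (to $\merr$ or to a value), contradicting divergence---so the ``both diverge'' disjunct holds.

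The main obstacle is this adequacy case analysis rather than the congruence lift, which is routine once \cref{lem:cong} is in hand. The care needed there is (i) choosing the step index to equal the weight (the number of unrollings) of the terminating side so that the ``still running after $i+1$ unrolls'' disjunct is excluded, and (ii) using determinism to pin down which single disjunct of each half-indexed relation can hold. The divergence case in particular leans on the fact that the relations are quantified over all $i$ while any convergent computation of $\mC[\mttwo]$ consumes only finitely many unrollings, so some finite index already exposes the mismatch and yields the contradiction.
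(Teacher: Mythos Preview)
Your proposal is correct and follows essentially the same approach as the paper: lift through an arbitrary closing context via the congruence lemma, then case on the behavior of $\mC[\mtone]$, using the $\tltdynr{}{}$ half for the value case and the $\tgtdynr{}{}$ half for the divergence case. Your treatment is slightly more explicit about invoking determinism and choosing the index to match the computation weight, but the argument is the same.
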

\iflong\begin{proof}
  If $\menv\vDash \mtone \ltdyn \mttwo : \mA$, then for any closing
  context, by \cref{lem:cong}, $\cdot \vDash \mC[\mtone] \ltdyn
  \mC[\mttwo] : \sB$ holds.

  Then we do a case analysis of $\mC[\mtone]$'s behavior.
  \begin{enumerate}
  \item If $\mC[\mtone]$ diverges, then for any $i$, since
    $\mC[\mtone] \tgtdynr{i}{\sB} \mC[\mttwo]$, only the
    $\mC[\mttwo] \bigstep{i} \mtpr$ is possible, so $\mC[\mttwo]$ also
    diverges.
  \item If $\mC[\mtone] \bigstep{i} \merr$ we're done.
  \item If $\mC[\mtone] \bigstep{i} \mtunit$, then because
    $\mC[\mtone] \tltdynr{i}{\sB} \mC[\mttwo]$, we know
    $\mC[\mttwo] \bigstepany \mtunit$.
  \end{enumerate}
\end{proof}\fi

\subsection{Approximation and Equivalence Lemmas}
\label{sec:lemmas}

The step-indexed logical relation is on the face of it quite complex,
especially due to the splitting of error approximation into two
step-indexed relations.
However, we should view the step-indexed relation as an
``implementation'' of the high-level concept of error approximation,
and we work as much as possible with the error
approximation relation $\menv \vDash \mtone \ltdyn \mttwo : \mA$.
In order to do this we now prove some high-level lemmas, which are
proven using the step-indexed relations, but allow us to develop
conceptual proofs of the key theorems of the paper.

First, there is reflexivity, also known as the \emph{fundamental
  lemma}, which is proved using the same congruence cases as the
soundness theorem (\cref{thm:log-to-obs}.)
Note that by the definition of our logical relation, this is really a
kind of \emph{monotonicity} theorem for every term in the language,
the first component of our graduality proof.
\begin{corollary}[Reflexivity]
\label{lem:fund-lemma}
  If $\menv \vdash \mt : \mA$ then $\menv \vDash \mt \ltdyn \mt : \mA$
\end{corollary}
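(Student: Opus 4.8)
The plan is to prove this by induction on the typing derivation of $\menv \vdash \mt : \mA$, recognizing reflexivity as the diagonal instance of the congruence property already established in \cref{lem:cong}. The congruence rules of \figref{fig:congruence} are in one-to-one correspondence with the typing rules of \figref{fig:tlang:typing}: each concludes a relation between a compound term and the corresponding compound term from premises relating their immediate subterms. To obtain reflexivity I would instantiate each of these rules with the same term on both sides, taking the left and right terms equal throughout; for example, the pair rule specializes to conclude $\semhomoltdyn{\menv}{\mtpair{\mt}{\ms}}{\mtpair{\mt}{\ms}}{\mA \mtimes \mB}$ from $\semhomoltdyn{\menv}{\mt}{\mt}{\mA}$ and $\semhomoltdyn{\menv}{\ms}{\ms}{\mB}$.

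In detail, each inductive case ends with a typing rule whose premises are typing judgments $\menv' \vdash \mt' : \mA'$ for the immediate subterms. The inductive hypothesis gives $\menv' \vDash \mt' \ltdyn \mt' : \mA'$ for each such subterm, which is exactly the premise required by the matching congruence rule. Feeding these diagonal premises into that rule yields $\menv \vDash \mt \ltdyn \mt : \mA$, closing the case. The two base cases, the error term $\merr$ and a variable $\mvar$, use the premise-free congruence rules and hold immediately.

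Because the combined judgment $\menv \vDash \mt \ltdyn \mt : \mA$ unfolds by definition (\figref{fig:lr}) to the conjunction of $\menv \vdash \mt \ltdynr \mt : \mA$ and $\menv \vdash \mt \gtdynr \mt : \mA$, and since \cref{lem:cong} establishes both the left- and right-divergence halves simultaneously, no separate treatment of the two step-indexed relations is required. I do not expect a genuine obstacle at this stage: all of the real difficulty, namely the step-index manipulation carried out through monadic bind (\lemref{lem:monadic-bind}), downward closure (\lemref{lem:downward-closed}), and anti-reduction (\lemref{lem:anti-red}), has already been discharged in the proof of \cref{lem:cong}. The only point warranting care is confirming that \figref{fig:congruence} supplies a congruence rule for every typing rule, so that the induction has an applicable case at each node; this is immediate by inspection.
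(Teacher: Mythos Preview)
Your proposal is correct and matches the paper's own proof exactly: induction on the typing derivation, applying the corresponding congruence rule from \cref{lem:cong} at each node with both sides instantiated to the same term. The paper states this in a single sentence; your elaboration of the base cases and the observation that no separate treatment of the two step-indexed halves is needed are accurate but already implicit in the appeal to \cref{lem:cong}.
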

\iflong
\begin{proof}
  By induction on the typing derivation of $\mt$, in each case using
  the corresponding congruence rule \ifshort proved earlier. \fi
  \iflong from \lemref{lem:cong}.\fi
\end{proof}  
\fi
Next, crucial to reasoning about ep pairs is the use of
\emph{transitivity}, a notoriously tedious property to prove for
step-indexed logical relations.
This is where our splitting of error-approximation into two pieces
proves essential, adapting the approach for divergence-approximation
relations introduced in \citet{ahmed06:lr}.
The proof works as follows: due to the function and open-term cases,
we cannot simply prove transitivity in the limit directly.
Instead we get a kind of ``asymmetric'' transitivity: if $\mtone
\tltdynr{i}{\mA} \mttwo$ and \emph{for any} $j\in\mathbb{N}$, $\mttwo
\tltdynr{j}{\mA} \mtthree$, then we know $\mtone \tltdynr{i}{\mA}
\mtthree$. We abbreviate the $\forall j$ part as $\mttwo
\tltdynr{\omega}{\mA} \mtthree$ in what follows.
The key to the proof is in the function and open terms cases, which
rely on reflexivity, \cref{lem:fund-lemma}, as in
\citet{ahmed06:lr}. Reflexivity says that when we have $\mvone
\vltdynr{i}{\mA} \mvtwo$ then we also have $\mvtwo
\vltdynr{\omega}{\mA} \mvtwo$, which allows us to use the inductive
hypothesis. 

\begin{lemma}[Transitivity for Closed Terms/Values]
  \label{lem:trans-closed-homo}
  The following are true for any $\mA$.
  \begin{enumerate}
  \item If $\mtone \tltdynr{i}{\mA} \mttwo$ and $\mtin{2}
    \tltdynr{\omega}{\mA} \mtin{3}$ then $\mtin{1}
    \tltdynr{i}{\mA}\mtin{3}$.
    
  \item If $\mvone \tltdynr{i}{\mA} \mvtwo$ and $\mvin{2}
    \tltdynr{\omega}{\mA} \mvin{3}$ then $\mvin{1}
    \tltdynr{i}{\mA}\mvin{3}$.
  \end{enumerate}
  
  Similarly,
  \begin{enumerate}
  \item If $\mtone \tgtdynr{\omega}{\mA} \mttwo$ and
    $\mtin{2} \tgtdynr{i}{\mA} \mtin{3}$
    then $\mtin{1} \tgtdynr{i}{\mA}\mtin{3}$.

  \item If $\mvone \tgtdynr{\omega}{\mA} \mvtwo$ and
    $\mvin{2} \tgtdynr{i}{\mA} \mvin{3}$ then
    $\mvin{1} \tgtdynr{i}{\mA}\mvin{3}$.
  \end{enumerate}
\end{lemma}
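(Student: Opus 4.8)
The plan is to prove all four statements simultaneously by lexicographic induction on the pair $(i,\mA)$, ordering the index first and the type structure second, and, within a fixed $(i,\mA)$, treating the value statement as strictly preceding the term statement. This refined ordering is what makes the recursion well-founded: the recursive-type case of the value relation strictly decreases $i$; the product and sum cases decrease the type at the same $i$; the function-value case appeals to the term statement at a strictly smaller type; and the term statement only ever appeals to the value statement, never to itself. Throughout I would lean on three already-established facts: determinism of the operational semantics, downward closure (\cref{lem:downward-closed}), and crucially reflexivity (\cref{lem:fund-lemma}), whose proof goes through congruence and monadic bind and so does \emph{not} depend on transitivity and is available here. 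This adapts the divergence-approximation argument of \citet{ahmed06:lr} to our two error-approximation relations; the only real difference is that the ``$\omega$'' hypothesis sits on the \emph{right} of the first relation for $\ltdynr$ but on the \emph{left} for $\gtdynr$.

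For the term statement I would take the $\ltdynr$ case (the $\gtdynr$ case is symmetric with the roles of the two end terms exchanged) and unfold $\mtone \tltdynr{i}{\mA}\mttwo$ into its three disjuncts. If $\mtone$ runs for $i+1$ steps or errors within $i$ steps, the same disjunct immediately witnesses $\mtone \tltdynr{i}{\mA}\mtin{3}$, independent of $\mtin{3}$. The interesting case is $\mtone \bigstep{j}\mvone$ and $\mttwo \bigstepany\mvtwo$ with $\mvone \vltdynr{i-j}{\mA}\mvtwo$. Since $\mttwo$ converges to $\mvtwo$ in some fixed weight $m$, for every $\ell$ I instantiate $\mttwo \tltdynr{\omega}{\mA}\mtin{3}$ at index $m+\ell$; determinism rules out the divergence and error disjuncts, forcing the value disjunct and yielding $\mtin{3}\bigstepany\mvin{3}$ together with $\mvtwo \vltdynr{\ell}{\mA}\mvin{3}$. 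As $\ell$ is arbitrary this gives $\mvtwo \vltdynr{\omega}{\mA}\mvin{3}$, and the value statement at index $i-j$ then produces $\mvone \vltdynr{i-j}{\mA}\mvin{3}$; reassembling with $\mtone \bigstep{j}\mvone$ and $\mtin{3}\bigstepany\mvin{3}$ witnesses $\mtone \tltdynr{i}{\mA}\mtin{3}$. For $\gtdynr$ the error disjunct additionally needs that $\mtone$ errors, which I would obtain by instantiating the left-hand $\omega$-hypothesis at a large index.

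The value statement is handled by cases on $\mA$. Recursive types at index $0$ are vacuously related; at index $i+1$ they strip the recursive constructor and appeal to the value statement at index $i$ on the unfolding. Unit, product, and sum reduce componentwise to the value statement at strictly smaller types and the same index. The one genuinely delicate case, and the main obstacle, is the function type, where naive ``transitivity at a fixed index'' fails. Here I would use the reflexivity trick: to establish $\mvone\,\mvone' \tltdynr{j}{\mC}\mvin{3}\,\mvin{3}'$ from a related argument $\mvone' \vltdynr{j}{\mB}\mvin{3}'$, I instantiate the middle function $\mvtwo$ at the argument $\mvin{3}'$ on \emph{both} sides: reflexivity of $\mvin{3}'$ gives $\mvin{3}' \vltdynr{\omega}{\mB}\mvin{3}'$, so the first hypothesis yields $\mvone\,\mvone' \tltdynr{j}{\mC}\mvtwo\,\mvin{3}'$ and the second yields $\mvtwo\,\mvin{3}' \tltdynr{\omega}{\mC}\mvin{3}\,\mvin{3}'$, whereupon the term statement at the smaller type $\mC$ closes the gap. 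For $\gtdynr$ the symmetric choice is to duplicate the \emph{left} argument $\mvone'$ through the middle function, matching the side on which the $\omega$-hypothesis lives. Getting the index bookkeeping and the direction of the reflexivity instantiation exactly right is the crux of the whole lemma; everything else is routine unfolding of the definitions in \cref{fig:lr}.
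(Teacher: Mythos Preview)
Your proposal is correct and follows essentially the same route as the paper: mutual lexicographic induction on $(i,\mA)$, the three-way disjunct analysis for terms (extracting $\mvtwo \vltdynr{\omega}{\mA}\mvin{3}$ by instantiating the $\omega$-hypothesis at large indices), and the reflexivity trick in the function-value case to duplicate the right-hand argument through the middle function. The paper phrases the last step as ``congruence for application plus reflexivity on $\mvinpr{2}$'' where you unfold the function relation directly, but these are the same argument.
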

\iflong
\begin{proof}
  We prove the $\tltdynr{i}{\mA}$ and $\vltdynr{i}{\mA}$ mutually by
  induction on $(i,\mA)$. The other logical relation is similar. Most
  value cases are simple uses of the inductive hypotheses.
  
  \begin{enumerate}
  \item (Terms) By case analysis of $\mtone \tltdynr{i}{\mA} \mttwo$.
    \begin{enumerate}
    \item If $\mtone \bigstep{i+1} \mtonepr$ or $\mtone \bigstep{j\leq
      i} \merr$, we have the result.
    \item Let $j\leq i$, $k \in \mathbb{N}$ and $(\mvone
      \vltdynr{i}{\mA} \mvtwo)$ with $\mtone \bigstep{j} \mvone$ and
      $\mttwo \bigstep{k} \mvtwo$. By inductive hypothesis for values,
      it is sufficient to show that $\mtin{3} \bigstepany \mvin{3}$
      and $\mvtwo \vltdynr{\omega}{\mA} \mvin{3}$.

      Since $\mtin{2} \tltdynr{\omega}{\mA} \mtin{3}$, in particular
      we know $\mtin{2} \tltdynr{k + l}{\mA} \mtin{3}$ for every $l
      \in \mathbb{N}$, so since $\mtin{2} \bigstep{k} \mvin{2}$, we
      know that $\mtin{3} \bigstepany \mvin{3}$ and $\mvin{2}
      \vltdynr{l}{\mA} \mvin{3}$, for every $l$, i.e., $\mvin{2}
      \vltdynr{\omega}{\mA} \mvin{3}$.
    \end{enumerate}

  \item (Function values) Suppose $\mvin{1}
    \vltdynr{i}{\mfunty{\mA}{\mB}} \mvin{2}$ and $\mvin{2}
    \vltdynr{\omega}{\mfunty{\mA}{\mB}} \mvin{3}$. Then, let $j \leq
    i$ and $\mvinpr{1} \vltdynr{j}{\mA} \mvinpr{2}$. We need to show
    $\mtapp{\mvin{1}}{\mvinpr{1}} \tltdynr{j}{\mB}
    \mtapp{\mvin{3}}{\mvinpr{2}}$. By inductive hypothesis, it is
    sufficient to show $\mtapp{\mvin{1}}{\mvinpr{1}} \tltdynr{j}{\mB}
    \mtapp{\mvin{2}}{\mvinpr{2}}$ and $\mtapp{\mvin{2}}{\mvinpr{2}}
    \tltdynr{\omega}{\mB} \mtapp{\mvin{3}}{\mvinpr{2}}$.

    The former is clear. The latter follows by the congruence rule for
    application \lemref{lem:cong} and reflexivity
    \cref{lem:fund-lemma} on $\mvinpr{2}$: since $\cdot \vdash \mvinpr{2}:
    \mA$, we have $\mvinpr{2} \vltdynr{\omega}{\mA} \mvinpr{2}$.
  \end{enumerate}
\end{proof}
\fi

\begin{lemma}[Transitivity]
\label{lem:trans}
  If $\menv \vdash \mtone \dynr \mttwo : \mA$ and $\menv \vdash \mttwo
  \dynr \mtin{3} : \mA$ then $\menv \vdash \mtone \dynr \mtin{3} :
  \mA$.
\end{lemma}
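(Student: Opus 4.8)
The plan is to unfold the combined relation $\dynr$ into its two halves and prove transitivity of each of $\ltdynr$ and $\gtdynr$ separately, since by definition $\menv \vDash \mtone \dynr \mttwo : \mA$ is exactly the conjunction $\menv \vDash \mtone \ltdynr \mttwo : \mA \wedge \menv \vDash \mtone \gtdynr \mttwo : \mA$; once both halves are shown transitive, transitivity of the conjunction is immediate. The genuinely delicate step-index reasoning is already isolated in \cref{lem:trans-closed-homo}, which supplies an \emph{asymmetric} transitivity for closed terms: for $\ltdynr$ one may compose a finite-index fact on the left with an $\omega$-indexed fact on the right, and for $\gtdynr$ the two roles are reversed. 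So all that remains here is to lift these closed facts to open terms by choosing an appropriate bridging substitution for the middle term $\mttwo$, and this is precisely where reflexivity (\cref{lem:fund-lemma}) is needed to manufacture the $\omega$-indexed premise.

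For the $\ltdynr$ half, I would fix $i \in \mathbb{N}$ and substitutions with $\mgammain{1} \vltdynr{i}{\menv} \mgammain{3}$, and aim to show $\mtone[\mgammain{1}] \tltdynr{i}{\mA} \mtin{3}[\mgammain{3}]$ by bridging through $\mttwo[\mgammain{3}]$. Instantiating the first hypothesis at index $i$ with the pair $(\mgammain{1},\mgammain{3})$ gives $\mtone[\mgammain{1}] \tltdynr{i}{\mA} \mttwo[\mgammain{3}]$. For the right leg, reflexivity applied to the right-hand substitution—that is, the $\mvtwo \vltdynr{\omega}{\mA}\mvtwo$ consequence of \cref{lem:fund-lemma}—yields $\mgammain{3} \vltdynr{\omega}{\menv} \mgammain{3}$, so instantiating the second hypothesis at every index $j$ with $(\mgammain{3},\mgammain{3})$ produces $\mttwo[\mgammain{3}] \tltdynr{\omega}{\mA} \mtin{3}[\mgammain{3}]$. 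Feeding these two facts into the $\ltdynr$ clause of \cref{lem:trans-closed-homo} (the middle terms agree at $\mttwo[\mgammain{3}]$) gives $\mtone[\mgammain{1}] \tltdynr{i}{\mA}\mtin{3}[\mgammain{3}]$, which is exactly what is required.

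The $\gtdynr$ half is symmetric, except that the $\omega$-index now sits on the \emph{left}, so I would instead bridge through $\mttwo[\mgammain{1}]$. Given $\mgammain{1}\vgtdynr{i}{\menv}\mgammain{3}$, reflexivity of the left-hand substitution yields $\mgammain{1}\vgtdynr{\omega}{\menv}\mgammain{1}$, whence the first hypothesis gives $\mtone[\mgammain{1}] \tgtdynr{\omega}{\mA} \mttwo[\mgammain{1}]$, while the second hypothesis at index $i$ with $(\mgammain{1},\mgammain{3})$ gives $\mttwo[\mgammain{1}] \tgtdynr{i}{\mA} \mtin{3}[\mgammain{3}]$; composing via the $\gtdynr$ clause of \cref{lem:trans-closed-homo} delivers the goal. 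The main obstacle one might expect—that step-indexed relations are not transitive at a fixed index—has already been absorbed into \cref{lem:trans-closed-homo}, so here the only substantive points are routing the three substitutions to the correct sides and invoking reflexivity to upgrade the middle term to the $\omega$ index. The single thing I would be careful to justify is that reflexivity is genuinely available in the $\omega$ form at the value/substitution level, i.e.\ that $\mv \vltdynr{\omega}{\mA}\mv$ and $\mv\vgtdynr{\omega}{\mA}\mv$ follow from \cref{lem:fund-lemma}; this holds because a value $\beta$-reduces in zero unroll steps, collapsing the closed-term relation to the value relation at every index.
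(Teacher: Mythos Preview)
Your proposal is correct and follows essentially the same approach as the paper: the paper's proof is a one-line sketch pointing to the function-value case of \cref{lem:trans-closed-homo} and invoking the fundamental lemma on the substitution components, which is exactly what you have spelled out in detail. Your choice of bridging substitution ($\mgammain{3}$ for $\ltdynr$, $\mgammain{1}$ for $\gtdynr$) and the use of reflexivity to obtain the $\omega$-indexed leg matches the paper's intended argument precisely.
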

\iflong
\begin{proof}
  The argument is essentially the same as the function value case,
  invoking the fundamental lemma \cref{lem:fund-lemma} for each
  component of the substitutions and transitivity for the closed
  relation \lemref{lem:trans-closed-homo}
\end{proof}
\fi

Next, we want to extract approximation and equivalence principles for
\emph{open} programs from syntactic operational properties of
\emph{closed} programs.
First, obviously any operational reduction is a contextual
equivalence, and the next lemma extends that to open programs.
Note that we use $\equidyn$ to mean approximation in both directions,
i.e., equivalence:
\begin{lemma}[Open $\beta$ Reductions]
\label{lem:open-beta}
  Given $\menv \vdash \mt : \mA$, $\menv \vdash \mtpr : \mA$, if for
  every $\mgamma : \Gamma$, $\mt[\mgamma] \bigstepany \mtpr{}[\mgamma]$,
  then $\menv \vDash \mt \equidyn \mtpr : \mA$.
\end{lemma}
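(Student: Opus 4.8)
The plan is to peel $\equidyn$ apart into its four underlying closed-instance obligations and then discharge all of them uniformly from reflexivity and Anti-Reduction. By definition $\menv \vDash \mt \equidyn \mtpr : \mA$ unfolds to $\menv \vDash \mt \ltdyn \mtpr : \mA$ together with $\menv \vDash \mtpr \ltdyn \mt : \mA$, and each of these further splits into a $\ltdynr$-component and a $\gtdynr$-component. Hence it suffices to prove the four statements
\[
  \menv \vDash \mt \ltdynr \mtpr : \mA, \quad
  \menv \vDash \mt \gtdynr \mtpr : \mA, \quad
  \menv \vDash \mtpr \ltdynr \mt : \mA, \quad
  \menv \vDash \mtpr \gtdynr \mt : \mA.
\]
Unfolding the open-term relation, each obligation fixes an index $i$ and a pair of substitutions related at $\menv$; since related substitutions are in particular well-typed closing substitutions for $\menv$, the hypothesis applies to each of $\mgammaone$ and $\mgammatwo$ separately, yielding $\mt[\mgammaone] \bigstepany \mtpr[\mgammaone]$ and $\mt[\mgammatwo] \bigstepany \mtpr[\mgammatwo]$.

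The uniform engine is reflexivity (\cref{lem:fund-lemma}) applied to $\mtpr$: from $\menv \vdash \mtpr : \mA$ it gives $\mtpr[\mgammaone] \tsimdynr{i}{\mA} \mtpr[\mgammatwo]$ for whichever of $\ltdynr, \gtdynr$ is in play. I then rewrite exactly one of the two sides of this relation using the appropriate reduction and \cref{lem:anti-red}. For $\menv \vDash \mt \gtdynr \mtpr$ and $\menv \vDash \mtpr \ltdynr \mt$, the side I must rewrite is the \emph{divergence-greater} side on which Anti-Reduction already allows an unrestricted $\bigstepany$ (the left for $\gtdynr$, the right for $\ltdynr$); here I let the other side step by $\bigstep{0}$, so the index stays $i$ and the conclusion is read off directly. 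For $\menv \vDash \mt \ltdynr \mtpr$ and $\menv \vDash \mtpr \gtdynr \mt$, the rewrite instead falls on the inspected side; I observe that the finite reduction $\mt[\mgamma] \bigstepany \mtpr[\mgamma]$ takes some specific weighted number of steps $\mt[\mgamma] \bigstep{j} \mtpr[\mgamma]$, so Anti-Reduction delivers the relation at index $i + j$, which I then lower back to $i$ by Downward Closure (\cref{lem:downward-closed}). Notably only reflexivity of $\mtpr$ is needed throughout; I never appeal to reflexivity of $\mt$, to determinism, or to any forward-reduction closure property.

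The one genuinely delicate point is the step-index accounting, and this is precisely where splitting error approximation into $\ltdynr$ and $\gtdynr$ earns its keep. A priori one might fear that the hypothesis $\mt[\mgamma] \bigstepany \mtpr[\mgamma]$ could contain arbitrarily many $\mtunroll{\cdot}$ steps (each of weight $1$) and thereby corrupt the index $i$. The resolution is that in every case the reduction is absorbed where it cannot hurt: it either lands on the $\bigstepany$-side of Anti-Reduction at zero index cost, or it lands on the inspected side and merely inflates the index to $i+j$, which Downward Closure immediately reclaims. The main obstacle is therefore organizational rather than mathematical: for each of the four sub-relations one must correctly pair up which side Anti-Reduction permits to run freely with which side the reduction hypothesis is used to rewrite, and confirm that the remaining side is supplied unchanged by reflexivity.
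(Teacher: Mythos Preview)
Your proof is correct and follows essentially the same approach as the paper, which proves this in one line: ``By reflexivity \cref{lem:fund-lemma} on $\mtpr, \mgamma$ and anti-reduction \cref{lem:anti-red}.'' You have simply unpacked that sketch into its four sub-cases and made explicit the appeal to Downward Closure needed in the two cases where the rewrite lands on the inspected side; the paper leaves that implicit in its citation of anti-reduction.
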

\iflong\begin{proof}
  By reflexivity \cref{lem:fund-lemma} on $\mtpr, \mgamma$
  and anti-reduction \cref{lem:anti-red}.
\end{proof}\fi
We call this open $\beta$ reduction because we will use it to justify
equivalences that look like an operational reduction, but have open
values (i.e. including variables) rather than closed as in the
operational semantics.
For instance,
\[ \mtlet{\mx}{\my}{\mt} \equidyn \mt[\my/\mx] \]
and
\[\mtmatchpair{\mx}{\my}{\mtpair{\mxpr}{\mypr}}{\mt} \equidyn \mt[\mxpr/\mx,\mypr/\my] \]

Additionally, it is convenient to use \emph{$\eta$} expansions for our
types as well.
Note that since we are using a call-by-value language, the $\eta$
expansion for functions is restricted to \emph{values}.
\begin{lemma}[$\eta$ Expansion]
  {~}
  \begin{enumerate}
  \item For any $\menv \vdash \mv : \mA \mto \mB$, $\mv \equidyn \mtfun{\mx}{\mA}{\mv\,\mx}$
  \item For any $\menv,\mx:\mA\mplus \mApr,\menvpr \vdash \mt : \mB$,
    \[ \mt \equidyn
    \mtcase{\mx}{\my}{\mt[\mtinj\mypr/\mx]}{\mypr}{\mt[\mtinjpr\mypr/\mx]}
    \]
  \item For any $\menv,\mx:\mA\mtimes \mApr,\menvpr \vdash \mt : \mB$,
    \[ \mt \equidyn
    \mtmatchpair{\my}{\mypr}{\mx}{\mt[\mtpair{\my}{\mypr}/\mx]}
    \]
  \end{enumerate}
\end{lemma}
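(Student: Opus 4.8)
The plan is to prove each of the three equivalences by unfolding $\equidyn$ into its two directions of $\ltdyn$ and, within each, arguing uniformly over $\simdynr \in \{\ltdynr,\gtdynr\}$, since the value relations on which the whole argument turns are identical for the two. In every case I would fix an index $i$ and substitutions $\mgammain{1} \vsimdynr{i}{\menv} \mgammain{2}$ ranging over the full context (including, in cases (2) and (3), the distinguished scrutinee variable $\mx$), and reduce the goal to relating the two substituted terms at index $i$. The crucial observation is that each $\eta$-redex on the expanded side---a $\beta$-application, a \textsf{case}, or a \textsf{matchpair}---contracts in a single step of weight $0$, so the symmetric $0$-step corollary of anti-reduction (\cref{lem:anti-red}) applies to both $\simdynr$ and to both directions of approximation; this is what lets the two directions go through by the very same calculation.

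For the function case (1), after substitution both $\mv$ and $\mtfun{\mx}{\mA}{\mv\,\mx}$ are values, so the term relation reduces to the value relation at $\mA \mto \mB$. Given $j \le i$ and related arguments $\mvin{1} \vsimdynr{j}{\mA} \mvin{2}$, the expanded side's application $\mtapp{(\mtfun{\mx}{\mA}{\mv[\mgammain{2}]\,\mx})}{\mvin{2}}$ contracts in $0$ steps to $\mtapp{\mv[\mgammain{2}]}{\mvin{2}}$; by anti-reduction it then suffices to relate $\mtapp{\mv[\mgammain{1}]}{\mvin{1}}$ with $\mtapp{\mv[\mgammain{2}]}{\mvin{2}}$. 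But this is exactly an instance of the function value relation applied to the diagonal pair $\mv[\mgammain{1}] \vsimdynr{j}{\mA\mto\mB} \mv[\mgammain{2}]$, which I obtain from reflexivity (\cref{lem:fund-lemma}) on $\mv$ at index $i$ together with downward closure (\cref{lem:downward-closed}) from $i$ to $j$. The reverse direction contracts the application on the other side and is otherwise identical.

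The sum case (2) and product case (3) follow the same recipe, except that I first read off the shape of the scrutinee from the value relation. The substitution sends $\mx$ to related values $\mvin{1} \vsimdynr{i}{\mA\mplus\mApr} \mvin{2}$; the sum value relation forces these to be injections on the same side with payloads related at index $i$, so the \textsf{case} on the expanded side selects the matching branch and contracts in $0$ steps to $\mt$ with $\mx$ bound to that payload---i.e.\ to $\mt[\mgammain{2}]$. After anti-reducing this $0$-step contraction, the remaining goal $\mt[\mgammain{1}] \tsimdynr{i}{\mB} \mt[\mgammain{2}]$ is precisely reflexivity (\cref{lem:fund-lemma}) on $\mt$ applied to the related substitutions. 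The product case is identical, with the \textsf{matchpair} redex and the pair shape supplied by the product value relation. In both cases both directions and both choices of $\simdynr$ again coincide because each contraction carries weight $0$.

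The main obstacle is bookkeeping rather than anything conceptual: one must track the step indices carefully to check that the value relations at sum, product, and (Kripke-quantified) function type hand back components related at exactly the index required, so that no index is lost across the $\eta$-redex, and one must confirm that every $\eta$-contraction genuinely has weight $0$ (only \textsf{unroll} of a recursive value carries weight $1$). This weight-$0$ fact is what allows the symmetric $0$-step corollary of anti-reduction to be used in place of its asymmetric, relation-specific form, and hence what collapses the four sub-obligations of $\equidyn$ into a single uniform computation.
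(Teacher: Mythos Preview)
Your proposal is correct. The paper's proof is a one-liner citing \cref{lem:open-beta} (Open $\beta$ Reductions) for all three cases. For cases (2) and (3) your argument is exactly the inlined proof of that lemma---reflexivity on $\mt$ plus anti-reduction across a weight-$0$ contraction---so you could compress those two cases to a single citation. For case (1), however, your route is genuinely the right one and in fact fills a gap the paper's one-liner glosses over: under any closing substitution both $\mv[\gamma]$ and $\mtfun{\mx}{\mA}{\mv[\gamma]\,\mx}$ are values and neither reduces to the other, so the hypothesis of \cref{lem:open-beta} is not met. What is needed is precisely what you do---unfold the function value relation, take related arguments at $j\le i$, contract the $\beta$-redex on the expanded side in $0$ steps, and close with reflexivity on $\mv$ (downward-closed to $j$) instantiated at those arguments. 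So your approach and the paper's coincide on (2) and (3), while on (1) yours is more careful.
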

\iflong
\begin{proof}
  All are consequences of \cref{lem:open-beta}.
\end{proof}
\fi

Next, with term constructors that involve continuations, we often need
to rearrange the programs such as the ``case-of-case'' transformation.
These are called commuting conversions and are presented in \figref{fig:comm-conv}.
\begin{figure}
  \begin{mathpar}
    \begin{array}{rcl}
      \mectxt\hw{\mtlet{\mx}{\mt}{\ms}} & \equidyn & \mtlet{\mx}{\mt}{\mectxt\hw\ms}\\
      \mectxt\hw{\mtmatchpair{\mx}{\my}{\mt}{\ms}} & \equidyn & \mtmatchpair{\mx}{\my}{\mt}{\mectxt\hw\ms}\\
      \mectxt\hw{\mtcase{\mt}{\mx}{\ms}{\mxpr}{\mspr}} & \equidyn & \mtcase{\mt}{\mx}{\mectxt\hw\ms}{\mxpr}{\mectxt\hw\mspr}
    \end{array}
  \end{mathpar}
  \caption{Commuting Conversions}
  \label{fig:comm-conv}
\end{figure}

\begin{lemma}[Commuting Conversions]
\label{lem:comm-conv}
  All of the commuting conversions in \figref{fig:comm-conv} are
  equivalences.
\end{lemma}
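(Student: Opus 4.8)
The plan is to prove all three conversions by a single recipe, writing each side as $E_1\hw{\mt}$ and $E_2\hw{\mt}$ for evaluation contexts whose hole sits at the subterm that is evaluated first. For the let conversion I take $E_1 = \mectxt\hw{\mtlet{\mx}{\mhole}{\ms}}$ and $E_2 = \mtlet{\mx}{\mhole}{\mectxt\hw{\ms}}$; for matchpair, $E_1 = \mectxt\hw{\mtmatchpair{\mx}{\my}{\mhole}{\ms}}$ and $E_2 = \mtmatchpair{\mx}{\my}{\mhole}{\mectxt\hw{\ms}}$; and for case, $E_1 = \mectxt\hw{\mtcase{\mhole}{\mx}{\ms}{\mxpr}{\mspr}}$ and $E_2 = \mtcase{\mhole}{\mx}{\mectxt\hw{\ms}}{\mxpr}{\mectxt\hw{\mspr}}$. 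The first, routine, step is to observe that each $E_1$ and $E_2$ is a genuine evaluation context with the hole in the evaluated position: this is immediate from the grammar, since $\mectxt$ is an evaluation context and let, matchpair, and case all evaluate their first argument first. I will also use the implicit freshness side condition that the bound variables ($\mx$, and $\my$ or $\mxpr$) do not occur free in $\mectxt$, which is forced by well-typedness of the conversions and is exactly what lets $\mectxt$ slide under the binder without capture.

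The engine of the proof is \cref{lem:monadic-bind} (monadic bind), which factors out evaluation of the shared subterm. I prove the $\ltdynr$ and $\gtdynr$ halves of $\equidyn$ separately; take $\ltdynr$. Unfolding the open relation, I must show for every $i$ and every $\mgammaone \vltdynr{i}{\menv} \mgammatwo$ that $(E_1\hw{\mt})[\mgammaone] \tltdynr{i}{\mB} (E_2\hw{\mt})[\mgammatwo]$. Since the closing substitution distributes over the context and its subterm, this reads $E_1[\mgammaone]\hw{\mt[\mgammaone]} \tltdynr{i}{\mB} E_2[\mgammatwo]\hw{\mt[\mgammatwo]}$, so I apply monadic bind with contexts $E_1[\mgammaone]$ and $E_2[\mgammatwo]$ (hole type $\mA$, result type $\mB$). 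Its premise on the term in the hole, namely $\mt[\mgammaone] \tltdynr{i}{\mA} \mt[\mgammatwo]$, is exactly reflexivity (\cref{lem:fund-lemma}) for $\menv \vdash \mt : \mA$ instantiated at the related substitutions.

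What remains is monadic bind's value obligation: for every $j \leq i$ and $\mvone \vltdynr{j}{\mA} \mvtwo$, show $E_1[\mgammaone]\hw{\mvone} \tltdynr{j}{\mB} E_2[\mgammatwo]\hw{\mvtwo}$. Here both sides take a single elimination $\beta$-step of zero unroll weight, and by freshness of the bound variable both land on the same shape: in the let case $E_1[\mgammaone]\hw{\mvone} \bigstepany \mectxt[\mgammaone]\hw{\ms[\mgammaone][\mvone/\mx]}$ while $E_2[\mgammatwo]\hw{\mvtwo} \bigstepany \mectxt[\mgammatwo]\hw{\ms[\mgammatwo][\mvtwo/\mx]}$. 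These residuals are related by reflexivity (\cref{lem:fund-lemma}) on the open term $\menv,\mx:\mA \vdash \mectxt\hw{\ms} : \mB$ under the extended substitutions $(\mgammaone,\mvone/\mx)$ and $(\mgammatwo,\mvtwo/\mx)$, which are related at index $j$ by downward closure (\cref{lem:downward-closed}) of $\mgammaone \vltdynr{i}{\menv} \mgammatwo$ together with $\mvone \vltdynr{j}{\mA} \mvtwo$. Anti-reduction (\cref{lem:anti-red}) then transports this back across the two $\beta$-steps to give the claim at index $j$. The matchpair case is identical with a pair elimination; the case conversion differs only in that a related pair of sum values forces both sides into the same injection branch, after which the same single-step reduction applies. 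The $\gtdynr$ direction runs identically, using the second clause of anti-reduction.

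The main obstacle is conceptual rather than computational: these conversions are genuinely \emph{not} instances of open $\beta$-reduction (\cref{lem:open-beta}), since when the evaluated subterm diverges or errors neither side reduces to the other, so \cref{lem:open-beta} cannot fire. Monadic bind is precisely the device that absorbs this, reducing the whole statement to the value case while handling divergence and error of the subterm uniformly through the definitions of $\ltdynr$ and $\gtdynr$; the real insight is simply recognizing the shared evaluation context at the evaluated subterm and invoking \cref{lem:monadic-bind}. The only genuinely delicate bookkeeping is the no-capture condition and the commutation of the closing substitution with $\mectxt$, after which everything downstream is reflexivity plus anti-reduction.
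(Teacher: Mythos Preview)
Your proof is correct and follows exactly the approach the paper uses: the paper's own proof is the single line ``By monadic bind, anti-reduction and the reflexivity (\cref{lem:monadic-bind,lem:anti-red,lem:fund-lemma}),'' and you have faithfully unpacked those three ingredients, including the correct observation that open $\beta$ alone does not suffice because the evaluated subterm may diverge or error.
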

\iflong\begin{proof}
  By monadic bind, anti-reduction and the reflexivity
  (\cref{lem:monadic-bind,lem:anti-red,lem:fund-lemma}).
\end{proof}\fi

Next, the following theorem is the main reason we so heavily use
\emph{evaluation contexts}.
It is a kind of open version of the monadic bind lemma
\cref{lem:monadic-bind}.
\begin{lemma}[Evaluation contexts are linear]
\label{lem:evctx-linear}
  If $\menv \vdash \mt : \mA$ and $\menv,\mx:\mA \vdash
  \mectxt\hw{\mx} : \mB$, then
  \[
  \mtlet{\mx}{\mt}{\mectxt\hw{\mx}}
  \equidyn
  \mectxt\hw{\mt}
  \]
\end{lemma}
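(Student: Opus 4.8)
The plan is to prove the two approximation directions separately, and within each to handle the $\ltdynr$ and $\gtdynr$ flavors uniformly through the generic relation $\simdynr$, by reducing everything to the closed-term Monadic Bind lemma (\cref{lem:monadic-bind}), using reflexivity (\cref{lem:fund-lemma}), downward closure (\cref{lem:downward-closed}), and the weight-$0$ corollary of anti-reduction (\cref{lem:anti-red}). First I would unfold the definition of open-term approximation: fix $i \in \mathbb{N}$ and substitutions $\mgammaone \vsimdynr{i}{\menv} \mgammatwo$, and choose $\mx$ fresh (by $\alpha$-renaming) relative to the ranges of $\mgammaone,\mgammatwo$, so that substitution commutes with the binder and the hole, giving $(\mtlet{\mx}{\mt}{\mectxt\hw{\mx}})[\mgamma] = \mtlet{\mx}{\mt[\mgamma]}{(\mectxt\hw{\mx})[\mgamma]}$ and $(\mectxt\hw{\mt})[\mgamma] = \mectxt[\mgamma]\hw{\mt[\mgamma]}$, where $\mectxt[\mgamma]$ is again an evaluation context. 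The goal then becomes a relation between these closed instances.

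For the forward direction I apply \cref{lem:monadic-bind} with the two evaluation contexts $\mtlet{\mx}{\mhole}{(\mectxt\hw{\mx})[\mgammaone]}$ and $\mectxt[\mgammatwo]$, instantiated at the related terms $\mt[\mgammaone] \tsimdynr{i}{\mA} \mt[\mgammatwo]$ obtained from reflexivity on $\mt$. The value-case obligation is, for $\mvone \vsimdynr{j}{\mA} \mvtwo$ with $j \le i$, to show $\mtlet{\mx}{\mvone}{(\mectxt\hw{\mx})[\mgammaone]} \tsimdynr{j}{\mB} \mectxt[\mgammatwo]\hw{\mvtwo}$. Here the left-hand side takes a single weight-$0$ let-reduction to $(\mectxt\hw{\mx})[\mgammaone,\mvone/\mx]$; by downward closure $\mgammaone \vsimdynr{j}{\menv} \mgammatwo$, so $(\mgammaone,\mvone/\mx) \vsimdynr{j}{\menv,\mx:\mA} (\mgammatwo,\mvtwo/\mx)$, and reflexivity on $\menv,\mx:\mA \vdash \mectxt\hw{\mx} : \mB$ yields $(\mectxt\hw{\mx})[\mgammaone,\mvone/\mx] \tsimdynr{j}{\mB} (\mectxt\hw{\mx})[\mgammatwo,\mvtwo/\mx] = \mectxt[\mgammatwo]\hw{\mvtwo}$; the weight-$0$ anti-reduction corollary then discharges the single let-step. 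Monadic Bind now gives exactly $(\mtlet{\mx}{\mt}{\mectxt\hw{\mx}})[\mgammaone] \tsimdynr{i}{\mB} (\mectxt\hw{\mt})[\mgammatwo]$, and since this holds for $\simdynr \in \{\ltdynr,\gtdynr\}$ and all $i,\mgammaone,\mgammatwo$, we obtain $\menv \vDash \mtlet{\mx}{\mt}{\mectxt\hw{\mx}} \ltdyn \mectxt\hw{\mt} : \mB$.

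The reverse direction is symmetric: I bind with the contexts $\mectxt[\mgammaone]$ and $\mtlet{\mx}{\mhole}{(\mectxt\hw{\mx})[\mgammatwo]}$, and in the value case reduce the (now right-hand) let by one weight-$0$ step before applying reflexivity and anti-reduction. Combining the two directions yields $\menv \vDash \mtlet{\mx}{\mt}{\mectxt\hw{\mx}} \equidyn \mectxt\hw{\mt} : \mB$. I do not expect a serious obstacle, since this lemma is essentially the open-term generalization of Monadic Bind; the point requiring care is the bookkeeping in the value case, namely ensuring the extended substitutions are related at the correct index $j$ (via downward closure) and that the single let-reduction carries weight $0$, so that the anti-reduction corollary applies uniformly to both $\ltdynr$ and $\gtdynr$ without disturbing the index, together with maintaining freshness of $\mx$ so the displayed substitution equalities hold on the nose.
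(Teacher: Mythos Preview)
Your proof is correct, but it takes a different route from the paper's. The paper argues at the level of the high-level equivalence lemmas already established: it first applies the commuting conversion for $\mfont{let}$ (\cref{lem:comm-conv}) to rewrite $\mtlet{\mx}{\mt}{\mectxt\hw{\mx}}$ as $\mectxt\hw{\mtlet{\mx}{\mt}{\mx}}$, and then collapses $\mtlet{\mx}{\mt}{\mx}$ to $\mt$ inside $\mectxt$, chaining the two steps by transitivity (\cref{lem:trans}). By contrast, you unfold the open-term relation and work directly at the step-indexed level, invoking monadic bind, reflexivity, downward closure, and the weight-$0$ anti-reduction corollary. Since the commuting-conversions lemma is itself proved from exactly these ingredients, your argument is essentially an inlining of the paper's proof that avoids the detour through transitivity; the paper's version is shorter and more modular, while yours is more self-contained and makes the step-index bookkeeping explicit.
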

\iflong\begin{proof}
  By a commuting conversion and an open $\beta$ reduction, connected
  by transitivity \cref{lem:comm-conv,lem:open-beta,lem:trans}
  \begin{align*}
  \mtlet{\mx}{\mt}{\mectxt\hw{\mx}}
  &\equidyn \mectxt\hw{\mtlet{\mx}{\mt}{\mx}}\\
  &\equidyn \mectxt\hw{\mt}
  \end{align*}
\end{proof}\fi

\iflong
As a simple example, consider the following standard equivalence of
let and $\lambda$, which we will need later and prove using the above
lemmas:
\begin{lemma}[Let-$\lambda$ Equivalence]
\label{lem:let-lambda}
  For any $\menv, \mx:\mA \vdash \mt : \mB$ and $\menv \vdash \ms :
  \mA$,
  \[
  (\mtfun{\mx}{\mA}{\mt})\,\ms
  \equidyn
  \mtlet{\mx}{\ms}{\mt}
  \]
\end{lemma}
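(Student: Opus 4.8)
The plan is to reduce the equivalence to the ``evaluation contexts are linear'' principle (\cref{lem:evctx-linear}), since the only real content is that the argument $\ms$ --- an arbitrary term, not a value --- must be evaluated before the function body runs, and this sequencing is exactly what a let-binding expresses. First I would observe that $\mectxt = (\mtfun{\mx}{\mA}{\mt})\,\mhole$ is a legal evaluation context: the grammar of $\tlangname$ includes $\mtapp{\mval}{\mectxt}$, and $\mtfun{\mx}{\mA}{\mt}$ is a value. Choosing a variable $\my$ fresh for $\mt$ and $\ms$ (a variable is itself a value in $\tlangname$), \cref{lem:evctx-linear} applied to this context gives
\[
(\mtfun{\mx}{\mA}{\mt})\,\ms \;\equidyn\; \mtlet{\my}{\ms}{(\mtfun{\mx}{\mA}{\mt})\,\my}.
\]

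Next I would rewrite the body of the let. Under any closing substitution $\mgamma$, the term $((\mtfun{\mx}{\mA}{\mt})\,\my)[\mgamma]$ is an application of a $\lambda$ to a value (the image of $\my$ under $\mgamma$), so it $\beta$-reduces to $(\mt[\my/\mx])[\mgamma]$; hence \cref{lem:open-beta} yields the open-term equivalence $(\mtfun{\mx}{\mA}{\mt})\,\my \equidyn \mt[\my/\mx]$. To lift this under the binder $\mtlet{\my}{\ms}{-}$ I would invoke the congruence rule for let from \cref{lem:cong}, using reflexivity (\cref{lem:fund-lemma}) for the unchanged bound term $\ms$, obtaining
\[
\mtlet{\my}{\ms}{(\mtfun{\mx}{\mA}{\mt})\,\my} \;\equidyn\; \mtlet{\my}{\ms}{\mt[\my/\mx]}.
\]
Finally, $\mtlet{\my}{\ms}{\mt[\my/\mx]}$ is literally $\mtlet{\mx}{\ms}{\mt}$ up to $\alpha$-renaming of the bound variable, and chaining the three steps via transitivity (\cref{lem:trans}) closes the proof.

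The step I expect to require the most care is the middle rewrite, precisely because it cannot be carried out by open $\beta$ on the whole term: in $\mtlet{\my}{\ms}{(\mtfun{\mx}{\mA}{\mt})\,\my}$ the inner redex sits under the let binder and is therefore \emph{not} in evaluation position while $\ms[\mgamma]$ is still being evaluated, so the compound term need not reduce under a closing substitution. This is exactly why I route through \cref{lem:evctx-linear} first --- to expose the evaluation of $\ms$ as a let and leave a value, $\my$, in the function's argument --- and then appeal to congruence rather than to a single global reduction. The remaining bookkeeping, namely freshness of $\my$ to avoid capture by the $\lambda$-binder $\mx$ and the $\alpha$-identification at the end, is routine.
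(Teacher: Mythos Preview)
Your proof is correct and follows essentially the same approach as the paper: apply linearity of evaluation contexts (\cref{lem:evctx-linear}) with $\mectxt = (\mtfun{\mx}{\mA}{\mt})\,\mhole$, then use an open $\beta$-reduction on the body under the let, chained by transitivity. The paper's version is terser---it reuses $\mx$ as the let-bound variable rather than introducing a fresh $\my$, so the final $\alpha$-renaming step collapses---and it leaves the congruence step implicit, but the underlying argument is the same. Your explicit remark that the inner $\beta$-redex is not in evaluation position (so congruence is genuinely needed) is a correct observation that the paper glosses over.
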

\begin{proof}
  First, we lift $\mt$ using linearity of evaluation contexts, then an
  open $\beta$-reduction, linked by transitivity:
  \begin{align*}
    (\mtfun{\mx}{\mA}{\mt})\,\ms
    &\equidyn
    \mtletvert{\mx}{\ms}{(\mtfun{\mx}{\mA}{\mt})}
    &\equidyn\mtletvert{\mx}{\ms}{\mt}
  \end{align*}
\end{proof}
\fi

The concepts of pure and terminating terms are useful because when
subterms are pure or terminating, they can be moved around to prove
equivalences more easily.
\begin{definition}[Pure, Terminating Terms]
  \begin{enumerate}
  \item A term $\menv\vdash \mt : \mA$ is \emph{terminating} if for
    any closing $\mgamma$, either $\mt[\mgamma] \bigstepany \merr$ or
    $\mt[\mgamma] \bigstepany \mv$ for some $\mv$.
  \item A term $\menv\vdash \mt : \mA$ is \emph{pure} if for any
    closing $\mgamma$, $\mt[\mgamma] \bigstepany \mv$ for some $\mv$.
  \end{enumerate}
\end{definition}
\iflong
The following terminology and proof are taken from \cite{fuhrmann1999direct}.
\begin{lemma}[Pure Terms are Thunkable]
  For any pure $\menv \vdash \mt : \mA$,
  \[
  \mtlet\mx\mt{\mtfun{\my}{\mB}{\mx}}
  \equidyn
  \mtfun\my\mB\mt
  \]
\end{lemma}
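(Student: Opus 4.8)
The plan is to prove the equivalence directly in the step-indexed logical relation, by establishing $\menv \vDash \mtlet\mx\mt{\mtfun{\my}{\mB}{\mx}} \dynr \mtfun\my\mB\mt : \mfunty{\mB}{\mA}$ together with its converse; since $\equidyn$ is by definition $\dynr$ in both directions, this yields the stated equivalence (and observational equivalence then follows by soundness, \cref{thm:log-to-obs}). The essential content is the \emph{thunkability} of $\mt$: because $\mt$ is pure, evaluating it once up front (as the left-hand side does through the $\mtlet$) is indistinguishable from re-evaluating it on each application of the function (as the right-hand side does under the $\mtfun$), and in particular neither side ever errors or diverges.

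First I would fix one of the four sub-relations ($\ltdynr$ or $\gtdynr$, in either direction), an index $i$, and related closing substitutions $\mgammaone \vsimdynr{i}{\menv} \mgammatwo$. By purity, $\mt[\mgammaone] \bigstepany \mvone$ and $\mt[\mgammatwo] \bigstepany \mvtwo$ for some values, and since no error or divergence occurs the error and divergence disjuncts of the term relations are vacuous throughout. Consequently the left-hand side $(\mtlet\mx\mt{\mtfun{\my}{\mB}{\mx}})[\mgammaone]$ reduces to the function value $\mtfun\my\mB{\mvone}$, whereas $(\mtfun\my\mB\mt)[\mgammatwo] = \mtfun\my\mB{\mt[\mgammatwo]}$ is already a value. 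By anti-reduction (\cref{lem:anti-red}) and downward closure (\cref{lem:downward-closed}) it then suffices to show that these two function values are related at the appropriate index.

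The function-value case reduces to relating the bodies after application: given related arguments $\mvinpr{1} \vsimdynr{j}{\mB} \mvinpr{2}$, I must relate $\mtapp{(\mtfun\my\mB{\mvone})}{\mvinpr{1}}$ with $\mtapp{(\mtfun\my\mB{\mt[\mgammatwo]})}{\mvinpr{2}}$. The left $\beta$-reduces in zero unroll steps to $\mvone$ (the argument is discarded, as $\my$ does not occur free in $\mt$), and the right $\beta$-reduces to $\mt[\mgammatwo]$ and thence to $\mvtwo$. The crux is to relate $\mvone$ and $\mvtwo$: applying reflexivity (\cref{lem:fund-lemma}) to $\menv \vdash \mt : \mA$ at the substitutions $\mgammaone,\mgammatwo$ gives $\mt[\mgammaone] \tsimdynr{j}{\mA} \mt[\mgammatwo]$, and since both run purely to $\mvone$ and $\mvtwo$ these values are related at $\mA$. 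A final appeal to anti-reduction closes the application case, hence the sub-relation; the converse direction and the two $\gtdynr$ halves are entirely symmetric.

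The main obstacle is exactly this function-value case, where the left-hand side's single, eager evaluation of $\mt$ must be reconciled with the right-hand side's repeated, lazy evaluation under the $\mtfun$. None of the high-level equational lemmas apply here: moving a computation out of a $\lambda$-body is neither a commuting conversion nor an instance of evaluation-context linearity, since a $\lambda$-abstraction is not an evaluation context. The transformation is sound \emph{only} because $\mt$ is pure, and purity is used in an essential way twice---to eliminate the error and divergence disjuncts, and to justify the appeal to reflexivity that relates the two independently produced values $\mvone$ and $\mvtwo$.
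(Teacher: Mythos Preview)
Your approach is essentially the paper's: both arguments work directly in the step-indexed relation, invoke reflexivity (the fundamental lemma) on $\mt$ under the related substitutions to relate $\mvone$ and $\mvtwo$, and use anti-reduction and downward closure to thread the indices through the function-value case. One small imprecision: the ``takes $i+1$ unroll steps'' disjunct of $\tltdynr{i}{\mA}$ is not actually vacuous for pure $\mt$ (a terminating term may still use more than $i$ unrolls), but when it fires the goal is immediate, so the case split the paper does there is the only adjustment you need.
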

\begin{proof}
  There are two cases $\ltdynr,\gtdynr$.
  \begin{enumerate}
  \item Let $\mgammaone \vltdynr{i}{\menv} \mgammatwo$ and define
    $\mtone = \mt{}[\mgammaone]$ and $\mttwo =
    \mt{}[\mgammatwo]$. Then we know $\mtone \tltdynr i \mA \mttwo$.
    \begin{enumerate}

    \item If $\mtone \bigstep{i+1}$ we're done.

    \item It is impossible that $\mtone \bigstep \merr$ because $\mt$
      is terminating.

    \item If $\mtone\bigstep{j\leq i} \mvone$, then we know that
      $\mtone \bigstepany \mvtwo$ with $\mvone \vltdynr{i-j}{\mA}
      \mvtwo$. Next,
      \[
      \mtlet\mx\mtone{\mtfun{\my}{\mB}{\mx}}
      \bigstep{j}
      \mtfun{\my}{\mB}{\mvone}
      \]

      Then it is sufficient to show $\mtfun{\my}{\mB}
      \vltdynr{i-j}{\mB\mto\mA} \mtfun{\my}{\mttwo}$, i.e. that for
      any $\mvonepr \vltdynr{k\leq(i-j)}{\mB} \mvtwopr$ that
      
      \[ (\mtfun{\my}{\mB}{\mvone})\,\mvonepr \tltdynr{k}{\mA} (\mtfun{\my}{\mB}{\mttwo})\,\mvtwopr \]
      The left side steps
      \[
      (\mtfun{\my}{\mB}{\mvone})\,\mvonepr \bigstep{0} \mvone
      \]
      And the right side steps
      \[
      (\mtfun{\my}{\mB}{\mtone})\,\mvtwopr \bigstep{0} \mtone \bigstep{j} \mvtwo
      \]
      And $\mvone \vltdynr{k}{\mA} \mvtwo$ by assumption above.
    \end{enumerate}
  \item Let $\mgammaone \gtdynr^{i-j}_{\menv} \mgammatwo$ and define
    $\mtone = \mt{}[\mgammaone]$ and $\mttwo =
    \mt{}[\mgammatwo]$. Then we know $\mtone \tgtdynr i \mA \mttwo$.

    Since $\mt$ is terminating we know $\mtone \bigstepany \mvone$ and
    for some $k$, $\mttwo \bigstep{k} \mvtwo$.
    Then we need to show
    And we need to show ${\mtfun\my\mB\mvone}
    \tgtdynr{i}{\mB\mto\mA} \mtfun\my\mB\mttwo$.
    Given any $\mvonepr \vgtdynr{k\leq i} \mvtwopr$, we need to show
    \[
    (\mtfun\my\mB\mvone)\,\mvonepr \tltdynr{k}{\mB} (\mtfun\my\mB\mttwo)\,\mvtwopr
    \]
    The $\beta$ reduction takes $0$ steps, then $\mttwo$ starts
    running.  If $k > i$, there is nothing left to show. Otherwise, $k
    \leq i$ and we know $\mvone \vgtdynr{i-k}{\mA} \mvtwo$ which is
    the needed result.
  \end{enumerate}
\end{proof}
\fi
\begin{lemma}[Pure Terms are Essentially Values]
\label{lem:pure-subst}
  If $\menv \vdash \mt : \mA$ is a pure term, then for any $\menv,\mx
  : \mA \vdash \ms : \mB$, $\mtlet{\mx}{\mt}{\ms} \equidyn \ms[\mt/\mx]$ holds.
\end{lemma}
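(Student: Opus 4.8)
The plan is to reduce the statement entirely to the thunkability lemma (Pure Terms are Thunkable), which is the only place the purity of $\mt$ is genuinely needed. The tempting direct approach—observe that for each closing $\mgamma$ the term $(\mtlet{\mx}{\mt}{\ms})[\mgamma]$ evaluates $\mt[\mgamma]$ to a value and substitutes it, then appeal to open $\beta$-reduction (\cref{lem:open-beta})—does \emph{not} work: in $\ms[\mt/\mx]$ the occurrences of $\mx$ may sit under a $\lambda$ or be duplicated, so the copies of $\mt$ are evaluated at different times (or never), and no single reduction connects $(\mtlet{\mx}{\mt}{\ms})[\mgamma]$ to $(\ms[\mt/\mx])[\mgamma]$. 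Purity is precisely what repairs this: a pure computation can be copied, discarded, and delayed without changing behavior, and thunkability is the packaged form of that fact.

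Concretely, I would first fix fresh variables and form the \emph{thunk} $h := \mtfun{\my}{\munitty}{\mt}$ of type $\munitty \mto \mA$ (a syntactic value, well typed in $\menv$), together with the de-thunked term $\hat\ms := \ms[(\mxpr\,\mtunit)/\mx]$ for a fresh $\mxpr : \munitty \mto \mA$, so that $\menv, \mxpr : \munitty\mto\mA \vdash \hat\ms : \mB$ with $\mx \notin \hat\ms$. The engine of the proof is the single auxiliary term $T := \mtlet{\mxpr}{(\mtlet{\mx}{\mt}{\mtfun{\my}{\munitty}{\mx}})}{\hat\ms}$, which I will show is equivalent both to $\mtlet{\mx}{\mt}{\ms}$ and to $\ms[\mt/\mx]$; transitivity (\cref{lem:trans}) then closes the argument.

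For the first route, the commuting conversion for nested let-bindings (\cref{lem:comm-conv}, with the evaluation context $\mtlet{\mxpr}{\mhole}{\hat\ms}$) turns $T$ into $\mtlet{\mx}{\mt}{(\mtlet{\mxpr}{\mtfun{\my}{\munitty}{\mx}}{\hat\ms})}$; since $\mtfun{\my}{\munitty}{\mx}$ is a value, \cref{lem:open-beta} substitutes it for $\mxpr$, placing $\ms[((\mtfun{\my}{\munitty}{\mx})\,\mtunit)/\mx]$ in the body, which collapses per occurrence via $(\mtfun{\my}{\munitty}{\mx})\,\mtunit \equidyn \mx$ (open $\beta$) back to $\ms$—so this route yields $\mtlet{\mx}{\mt}{\ms}$. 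For the second route, thunkability rewrites the inner let to the value $h$, and then (again by \cref{lem:open-beta}, $h$ being a value) $T \equidyn \hat\ms[h/\mxpr] = \ms[(h\,\mtunit)/\mx]$, which collapses via $h\,\mtunit \equidyn \mt$ to $\ms[\mt/\mx]$. Both routes repeatedly use the fact that substituting equivalent (possibly non-value) terms for a variable preserves equivalence; I will justify this once as a precongruence consequence of \cref{lem:cong} together with reflexivity (\cref{lem:fund-lemma}), applied pointwise at every occurrence of $\mx$.

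The main obstacle is conceptual rather than computational: recognizing that the lemma cannot be proved by a direct reduction and must instead be routed through thunkability, and accordingly arranging the thunk $h$ and the auxiliary term $T$ so that both the ``evaluate once'' side $\mtlet{\mx}{\mt}{\ms}$ and the ``substitute everywhere'' side $\ms[\mt/\mx]$ are reached from the \emph{same} intermediate term. A secondary care point is that the rewriting steps substitute non-values ($\mt$, $h\,\mtunit$, and $(\mtfun{\my}{\munitty}{\mx})\,\mtunit$) for $\mx$: these are legitimate only as applications of the precongruence derived from \cref{lem:cong}, not of the value-substitution clause of the logical relation, and the freshness side conditions on $\my$ and $\mxpr$ must be maintained to avoid capture in the commuting conversion.
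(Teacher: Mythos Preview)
Your proposal is correct and follows essentially the same approach as the paper: both proofs route through the thunkability lemma using a unit-thunk $\mtfun{\my}{\munitty}{\mt}$, introduce the auxiliary body $\ms[(\mxpr\,\mtunit)/\mx]$, and connect $\mtlet{\mx}{\mt}{\ms}$ and $\ms[\mt/\mx]$ via the same intermediate term (the paper uses $\mxin{f}$ where you use $\mxpr$, and traverses the chain linearly from $\ms[\mt/\mx]$ to $\mtlet{\mx}{\mt}{\ms}$ rather than meeting in the middle at your $T$, but the steps are identical). Your explanation of why the na\"ive direct-reduction argument fails and why congruence is needed for the per-occurrence rewrites is a useful addition that the paper leaves implicit.
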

\iflong
\begin{proof}
  First, since by open $\beta$ we have
  $(\mtfun{\my}{\munitty}\mt\,\mtunit) \equidyn \mt$, by congruence (\cref{lem:cong})
  \[
  \ms[\mt/\mx] \equidyn
  \ms[(\mtfun{\my}{\munitty}\mt\,\mtunit)\,\mtunit/\mx]
  \]
  And by reverse $\beta$ reduction, this is further equivalent to
  \[
  \mtletvert{\mxin{f}}{\mtfun\my\munitty\mt}{
    \ms[(\mxin{f}\,\mtunit)/\mx]
  }
  \]
  By thunkability of $\mt$ and a commuting conversion this is equivalent to:
  \[
  \mtletvert\mx\mt{
  \mtletvert{\mxin{f}}{\mtfun\my\munitty\mx}{
    \ms{}[(\mxin{f}\,\mtunit)/\mx]
  }
  }
  \]
  Which by $\beta$ reduction at each $\mx$ in $\mx$ is:
  \[
  \mtletvert\mx\mt{
  \mtletvert{\mxin{f}}{\mtfun\my\munitty\mx}{
    \ms[\mx]
  }
  }
  \]
  And a final $\beta$ reduction eliminates the auxiliary $\mxin{f}$:
  \[
  \mtletvert\mx\mt{
    \ms[\mx]}
  \]
\end{proof}
\fi

Also, since we consider all type errors to be equal, terminating terms
can be reordered:
\begin{lemma}[Terminating Terms Commute]
\label{lem:term-comm}
  If $\menv \vdash \mt : \mA$ and $\menv\vdash \mtpr : \mApr$ and
  $\menv,\mx:\mA,\my:\mApr \vdash \ms : \mB$, then
  \(
  \mtlet{\mx}{\mt}{
  \mtlet{\mxpr}{\mtpr}{
  \ms}
  }
  \equidyn
  \mtlet{\mxpr}{\mtpr}{
  \mtlet{\mx}{\mt}{
  \ms}
  }
  \)
\end{lemma}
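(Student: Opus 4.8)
The plan is to unfold the equivalence $\equidyn$ into its four constituent step-indexed relations and to use the termination hypotheses to eliminate the only situation in which the evaluation order is observable. Write $L = \mtlet{\mx}{\mt}{\mtlet{\my}{\mtpr}{\ms}}$ and $R = \mtlet{\my}{\mtpr}{\mtlet{\mx}{\mt}{\ms}}$. The first observation is that the statement is invariant under the renaming $(\mt,\mx)\leftrightarrow(\mtpr,\my)$, which exchanges $L$ and $R$. Consequently it suffices to prove the two \emph{closed} statements ``for related substitutions $\mgammaone \vltdynr{i}{\menv}\mgammatwo$ we have $L[\mgammaone] \tltdynr{i}{\mB} R[\mgammatwo]$'' and ``for $\mgammaone\vgtdynr{i}{\menv}\mgammatwo$ we have $L[\mgammaone]\tgtdynr{i}{\mB} R[\mgammatwo]$'': instantiating each at the renamed data yields the symmetric approximations $\menv\vDash R \ltdynr L : \mB$ and $\menv\vDash R \gtdynr L : \mB$, and together the four give $\menv \vDash L \dynr R : \mB$ and $\menv \vDash R \dynr L : \mB$, hence $L \equidyn R$.

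Before the case analysis it is worth recording \emph{why} the termination hypotheses are indispensable: if $\mt$ could diverge then $L[\mgammaone]$ would diverge without ever running $\mtpr$, whereas $R[\mgammaone]$ would run $\mtpr$ first and could instead \emph{error}; since error and divergence are distinguished observations, the two orders would then fail to be equivalent. Requiring $\mt$ and $\mtpr$ to be terminating removes exactly this mismatch: under any closing substitution each reduces either to $\merr$ or to a value, so the divergence-versus-error interaction that obstructs a naive commuting conversion (recall that $\mtlet{\mx}{\mt}{\mhole}$ is \emph{not} an evaluation context, so \cref{lem:comm-conv} does not apply) cannot arise.

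Both closed statements then follow the same two-case template, driven by the behavior of the \emph{left-hand} instantiations $\mt[\mgammaone]$ and $\mtpr[\mgammaone]$, which are terminating. If either reduces to $\merr$, then $L[\mgammaone]$ itself reduces to $\merr$ (the other subterm also terminates, so control reaches the error), and this discharges the goal at once: in $\tltdynr{i}{\mB}$ a left-hand error is one of the defining disjuncts, and in $\tgtdynr{i}{\mB}$ a left-hand error satisfies the relation against \emph{any} right-hand behavior. Otherwise $\mt[\mgammaone]\bigstepany\mvone$ and $\mtpr[\mgammaone]\bigstepany\mvonepr$. Applying reflexivity (\cref{lem:fund-lemma}) to $\mt$ and $\mtpr$ produces matching reductions $\mt[\mgammatwo]\bigstepany\mvtwo$ and $\mtpr[\mgammatwo]\bigstepany\mvtwopr$ with the two pairs of values related at the appropriate indices; for the $\gtdynr$ statement, where reflexivity only constrains the right-hand side, one additionally uses that $\mt[\mgammatwo]$ and $\mtpr[\mgammatwo]$ are terminating together with the fact that reflexivity holds at \emph{every} index to rule out their stepping to $\merr$, forcing convergence to values. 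Now $L[\mgammaone]$ and $R[\mgammatwo]$ reduce, in the two opposite orders but with the same net effect, to $\ms[\mgammaone,\mvone/\mx,\mvonepr/\my]$ and $\ms[\mgammatwo,\mvtwo/\mx,\mvtwopr/\my]$. These extended substitutions are related, so a final appeal to reflexivity on $\ms$ relates the residual terms, and \cref{lem:anti-red} together with downward closure (\cref{lem:downward-closed}) transports this back across the reductions to conclude $L[\mgammaone]\tsimdynr{i}{\mB}R[\mgammatwo]$.

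The main obstacle is the step-index bookkeeping in this last case: tracking the weighted steps consumed while running $\mt$ and $\mtpr$ so that, after anti-reduction, the index at which the two $\ms$-instances must be related is at least $i$ (where downward closure absorbs the slack), and---specifically for the $\gtdynr$ direction---justifying the ``limit'' argument that reflexivity plus termination forces $\mt[\mgammatwo]$ and $\mtpr[\mgammatwo]$ to values rather than errors. Everything else reduces to routine applications of reflexivity, anti-reduction, and downward closure.
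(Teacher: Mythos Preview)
Your proposal is correct and follows essentially the same approach as the paper: reduce by symmetry, unfold into the two step-indexed relations, and discharge each by case analysis using reflexivity (\cref{lem:fund-lemma}) on $\mt$, $\mtpr$, and $\ms$ together with anti-reduction and downward closure. The one organizational difference is that for the $\gtdynr$ direction the paper drives the case split from the \emph{right} substitution (first $\mtpr[\mgammatwo]$, then $\mt[\mgammatwo]$), which matches the definition of $\tgtdynr{i}{\mB}$ and uses termination only once (to ensure $L[\mgammaone]$ errors when $\mtpr[\mgammatwo]$ does). You instead front-load the left-side case split and then rule out right-side errors by contradiction via reflexivity; this works, but note that your ``forces convergence to values'' claim is really ``either $R[\mgammatwo]\bigstep{i+1}$ or both $\mtpr[\mgammatwo]$ and $\mt[\mgammatwo]$ value within the step budget,'' since reflexivity at index $i$ only constrains behavior within $i$ steps. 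The paper's ordering avoids that extra contradiction step, but both routes are sound.
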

\iflong
\begin{proof}
  By symmetry it is sufficient to prove one direction.
  Let $i \in \mathbb{N}$.
  \begin{enumerate}
  \item Let $\mgammaone \ltdynr{i}{\menv} \mgammatwo$.
    We need to show
    \[
    \mtletvert{\mx}{\mt{}[\mgammaone]}{
    \mtletvert{\mxpr}{\mtpr{}[\mgammaone]}{
      \ms{}[\mgammaone]}
    }
    \tltdynr{i}{\mB}
    \mtletvert{\mxpr}{\mtpr{}[\mgammatwo]}{
    \mtletvert{\mx}{\mt{}\mgammatwo}{
      \ms{}[\mgammatwo]}
    }
    \]
    Note that this is true if the left side diverges or errors, so this
    is true with no conditions on $\mt,\mtpr$

    By \cref{lem:fund-lemma}, we know $\mt[\mgammaone] \tltdynr{i}{\mA}
    \mt[\mgammatwo]$ and $\mtpr{}[\mgammaone] \tltdynr{i}{\mApr}
    \mtpr{}[\mgammatwo]$. We do a joint case analysis on these two
    facts.
    \begin{enumerate}
    \item If $\mt[\mgammaone] \bigstep{i+1}$, done.
    \item If $\mt[\mgammaone] \bigstep{j\leq i} \merr$, done.
    \item If $\mt[\mgammaone] \bigstep{j\leq i} \mvone$, then also
      $\mt[\mgammatwo] \bigstepany \mvtwo$.
      \begin{enumerate}
      \item If $\mtpr{}[\mgammaone] \bigstep{(i-j)+1}$, done.
      \item If $\mtpr{}[\mgammaone] \bigstep{k \leq (i-j)}\merr$, done.
      \item If $\mtpr{}[\mgammaone] \bigstep{k\leq(i-j)} \mvonepr$, then
        $\mtpr{}[\mgammatwo] \bigstepany \mvtwopr$ with $\mvonepr
        \vltdynr{i-(j+k)}{\mApr} \mvtwopr$ and the result follows by
        \cref{lem:fund-lemma} for $\ms$ because we know
        \[ \ms{}[\mgammaone,\mvone/\mx,\mvonepr/\mxpr] \tltdynr{i-(j+k)}{\mApr} \ms{}[\mgammatwo,\mvtwo/\mx,\mvtwopr/\mxpr]\]
      \end{enumerate}
    \end{enumerate}
  \item Let $\mgammaone \vgtdynr{i}{\menv} \mgammatwo$.
    We need to show
    \[
    \mtletvert{\mx}{\mt{}[\mgammaone]}{
    \mtletvert{\mxpr}{\mtpr{}[\mgammaone]}{
      \ms{}[\mgammaone]}
    }
    \tgtdynr{i}{\mB}
    \mtletvert{\mxpr}{\mtpr{}[\mgammatwo]}{
    \mtletvert{\mx}{\mt{}\mgammatwo}{
      \ms{}[\mgammatwo]}
    }
    \]
    By \cref{lem:fund-lemma}, we know $\mt[\mgammaone] \tgtdynr{i}{\mA}
    \mt[\mgammatwo]$ and $\mtpr{}[\mgammaone] \tgtdynr{i}{\mApr}
    \mtpr{}[\mgammatwo]$. We do a joint case analysis on these two
    facts.
    \begin{enumerate}
    \item If $\mtpr{}[\mgammatwo] \bigstep{i+1}$, done.
    \item If $\mtpr{}[\mgammatwo] \bigstep{j\leq i} \merr$ and
      $\mtpr{}[\mgammaone] \bigstepany \merr$. In this case we know
      the right hand side errors, so we must show the left side
      errors. Since $\mt$ is \emph{terminating}, either
      $\mt[\mgammaone] \bigstepany \merr$ (done) or $\mt[\mgammaone]
      \bigstepany \mvone$.
      In the latter case we are also done because:
      \[
      \mtletvert{\mx}{\mt{}[\mgammaone]}{
        \mtletvert{\mxpr}{\mtpr{}[\mgammaone]}{
          \ms{}[\mgammaone]}
      } \bigstepany
      {
        \mtletvert{\mxpr}{\mtpr{}[\mgammaone]}{
          \ms{}[\mgammaone,\mv/\mx]}
      } \bigstepany
      \merr
      \]
    \item If $\mtpr{}[\mgammatwo] \bigstep{j\leq i} \mvtwopr$ then
      either $\mtpr{}[\mgammaone] \bigstep \merr$ or
      $\mtpr{}[\mgammaone] \bigstep \mvonepr \vltdynr{i-j}{\mApr}
      \mvtwopr$.
      Next, consider $\mt[\mgammatwo]$.
      \begin{enumerate}
      \item If $\mt[\mgammatwo] \bigstep{(i-j)+1}$ done.
      \item If $\mt[\mgammatwo] \bigstep{k\leq (i-j)} \merr$, then we
        know also that $\mt[\mgammaone] \bigstepany \merr$, there is
        nothing left to show.
      \item If $\mt[\mgammatwo] \bigstep{k\leq (i-j)} \mvtwo$, then
        either $\mt[\mgammaone] \bigstepany \merr$ or
        $\mt[\mgammaone]\bigstepany \mvone \vltdynr{i-(j+k)}{\mA}
        \mvtwo$. If $\mtpr{}[\mgammaone]$ or $\mt{}[\mgammaone]$
        errors, done, otherwise both return values and the
        result follows by \cref{lem:fund-lemma} for $\ms$.
      \end{enumerate}
    \end{enumerate}
  \end{enumerate}
\end{proof}
\fi

\section{Casts as Embedding-Projection Pairs}
\label{sec:ep-pairs}

In this section, we show how arbitrary casts can be broken down into \emph{embedding-projection} pairs.
First, we define type dynamism and show that casts between less and
more dynamic types form an ep pair.
Then we will show that every cast is a composition of an upcast and a
downcast.

\subsection{Embedding-Projection Pairs}

First, we define ep pairs with respect to \emph{logical}
approximation.
Note that since logical approximation implies observational error
approximation, these are also ep pairs with respect to observational
error approximation.  However, in theorems where we \emph{construct}
new ep pairs from old ones, we will need that the input ep pairs are
logical ep pairs, not just observational, because we have not proven
that logical approximation is \emph{complete} for observational error
approximation.
As with casts, we use evaluation contexts for convenience.
\begin{definition}[EP Pair]
  A (logical) ep pair $(\mEin{e},\mEin{p}) : \mA \eppair \mB$ is a
  pair of an \emph{embedding} $\mEin{e}\hw{\cdot:\mA} : \mB$ and a
  \emph{projection} $\mEin{p}\hw{\cdot:\mB} :\mA$ satisfying
  \begin{enumerate}
  \item Retraction: ${\mx : \mA \vdash \mx \equidyn \mEin{p}\hw{\mEin{e}\hw{\mx}} : \mA}$
  \item Projection: ${\my : \mB \vdash \mEin{e}\hw{\mEin{p}\hw{\my}} \dynr \my : \mB}$
  \end{enumerate}
\end{definition}

Next, we prove that in any embedding-projection pair that embeddings
are pure (always produce a value with no effects) and that projections
are terminating (either error or produce a value).
Paired with the lemmas we have proven about pure and terminating
programs in the previous section, we will be able to prove theorems
about ep pairs more easily.

\begin{lemma}[Embeddings are Pure]
\label{lem:emb-pure}
  If $\mEin{e},\mEin{p} : \mA \eppair \mB$ is an
  embedding-projection pair then $\mx : \mA \vdash
  \mEin{e}\hw{\mx} : \mB$ is pure.
\end{lemma}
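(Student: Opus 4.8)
The plan is to read purity of the embedding directly off the \emph{retraction} law, using the fact that $\mEin{p}\hw\cdot$ is an \emph{evaluation context} to transfer information about $\mEin{p}\hw{\mEin{e}\hw\mv}$ back to the inner term $\mEin{e}\hw\mv$. Unfolding the definition of purity, what must be shown is that for every closed value $\mv : \mA$, the closed well-typed term $\mEin{e}\hw{\mv} : \mB$ runs to a value (it neither diverges nor raises $\merr$). So I first extract the behavior of the round-trip $\mEin{p}\hw{\mEin{e}\hw\mv}$, and then propagate that behavior one level inward.

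First I would instantiate retraction at a closed value. Retraction gives the logical equivalence $\mx \equidyn \mEin{p}\hw{\mEin{e}\hw\mx}$, hence in particular the approximation $\mx \ltdyn \mEin{p}\hw{\mEin{e}\hw\mx}$, whose defining conjunct includes the left-biased relation $\mx \ltdynr \mEin{p}\hw{\mEin{e}\hw\mx}$. By reflexivity (\cref{lem:fund-lemma}) the singleton substitution $\mv/\mx$ is related to itself at every step index, so instantiating yields $\mv \tltdynr{i}{\mA} \mEin{p}\hw{\mEin{e}\hw\mv}$ for every $i$. Now $\mv$ is already a value: it cannot still be running after $i+1$ steps and it cannot step to $\merr$, so the first two disjuncts in the definition of $\tltdynr{i}{}$ in \figref{fig:lr} are impossible. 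Only the third disjunct can hold, and it forces $\mEin{p}\hw{\mEin{e}\hw\mv} \bigstepany \mvtwo$ for some value $\mvtwo$. Thus the round-trip terminates with a value; note that taking the $\ltdynr$ (rather than $\gtdynr$) component here is exactly what rules out the possibility that it errors.

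It then remains to propagate this back through $\mEin{p}$. By type safety (Subject Reduction, Progress, and Determinism), the closed term $\mEin{e}\hw\mv$ either runs to a value, runs to $\merr$, or diverges. If it diverged, then since $\mEin{p}\hw\cdot$ is an evaluation context and reduction is closed under evaluation contexts (the composition of two evaluation contexts is again one), every reduction of $\mEin{e}\hw\mv$ lifts to a reduction of $\mEin{p}\hw{\mEin{e}\hw\mv}$, so the latter would diverge too, contradicting the previous paragraph. If $\mEin{e}\hw\mv \bigstepany \merr$, then by the error-propagation rule $\mectxt\hw{\merr}\step\merr$ of \figref{fig:tlang:opsem} we get $\mEin{p}\hw{\mEin{e}\hw\mv} \bigstepany \mEin{p}\hw{\merr} \step \merr$, again contradicting termination to a value. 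Hence $\mEin{e}\hw\mv$ runs to a value; as $\mv$ was an arbitrary closed value of type $\mA$, the term $\mEin{e}\hw{\mx}$ is pure.

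The conceptual content lies entirely in the first two paragraphs; the step I expect to require the most care is the last one, namely checking that divergence and errors occurring in the hole of $\mEin{p}$ are faithfully reflected in $\mEin{p}\hw{\mEin{e}\hw\mv}$ --- divergence via closure of the reduction relation under evaluation contexts, and errors via the propagation rule. Both are routine given the operational semantics of \figref{fig:tlang:opsem}, so I anticipate no genuine difficulty, only bookkeeping.
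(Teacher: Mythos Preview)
Your proof is correct and follows essentially the same approach as the paper: instantiate the retraction law at a closed value using reflexivity to obtain $\mv \tltdynr{i}{\mA} \mEin{p}\hw{\mEin{e}\hw\mv}$, observe that only the value disjunct can hold so the round-trip terminates in a value, and then use that $\mEin{p}$ is an evaluation context to conclude that the inner term $\mEin{e}\hw\mv$ already terminates in a value. The paper's proof is terser on the last step (it simply asserts that an evaluation context reaching a value means its hole did), whereas you spell out the divergence and error cases explicitly, but the argument is the same.
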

\begin{proof}
  The ep pair property states that
  \( \mx : \mA \vDash \mEin{p}\hw{\mEin{e}\hw{\mx}} \equidyn \mx : \mA \)
  Given any value $\cdot \vdash \mv : \mA$, by \lemref{lem:fund-lemma}, we know
  \( \mv \tltdynr{0}{\mA} \mEin{p}\hw{\mEin{e}\hw{\mv}} \)
  and since $\mv \bigstep{0} \mv$, this means there exists $\mvpr$
  such that $\mEin{p}\hw{\mEin{e}\hw{\mv}} \bigstepany \mvpr$,
    and since $\mEin{p}$ is an evaluation context, this means
    there must exist $\mvpr[2]$ with
    $\mEin{p}\hw{\mEin{e}\hw{\mv}} \bigstepany
      \mEin{p}\hw{\mvpr[2]}\bigstepany \mvpr$.
\end{proof}

\begin{lemma}[Projections are Terminating]
\label{prj-term}
  If $\mEin{e},\mEin{p} : \mA \eppair \mB$ is an
  embedding-projection pair then $\my : \mB \vdash
  \mEin{p}\hw{\my} : \mA$ is terminating.
\end{lemma}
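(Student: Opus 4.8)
The plan is to mirror the proof of \cref{lem:emb-pure} (Embeddings are Pure), but to drive the argument from the \emph{projection} half of the ep-pair property and, crucially, from the right-biased relation $\gtdynr$ rather than $\ltdynr$. Fix a closing substitution, i.e.\ a closed value $\cdot \vdash \mv : \mB$ for $\my$; by type safety (progress, subject reduction, and determinism) it suffices to show that $\mEin{p}\hw{\mv}$ does not diverge. The projection property gives $\my : \mB \vDash \mEin{e}\hw{\mEin{p}\hw{\my}} \dynr \my : \mB$, and since $\dynr$ is by definition the conjunction of $\ltdynr$ and $\gtdynr$, I would extract the $\gtdynr$ component.

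First I would instantiate this open relation at the substitution $\mv/\my$, using reflexivity (\cref{lem:fund-lemma}) to supply the required relatedness $\mv \vgtdynr{i}{\mB} \mv$ of the value to itself. This yields $\mEin{e}\hw{\mEin{p}\hw{\mv}} \tgtdynr{i}{\mB} \mv$ for every $i$; taking $i = 0$ already suffices. I then inspect the definition of $\tgtdynr{0}{\mB}$ with the right-hand term being the value $\mv$: the first disjunct (the right side still running after one unroll) is impossible because $\mv$ is a value and takes no steps, and the error disjunct is impossible because a value does not step to $\merr$. Hence the third disjunct must hold with $j = 0$ and $\mvtwo = \mv$, and this forces the left-hand term $\mEin{e}\hw{\mEin{p}\hw{\mv}}$ to reduce either to $\merr$ or to some value $\mvone$ (with $\mvone \vgtdynr{0}{\mB} \mv$). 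In either case $\mEin{e}\hw{\mEin{p}\hw{\mv}}$ terminates.

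Finally, because $\mEin{e}$ is an evaluation context, termination of $\mEin{e}\hw{\mEin{p}\hw{\mv}}$ forces termination of the plugged term $\mEin{p}\hw{\mv}$: were $\mEin{p}\hw{\mv}$ to diverge, every reduction would remain inside the hole and $\mEin{e}\hw{\mEin{p}\hw{\mv}}$ would diverge as well, contradicting the previous step. Thus $\mEin{p}\hw{\mv}$ must itself run to a value or to $\merr$, which is exactly the statement that $\mEin{p}\hw{\my}$ is terminating.

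The one genuinely delicate point is the choice of the $\gtdynr$ half. The symmetric attempt using $\ltdynr$ inspects the \emph{left} term $\mEin{e}\hw{\mEin{p}\hw{\mv}}$, whose ``still running after $i+1$ unrolls'' disjunct is satisfiable even when that term diverges, so $\ltdynr$ alone yields no termination information whatsoever; only by inspecting the right-hand \emph{value} $\mv$ through $\gtdynr$ do we pin the left side down to a value-or-error outcome. The remaining work is the routine evaluation-context bookkeeping already exercised in \cref{lem:emb-pure}.
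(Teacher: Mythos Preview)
Your proposal is correct and follows essentially the same route as the paper: extract the $\gtdynr$ half of the projection property, instantiate at index $0$ with the value $\mv$ on the right, and use strictness of the evaluation context $\mEin{e}$ to conclude that $\mEin{p}\hw{\mv}$ terminates. One minor difference: in the error branch the paper invokes the previously established purity of $\mEin{e}$ to conclude specifically that $\mEin{p}\hw{\mv}\bigstepany\merr$, whereas you argue directly from evaluation-context strictness that divergence of the hole would force divergence of the whole; your version is slightly more self-contained and suffices since the lemma only requires termination, not the particular outcome.
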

\begin{proof}
  The ep pair property states that
  \( \my : \mB \vDash \mEin{e}\hw{\mEin{p}\hw{\my}} \ltdyn \my : \mB \)
  Given any $\mv : \mB$, by \lemref{lem:fund-lemma}, we know
  \( \mEin{e}\hw{\mEin{p}\hw{\mv}} \tgtdynr{0}{\mB} \mv \)
  so therefore either
  $\mEin{e}\hw{\mEin{p}\hw{\mv}}\bigstepany \merr$, which
  because $\mEin{e}$ is pure means ${\mEin{p}\hw{\mv}}
  \bigstepany \merr$, or $\mEin{e}\hw{\mEin{p}\hw{\mv}}
  \bigstepany \mvpr$ which by strictness of evaluation contexts means
  ${\mEin{p}\hw{\mv}} \bigstepany\mvpr[2]$ for some
  $\mvpr[2]$.
\end{proof}

Crucially, ep pairs can be constructed using simple function
composition.
First, the identity function is an ep pair by reflexivity.
\begin{lemma}[Identity EP Pair]
  For any type $\mA$, $\mhole,\mhole : \mA \eppair \mA$.
\end{lemma}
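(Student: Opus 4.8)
The plan is to observe that the empty evaluation context $\mhole$ is the syntactic identity on terms, so both composites appearing in the ep-pair conditions collapse to the identity, and the entire lemma reduces to reflexivity of the logical relation (\cref{lem:fund-lemma}).

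Concretely, I would first note that for any term $\mt$ we have $\mhole\hw{\mt} = \mt$ as a matter of syntax: plugging a term into the hole of the empty context returns the term verbatim, with no reduction involved. Taking $\mEin{e} = \mEin{p} = \mhole$ and $\mB = \mA$, the retraction composite $\mhole\hw{\mhole\hw{\mx}}$ is therefore literally $\mx$, and the projection composite $\mhole\hw{\mhole\hw{\my}}$ is literally $\my$. Hence neither condition requires any operational or anti-reduction reasoning.

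The retraction condition $\mx : \mA \vdash \mx \equidyn \mhole\hw{\mhole\hw{\mx}} : \mA$ then becomes $\mx : \mA \vdash \mx \equidyn \mx : \mA$. Since $\equidyn$ is logical approximation in both directions, and both directions here relate $\mx$ to itself, this is discharged by reflexivity (\cref{lem:fund-lemma}), which yields $\mx : \mA \vDash \mx \ltdyn \mx : \mA$, i.e.\ $\mx \dynr \mx$ in either orientation. Likewise, the projection condition $\my : \mA \vdash \mhole\hw{\mhole\hw{\my}} \dynr \my : \mA$ reduces to $\my : \mA \vdash \my \dynr \my : \mA$, again a single instance of the same reflexivity corollary.

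There is essentially no obstacle here: the only points worth stating explicitly are that the empty context is the identity on terms (so, unlike in the constructions of nontrivial ep pairs to follow, no appeal to \cref{lem:anti-red} or the operational semantics is needed) and that $\equidyn$ unfolds to approximation each way, both instances of which are covered by \cref{lem:fund-lemma}. The lemma is thus an immediate corollary of reflexivity, serving as the base case for the compositional construction of ep pairs developed in the rest of the section.
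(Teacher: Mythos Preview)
Your proposal is correct and matches the paper's approach exactly: the paper simply says the identity is an ep pair ``by reflexivity,'' and your argument unpacks this by noting that $\mhole\hw{\mt} = \mt$ syntactically so both the retraction and projection conditions reduce to instances of \cref{lem:fund-lemma}.
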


Second, if we compose the embeddings one way and projections the
opposite way, the result is an ep pair, by congruence.
\begin{lemma}[Composition of EP Pairs]
  For any ep pairs $\mE_{e,1},\mE_{p,1} : \mAone \eppair \mAtwo$ and
  $\mE_{e,2},\mE_{p,2} : \mAtwo \eppair \mAin{3}$,
  $\mE_{e,2}\hw{\mE_{e,1}},\mE_{p,1}\hw{\mE_{p,2}} : \mAone \eppair
  \mAin{3}$.
\end{lemma}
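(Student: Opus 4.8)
The plan is to discharge the two ep-pair obligations for the composite pair directly from the corresponding obligations on the two input pairs, gluing the pieces together with congruence (\cref{lem:cong}), linearity of evaluation contexts (\cref{lem:evctx-linear}), reflexivity (\cref{lem:fund-lemma}), and transitivity (\cref{lem:trans}). By definition of composition of evaluation contexts, the embedding is $(\mE_{e,2}\hw{\mE_{e,1}})\hw{\mx} = \mE_{e,2}\hw{\mE_{e,1}\hw{\mx}}$ and the projection is $(\mE_{p,1}\hw{\mE_{p,2}})\hw{\my} = \mE_{p,1}\hw{\mE_{p,2}\hw{\my}}$, so the obligations are retraction, $\mx : \mAone \vdash \mx \equidyn \mE_{p,1}\hw{\mE_{p,2}\hw{\mE_{e,2}\hw{\mE_{e,1}\hw{\mx}}}} : \mAone$, and projection, $\my : \mAin{3} \vdash \mE_{e,2}\hw{\mE_{e,1}\hw{\mE_{p,1}\hw{\mE_{p,2}\hw{\my}}}} \dynr \my : \mAin{3}$. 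In both cases the idea is to cancel the two inner layers using the condition of one input pair, and then the two outer layers using the other.

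For \textbf{retraction} I would first collapse the two middle layers. Since $\mE_{p,2}\hw{\mE_{e,2}\hw{\cdot}}$ is an evaluation context, \cref{lem:evctx-linear} gives
\[ \mE_{p,2}\hw{\mE_{e,2}\hw{\mE_{e,1}\hw{\mx}}} \equidyn \mtlet{\my}{\mE_{e,1}\hw{\mx}}{\mE_{p,2}\hw{\mE_{e,2}\hw{\my}}}. \]
The retraction property of $(\mE_{e,2},\mE_{p,2})$ states $\my : \mAtwo \vdash \mE_{p,2}\hw{\mE_{e,2}\hw{\my}} \equidyn \my$, so the congruence rule for \textrm{let} (\cref{lem:cong}), using reflexivity (\cref{lem:fund-lemma}) to relate the bound term $\mE_{e,1}\hw{\mx}$ to itself, rewrites the body and yields $\equidyn \mtlet{\my}{\mE_{e,1}\hw{\mx}}{\my}$, which is $\equidyn \mE_{e,1}\hw{\mx}$ by \cref{lem:evctx-linear} again. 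Plugging this chain into $\mE_{p,1}$ by congruence and then invoking the retraction property of $(\mE_{e,1},\mE_{p,1})$, namely $\mE_{p,1}\hw{\mE_{e,1}\hw{\mx}} \equidyn \mx$, closes the case after one use of transitivity.

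For \textbf{projection} the shape is identical, with $\dynr$ in place of $\equidyn$. Collapsing the inner layers with \cref{lem:evctx-linear} gives $\mE_{e,1}\hw{\mE_{p,1}\hw{\mE_{p,2}\hw{\my}}} \equidyn \mtlet{\mx}{\mE_{p,2}\hw{\my}}{\mE_{e,1}\hw{\mE_{p,1}\hw{\mx}}}$. The projection property of $(\mE_{e,1},\mE_{p,1})$, $\mx : \mAtwo \vdash \mE_{e,1}\hw{\mE_{p,1}\hw{\mx}} \dynr \mx$, then lets me rewrite the body under the binder by congruence (again relating $\mE_{p,2}\hw{\my}$ to itself by reflexivity), giving $\dynr \mtlet{\mx}{\mE_{p,2}\hw{\my}}{\mx} \equidyn \mE_{p,2}\hw{\my}$. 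Applying $\mE_{e,2}$ by congruence and then the projection property of $(\mE_{e,2},\mE_{p,2})$, $\mE_{e,2}\hw{\mE_{p,2}\hw{\my}} \dynr \my$, concludes by transitivity (\cref{lem:trans}).

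The one step that needs care—and the reason the preceding infrastructure is set up as it is—is the re-instantiation of the single free variable of each input condition by a compound, effectful term ($\mE_{e,1}\hw{\mx}$ for retraction, $\mE_{p,2}\hw{\my}$ for projection). Substituting such a term directly into the open equivalence is not available, since substitution relates \emph{values}. The clean resolution is exactly \cref{lem:evctx-linear}: because the surrounding contexts are genuine evaluation contexts, I can float the subterm out into a let-binding and rewrite underneath the binder by congruence, never needing the floated term to be a value and never manipulating step indices. (For retraction one could alternatively substitute directly using purity of embeddings, \cref{lem:emb-pure}, together with \cref{lem:pure-subst}; but the evaluation-context argument treats retraction and projection uniformly, and projection's floated term is only terminating, not pure.) Everything is phrased through the high-level relation, so transitivity ties the chains together with no manual step-index bookkeeping.
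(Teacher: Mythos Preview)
Your proof is correct and follows the same outline the paper uses (the paper states the lemma ``by congruence'' and spells out the identical two-step argument in the Composition case of \cref{lem:dyn-der-ep}: collapse the inner pair via its ep property under congruence, then apply the outer pair's ep property, linked by transitivity). Your explicit detour through \cref{lem:evctx-linear} to justify instantiating the single free variable of each input ep condition with a non-value term is more careful than the paper, which simply writes ``inductive hyp, cong~\ref{lem:cong}'' at that step; but the underlying idea is the same.
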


\subsection{Type Dynamism}
\newcommand{\bifdynrrule}[2]{\inferrule*[right={#1}]{\sAone
    \dynr \sAtwo \and \sBone \dynr \sBtwo}{#2{\sAone}{\sBone} \dynr
    #2{\sAtwo}{\sBtwo}}}
\begin{figure}
\flushleft{\fbox{\small{$\sA \ltdyn \sB$}}}

\vspace{-2ex}
\begin{mathpar}
    \inferrule*[right=Reflexivity]
    {~}
    {{\sty} \ltdyn {\sty}}\and
    \inferrule*[right=Transitivity]
    {{\sAone}\dynr{\sAtwo}\and
     {\sAtwo}\dynr{\sAthree}}
    {{\sAone} \dynr{\sAthree}}\and
    \inferrule*[right=Dyn-Top]
    {~}
    {\sA \dynr {\sdynty}}\and
    \bifdynrrule{Sum}{\ssumty}\and
    \bifdynrrule{Prod}{\spairty}\and
    \bifdynrrule{Fun}{\sfunty}\and
    \end{mathpar}  
  \caption{Type Dynamism}
  \label{fig:type-dynamism}
\end{figure}

Next, we consider type dynamism and its relationship to the casts.
The type dynamism relation is presented in
\figref{fig:type-dynamism}.
The relation $\sA \ltdyn \sB$ reads as ``$\sA$ is less dynamic than
$\sB$'' or ``$\sA$ is more precise than $\sB$''.
For the purposes of its definition, we can say that it is the least
reflexive and transitive relation such that every type constructor is
monotone and $\sdynty$ is the greatest element.
Even the function type is monotone in both input and output argument
and for this reason, type dynamism is sometimes called \emph{na\"ive
  subtyping} \cite{wadler-findler09}.
However, this gives us no \emph{semantic} intuition about what it
could possibly mean.
We propose that $\sA \ltdyn \sB$ should hold when the casts between
$\sA$ and $\sB$ form an embedding-projection pair
$\mE_{\obcast\sA\sB},\mE_{\obcast\sB\sA} : \sA \eppair \sB$.
We can then view each of the cases of the gradual guarantee as being
\emph{compositional rules} for constructing ep pairs.
Reflexivity and transitivity correspond to the identity and
composition of ep pairs, and the monotonicity of types comes from the
fact that \emph{every} functor preserves ep pairs.

Taking this idea further, we can view type dynamism not just as an
\emph{analysis} of pre-existing gradual type casts, but by considering
its simple \emph{proof theory}, we can view proofs of type dynamism
as \emph{synthesizing} the definitions of casts.
To accomplish this, we give a \emph{refined} formulation of the proof
theory of type dynamism in \figref{fig:type-dynamism-proofs}, which
includes explicit proof terms $c : \sA \ltdyn \sB$.
The methodology behind the presentation is to make reflexivity,
transitivity, and the fact that $\dyn$ is a greatest element into
\emph{admissible} properties of the system, rather than primitive
rules.
First, by making proofs admissible, we see in detail how bigger casts
are built up from small pieces.

Second, this formulation satisfies a \emph{canonicity} property: there
is exactly one proof of any given derivation, which simplifies the
definition of the semantics.
By giving a presentation where derivations are canonical, the typical
``coherence'' theorem, that says any two derivations have equivalent
semantics, becomes trivial.
An alternative formulation would define an ep-pair semantics where
reflexivity and transitivity denote identity and composition of ep
pairs, and then prove that any two derivations have equivalent
semantics.
Instead, we define admissible constructions for reflexivity and
transitivity, and then prove a \emph{decomposition lemma}
(\cref{lem:decomposition}) that states that the ep-pair semantics
interprets our admissible reflexivity derivation as identity and
transitivity derivation as a composition.
In short, our presentation makes it obvious that the semantics is
coherent, but not that it is built out of composition, whereas the
alternative makes it obvious that the semantics is built out of
composition, but not that it is coherent.

\begin{figure}
\flushleft{\fbox{\small{$c : \sA \ltdyn \sB$}}}

\vspace{-4ex}
  \begin{mathpar}
    \inferrule
    {\sty \in \{\sunitty,\sdynty \}}
    {id(\sA) : {\sty} \dynr {\sty}}\and
    \inferrule
    {\sA \neq \sdynty \and c : \sty \dynr \floor\sA\\
      \inferrule{~}{tag({\floor\sA}) : \floor\sA \ltdyn \sdynty}}
    {tag({\floor\sA}) \circ c : \sty \dynr {\sdynty}}\\
    \inferrule
    {c : \sAone \ltdyn \sAtwo\and
      d : \sBone \ltdyn \sBtwo}
    {c \times d : \sAone \stimes \sBone \ltdyn \sAtwo \stimes \sBtwo}\and
    \inferrule
    {c : \sAone \ltdyn \sAtwo\and
      d : \sBone \ltdyn \sBtwo}
    {c \plus d : \sAone \splus \sBone \ltdyn \sAtwo \splus \sBtwo}\and
    \inferrule
    {c : \sAone \ltdyn \sAtwo\and
      d : \sBone \ltdyn \sBtwo}
    {c \to d : \sAone \sto \sBone \ltdyn \sAtwo \sto \sBtwo}
  \end{mathpar}
  \caption{Canonical Proof Terms for Type Dynamism}
  \label{fig:type-dynamism-proofs}
\end{figure}

We present the proof terms for type dynamism in
\figref{fig:type-dynamism-proofs}.
As in presentations of sequent calculus, we include the identity ep
pair (reflexivity) only for the base types $\sunitty,\sdynty$.
The next rule $tag(\floor{\sA}) \circ c : \sA \ltdyn \sdynty$ states
that any casts between a non-dynamic type $\sA$ and the dynamic type
$\sdynty$ are the composition $\circ$ of, first, a tagging-untagging
ep-pair with its underlying tag type $tag(\floor{\sA}) : \floor{\sA}
\ltdyn \sdynty$ and an ep pair from $\sA$ to its tag type $c : \sA
\ltdyn \floor{\sA}$.
The product, sum, and function rules are written to evoke that their ep
pairs use the functorial action.

As mentioned the proof terms are \emph{canonical}, meaning there is at
most one derivation of any $\sA \ltdyn \sB$.
\begin{lemma}[Canonical Type Dynamism Derivations]
  Any two derivations $c, d : \sA \ltdyn \sB$ are equal $c = d$.
\end{lemma}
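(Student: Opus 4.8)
The plan is to prove canonicity by structural induction on the derivation $c$, exploiting the fact that the rules of \figref{fig:type-dynamism-proofs} are \emph{syntax-directed}: the conclusion $\sA \ltdyn \sB$ by itself determines which rule was applied last, and that rule's premises are in turn determined by $\sA$ and $\sB$. Once this is established the argument is routine: given $c, d : \sA \ltdyn \sB$, both must end in the same rule, so $d$ has subderivations of exactly the same judgments as $c$; the induction hypothesis equates corresponding subderivations, and since each rule attaches a fixed proof-term constructor (the identity $id(\sA)$, the composite $tag(\floor\sA)\circ(-)$, or one of the binary connectives applied to the subderivations), we conclude $c = d$.

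The heart of the argument is the case analysis showing the last rule is forced by the conclusion, which I would organize on the shape of $\sB$. If $\sB$ is a product $\sBone\stimes\sBtwo$ (resp.\ a sum, a function type) then only the $\stimes$ rule (resp.\ $\splus$, $\sto$ rule) has a matching conclusion, and that rule additionally forces $\sA$ to be a product (resp.\ sum, function type) whose components are the premises. If $\sB = \sunitty$, the only rule that can conclude $\sunitty$ on the right is the identity rule, which forces $\sA = \sunitty$. If $\sB = \sdynty$, there are exactly two candidate rules, $id$ and $tag(\floor\sA)\circ(-)$, but they are separated by the side condition: the identity rule applies precisely when $\sA = \sdynty$ and the tag rule precisely when $\sA \neq \sdynty$, so again the conclusion pins down the rule uniquely.

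The subtlety worth flagging is the $tag$ rule, whose premise $c' : \sA \ltdyn \floor\sA$ leaves the \emph{left} type $\sA$ unchanged, and whose right type $\floor\sA$ can even be larger than $\sB = \sdynty$ (e.g.\ $\floor{\sApr\stimes\sApr} = \sdynty\stimes\sdynty$). Consequently a naive induction on the size of the types need not decrease, which is exactly why I would induct on the \emph{derivation} $c$: the premise derivation $c'$ is a strict subderivation of $c$ regardless of the type sizes, so the induction hypothesis applies to it and equates it with the corresponding subderivation $d'$ of $d$. I expect this to be the only genuinely delicate point; everything else is bookkeeping.

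Finally, I would observe that the apparently missing cases are vacuous rather than problematic. No rule concludes a product, sum, or function type from $\sA = \sdynty$ on the left, so judgments such as $\sdynty \ltdyn \sBone\stimes\sBtwo$ are simply underivable, and for any such conclusion the statement ``any two derivations are equal'' holds vacuously. Thus the $\sB$-directed case analysis above never needs to treat them, and the induction closes cleanly.
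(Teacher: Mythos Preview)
Your proposal is correct and takes essentially the same approach as the paper: the paper's proof is simply ``By induction on $c$, noting in each case that exactly one case is possible,'' and your write-up is a careful elaboration of exactly that argument, including the syntax-directedness observation and the point that induction must be on the derivation rather than on type size to handle the $tag$ case.
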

\iflong
\begin{proof}
  By induction on $c$, noting in each case that exactly one case is
  possible.
\end{proof}
\fi

\begin{figure}
  \begin{minipage}{0.45\textwidth}
    \begin{mathpar}
      \begin{array}{rcl}
      id(\sdynty) &\defeq& id(\sdynty)\\
      id(\sunitty) &\defeq& id(\sunitty)\\
      id({\sAone \stimes \sAtwo}) &\defeq & id(\sAone) \times id(\sAtwo)\\
      id({\sAone \splus \sAtwo}) &\defeq & id(\sAone) \plus id(\sAtwo)\\ 
      id({\sAone \sto \sAtwo}) &\defeq & id(\sAone) \to id(\sAtwo)
      \end{array}
    \end{mathpar}
  \end{minipage}
  \begin{minipage}{0.45\textwidth}
    \begin{mathpar}
      \begin{array}{rcl}
      (tag(\floor\sA) \circ c) \circ d &\defeq& tag(\floor\sA) \circ (c \circ d)\\
      (id(\sA)) \circ d &\defeq & d\\
      (c \times d) \circ (c' \times d') & \defeq & (c \circ c') \times (d \circ d')\\
      (c \plus d) \circ (c' \plus d') & \defeq & (c \circ c') \plus (d \circ d')\\
      (c \to d) \circ (c' \to d') & \defeq & (c \circ c') \to (d \circ d')
      \end{array}
    \end{mathpar}
  \end{minipage}
  \begin{mathpar}
    \begin{array}{rcl}
      top(\sdynty) & \defeq & id(\sdynty)\\
      top(\sunitty) & \defeq & tag(\sunitty) \circ id(\sunitty)\\
      top(\sA \stimes \sB) & \defeq& tag(\sdynty\stimes\sdynty) \circ (top(\sA) \times top(\sB))\\
      top(\sA \splus \sB) & \defeq& tag(\sdynty\splus\sdynty) \circ (top(\sA) \plus top(\sB))\\
      top(\sA \sto \sB) & \defeq& tag(\sdynty\sto\sdynty) \circ (top(\sA) \to top(\sB))\\
    \end{array}
  \end{mathpar}
  \caption{Type Dynamism Admissible Proof Terms}
  \label{fig:type-dyn-admissible}
\end{figure}

Next, we need to show that the rules in \figref{fig:type-dynamism} are all
\emph{admissible} in the refined system \figref{fig:type-dynamism-proofs}.
The proof of admissibility is given in
\figref{fig:type-dyn-admissible}.
First, to show reflexivity is
admissible, we construct the proof $id(\sA) : \sA \ltdyn \sA$.
It is primitive for $\sdynty$ and $\sunitty$ and we use the congruence
rule to lift the others.
Second, to show transitivity is admissible, for every $d : \sAone
\ltdyn \sAtwo$ and $c : \sAtwo \ltdyn \sAin{3}$, we construct their
\emph{composite} $c \circ d : \sAone \ltdyn \sAin{3}$ by recursion on
$c$.
If $c$ is a primitive composite with a tag, we use associativity of
composition to push the composite in.
If $c$ is the identity, the composite is just $d$.
Otherwise, both $c$ and $d$ must be between a connective, and we push
the compositions in.
Finally, we show that $\sdynty$ is the most dynamic type by
constructing a derivation $top(\sA) : \sA \ltdyn \sdynty$ for every
$\sA$.
For $\sdynty$, it is just the identity; for the remaining types, we use the tag ep
pair and compose with lifted uses of $top$.

\begin{figure}
  \begin{minipage}{0.45\textwidth}
    \begin{mathpar}
      \begin{array}{rcl}
      m \in \{e,p\}\\
      \overline e & \defeq & p\\
      \overline p & \defeq & e\\
      \mE_{m,id(\sA)} & \defeq & \mhole\\
      \mE_{e,tag(\stagty)} & \defeq & \mtroll{\sem\sdynty}{\mtsum{\stagty}{\mhole}}\\
      \mE_{p,tag(\stagty)} & \defeq & \mtelsecase{\mtunroll\mhole}{{\stagty}}{\mx}{\mx}{\merr}\\
      \end{array}
    \end{mathpar}
  \end{minipage}
  \begin{minipage}{0.45\textwidth}
    \begin{mathpar}
      \begin{array}{rcl}
      \mE_{e,tag(\stagty) \circ d} & \defeq & \mE_{e,tag(\stagty)}\hw{\mE_{e,d}}\\
      \mE_{p,tag(\stagty) \circ d} & \defeq & \mE_{p,d}\hw{\mE_{p,tag(\stagty)}}\\
      \mE_{m,c \times c'} & \defeq & \mE_{m,c} \mtimes \mE_{m,c'}\\
      \mE_{m,c \plus c'} & \defeq & \mE_{m,c} \mplus \mE_{m,c'}\\
      \mE_{m,c \to c'} & \defeq & \mE_{\overline m,c} \mto \mE_{m,c'}
      \end{array}
    \end{mathpar}
  \end{minipage}
  \caption{Type Dynamism Cast Translation}
  \label{fig:type-dyn-cast-translation}
\end{figure}

Next, we construct a \emph{semantics} for the type dynamism proofs
that justifies the intuition we have given so far; it is presented in
\figref{fig:type-dyn-cast-translation}.
Every type dynamism proof $c : \sA \ltdyn \sB$ defines a \emph{pair}
of an embedding $\mectxt_{e,c}$ and a projection $\mectxt_{p,c}$.
Since many rules are the same for embeddings and projections, we use
$m \in \{e,p\}$ to abstract over the \emph{mode} of the cases.
We define the \emph{complement} of a mode $\overline m$ to swap between
embeddings and projections; it is used in the function case. 
The primitive identity casts are interpreted as the identity, and the
primitive composition of casts $tag(\stagty) \circ d$ is interpreted
as the composition of ep pairs of $tag(\stagty)$ and $d$.
The tag type derivation is interpreted by the same definition as the
cast in \figref{fig:direct-cast-translation}: tagging puts the correct
sum case and $\mtroll{\sdynty}{}$, and untagging unwraps if its the
correct sum case and otherwise errors.
We abbreviate this as pattern matching with an ``else'' clause, where
the else clause stands for all of the clauses that do not match the
tag type $\stagty$.
The desugaring to repeated case statements on sums should be clear.
The product and sum type are just given by their functorial action
with the same mode.
The function type similarly uses its functorial action, but swaps from
embedding to projection or vice-versa on the domain side.
This shows that there is nothing strange about the function rule: it is
the same construction as for subtyping, but constructing arrows back
and forth at the same time.
The fact that contravariant functors are covariant with respect to ep
pairs in this way is \emph{precisely} the reason they are used
extensively in domain theory.

We next verify that these actually are embedding-projection pairs.
To do this, we use the identity and composition lemmas proved before,
but we also need to use \emph{functoriality} of the actions of type
constructors, meaning that the action of the type interacts well with
identity and composition of evaluation contexts.

\newcommand{\qandq}{\quad\text{and}\quad}
\begin{lemma}[Identity Extension]
\label{lem:id-ext}
  \[
  \mhole \mtimes \mhole \equidyn \mhole \qandq
  \mhole \mplus \mhole \equidyn \mhole\qandq
  \mhole \mto \mhole \equidyn \mhole
  \]
\end{lemma}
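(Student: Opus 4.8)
The plan is to read each equation as an equivalence of evaluation contexts and to discharge it by filling the hole with a fresh variable: to prove $\mectxt \equidyn \mhole$ at type $\mA$ it suffices to show $\mx : \mA \vDash \mectxt\hw{\mx} \equidyn \mx : \mA$. In every case, unfolding the functorial action of \figref{fig:functor} with both of its arguments taken to be the identity context $\mhole$ produces exactly the $\eta$-expansion of the identity at the relevant type, so the result is an immediate instance of the $\eta$-Expansion lemma.

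For the product connective, unfolding gives
\[ (\mhole \mtimes \mhole)\hw{\mx} = \mtmatchpair{\my}{\mypr}{\mx}{\mtpair{\my}{\mypr}}, \]
which is precisely the right-hand side of the product clause of the $\eta$-Expansion lemma instantiated at $\mt = \mx$; hence it is equivalent to $\mx$. The sum case proceeds identically: unfolding yields $\mtcase{\mx}{\my}{\mtinj{\my}}{\mypr}{\mtinjpr{\mypr}}$, the sum $\eta$-expansion of $\mx$, which the $\eta$-Expansion lemma equates with $\mx$.

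The function case requires one extra step, since its functorial action first binds the incoming function with a let:
\[ (\mhole \mto \mhole)\hw{\mx} = \mtlet{\my}{\mx}{\mtufun{\mypr}{\my\,\mypr}}. \]
Because $\mx$ is a variable, under any closing substitution the bound expression is a value, so I would first discharge this let by an open $\beta$-reduction (\cref{lem:open-beta}), obtaining $\mtufun{\mypr}{\mx\,\mypr}$. This is the function $\eta$-expansion of the value $\mx$ (up to the elided domain annotation on the bound variable), so the function clause of the $\eta$-Expansion lemma equates it with $\mx$; chaining the two equivalences through transitivity (\cref{lem:trans}) closes the case.

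The only point that is not a verbatim instance of an $\eta$-law is this function case, where a genuine let-reduction must be interposed before the $\eta$-law can fire; this is the main, and very mild, obstacle. Everything else is bookkeeping, namely unfolding the definitions in \figref{fig:functor} and matching them syntactically against the three clauses of the $\eta$-Expansion lemma.
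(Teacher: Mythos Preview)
Your proposal is correct and matches the paper's approach: the paper's entire proof is the single sentence ``All are instances of $\eta$ expansion,'' and you have simply spelled this out. Your observation that the function case requires an additional open $\beta$ step to discharge the outer $\mathbf{let}$ before the $\eta$-law applies is a detail the paper glosses over, so your version is in fact more careful than the original.
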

\iflong
\begin{proof}
  All are instances of $\eta$ expansion.
\end{proof}
\fi

In a call-by-value language, the functoriality rules do not hold in
general for the product functor, but they do for terminating programs
because their order of evaluation is irrelevant.
Also notice that when composing using the functorial action of the
function type $\mto$, the composition flips on the domain side,
because the function type is \emph{contravariant} in its domain.
\begin{lemma}[Functoriality for Terminating Programs]
\label{lem:functor}
  The following equivalences are true for any well-typed,
  \emph{terminating} evaluation contexts.
  \begin{mathpar}
    \begin{array}{rcl}
      (\mectxttwo \mtimes \mectxttwopr)\hw{\mectxtone \mtimes \mectxtonepr} & \equidyn &
      (\mectxttwo\hw\mectxtone) \mtimes (\mectxttwopr\hw\mectxtonepr)\\
      (\mectxttwo \mplus \mectxttwopr)\hw{\mectxtone \mplus \mectxtonepr} & \equidyn &
      (\mectxttwo\hw\mectxtone) \mplus (\mectxttwopr\hw\mectxtonepr)\\
      (\mectxttwo \mto \mectxttwopr)\hw{\mectxtone \mto \mectxtonepr} & \equidyn &
      (\mectxtone\hw\mectxttwo) \mto (\mectxttwopr\hw\mectxtonepr)
    \end{array}
  \end{mathpar}
\end{lemma}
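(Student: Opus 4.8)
The plan is to prove each of the three equivalences by filling the hole with a fresh variable $\mvarin{z}$ of the hole's type and showing the two resulting open terms are equivalent; this suffices for the stated equivalence of evaluation contexts because, by linearity of evaluation contexts (\cref{lem:evctx-linear}) and congruence of the logical relation (\cref{lem:cong}, applied in both directions so that $\equidyn$ is a congruence), we have $\mectxt_a\hw{\mt} \equidyn \mtlet{\mvarin{z}}{\mt}{\mectxt_a\hw{\mvarin{z}}} \equidyn \mtlet{\mvarin{z}}{\mt}{\mectxt_b\hw{\mvarin{z}}} \equidyn \mectxt_b\hw{\mt}$ for any filling $\mt$. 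With the hole a variable, I would unfold the functorial actions of \cref{fig:functor} and then rewrite using only the algebra already established in \cref{sec:lemmas}: commuting conversions (\cref{lem:comm-conv}), open $\beta$ (\cref{lem:open-beta}), linearity (\cref{lem:evctx-linear}), and---crucially for products---that terminating terms commute (\cref{lem:term-comm}), all chained by transitivity (\cref{lem:trans}).

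For the product case, after substituting $\mvarin{z}$ and unfolding, the left-hand side is the outer \emph{matchpair} applied to the inner one. I would first float the inner destructuring to the top by a commuting conversion, obtaining a single $\mtmatchpair{\mx}{\mxpr}{\mvarin{z}}{\cdots}$ whose body is $\mtmatchpair{\my}{\mypr}{\mtpair{\mectxtone\hw{\mx}}{\mectxtonepr\hw{\mxpr}}}{\mtpair{\mectxttwo\hw{\my}}{\mectxttwopr\hw{\mypr}}}$. Since the inner pair's components are terminating terms rather than values, I would hoist each into a let-binding using linearity, after which the matchpair $\beta$-reduces; normalizing both pairs the same way puts the left-hand side into the fully sequentialized form that binds, in order, $\mectxtone\hw{\mx}$, then $\mectxtonepr\hw{\mxpr}$, then $\mectxttwo\hw{\cdot}$, then $\mectxttwopr\hw{\cdot}$. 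Performing the identical normalization on the right-hand side $(\mectxttwo\hw\mectxtone) \mtimes (\mectxttwopr\hw\mectxtonepr)$ yields the same form except that the two middle bindings occur in the order $\mectxttwo$ then $\mectxtonepr$. The two normal forms thus differ only in the order of two adjacent, independent bindings, and the single step
\[ \mtlet{\mvarin{b}}{\mectxtonepr\hw{\mxpr}}{\mtlet{\mvarin{c}}{\mectxttwo\hw{\mvarin{a}}}{\ms}} \equidyn \mtlet{\mvarin{c}}{\mectxttwo\hw{\mvarin{a}}}{\mtlet{\mvarin{b}}{\mectxtonepr\hw{\mxpr}}{\ms}} \]
closes the gap by \cref{lem:term-comm}, using that $\mectxtonepr$ and $\mectxttwo$ are terminating. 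This commute is the \emph{only} place the terminating hypothesis is used, and it is precisely the failure point in the general call-by-value case.

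For the sum case the same recipe applies but no reordering is needed: floating the inner $\mtcase$ outward by a commuting conversion leaves, in each branch, an application of the outer functor to $\mectxtone\hw{\mx}$ (resp.\ $\mectxtonepr\hw{\mxpr}$); hoisting that terminating term into a let and $\beta$-reducing the now-scrutinized injection collapses the left branch to $\mtinj{\mectxttwo\hw{\mectxtone\hw{\mx}}}$ (and symmetrically on the right), which is exactly the corresponding branch of $(\mectxttwo\hw\mectxtone) \mplus (\mectxttwopr\hw\mectxtonepr)$. The function case is likewise straight-line: substituting $\mvarin{z}$ and reducing the binding lets on each side leaves a single $\mtufun{\mvarin{b}}{\cdots}$, and repeatedly hoisting and $\beta$-reducing the application $\mvarin{f}\,(\mectxttwo\hw{\mvarin{b}})$ via linearity rewrites both bodies to $\mectxttwopr\hw{\mectxtonepr\hw{\mvarin{z}\,(\mectxtone\hw{\mectxttwo\hw{\mvarin{b}}})}}$, with the contravariant flip $\mectxtone\hw\mectxttwo$ on the domain falling out automatically. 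Neither the sum nor the function case needs \cref{lem:term-comm}; they hold for arbitrary contexts, and I would state them with the terminating hypothesis only for uniformity with the product case.

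The routine obstacle will be the bookkeeping: keeping every intermediate term well-typed while hoisting subterms into lets (each hoist justified by exhibiting the surrounding evaluation context so that \cref{lem:evctx-linear} applies) and tracking which variables are in scope, so that the two bindings swapped in the product case are genuinely independent when \cref{lem:term-comm} is invoked---that is, neither $\mectxtonepr\hw{\mxpr}$ nor $\mectxttwo\hw{\mvarin{a}}$ mentions the other's bound variable. The single conceptual crux is the product reordering; everything else is a mechanical application of the commuting-conversion/open-$\beta$/linearity calculus.
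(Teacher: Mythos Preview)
Your proposal is correct and follows essentially the same approach as the paper: fill the hole with a variable, unfold the functorial actions, and rewrite using commuting conversions, open $\beta$, and linearity, with the product case alone requiring \cref{lem:term-comm} to swap the two middle bindings. The paper's proof is organized slightly differently (it works with the hole in place and an explicit variable like $\mx_p$ rather than framing the reduction to a variable as a separate preliminary step), but the chain of rewrites in each of the three cases is the same as yours.
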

\iflong
\begin{proof}
  \begin{enumerate}
  \item ($\mto$) We need to show (after a commuting conversion)
    \[
    \mtletvert{\mx_f}{\mhole}{
    \mtletvert{\my_f}{\mtufun{\mx_a}{\mectxtonepr\hw{\mx_f\,(\mectxtone\hw{\mx_a})}}}{
    {\mtufun{\my_a}{\mectxttwopr\hw{\my_f\,(\mectxttwo\hw{\my_a})}}}}}
    \equidyn
    \mtletvert{\mx_f}{\mhole}{
      \mtufun{\my_a}{\mectxttwopr\hw{\mectxtonepr\hw{\mx_f\,(\mectxtone\hw{\mectxttwo\hw{\my_a}})}}}}
    \]
    First, we substitute for $\my_f$ and then lift the argument
    $\mectxttwo\hw{\my_a}$ out and $\beta$ reduce:
    \begin{align*}
      \mtletvert{\mx_f}{\mhole}{
      \mtletvert{\my_f}{\mtufun{\mx_a}{\mectxtonepr\hw{\mx_f\,(\mectxtone\hw{\mx_a})}}}{
        {\mtufun{\my_a}{\mectxttwopr\hw{\my_f\,(\mectxttwo\hw{\my_a})}}}}}
      &\equidyn
      \mtletvert{\mx_f}{\mhole}{\mtufun{\my_a}{\mectxttwopr\hw{({\mtufun{\mx_a}{\mectxtonepr\hw{\mx_f\,(\mectxtone\hw{\mx_a})}}})\,(\mectxttwo\hw{\my_a})}}}\\
      &\equidyn
      \mtletvert{\mx_f}{\mhole}{
      {\mtufun{\my_a}{\mtletvert{\mx_a}{\mectxttwo\hw{\my_a}}{\mectxttwopr\hw{({\mtufun{\mx_a}{\mectxtonepr\hw{\mx_f\,(\mectxtone\hw{\mx_a})}}})\,\mx_a}}}}
      }\\
      &\equidyn
      \mtletvert{\mx_f}{\mhole}{
      {\mtufun{\my_a}{\mtletvert{\mx_a}{\mectxttwo\hw{\my_a}}{\mectxttwopr\hw{{{\mectxtonepr\hw{\mx_f\,(\mectxtone\hw{\mx_a})}}}}}}}
      }\\
      &\equidyn
      \mtletvert{\mx_f}{\mhole}{
      {\mtufun{\my_a}{{\mectxttwopr\hw{{{\mectxtonepr\hw{\mx_f\,(\mectxtone\hw{\mectxttwo\hw{\my_a}})}}}}}}}}
    \end{align*}
  \item ($\mplus$) We need to show
    \[
    \mtcasevert{\mtcasevert{\mx}{\my}{\mtinj{\mectxtone\hw{\my}}}{\mypr}{\mtinjpr{\mectxtonepr\hw{\mypr}}}}{\my}{\mtinj{\mectxttwo\hw{{\my}}}}{\mypr}{\mtinjpr{\mectxttwo\hw{{\mypr}}}}
    \equidyn
    \mtcasevert{\mx}{\my}{\mtinj{\mectxttwo\hw{\mectxtone\hw{\my}}}}{\mypr}{\mtinjpr{\mectxttwo\hw{\mectxtonepr\hw{\mypr}}}}
    \]
    First, we do a case-of-case commuting conversion, then lift the
    discriminees out, $\beta$ reduce and restore them.
    \begin{align*}
      \mtcasevert{\mtcasevert{\mx}{\my}{\mtinj{\mectxtone\hw{\my}}}{\mypr}{\mtinjpr{\mectxtonepr\hw{\mypr}}}}{\my}{\mtinj{\mectxttwo\hw{{\my}}}}{\mypr}{\mtinjpr{\mectxttwo\hw{{\mypr}}}}
      &\equidyn
      {\mtcasevert{\mx}{\my}{\mtcasevert{\mtinj{\mectxtone\hw{\my}}}{\my}{\mtinj{\mectxttwo\hw{{\my}}}}{\mypr}{\mtinjpr{\mectxttwo\hw{{\mypr}}}}}{\mypr}{\mtcasevert{\mtinjpr{\mectxtonepr\hw{\mypr}}}{\my}{\mtinj{\mectxttwo\hw{{\my}}}}{\mypr}{\mtinjpr{\mectxttwo\hw{{\mypr}}}}}}\\
      &\equidyn
      {\mtcasevert{\mx}{\my}{\mtletvert{\my}{{{\mectxtone\hw{\my}}}}{\mtcasevert{\mtinj\my}{\my}{\mtinj{\mectxttwo\hw{{\my}}}}{\mypr}{\mtinjpr{\mectxttwo\hw{{\mypr}}}}}}
        {\mypr}{\mtletvert{\mypr}{{\mectxtonepr\hw{\mypr}}}{\mtcasevert{\mtinjpr\mypr}{\my}{\mtinj{\mectxttwo\hw{{\my}}}}{\mypr}{\mtinjpr{\mectxttwo\hw{{\mypr}}}}}}}\\
      &\equidyn
      {\mtcasevert{\mx}{\my}{\mtletvert{\my}{{{\mectxtone\hw{\my}}}}{{\mtinj{\mectxttwo\hw{{\my}}}}}}
        {\mypr}{\mtletvert{\mypr}{{\mectxtonepr\hw{\mypr}}}{{\mtinjpr{\mectxttwo\hw{{\mypr}}}}}}}\\
      &\equidyn
      {\mtcasevert{\mx}{\my}{{{\mtinj{\mectxttwo\hw{{\mectxtone\hw{\my}}}}}}}
        {\mypr}{{{\mtinjpr{\mectxttwo\hw{{\mectxtonepr\hw{\mypr}}}}}}}}\\
    \end{align*}
  \item ($\mtimes$) We need to show
    \[
    \mtmatchpairvert{\mx}{\mxpr}{\mx_p}{
    \mtmatchpairvert{\my}{\mypr}{\mtpair{\mectxtone\hw{\mx}}{\mectxtonepr\hw{\mxpr}}}
    {\mtpair{\mectxttwo\hw{\my}}{\mectxttwopr\hw{\mypr}}}}
    \equidyn
    \mtmatchpairvert{\mx}{\mxpr}{\mx_p}{
      \mtpair{\mectxttwo\hw{\mectxtone\hw{\mx}}}{\mectxttwopr{\mectxtonepr\hw{\mxpr}}}}
    \]
    First, we make the evaluation order explicit, then re-order using
    the fact that terminating programs commute \cref{lem:term-comm}.

    \begin{align*}
      \mtmatchpairvert{\mx}{\mxpr}{\mx_p}{
      \mtmatchpairvert{\my}{\mypr}{\mtpair{\mectxtone\hw{\mx}}{\mectxtonepr\hw{\mxpr}}}
      {\mtpair{\mectxttwo\hw{\my}}{\mectxttwopr\hw{\mypr}}}}
      &\equidyn
      \mtmatchpairvert{\mx}{\mxpr}{\mx_p}{
      \mtletvert{\mxone}{{\mectxtone\hw{\mx}}}{
      \mtletvert{\mxonepr}{\mectxtonepr\hw{\mxpr}}{
      \mtmatchpairvert{\my}{\mypr}{\mtpair{\mxone}{\mxonepr}}{
      \mtletvert{\mxtwo}{{\mectxttwo\hw{\my}}}{
      \mtletvert{\mxtwopr}{{\mectxttwopr\hw{\mypr}}}{
        \mtpair{\mxtwo}{\mxtwopr}
      }}}}}}\\
      &\equidyn
      \mtmatchpairvert{\mx}{\mxpr}{\mx_p}{
      \mtletvert{\mxone}{{\mectxtone\hw{\mx}}}{
      \mtletvert{\mxonepr}{\mectxtonepr\hw{\mxpr}}{
      \mtletvert{\mxtwo}{{\mectxttwo\hw{\mxone}}}{
      \mtletvert{\mxtwopr}{{\mectxttwopr\hw{\mxonepr}}}{
        \mtpair{\mxtwo}{\mxtwopr}
      }}}}}\\
      &\equidyn
      \mtmatchpairvert{\mx}{\mxpr}{\mx_p}{
      \mtletvert{\mxone}{{\mectxtone\hw{\mx}}}{
      \mtletvert{\mxtwo}{{\mectxttwo\hw{\mxone}}}{
      \mtletvert{\mxonepr}{\mectxtonepr\hw{\mxpr}}{
      \mtletvert{\mxtwopr}{{\mectxttwopr\hw{\mxonepr}}}{
        \mtpair{\mxtwo}{\mxtwopr}
      }}}}}\\
      &\equidyn
      \mtmatchpairvert{\mx}{\mxpr}{\mx_p}{
      \mtpair{\mectxttwo\hw{\mectxtone\hw{\mx}}}{{{\mectxttwopr\hw{\mectxtonepr\hw{\mxpr}}}}}}
    \end{align*}
  \end{enumerate}
\end{proof}
\fi

With these cases covered, we can show the casts given by type dynamism
really are ep pairs.
\begin{lemma}[Type Dynamism Derivation denotes EP Pair]
\label{lem:dyn-der-ep}
For any derivation $c : \sA \ltdyn \sB$, then $\mE_{e,c},
\mE_{p,c} : \sem{\sA} \eppair \sem{\sB}$ are an ep pair.
\end{lemma}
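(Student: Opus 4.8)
The plan is to induct on the structure of the type dynamism derivation $c : \sA \ltdyn \sB$, verifying the retraction and projection laws of an ep pair in each case and leaning entirely on the compositional and high-level lemmas already proved, so that no case descends to manual step-index manipulation. The two axioms are immediate. When $c = id(\sA)$ with $\sA \in \{\sunitty,\sdynty\}$, both $\mE_{e,c}$ and $\mE_{p,c}$ are $\mhole$ and the Identity EP Pair lemma applies. When $c = tag(\stagty) \circ d$, I will first establish that the tag pair $(\mE_{e,tag(\stagty)},\mE_{p,tag(\stagty)})$ is itself an ep pair and then combine it with the ep pair denoted by $d$ (supplied by the induction hypothesis) using the Composition of EP Pairs lemma. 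For the tag retraction, $\mE_{p,tag(\stagty)}\hw{\mE_{e,tag(\stagty)}\hw{\mx}}$ reduces by two open $\beta$-steps, namely the unroll-of-roll and then the case selecting the matching $\stagty$ branch, to $\mx$, so retraction follows from \cref{lem:open-beta} and transitivity (\cref{lem:trans}).

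The genuinely interesting base computation is the tag projection, $\mE_{e,tag(\stagty)}\hw{\mE_{p,tag(\stagty)}\hw{\my}} \dynr \my$, where $\my : \sem\sdynty$ is an opaque value of recursive type. Here I will first apply a commuting conversion (\cref{lem:comm-conv}) to push the outer evaluation context $\mtroll{\sem\sdynty}{\mtsum{\stagty}{\mhole}}$ into the branches of the case analysis, collapsing the non-matching branches to $\merr$ by strictness. I then rewrite the target $\my$ using the recursive-type $\eta$-law $\my \equidyn \mtroll{\sem\sdynty}{\mtunroll{\my}}$, which is itself an instance of \cref{lem:open-beta} since on any closed value it is an unroll-of-roll reduction, followed by the sum $\eta$-law on $\mtunroll{\my}$; this turns $\my$ into a case-analysis of $\mtunroll{\my}$ that reconstructs $\my$ in every branch. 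Finally I match the two sides branch-by-branch with the congruence lemma (\cref{lem:cong}) and reflexivity of the shared scrutinee $\mtunroll{\my}$: in the matching branch both sides yield $\mtroll{\sem\sdynty}{\mtsum{\stagty}{\mx}}$, while in each non-matching branch the left side is $\merr$ and the right a reconstructed value, and $\merr \dynr \mv$ holds for every value $\mv$ by unfolding the logical relation (the left side may always error). This is the step I expect to be the main obstacle: it is the only place where the asymmetry of $\dynr$ is genuinely used, and the desugaring of the multi-way tag case makes the bookkeeping heaviest.

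The remaining cases are the congruence rules for products, sums, and functions, all handled uniformly by functoriality. Since the induction hypothesis makes the subderivations ep pairs, \cref{lem:emb-pure} and \cref{prj-term} guarantee that their embedding contexts are pure and their projection contexts terminating, which is exactly the hypothesis needed to invoke the functoriality equivalences of \cref{lem:functor}. Taking the product case as representative, for retraction I rewrite $\mE_{p,c\times d}\hw{\mE_{e,c\times d}\hw{\mx}}$ by \cref{lem:functor} into the functor applied to the two component composites $\mE_{p,c}\hw{\mE_{e,c}}$ and $\mE_{p,d}\hw{\mE_{e,d}}$, each equivalent to $\mhole$ by the induction hypothesis, and then collapse the two copies of $\mhole$ to a single $\mhole$ by identity extension (\cref{lem:id-ext}). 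For projection I again use \cref{lem:functor} to reduce to the functor applied to the component projections, and then use that the functorial action is monotone for $\dynr$: the relevant congruence rule of \cref{lem:cong}, combined with the component projection inequalities $\mE_{e,c}\hw{\mE_{p,c}\hw{\cdot}} \dynr \mhole$ and an $\eta$-law, yields the desired $\dynr \my$.

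The sum case is identical in shape. The function case differs only in that the arrow's functorial action swaps embedding and projection on the domain, which is precisely the mode complement $\overline m$ appearing in the definition of $\mE_{m,c\to c'}$ and is matched by the contravariant form of the arrow equation in \cref{lem:functor}; consequently the domain components line up as the retraction, respectively projection, of the subderivation $c$ taken in the correct direction, and the argument proceeds as before. For the projection direction of the arrow I will additionally need the term-level form of the component projection property, obtained from the value-level statement by evaluation-context linearity (\cref{lem:evctx-linear}) together with the let-congruence of \cref{lem:cong}, so that the inner projection may be applied under an application to a subterm rather than a bare value. The reflexivity lemma (\cref{lem:fund-lemma}) is used throughout to close off the variable and value positions that are left unchanged.
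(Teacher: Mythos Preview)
Your proposal is correct and follows essentially the same approach as the paper's proof: induction on the derivation, with the $tag(\stagty)\circ c$ case handled by separately verifying the tag ep pair (retraction by open $\beta$, projection by commuting conversion, $\eta$-expansion of $\my$, and congruence using $\merr \ltdyn \mt$) and then invoking composition of ep pairs, and the connective cases handled by functoriality plus identity extension. The only minor difference is that the paper does not single out the extra linearity step you mention for the function projection; it simply applies functoriality and congruence directly, which already suffices.
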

\iflong
\begin{proof}
  By induction on the derivation $c$.
  \begin{enumerate}
  \item (Identity) ${\mE_{e,id(\sA)},\mE_{p,id(\sA)} :
    \sem\sA \eppair \sem\sA}$. This case is immediate by
    \cref{lem:fund-lemma}.
  \item (Composition) 
     $\inferrule{\mE_{e,c},\mE_{p,c} :
    \sem\sAone \eppair \sem\sAtwo \and \mE_{e,c'},\mE_{p,c'} :
    \sem\sAtwo \eppair
    \sem\sAthree}{\mE_{e,c'}\hw{\mE_{e,c}},\mE_{p,c}\hw{\mE_{p,c'}}
    : \sem\sAone\eppair \sem\sAthree}$. We need to show the retract property:
    \[ \mx:\sem\sAone \vDash \mE_{p,c}\hw{\mE_{p,c'}\hw{\mE_{e,c'}\hw{\mE_{e,c}\hw{\mx}}}} \equidyn \mx : \sem\sAone \]
    and the projection property:
    \[ \my:\sem\sAthree \vDash
      \mE_{e,c'}\hw{\mE_{e,c}\hw{\mE_{p,c}\hw{\mE_{p,c'}\hw{\my}}}}
      \ltdyn \my : \sem\sAthree \]
    Both follow by congruence and the inductive hypothesis, we show the projection property:
    \begin{align*}
      \mE_{e,c'}\hw{\mE_{e,c}\hw{\mE_{p,c}\hw{\mE_{p,c'}\hw{\my}}}}
      & \ltdyn \mE_{e,c'}\hw{\mE_{p,c'}\hw{\my}}\tag{inductive hyp, cong \ref{lem:cong}}\\
      & \ltdyn \my \tag{inductive hyp}
    \end{align*}
  \item (Tag) $\inferrule{~}{\mtroll{\sem\sdynty}{\mtsum{\stagty}{\mhole}},\mtelsecasevert{\mtunroll{\mhole}}{\sem{\stagty}}{\mx}{\mx}{\merr} : \sem\sA \eppair \sem\sB}$.\\
    The retraction case follows by $\beta$ reduction
    \begin{align*}
      \mtelsecasevert{\mtunroll{\mtroll{\sem\sdynty}{\mtsum{\stagty}{\mx}}}}
                     {\mtsum{\stagty}{\mxin{\stagty}}}{\mxin{\stagty}}{\merr}
      & \equidyn \mx
    \end{align*}
    For the projection case, we need to show
    \[
    \mtroll{\sem\sdynty}{\mtsum{\stagty}{\left(\mtelsecasevert{\mtunroll{\my}}
                     {\mtsum{\stagty}{\mxin{\stagty}}}{\mxin{\stagty}}{\merr}
\right)}} \ltdyn \my
    \]
    First, on the left side, we do a commuting conversion
    (\cref{lem:comm-conv}) and then use linearity of evaluation
    contexts to reduce the cases to error:
    \begin{align*}
      \mtroll{\sem\sdynty}{\mtsum{\stagty}{\left(\mtelsecasevert{\mtunroll{\my}}
                     {\mtsum{\stagty}{\mxin{\stagty}}}{\mxin{\stagty}}{\merr}
\right)}} & \equidyn
      \mtelsecasevert{\mtunroll{\my}}
                     {\mtsum{\stagty}{\mxin{\stagty}}}{\mtroll{\sem\sdynty}{\mtsum{\stagty}{\mxin{\stagty}}}}{\mtroll{\sem\sdynty}{\mtsum{\stagty}{\merr}}}\\
          & \equidyn                      
      \mtelsecasevert{\mtunroll{\my}}
                     {\mtsum{\stagty}{\mxin{\stagty}}}
                     {\mtroll{\sem\sdynty}{\mtsum{\stagty}{\mxin{\stagty}}}}
                     {\merr}\\
    \end{align*}
    Next, we $\eta$-expand the right hand side
    \begin{align*}
      \my & \equidyn \mtroll{\sem\sdynty}{\mtunroll{\my}}\\
      & \equidyn \etatagcases{\mtunroll{\my}}{\mtroll{\sem\sdynty}}
    \end{align*}
    The result follows by congruence because $\merr \ltdyn \mt$ for any $\mt$.
  \item (Functions) $\inferrule{\mE_{e,c},\mE_{p,c} : \sAone
    \eppair \sAtwo\and\mE_{e,c'},\mE_{p,c'} : \sBone \eppair
    \sBtwo}{\mE_{p,c} \mto \mE_{e,c'},\mE_{e,c} \mto
    \mE_{p,c'} : \sem\sAone \mto \sem\sBone \eppair \sem\sAtwo
    \mto \sem\sBtwo}$ We prove the projection property, the retraction
    proof is similar. We want to show
    \[
    \my : \sem\sAtwo \mto \sem\sBtwo\vDash (\mE_{e,c} \mto \mE_{p,c'})\hw{(\mE_{p,c} \mto \mE_{e,c'})\hw{\my}} \ltdyn \my : \sem\sAtwo \mto \sem\sBtwo
    \]
    Since embeddings and projections are terminating, we can apply
    functoriality \lemref{lem:functor} to show the left hand side is
    equivalent to
    \[ ((\mE_{p,c}\hw{\mE_{e,c}}) \mto (\mE_{p,c'}\hw{\mE_{e,c'}}))\hw{\my}\]
    which by congruence and inductive hypothesis is $\ltdyn$:
    \[ (\mhole \mto \mhole)\hw{\my} \]
    which by identity extension \cref{lem:id-ext} is equivalent to $\my$.
  \item (Products) By the same argument as the function case.    
  \item (Sums) By the same argument as the function case.
  \end{enumerate}
\end{proof}
\fi

Next, while we showed that transitivity and reflexivity were
admissible with the $id(\sA)$ and $c \circ d$ definitions, their
semantics are not given directly by the identity and composition of
evaluation contexts.
We justify this notation by the following theorems.
First, $id(\sA)$ is the identity by identity extension.
\begin{lemma}[Reflexivity Proofs denote Identity]
\label{lem:id-casts}
  For every $\sA$, $\mE_{e,id(\sA)} \equidyn \mhole$ and $\mE_{p,id(\sA)} \equidyn \mhole$.
\end{lemma}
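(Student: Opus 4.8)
The plan is to prove both equivalences simultaneously by structural induction on the type $\sA$, since the contravariant function case forces the embedding and projection claims to depend on one another. The key point to keep in mind is that $id(\sA)$ is a \emph{primitive} proof term only when $\sA \in \{\sunitty,\sdynty\}$; for compound types it is the admissible derivation built from the congruence rules, e.g. $id(\sAone \stimes \sAtwo) = id(\sAone) \times id(\sAtwo)$ (see \cref{fig:type-dyn-admissible}). Consequently the defining clause $\mE_{m,id(\sA)} = \mhole$ of \cref{fig:type-dyn-cast-translation} discharges only the base cases directly, and the inductive cases must be unfolded through the functorial actions $\mtimes$, $\mplus$, and $\mto$.

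For the base cases $\sA \in \{\sunitty,\sdynty\}$, both $\mE_{e,id(\sA)}$ and $\mE_{p,id(\sA)}$ are literally $\mhole$, so the equivalences hold by reflexivity of $\equidyn$. For the product case $\sA = \sAone \stimes \sAtwo$, the translation unfolds to $\mE_{e,id(\sA)} = \mE_{e,id(\sAone)} \mtimes \mE_{e,id(\sAtwo)}$. The induction hypothesis supplies $\mE_{e,id(\sAone)} \equidyn \mhole$ and $\mE_{e,id(\sAtwo)} \equidyn \mhole$; applying congruence of the logical relation (\cref{lem:cong}) to the term defining the functorial action $\mtimes$ yields $\mE_{e,id(\sAone)} \mtimes \mE_{e,id(\sAtwo)} \equidyn \mhole \mtimes \mhole$, and Identity Extension (\cref{lem:id-ext}) collapses $\mhole \mtimes \mhole \equidyn \mhole$. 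Chaining these through transitivity (\cref{lem:trans}) closes the case, and the projection half, together with the sum case $\sAone \splus \sAtwo$, goes through identically.

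The only genuine subtlety, and the reason the two claims must be proved together, is the function case $\sA = \sAone \sto \sAtwo$, where the mode flips on the domain: $\mE_{e,id(\sA)} = \mE_{p,id(\sAone)} \mto \mE_{e,id(\sAtwo)}$. Here the domain occurrence is a \emph{projection}, so I must appeal to the projection half of the induction hypothesis at $\sAone$ and the embedding half at $\sAtwo$ (and symmetrically for $\mE_{p,id(\sA)}$). After rewriting both components to $\mhole$ via the hypothesis and \cref{lem:cong}, Identity Extension gives $\mhole \mto \mhole \equidyn \mhole$.

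I expect the main obstacle to be purely structural rather than technical: lining up the mutually recursive embedding/projection claims across the contravariant domain so that the right half of the hypothesis is invoked in each subcase. It is worth noting that only Identity Extension is needed here, not functoriality, so the call-by-value caveat on the product functor (which concerns composition, not identity) never enters. Since \cref{lem:id-ext}, \cref{lem:cong}, and \cref{lem:trans} are all already established, no individual step is hard, and the proof amounts to careful bookkeeping of the two interleaved inductions.
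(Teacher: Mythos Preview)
Your proposal is correct and follows exactly the approach the paper takes: induction on $\sA$ using the Identity Extension lemma (\cref{lem:id-ext}). You have simply spelled out the details that the paper's one-line proof leaves implicit, including the observation that the function case forces the embedding and projection claims to be proved simultaneously because of the mode flip on the domain.
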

\iflong
\begin{proof}
  By induction on $\sA$, using the identity extension lemma.
\end{proof}
\fi

Second, we have our key \emph{decomposition} theorem.
While the composition theorem says that the composition of any two ep
pairs is an ep pair, the \emph{decomposition} theorem is really a
theorem about the \emph{coherence} of our type dynamism proofs.
It says that given any ep pair given by $c : \sAin{1} \ltdyn
\sAin{3}$, if we can find a middle type $\sAin{2}$, then we can
\emph{decompose} $c$'s ep pairs into a composition.
This theorem is used extensively, especially in the proof of the
gradual guarantee.
\begin{lemma}[Decomposition of Upcasts, Downcasts]
\label{lem:decomposition}
  For any derivations $c : \sAone \ltdyn \sAtwo$ and $c' : \sAtwo \ltdyn \sAthree$,
  the upcasts and downcasts given by their composition $c' \circ c$
  are equivalent to the composition of their casts given by $c,c'$:
  \begin{align*}
    \mx : \sem\sAone \vDash \mE_{e, c' \circ c}\hw{\mx} &\equidyn \mE_{e,c'}\hw{\mE_{e,c}\hw{\mx}} : \sem\sAthree\\
    \my : \sem\sAthree \vDash \mE_{p, c' \circ c}\hw{\my} &\equidyn \mE_{p,c}\hw{\mE_{p,c'}\hw{\my}} : \sem\sAone
  \end{align*}
\end{lemma}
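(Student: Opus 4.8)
The plan is to prove both equivalences simultaneously by structural induction on the \emph{outer} derivation $c' : \sAtwo \ltdyn \sAthree$, since the admissible composition $c' \circ c$ of \cref{fig:type-dyn-admissible} is defined by recursion on exactly this argument. In each case I would unfold the defining clause of $c' \circ c$, rewrite the two $\mE$ occurrences using the clauses of \cref{fig:type-dyn-cast-translation}, push the equivalence through with the inductive hypothesis, and reassemble. Throughout I use that plugging into an evaluation context is a congruence for $\equidyn$ (from \cref{lem:cong}), so that $\mE\hw{\mt} \equidyn \mE\hw{\mt}'$ whenever the two fillers are equivalent.

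The two ``linear'' cases are immediate. When $c' = id(\sAtwo)$ we have $c' \circ c = c$ by definition, while $\mE_{e,id(\sAtwo)} \equidyn \mhole$ and $\mE_{p,id(\sAtwo)} \equidyn \mhole$ by \cref{lem:id-casts}; since plugging $\mhole$ is the identity, both sides reduce to $\mE_{e,c}\hw{\mx}$ and $\mE_{p,c}\hw{\my}$ respectively. When $c' = tag(\stagty) \circ c''$ (so $\sAthree = \sdynty$), the composition clause gives $c' \circ c = tag(\stagty) \circ (c'' \circ c)$, and the embedding clause gives
\begin{align*}
  \mE_{e, c' \circ c} &= \mE_{e, tag(\stagty)}\hw{\mE_{e, c'' \circ c}} \equidyn \mE_{e, tag(\stagty)}\hw{\mE_{e,c''}\hw{\mE_{e,c}}} = \mE_{e,c'}\hw{\mE_{e,c}},
\end{align*}
using the inductive hypothesis on $c''$ in the middle step; the projection case is symmetric, with the order of plugging reversed exactly as in the projection clause for $tag(\stagty) \circ c''$. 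No functoriality is needed here, as context composition is literally associative.

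The heart of the argument is the three type-connective cases. Here I would first appeal to canonicity of derivations (together with the fact that $id$ of a connective unfolds to the corresponding combinator) to conclude that, since the source type of $c'$ is a product/sum/function type, the derivation $c$ must share the same top connective, say $c = c_1 \times c_2$ and $c' = c_1' \times c_2'$. The composition clause then gives $c' \circ c = (c_1' \circ c_1) \times (c_2' \circ c_2)$, so that $\mE_{e,c' \circ c} = \mE_{e,c_1' \circ c_1} \mtimes \mE_{e,c_2' \circ c_2}$, which by the inductive hypotheses is $\equidyn (\mE_{e,c_1'}\hw{\mE_{e,c_1}}) \mtimes (\mE_{e,c_2'}\hw{\mE_{e,c_2}})$. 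On the other side $\mE_{e,c'}\hw{\mE_{e,c}} = (\mE_{e,c_1'} \mtimes \mE_{e,c_2'})\hw{\mE_{e,c_1} \mtimes \mE_{e,c_2}}$, and the two agree by \cref{lem:functor}. The function case is identical in shape but uses the complement mode: the embedding clause $\mE_{e, c_1' \to c_2'} = \mE_{p,c_1'} \mto \mE_{e,c_2'}$ puts a \emph{projection} on the domain, and the $\mto$-functoriality equation flips the order of composition on the domain, precisely matching that $c' \circ c$ reads $c_1' \circ c_1$ contravariantly. The sum case is handled exactly as the product case.

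The main obstacle I anticipate is the appeal to \cref{lem:functor}, which holds only for \emph{terminating} evaluation contexts. To discharge this I would invoke \cref{lem:dyn-der-ep}: every derivation $c$ denotes an ep pair $\mE_{e,c}, \mE_{p,c}$, so by \cref{lem:emb-pure} each embedding $\mE_{e,c}$ is pure and by \cref{prj-term} each projection $\mE_{p,c}$ is terminating, making every context to which functoriality is applied terminating. The only other delicate point is bookkeeping: lining up the connective shapes of $c$ and $c'$ via canonicity so that the composition clauses actually fire, and tracking the mode swap $\overline m$ in the function case so that the domain casts compose in the reversed order that the contravariant functoriality equation demands.
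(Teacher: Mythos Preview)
Your proposal is correct and follows the same induction on $c'$ as the paper, invoking the inductive hypothesis together with \cref{lem:functor} in the connective cases and strict associativity of evaluation-context composition in the $tag$ case; you even make explicit two points the paper leaves implicit, namely the termination hypothesis needed for \cref{lem:functor} (discharged via \cref{lem:dyn-der-ep}, \cref{lem:emb-pure}, \cref{prj-term}) and the shape constraint on $c$ in the connective cases. One small simplification: in the $c' = id(\sAtwo)$ case you do not need \cref{lem:id-casts}, since primitive $id$ exists only for $\sAtwo \in \{\sunitty,\sdynty\}$ and there $\mE_{m,id(\sAtwo)} = \mhole$ holds definitionally (\cref{fig:type-dyn-cast-translation}), so both sides are literally equal.
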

\iflong
\begin{proof}
  By induction on the pair $c,c'$, following the recursive definition
  of $c \circ c'$.
  \begin{enumerate}
    \item $(tag(\stagty) \circ c) \circ d \defeq tag(\stagty) \circ (c
      \circ d)$. By inductive hypothesis and strict associativity of
      composition of evaluation contexts.
    \item $id(\sAone) \circ d \defeq d$ reflexivity.
    \item $(c \times d) \circ (c' \times d')  \defeq  (c \circ c') \times (d \circ d')$
      By inductive hypothesis and functoriality \cref{lem:functor}.
    \item $(c \plus d) \circ (c' \plus d')  \defeq  (c \circ c') \plus (d \circ d')$
      By inductive hypothesis and functoriality \cref{lem:functor}.
    \item $(c \to d) \circ (c' \to d')  \defeq  (c \circ c') \to (d \circ d')$
      By inductive hypothesis and functoriality \cref{lem:functor}.
  \end{enumerate}
\end{proof}
\fi

Finally, now that we have established the meaning of type dynamism
derivations and proven the decomposition theorem, we can
dispense with direct manipulation of derivations.
So we define the following notation for ep pairs that just uses the
types:
\begin{definition}[EP Pair Semantics]
  Given $c : \sA \ltdyn \sB$, we define $\mE_{m,\sA,\sB} = \mE_{m,c}$.
\end{definition}

\subsection{Casts Factorize into EP Pairs}

Next, we show how the upcasts and downcasts are sufficient to
construct all the casts of $\glangname$.

First, when $\sA \ltdyn \sB$, the ep pair semantics and the cast
semantics coincide:
\begin{lemma}[Upcasts and Downcasts are Casts]
\label{lem:ud-are-casts}
  If $\sA \ltdyn \sB$ then $\mE_{\obcast{\sA}{\sB}} \equidyn
  \mE_{e,\sA,\sB}$ and $\mE_{\obcast{\sB}{\sA}} \equidyn
  \mE_{p,\sA,\sB}$.
\end{lemma}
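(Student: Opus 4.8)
The plan is to induct on the canonical type dynamism derivation $c : \sA \ltdyn \sB$, which is well defined and unique by the canonicity lemma, and to prove the two equivalences $\mE_{\obcast{\sA}{\sB}} \equidyn \mE_{e,c}$ and $\mE_{\obcast{\sB}{\sA}} \equidyn \mE_{p,c}$ \emph{simultaneously}; since $\mE_{m,\sA,\sB}$ is by definition $\mE_{m,c}$, this suffices. The simultaneous treatment is forced by the function case, where the embedding of an arrow type is built from the \emph{projection} of its domain. The observation that makes the proof essentially mechanical is that the direct cast translation (\cref{fig:direct-cast-translation}) and the ep-pair translation (\cref{fig:type-dyn-cast-translation}) are given by the \emph{same} structural recursion, differing only in how each decomposes a cast into or out of $\sdynty$; once the two definitions are aligned case by case, each case closes using the inductive hypotheses together with the fact that $\equidyn$ is a congruence (\cref{lem:cong}), which lets us replace a sub-context of a composition or of a functorial action by an equivalent one.

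First I would dispatch the structural cases. When $c = id(\sunitty)$ or $c = id(\sdynty)$, both translations are the identity context $\mhole$ (explicitly so for $\sdynty$, and the reflexive cast on $\sunitty$ is likewise the identity). When $c = c_1 \times c_2$, and symmetrically for the sum connective, the direct cast and the ep pair apply the \emph{same} functorial action $\mtimes$ to components that are equivalent by the inductive hypotheses, so congruence finishes the case. The arrow case $c = c_1 \to c_2$ is where the variance bookkeeping matters but is no harder: the direct upcast $\mE_{\obcast{\sAone\sto\sBone}{\sAtwo\sto\sBtwo}}$ applies $\mto$ to the domain \emph{downcast} $\mE_{\obcast{\sAtwo}{\sAone}}$ and the codomain \emph{upcast} $\mE_{\obcast{\sBone}{\sBtwo}}$, exactly matching $\mE_{e,c_1\to c_2} = \mE_{\overline e,c_1}\mto \mE_{e,c_2} = \mE_{p,c_1}\mto \mE_{e,c_2}$; applying the downcast half of the induction hypothesis to $c_1$ and the upcast half to $c_2$ (and symmetrically for the projection) closes it.

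The only genuinely nontrivial alignment is the cast to and from $\sdynty$, i.e. $c = tag(\floor{\sA}) \circ d$ with $d : \sA \ltdyn \floor{\sA}$. If $\sA$ is already a tag type then $d = id$ and both translations reduce to the primitive tagging and untagging contexts, which are literally $\mE_{e,tag(\floor{\sA})}$ and $\mE_{p,tag(\floor{\sA})}$ respectively. Otherwise $\sA \neq \floor{\sA}$, and the direct cast is defined by exactly the clause that composes $\mE_{\obcast{\sA}{\floor{\sA}}}$ with $\mE_{\obcast{\floor{\sA}}{\sdynty}}$ (and the mirror clause for the downcast). Here the outer context $\mE_{\obcast{\floor{\sA}}{\sdynty}}$ is \emph{syntactically} $\mE_{e,tag(\floor{\sA})}$, while the inner cast $\mE_{\obcast{\sA}{\floor{\sA}}}$ is equivalent to $\mE_{e,d}$ by the inductive hypothesis applied to the subderivation $d$; composing by congruence yields $\mE_{e,tag(\floor{\sA})\circ d} = \mE_{e,c}$, and the downcast is symmetric, with the two contexts composed in the opposite order matching $\mE_{p,tag(\floor{\sA})\circ d}$.

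The main obstacle is not any single calculation but setting up the induction correctly: the statement must be proved for the upcast/embedding pair and the downcast/projection pair \emph{jointly}; the induction must run on the canonical derivation rather than on the raw type, so that the subderivation $d : \sA \ltdyn \floor{\sA}$ is available in the $\sdynty$ cases; and one must consistently track that an upcast into a more dynamic type denotes an embedding while a downcast denotes a projection, with these roles swapping on function domains. Once this scaffolding is in place, every case is closed simply by unfolding the two figure definitions and appealing to the inductive hypotheses and congruence---no step-index manipulation and no operational reductions are needed, since the two translations differ only in presentation and not in computational content.
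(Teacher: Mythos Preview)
Your proof is correct and takes essentially the same approach as the paper's: a structural induction matching the two translations case by case, with the joint treatment of embeddings and projections that you correctly identify as forced by the function case. The only cosmetic difference is that you induct on the canonical derivation and unfold the definition of $\mE_{e,tag(\floor\sA)\circ d}$ directly, whereas the paper inducts following the cases of $\mE_{\obcast\sA\sB}$ and phrases that same step as an appeal to the decomposition lemma (\cref{lem:decomposition}).
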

\iflong
\begin{proof}
By induction following the recursive definition of $\mE_{\obcast\sA\sB}$
  \begin{enumerate}
    \item $\mectxt_{\obcast{\sdynty}{\sdynty}}  \defeq  \mhole$ By reflexivity.
    \item $\mectxt_{\obcast{\sAone\splus\sBone}{\sAtwo\stimes\sBtwo}}  \defeq  \mectxt_{\obcast{\sAone}{\sAtwo}} \mplus \mectxt_{\obcast{\sBone}{\sBtwo}}$ By inductive hypothesis and congruence.
    \item $\mectxt_{\obcast{\sAone\stimes\sBone}{\sAtwo\stimes\sBtwo}} \defeq \mectxt_{\obcast{\sAone}{\sAtwo}} \mtimes \mectxt_{\obcast{\sBone}{\sBtwo}}$ By inductive hypothesis and congruence.
    \item $\mectxt_{\obcast{\sAone \sto \sBone}{\sAtwo\sto\sBtwo}}  \defeq  \mectxt_{\obcast{\sAtwo}{\sAone}} \mto \mectxt_{\obcast{\sBone}{\sBtwo}}$ By inductive hypothesis and congruence.
    \item $\mectxt_{\obcast{\stagty}{\sdynty}}  \defeq  \mtroll{\sem\sdynty}{\mtsum{\stagty}\mhole}$ By reflexivity
    \item $\mectxt_{\obcast{\sdynty}{\stagty}}  \defeq  \mtelsecase{\mtunroll\mhole}{\stagty}{\mx}{\mx}{\merr}$ By reflexivity.
    \item $(\sA \neq \sdynty,\floor\sA)\quad\mectxt_{\obcast{\sA}{\sdynty}}  \defeq  \mectxt_{\obcast{\floor\sA}{\sdynty}}\hw{\mectxt_{\obcast{\sA}{\floor\sA}}\mhole}$ By inductive hypothesis and decomposition of ep pairs.
    \item $(\sA \neq \sdynty,\floor\sA)\quad\mectxt_{\obcast{\sdynty}{\sA}}  \defeq  {\mectxt_{\obcast{\floor\sA}{\sA}}}\hw{\mectxt_{\obcast{\sdynty}{\floor\sA}}\mhole}$ By inductive hypothesis and decomposition of ep pairs.
    \item $(\sA,\sB\neq\sdynty \wedge \floor\sA\neq\floor\sB)\quad\mectxt_{\obcast{\sA}{\sB}}  \defeq  \mtlet{\mx}{\mhole}{\merr}$ Not possible that $\sA \ltdyn \sB$.
  \end{enumerate}
\end{proof}
\fi

Next, we show that the ``general'' casts of the gradual language can
be \emph{factorized} into a composition of an upcast followed by a
downcast.
First, we show that factorizing through any type is equivalent to
factorizing through the dynamic type, as a consequence of the
\emph{retraction} property of ep pairs.
\begin{lemma}[Any Factorization is equivalent to Dynamic]
  For any $\sAone,\sAtwo,\sApr$ with $\sAone \ltdyn \sApr$ and $\sAtwo
  \ltdyn \sApr$,
  $\mE_{p,\sAtwo,\sdynty}\hw{\mE_{e,\sAone,\sdynty}}\equidyn
  \mE_{p,\sAtwo,\sApr}\hw{\mE_{e,\sAone,\sApr}}$.
\end{lemma}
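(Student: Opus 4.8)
The plan is to route the composite through $\sApr$ on both the embedding and projection sides, and then cancel the leftover detour up to $\sdynty$ using the retraction law of the ep pair $\sApr \eppair \sdynty$. First I would record that $\sdynty$ is the greatest element, so $\sApr \ltdyn \sdynty$; combined with the hypotheses $\sAone \ltdyn \sApr$ and $\sAtwo \ltdyn \sApr$ this gives the two chains $\sAone \ltdyn \sApr \ltdyn \sdynty$ and $\sAtwo \ltdyn \sApr \ltdyn \sdynty$. Applying the decomposition theorem (\cref{lem:decomposition}) to these chains---and remembering that projections compose in the reverse order to embeddings---yields
\[ \mE_{e,\sAone,\sdynty} \equidyn \mE_{e,\sApr,\sdynty}\hw{\mE_{e,\sAone,\sApr}} \quad\text{and}\quad \mE_{p,\sAtwo,\sdynty} \equidyn \mE_{p,\sAtwo,\sApr}\hw{\mE_{p,\sApr,\sdynty}}. \]

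Substituting both equivalences into the left-hand side, using congruence (\cref{lem:cong}) to rewrite under the surrounding evaluation contexts and transitivity (\cref{lem:trans}) to chain the steps, rewrites the goal to
\[ \mE_{p,\sAtwo,\sApr}\hw{\mE_{p,\sApr,\sdynty}\hw{\mE_{e,\sApr,\sdynty}\hw{\mE_{e,\sAone,\sApr}}}}. \]
The crux is to collapse the inner round trip $\mE_{p,\sApr,\sdynty}\hw{\mE_{e,\sApr,\sdynty}\hw{\cdot}}$ to the identity. Since $\sApr \ltdyn \sdynty$, \cref{lem:dyn-der-ep} guarantees that $(\mE_{e,\sApr,\sdynty}, \mE_{p,\sApr,\sdynty})$ is an ep pair, whose retraction law supplies $\mx : \sem{\sApr} \vDash \mE_{p,\sApr,\sdynty}\hw{\mE_{e,\sApr,\sdynty}\hw{\mx}} \equidyn \mx$. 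If this round trip really is the identity on $\sem{\sApr}$, then the outer $\mE_{p,\sAtwo,\sApr}\hw{\cdot}$ and inner $\mE_{e,\sAone,\sApr}$ survive unchanged, which is exactly the right-hand side.

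The main obstacle is that retraction is stated for a bare variable $\mx$, whereas here the hole of the round trip is filled by the embedding $\mE_{e,\sAone,\sApr}\hw{\mhole}$, a term that need not be a value; in a call-by-value setting one cannot simply substitute it for $\mx$. I would bridge this gap by promoting retraction from a variable to an arbitrary argument via linearity of evaluation contexts (\cref{lem:evctx-linear}). Writing $G$ for the evaluation context $\mE_{p,\sApr,\sdynty}\hw{\mE_{e,\sApr,\sdynty}\hw{\mhole}}$, for any $\mt : \sem{\sApr}$ we have
\[ G\hw{\mt} \equidyn \mtlet{\mx}{\mt}{G\hw{\mx}} \equidyn \mtlet{\mx}{\mt}{\mx} \equidyn \mt, \]
where the outer two steps are linearity of evaluation contexts and the middle step replaces $G\hw{\mx}$ by $\mx$ under the binder using congruence with the retraction equivalence. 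Instantiating $\mt \defeq \mE_{e,\sAone,\sApr}\hw{\mhole}$ collapses the detour, and a final appeal to transitivity leaves precisely $\mE_{p,\sAtwo,\sApr}\hw{\mE_{e,\sAone,\sApr}}$, completing the argument.
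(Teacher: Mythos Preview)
Your proof is correct and follows the same approach as the paper: decompose both the embedding $\mE_{e,\sAone,\sdynty}$ and the projection $\mE_{p,\sAtwo,\sdynty}$ through $\sApr$ via \cref{lem:decomposition}, then collapse the inner round trip $\mE_{p,\sApr,\sdynty}\hw{\mE_{e,\sApr,\sdynty}\hw{\cdot}}$ using the retraction property of the ep pair $\sApr \eppair \sdynty$. Your explicit handling of the non-variable argument via linearity of evaluation contexts is more careful than the paper, which simply invokes ``the retraction property''; note also that the paper's displayed intermediate term appears to contain typos in the outermost and innermost subscripts (they should read $\sApr$, not $\sdynty$), and your version is the well-typed one.
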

\begin{proof}
  By decomposition and the retraction property:
  \[
    \mE_{p,\sAtwo,\sdynty}\hw{\mE_{e,\sAone,\sdynty}}
    \equidyn
    \mE_{p,\sAtwo,\sdynty}\hw{\mE_{p,\sApr,\sdynty}\hw{\mE_{e,\sApr,\sdynty}\hw{\mE_{e,\sAone,\sdynty}}}}
    \equidyn
    \mE_{p,\sAtwo,\sApr}\hw{\mE_{e,\sAone,\sApr}}
    \]
\end{proof}

By transitivity of equivalence, this means that factorization through one $\sB$ is as good as any other.
So to prove that every cast factors as an upcast followed by a
downcast, we can choose whatever middle type is most convenient.
This lets us choose the simplest type possible in the proof.
For instance, when factorizing a function cast
$\obcast{\sAone\sto\sBone}{\sAtwo\sto\sBtwo}$, we can use
the function tag type as the middle type $\sdynty \sto \sdynty$ and
then the equivalence is a simple use of the inductive hypothesis and
the functoriality principle.
\begin{lemma}[Every Cast Factors as Upcast, Downcast]
\label{lem:up-down-factorization}
  For any $\sAone,\sAtwo,\sApr$ with $\sAone \ltdyn \sApr$ and $\sAtwo \ltdyn \sApr$,
  the cast from $\sAone$ to $\sAtwo$ factors through $\sApr$:
  \( \mx : \sem\sA \vDash \mE_{\obcast{\sA}{\sAtwo}}\hw\mx \equidyn \mE_{p,\sAtwo,\sApr}\hw{\mE_{e,\sA,\sApr}\hw\mx} : \sem\sAtwo \)
\end{lemma}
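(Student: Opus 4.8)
The plan is to first invoke the preceding lemma (Any Factorization is equivalent to Dynamic), together with transitivity of $\equidyn$ (\cref{lem:trans}), to observe that the composite $\mE_{p,\sAtwo,\sApr}\hw{\mE_{e,\sAone,\sApr}}$ is the same up to $\equidyn$ for \emph{every} legal choice of middle type. Hence it suffices, for each possible shape of the general cast $\mE_{\obcast{\sAone}{\sAtwo}}$ of \figref{fig:direct-cast-translation}, to verify the factorization through whichever middle type is most convenient. I would then proceed by induction on the structure of this cast definition.

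When one of $\sAone,\sAtwo$ is $\sdynty$ (the clause $\obcast\sdynty\sdynty$, the two tag clauses, and the two non-tag clauses casting $\sA$ to or from $\sdynty$), the cast is itself an upcast or a downcast, so \cref{lem:ud-are-casts} applies; moreover $\sdynty \ltdyn \sApr$ forces $\sApr = \sdynty$, so one half of the factorization is an identity cast (\cref{lem:id-casts}) and the other is exactly $\mE_{\obcast{\sAone}{\sAtwo}}$. The same argument covers any clause in which $\sAone \ltdyn \sAtwo$ or $\sAtwo \ltdyn \sAone$ (such as $\obcast\sunitty\sunitty$): take the middle type to be the larger of the two, collapse one half with \cref{lem:id-casts}, and identify the other half via \cref{lem:ud-are-casts}.

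For the three ``connective'' clauses I would take the middle type to be the corresponding tag type, namely $\sdynty\stimes\sdynty$, $\sdynty\splus\sdynty$, or $\sdynty\sto\sdynty$. Unfolding the ep-pair semantics of \figref{fig:type-dyn-cast-translation}, both the embedding $\mE_{e,\sAone,\sdynty\stimes\sdynty}$ and the projection $\mE_{p,\sAtwo,\sdynty\stimes\sdynty}$ are the functorial actions applied to the component ep pairs. Since embeddings are pure (\cref{lem:emb-pure}) and projections terminating (\cref{prj-term}), functoriality (\cref{lem:functor}) lets me push the composite of the two functorial actions down to the functorial action of the composed components; the inductive hypothesis rewrites each composed component into the corresponding component cast, and congruence reassembles these, via the definition of $\mE_{\obcast{\sAone}{\sAtwo}}$. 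For functions the only wrinkle is the contravariant flip on the domain, but this flip appears identically in both the functoriality law for $\mto$ and in the definition of the function cast, so the domain inductive hypothesis (applied to the \emph{reversed} pair $(\sAtwo,\sAone)$) lines up exactly.

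The remaining, and hardest, case is the tag-mismatch clause, where $\sAone,\sAtwo \neq \sdynty$ with $\floor\sAone \neq \floor\sAtwo$, so $\mE_{\obcast{\sAone}{\sAtwo}} = \mtlet\mx\mhole\merr$. Here I would take the middle type to be $\sdynty$ and, using \cref{lem:decomposition}, factor the embedding through $\floor\sAone$ and the projection through $\floor\sAtwo$. The inner composite then tags a value at $\floor\sAone$ (via $\mtroll{\sem\sdynty}{\mtsum{\floor\sAone}{\mhole}}$) and immediately tries to untag it at $\floor\sAtwo$; because the tags differ, the $\mtunroll$ and the tag-case $\beta$-reduce into the else branch and yield $\merr$. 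The obstacle is to turn this closed-term reduction into an open equivalence with $\mtlet\mx\mhole\merr$: I would use purity of the inner embedding (\cref{lem:emb-pure}) to guarantee its argument reduces to a value before the mismatch is detected, strictness of the surrounding projection evaluation context to propagate the resulting $\merr$ outward, and \cref{lem:open-beta} (with the commuting conversions of \cref{lem:comm-conv}) to package the reduction as the required equivalence. This is the step needing the most care, since it is the only case that genuinely exploits that the two tags are distinct.
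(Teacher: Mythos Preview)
Your proposal is correct and follows essentially the same approach as the paper: reduce to a convenient middle type via the ``any factorization'' lemma, dispatch the cases where the cast is already an upcast or downcast via \cref{lem:ud-are-casts} and \cref{lem:id-casts}, handle the connective clauses by choosing the tag type and applying functoriality with the inductive hypothesis, and handle the tag-mismatch clause by factoring through $\sdynty$, decomposing through the underlying tag types, and using purity of the inner embedding together with open $\beta$ to collapse the mismatched tag--untag to $\merr$. Your treatment of the mismatch case is slightly more explicit than the paper's, but the argument is the same.
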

\begin{proof}
  \begin{enumerate}
  \item If $\sAone \ltdyn \sAtwo$, then we choose $\sApr = \sAtwo$ and we need to show that
    \( \mE_{\obcast\sAone\sAtwo} \equidyn \mE_{p,\sAtwo\sAtwo}\hw{\mE_{e,\sAone\sAtwo}} \)
    this follows by \cref{lem:ud-are-casts} and \cref{lem:id-casts}.
  \item If $\sAtwo \ltdyn \sAone$, we use a dual argument to the previous
    case. We choose $\sApr = \sAone$ and we need to show that
    \[ \mE_{\obcast\sAone\sAtwo} \equidyn \mE_{p,\sAone\sAtwo}\hw{\mE_{e,\sAone\sAone}} \]
    this follows by \cref{lem:ud-are-casts} and \cref{lem:id-casts}.
  \item $\mectxt_{\obcast{\sAone\sto\sBone}{\sAtwo\stimes\sBtwo}}
      \defeq \mectxt_{\obcast{\sAone}{\sAtwo}} \mto
      \mectxt_{\obcast{\sBone}{\sBtwo}}$
      We choose $\sApr = \sdynty \sto \sdynty$.
      By inductive hypothesis,
      \[ \mE_{\obcast{\sAtwo}{\sAone}} \equidyn \mE_{p,\sAone,\sdynty}\hw{\mE_{e,\sAtwo,\sdynty}}
      \qandq
      \mE_{\obcast{\sBone}{\sBtwo}} \equidyn \mE_{p,\sBtwo,\sdynty}\hw{\mE_{e,\sBone,\sdynty}}\]
      Then the result holds by functoriality:
      \begin{align*}
        \mE_{\obcast{\sAone\sto\sBone}{\sAtwo\sto\sBtwo}}
        &= \mE_{\obcast\sAtwo\sAone} \mto \mE_{\obcast\sBone\sBtwo}\\
        &\equidyn
        (\mE_{p,\sAone,\sdynty}\hw{\mE_{e,\sAtwo,\sdynty}}) \mto (\mEin{p,\sBtwo,\sdynty}\hw{\mEin{e,\sBone,\sdynty}})\\
        &\equidyn
        (\mE_{e,\sAtwo,\sdynty} \mto \mE_{p,\sBtwo,\sdynty})\hw{\mE_{p,\sAone,\sdynty} \mto \mE_{e,\sBone,\sdynty}}\\
        & =
        \mE_{p,\sAtwo\sto\sBtwo,\sdynty\sto\sdynty}\hw{\mE_{e,\sAone\sto\sBtwo,\sdynty\sto\sdynty}}
      \end{align*}      
    \item (Products, Sums) Same argument as function case.
    \item $(\sAone,\sAtwo\neq\sdynty \wedge
      \floor\sAone\neq\floor\sAtwo)\quad\mectxt_{\obcast{\sAone}{\sAtwo}} \defeq
      \mtlet{\mx}{\mhole}{\merr}$ We choose $\sApr = \sdynty$, so we need to show:
      \(
      \mtlet{\mx}{\mhole}{\merr} \equidyn \mE_{p,\sAtwo,\sdynty}\hw{\mE_{e,\sAone,\sdynty}}
      \).
      By embedding, projection decomposition this is equivalent to
      \[
      \mtlet{\mx}{\mhole}{\merr} \equidyn \mE_{p,\sAtwo,\floor\sAtwo}\hw{\mE_{p,\floor\sAtwo,\sdynty}\hw{\mE_{e,\sAone,\floor\sAone}\hw{\mE_{e,\sAone,\floor\sAone}}}}
      \]
      Which holds by open $\beta$ because the embedding
      ${\mE_{e,\sAone,\floor\sAone}}$ is pure and $\floor\sAone\neq\floor\sAtwo$.
  \end{enumerate}  
\end{proof}

\section{Graduality from EP Pairs}
\label{sec:graduality}

We now define and prove graduality of our cast calculus. 
Graduality, briefly stated, means that if a program is changed to make
its types less dynamic, but otherwise the syntax is the same, then the
\emph{operational behavior} of the term is ``less
dynamic''\footnote{Here we invoke the meaning of dynamic as
  ``active'': less dynamic terms are less active in that they kill the
  program with a type error where a more dynamic program would have
  continued to run.}  in that either the new term has the same
behavior as the old, or it raises a type error, \emph{hiding} some
behavior of the original term.
Graduality, like parametricity, says that a certain type of syntactic
change (making types less dynamic) results in a predictable semantic
change (make behavior less dynamic).
We define these two notions as \emph{syntactic} and \emph{semantic}
term dynamism.

\begin{figure}
\fbox{\small{$\tmdynr{\senvone}{\senvtwo}{\stone}{\sttwo}{\sAone}{\sAtwo}$}}\hfill
  \begin{mathpar}
    \inferrule
    {\tmdynr{\senvone}{\senvtwo}{\stone}{\sttwo}{\sAone}{\sAtwo} \and \sBone \ltdyn \sBtwo}
    {\tmdynr{\senvone}{\senvtwo}{\obcast\sAone\sBone\stone}{\obcast\sAtwo\sBtwo\sttwo}{\sBone}{\sBtwo}}

    \inferrule
    {{\senvone}\dynr{\senvtwo}\and
      \sAone\dynr\sAtwo\and
      \senvonepr \dynr \senvtwopr}
    {\tmdynr{\senvone,\sxone:\sAone,\senvtwo}{\senvtwo,\sxtwo:\sAtwo,\senvtwopr}{\sxone}{\sxtwo}{\sAone}{\sAtwo}}

    \inferrule
    {{\senvone}\dynr{\senvtwo}}
    {\tmdynr{\senvone}{\senvtwo}{\stunit}{\stunit}{\sunitty}{\sunitty}}\and

    \inferrule
    {\tmdynr{\senvone}{\senvtwo}{\stone}{\sttwo}{\styone}{\stytwo}\\
      \tmdynr{\senvone}{\senvtwo}{\stonepr}{\sttwopr}{\styonepr}{\stytwopr}}
    {\tmdynr{\senvone}{\senvtwo}{\stpair{\stone}{\stonepr}}{\stpair{\sttwo}{\sttwopr}}{\spairty{\sAone}{\sAonepr}}{\spairty{\sAtwo}{\sAtwopr}}}

    \inferrule
    {\tmdynr{\senvone}{\senvtwo}{{\stone}}{{\sttwo}}{\spairty{\sAone}{\sAonepr}}{\spairty{\sAtwo}{\sAtwopr}}\\
      \tmdynr{\senvone,{\sxone:\sAone},{\sxonepr:\sApr}}{\senvtwo,{\sxtwo:\sAtwo},{\sxtwopr:\sAtwopr}}{{\stonepr}}{{\sttwopr}}{\sBone}{\sBtwo}
    }
    {\tmdynr{\senvone}{\senvtwo}{\stmatchpair{\sxone:\sAone}{\sxonepr:\sApr}{\stone}{\stonepr}}{\stmatchpair{\sxtwo:\sAtwo}{\sxtwopr:\sAtwopr}{\sttwo}{\sttwopr}}{\sBone}{\sBtwo}}

    \inferrule
    {\tmdynr{\senvone}{\senvtwo}{\stone}{\sttwo}{\sAone}{\sAtwo}\and \sjudgtyprec{\sAonepr}{\sAtwopr}}
    {\tmdynr{\senvone}{\senvtwo}{\stinj{\stermone}}{\stinj{\stermtwo}}{\ssumty{\sAone}{\sAonepr}}{\ssumty{\sAtwo}{\sAtwopr}}}\and

    \inferrule
    {\tmdynr{\senvone}{\senvtwo}{\stone}{\sttwo}{\sAonepr}{\sAtwopr}\and \sjudgtyprec{\sAone}{\sAtwo}}
    {\tmdynr{\senvone}{\senvtwo}{\stinjpr{\stermonepr}}{\stinjpr{\stermtwopr}}{\ssumty{\sAone}{\sAonepr}}{\ssumty{\sAtwo}{\sAtwopr}}}\and

    \inferrule
    {\tmdynr{\senvone}{\senvtwo}{\stone}{\sttwo}{\ssumty{\sAone}{\sAonepr}}{\ssumty{\sAtwo}{\sAtwopr}}\\
      \tmdynr{\senvone,\sxone:\sAone}{\senvtwo,\sxtwo:\sAtwo}{\ssone}{\sstwo}{\sBone}{\sBtwo}\\
      \tmdynr{\senvone,\sxonepr:\sAonepr}{\senvtwo,\sxtwopr:\sAtwopr}{\ssonepr}{\sstwopr}{\sBone}{\sBtwo}}
    {\tmdynr{\senvone}{\senvtwo}{\stcase{\stone}{\sxone:\sAone}{\ssone}{\sxonepr:\sAonepr}{\ssonepr}}{\stcase{\sttwo}{\sxtwo:\sAtwo}{\sstwo}{\sxtwopr:\sAtwopr}{\sstwopr}}{\sBone}{\sBtwo}}
    
    \inferrule
    {\tmdynr{{\senvone},{\svarone}:\sAone}{{\senvtwo},{\svartwo}:\sAtwo}{\stone}{\sttwo}{\sBone}{\sBtwo}}
    {\tmdynr{\senvone}{\senvtwo}{\stfun{\svarone}{\sAone}{\stone}}{\stfun{\svartwo}{\sAtwo}{\sttwo}}{\sfunty{\sAone}{\sBone}}{\sfunty{\sAtwo}{\sBtwo}}}\and

    \inferrule{\tmdynr{\senvone}{\senvtwo}{\stone}{\sttwo}{\sfunty{\sAone}{\sBone}}{\sfunty{\sAtwo}{\sBtwo}}\and
      \tmdynr{\senvone}{\senvtwo}{\ssone}{\sstwo}{{\sAone}}{{\sAtwo}}}
    {\tmdynr{\senvone}{\senvtwo}{\stapp{\stone}{\ssone}}{\stapp{\sttwo}{\sstwo}}{\sBone}{\sBtwo}}
    \end{mathpar}
    \caption{Syntactic Term Dynamism}
    \label{fig:term-dynamism}
\end{figure}
\begin{figure}
\flushleft{\fbox{\small{$\senvone \ltdyn \senvtwo$}}} 

\vspace{-4ex}
  \begin{mathpar}
    \inferrule{~}{\cdot \ltdyn \cdot}\and
    \inferrule
    {\senvone\ltdyn\senvtwo \and \sAone \ltdyn \sAtwo}
    {\senvone,\sxone:\sAone \ltdyn\senvtwo,\sxtwo : \sAtwo}
  \end{mathpar}
  \caption{Environment Dynamism}
  \label{fig:env-dynamism}
\end{figure}

We present syntactic term dynamism in \figref{fig:term-dynamism}, based
on the rules of \citet{refined}.
Syntactic term dynamism captures the above idea of changing a program
to use less dynamic types.
If $\stone \ltdyn \sttwo$, we think of $\sttwo$ as being rewritten to
$\stone$ by changing the types to be less dynamic.
While we will sometimes abbreviate syntactic term dynamism as $\vdash
\stone \ltdyn \sttwo$, the full form is
$\tmdynr{\senvone}{\senvtwo}{\stone}{\sttwo}{\sAone}{\sAtwo}$
and is read as ``$\stone$ is syntactically less dynamic than
$\sttwo$''.
The syntax evokes the invariant that if you rewrite $\sttwo$ to use
less dynamic types $\stone$, then its inputs must be given less
dynamic types $\senvone\ltdyn\senvtwo$ and its outputs must be given
less dynamic types $\sAone\ltdyn\sAtwo$.
We extend type dynamism to environment dynamism in
\figref{fig:env-dynamism} to say $\senvone \ltdyn \senvtwo$ when
$\senvone,\senvtwo$ have the same length and the corresponding types
are related.
The rules of syntactic term dynamism capture exactly the idea of
``types on the left are less dynamic''.
Viewed order-theoretically, these rules say that all term constructors
are \emph{monotone} in types and terms.

The second piece of graduality is a \emph{semantic formulation} of
term dynamism.
The intuition described above is that $\stone$ should be
\emph{semantically} less dynamic than $\sttwo$ when it has the same
behavior as $\sttwo$ except possibly when it errors.
Note that if $\senvone = \senvtwo$ and $\sAone = \sAtwo$, this is
exactly what observational error approximation formalizes.
Of course, since we can cast between any two types, we can cast any
term to be of a different type.
Our definition for semantic term dynamism will then be contextual
approximation \emph{up to cast}:

\begin{definition}[Observational Term Dynamism]
  We say $\senvone \vdash \stone : \sBone$ is \emph{observationally
    less dynamic} than $\senvtwo \vdash \sttwo : \sBtwo$, written
  $\senvone\ltdyn\senvtwo \vDash \stone \obsapprox \sttwo : \sBone
  \ltdyn \sBtwo$ when
  \[
  \senvone \vDash \obcast\sBone\sBtwo\stone \obsapprox
  \stletvert{\sxin{2,1}}{\obcast{\sAin{1,1}}{\sAin{2,1}}{\sxin{1,1}}}{
    \begin{stackTL}
      \vdots\\
      \stletvert{\sxin{2,n}}{\obcast{\sAin{1,n}}{\sAin{2,n}}{\sxin{1,n}}}{
      \sttwo}
    \end{stackTL}
  } : \sBtwo  
  \]
  where $\senvone =
  \sxin{1,1}:\sAin{1,1},\ldots,\sxin{1,n}:\sAin{1,n}$ and $\senvtwo =
  \sxin{2,1}:\sAin{2,1},\ldots,\sxin{2,n}:\sAin{2,n}$. Or,
  abbreviated as:
  \[
  \senvone \vDash \obcast\sBone\sBtwo\stone \obsapprox \stlet{\senvtwo}{\obcast\senvone\senvtwo\senvone}{\sttwo}:\sBtwo
  \]
\end{definition}

Note that we have chosen to use the two \emph{upcasts}, but there are
three other ways we could have inserted casts to give $\stone,\sttwo$
the same type: we can use upcasts or downcasts on the inputs and we
can use upcasts or downcasts on the outputs.
We will show based on the ep-pair property of upcasts and downcasts
that all of these are equivalent (\cref{lem:alternative}).

We then define graduality to mean that syntactic term dynamism implies
semantic term dynamism:
\begin{theorem}[Graduality]
\label{thm:graduality}
  If $\tmdynr\senvone\senvtwo\stone\sttwo\sAone\sAtwo$, then $\obstmdynr\senvone\senvtwo\stone\sttwo\sAone\sAtwo$
\end{theorem}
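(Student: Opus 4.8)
The plan is to prove \cref{thm:graduality} by induction on the derivation of syntactic term dynamism $\tmdynr\senvone\senvtwo\stone\sttwo\sAone\sAtwo$, carrying out the argument in the typed language $\tlangname$ via the logical relation rather than reasoning about observational approximation directly. By \cref{lem:typed-gradual-approx} it suffices to prove the translated statement, and by soundness of the logical relation (\cref{thm:log-to-obs}) it suffices to prove the \emph{logical} version of the goal, namely $\sem\senvone \vDash \mE_{\obcast\sBone\sBtwo}\hw{\sem\stone} \logltdyn R : \sem\sBtwo$, where $R$ denotes the translation of $\sttwo$ with the input upcasts $\obcast\senvone\senvtwo$ prepended as let-bindings (as in the definition of observational term dynamism). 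Since every upcast translates to the embedding of an ep pair (\cref{lem:ud-are-casts}) and embeddings and projections interact well with the algebraic lemmas of \secref{sec:lemmas}, the whole argument reduces to chaining logical equivalences ($\equidyn$) and approximations ($\logltdyn$) using transitivity (\cref{lem:trans}).

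For the term-former cases (unit, pair, injections, case, $\lambda$, application, match) I would argue by \emph{congruence}. The key observation is that the output upcast distributes over each term constructor through the functorial action of casts: for example $\obcast{\spairty\sAone\sAonepr}{\spairty\sAtwo\sAtwopr}$ is, by definition and \cref{lem:functor}, the pairwise action of $\obcast\sAone\sAtwo$ and $\obcast\sAonepr\sAtwopr$, so after an open $\beta$ reduction (\cref{lem:open-beta}) the left-hand side becomes a pair of recursively-upcast subterms, and congruence (\cref{lem:cong}) together with the inductive hypotheses closes the case. The variable case is immediate from reflexivity (\cref{lem:fund-lemma}) and $\beta$, since the bound input variable is replaced by its upcast. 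The binding cases ($\lambda$, case, match) additionally require commuting the output upcast past the bound-variable casts; here I would use the commuting conversions (\cref{lem:comm-conv}), the purity of embeddings and termination of projections (\cref{lem:emb-pure,prj-term}), and \cref{lem:alternative} to switch freely between upcasts and downcasts on inputs (needed because the function cast naturally produces a \emph{downcast} on the domain).

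The crux, and the step I expect to be the main obstacle, is the \textbf{cast case}, where a \emph{general} cast $\obcast\sAone\sBone$ appears on the left and $\obcast\sAtwo\sBtwo$ on the right with no assumed dynamism relation between $\sAone$ and $\sBone$. Here the factorization machinery is essential. I would factor each cast through $\sdynty$ using \cref{lem:up-down-factorization}, rewriting the left-hand cast chain applied to $\sem\stone$ as $\mE_{e,\sBone,\sBtwo}\hw{\mE_{p,\sBone,\sdynty}\hw{\mE_{e,\sAone,\sdynty}\hw{\sem\stone}}}$. Decomposition (\cref{lem:decomposition}) then splits $\mE_{p,\sBone,\sdynty} \equidyn \mE_{p,\sBone,\sBtwo}\hw{\mE_{p,\sBtwo,\sdynty}}$ and $\mE_{e,\sAone,\sdynty} \equidyn \mE_{e,\sAtwo,\sdynty}\hw{\mE_{e,\sAone,\sAtwo}}$, after which the \emph{projection} property of the ep pair $(\sBone,\sBtwo)$ (\cref{lem:dyn-der-ep}) collapses $\mE_{e,\sBone,\sBtwo}\hw{\mE_{p,\sBone,\sBtwo}\hw{\cdot}} \logltdyn \cdot$. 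What remains, $\mE_{p,\sBtwo,\sdynty}\hw{\mE_{e,\sAtwo,\sdynty}\hw{\cdot}}$, is exactly the factorization of $\obcast\sAtwo\sBtwo$ through $\sdynty$, so the chain reduces (up to $\logltdyn$) to $\mE_{\obcast\sAtwo\sBtwo}\hw{\mE_{e,\sAone,\sAtwo}\hw{\sem\stone}} = \sem{\obcast\sAtwo\sBtwo(\obcast\sAone\sAtwo\stone)}$. The inductive hypothesis at types $\sAone \ltdyn \sAtwo$ gives $\sem{\obcast\sAone\sAtwo\stone} \logltdyn R'$, where $R'$ is the translated right-hand side at type $\sAtwo$; since $\obcast\sAtwo\sBtwo$ is terminating I can apply it by congruence (\cref{lem:cong}) and then commute it past the input casts (\cref{lem:comm-conv}) to obtain exactly $R$. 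A final appeal to transitivity (\cref{lem:trans}) assembles the pieces.

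The hardest part is getting the bookkeeping of this cast case right: one must pick a common upper bound for all four types ($\sdynty$ is convenient, and any choice is justified harmless by the factorization lemma), carefully track which steps are genuine equivalences versus one-directional approximations, and apply the projection property at precisely the point where the residual context becomes the right-hand cast. Transitivity of the logical relation and the purity/termination of the ep-pair components are used pervasively to license reordering and to apply congruence underneath evaluation contexts; without the decomposition and factorization theorems of \secref{sec:ep-pairs}, this case would require a direct and delicate step-indexed analysis of the general cast reductions.
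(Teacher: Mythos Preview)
Your proposal is correct and follows essentially the same approach as the paper: reduce to the logical formulation via \cref{lem:typed-gradual-approx} and \cref{thm:log-to-obs}, then prove logical term dynamism by induction on the syntactic derivation, handling the congruence cases with the algebraic lemmas of \secref{sec:lemmas} and the cast case via factorization through $\sdynty$ and the ep-pair properties.

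The one place your presentation differs is the organization of the cast case. The paper first abstracts four ``atomic'' rules (\textsc{Cast-Left} and \textsc{Cast-Right}, each in an upcast and a downcast variant) and proves them once from decomposition and the ep-pair axioms (\cref{lem:up-down-cases}); the cast congruence is then a four-step chain of these rules sandwiched between two uses of \cref{lem:up-down-factorization}. You instead unfold this argument inline: factor both casts through $\sdynty$, decompose the resulting embeddings and projections to expose $\mE_{e,\sBone,\sBtwo}\circ\mE_{p,\sBone,\sBtwo}$, collapse it with the projection inequality, and recognize what remains as the factorized right-hand cast applied to the embedded inductive hypothesis. These are the same manipulations in a different order; your version is a direct calculation, the paper's is a reusable rule kit. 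Either works.
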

\begin{proof}
  By \cref{lem:log-to-obs:graduality,lem:adequacy,lem:fund-lemma,thm:logical-graduality}.
\end{proof}

Next, we present our logical relations method for proving graduality.
First, to prove an approximation result for terms in $\glangname$, we
will prove approximation for their translations in $\tlangname$,
justified by our adequacy theorem.
Second, to prove observational approximation, we will use our logical
relation, justified by our soundness theorem.
For that we use the following ``logical'' formulation of term dynamism.

\begin{definition}[Logical Term Dynamism]
For any $\sem\senvone\vdash\mtone:\sem\sAone$ and
$\sem\senvtwo\vdash \mttwo : \sem\sAtwo$
with $\senvone\ltdyn \senvtwo$ and $\sAone \ltdyn \sAtwo$, we define
$\logtmdynr\senvone\senvtwo\mtone\mttwo\sAone\sAtwo$ as
\[
\tmlogapprox{\sem\senvone}{\theemb\sAone\sAtwo{\mtone}}{\letembnovert{\mttwo}}{\sem\sAtwo}
\]
where the right hand side is defined analogous to the environment cast
$\obcast\senvone\senvtwo$.
\end{definition}

\begin{lemma}[Logical Term Dynamism implies Observational Term Dynamism]
\label{lem:log-to-obs:graduality}
For any $\senvone\vdash\stone:\sAone$ and $\senvtwo\vdash \sttwo :
\sAtwo$ with $\senvone\ltdyn \senvtwo$ and $\sAone \ltdyn \sAtwo$, if
$\logtmdynr{\senvone}{\senvtwo}{\sem\stone}{\sem\sttwo}{\sAone}{\sAtwo}$
then
$\obstmdynr{\senvone}{\senvtwo}{\stone}{\sttwo}{\sAone}{\sAtwo}$.
\end{lemma}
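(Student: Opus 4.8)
The plan is to convert the logical term-dynamism hypothesis into the desired observational statement in three moves: (i) replace the ep-pair embeddings appearing in logical term dynamism with the corresponding cast translations, (ii) pass from logical to observational approximation in the typed language, and (iii) transport the result back to the gradual language via adequacy. The crucial observation is that, once both statements are unfolded, their \emph{only} difference is that logical term dynamism is phrased using the embeddings $\mE_{e,\sAone,\sAtwo}$ (and pointwise environment embeddings $\mE_{e,\senvone,\senvtwo}$), whereas observational term dynamism is phrased using the direct cast translations $\mE_{\obcast\sAone\sAtwo}$ (and $\mE_{\obcast\senvone\senvtwo}$). Because $\senvone \ltdyn \senvtwo$ and $\sAone \ltdyn \sAtwo$, \cref{lem:ud-are-casts} tells us that each embedding coincides, up to logical equivalence, with the corresponding upcast.

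Concretely, first I would unfold the hypothesis $\logtmdynr{\senvone}{\senvtwo}{\sem\stone}{\sem\sttwo}{\sAone}{\sAtwo}$ to the logical approximation $\sem\senvone \vDash \mE_{e,\sAone,\sAtwo}\hw{\sem\stone} \logltdyn \letembnovert{\sem\sttwo} : \sem\sAtwo$, whose right-hand side binds each variable of $\senvtwo$ to the embedding of the corresponding variable of $\senvone$. Applying \cref{lem:ud-are-casts} to the outer cast $\sAone \ltdyn \sAtwo$ and, pointwise, to each environment cast $\sAin{1,k} \ltdyn \sAin{2,k}$ yields $\mE_{\obcast\sAone\sAtwo} \equidyn \mE_{e,\sAone,\sAtwo}$ and $\mE_{\obcast{\sAin{1,k}}{\sAin{2,k}}} \equidyn \mE_{e,\sAin{1,k},\sAin{2,k}}$. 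By congruence (\cref{lem:cong}) and transitivity of the logical relation (\cref{lem:trans}), I can rewrite both sides of the approximation, using compositionality of the translation (the translation of a cast is its cast context applied to the translation of the argument, and likewise for the let-bound environment casts), to obtain $\sem\senvone \vDash \sem{\obcast\sAone\sAtwo\stone} \logltdyn \sem{\stlet{\senvtwo}{\obcast\senvone\senvtwo\senvone}{\sttwo}} : \sem\sAtwo$.

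From here the remaining steps are routine. Soundness of the logical relation (\cref{thm:log-to-obs}) upgrades this to an observational approximation in $\tlangname$, and then \cref{lem:typed-gradual-approx}, the adequacy-based transfer principle, yields the corresponding observational approximation in $\glangname$, namely $\senvone \vDash \obcast\sAone\sAtwo\stone \obsapprox \stlet{\senvtwo}{\obcast\senvone\senvtwo\senvone}{\sttwo} : \sAtwo$. Unfolding the definition of observational term dynamism, this is precisely $\obstmdynr{\senvone}{\senvtwo}{\stone}{\sttwo}{\sAone}{\sAtwo}$, completing the argument.

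I expect no serious obstacle: all the hard work has been front-loaded into the ep-pair machinery behind \cref{lem:ud-are-casts} (decomposition and functoriality) and into the soundness and adequacy theorems, so this lemma is essentially plumbing. The only point demanding mild care is the pointwise identification of the environment embeddings with the environment casts and threading it through congruence and transitivity; in particular one must manipulate the right-hand side let-bindings with the correct variables and types so that the $\mtlet$ congruence rule applies at each binding. This is entirely mechanical, since $\mE_{e,\senvone,\senvtwo}$ is defined exactly analogously to the environment cast $\obcast\senvone\senvtwo$.
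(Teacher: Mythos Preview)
Your proposal is correct and follows the same core approach as the paper, which proves the lemma simply ``By \cref{thm:log-to-obs,lem:typed-gradual-approx}.'' You make explicit a step the paper leaves implicit: the logical formulation uses the ep-pair embeddings $\mE_{e,\sAone,\sAtwo}$ while the observational formulation (after translation) uses the direct cast contexts $\mE_{\obcast\sAone\sAtwo}$, and these are only \emph{logically equivalent} (by \cref{lem:ud-are-casts}), not syntactically identical. Your use of \cref{lem:ud-are-casts} together with congruence and transitivity to bridge this gap is exactly what is needed to make the paper's two-citation proof go through; the paper evidently regards this identification as routine plumbing and suppresses it.
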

\begin{proof}
  By \cref{thm:log-to-obs,lem:typed-gradual-approx}.
\end{proof}

Now that we are in the realm of logical approximation, we have all the
lemmas of \secref{sec:lemmas} at our disposal, and we now start
putting them to work.
First, as mentioned before, we show that at least with logical term
dynamism, the use of upcasts was arbitrary; we could have used
downcasts instead.
The property we need is that the upcast and downcast are
\emph{adjoint} (in the language of category theory), also known as a
\emph{Galois connection}, which is a basic consequence of the
definition of ep pair:
\begin{lemma}[EP Pairs are Adjoint]
\label{lem:ep-adjoint}
  For any ep pair $(\mEin{e},\mEin{p}) : \mAone \eppair \mAtwo$, and terms
  $\menv \vdash \mtone : \mAone, ,\menv \vdash \mttwo : \mAtwo$,
  \begin{mathpar}
    \menv \vDash \mEin{e}\hw{\mtone} \logltdyn \mttwo : \mAtwo \quad\iff\quad
    \menv \vDash \mtone \logltdyn \mEin{p}\hw{\mttwo} : \mAone
  \end{mathpar}
\end{lemma}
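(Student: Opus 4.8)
The plan is to prove the two implications of the biconditional separately, each as a short chain consisting of one congruence step, one ep-pair law, and one appeal to transitivity (\cref{lem:trans}). This is the standard ``unit/counit'' derivation of a Galois connection: retraction plays the role of the unit, $\mtone \equidyn \mEin{p}\hw{\mEin{e}\hw{\mtone}}$, and projection plays the role of the counit, $\mEin{e}\hw{\mEin{p}\hw{\mttwo}} \logltdyn \mttwo$.

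For the forward direction, I would assume $\menv \vDash \mEin{e}\hw{\mtone} \logltdyn \mttwo : \mAtwo$. Applying congruence (\cref{lem:cong}) along the evaluation context $\mEin{p}$, i.e.\ monotonicity of $\mEin{p}$ under logical approximation, yields $\menv \vDash \mEin{p}\hw{\mEin{e}\hw{\mtone}} \logltdyn \mEin{p}\hw{\mttwo} : \mAone$. The retraction law of the ep pair gives $\menv \vDash \mtone \equidyn \mEin{p}\hw{\mEin{e}\hw{\mtone}} : \mAone$, hence in particular $\mtone \logltdyn \mEin{p}\hw{\mEin{e}\hw{\mtone}}$; composing the two facts by transitivity produces $\menv \vDash \mtone \logltdyn \mEin{p}\hw{\mttwo} : \mAone$, as required. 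The backward direction is exactly dual: assuming $\menv \vDash \mtone \logltdyn \mEin{p}\hw{\mttwo} : \mAone$, congruence on the embedding context $\mEin{e}$ gives $\menv \vDash \mEin{e}\hw{\mtone} \logltdyn \mEin{e}\hw{\mEin{p}\hw{\mttwo}} : \mAtwo$, the projection law gives $\menv \vDash \mEin{e}\hw{\mEin{p}\hw{\mttwo}} \logltdyn \mttwo : \mAtwo$, and transitivity closes the chain to $\menv \vDash \mEin{e}\hw{\mtone} \logltdyn \mttwo : \mAtwo$.

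The one point requiring care, and the main (minor) obstacle, is that the retraction and projection laws are stated as single-variable (in)equivalences, with free variable $\mx : \mAone$ and $\my : \mAtwo$ respectively, whereas the argument instantiates them at the arbitrary open terms $\mtone$ and $\mttwo$. This instantiation is justified by compositionality of the logical relation: the fundamental lemma (\cref{lem:fund-lemma}) supplies $\menv \vDash \mtone \logltdyn \mtone$ and $\menv \vDash \mttwo \logltdyn \mttwo$, and congruence (\cref{lem:cong}) then lets me substitute these reflexively related terms for $\mx$ and $\my$ in the ep-pair laws while preserving the (in)equivalence. Equivalently, one can route through \cref{lem:evctx-linear}, rewriting $\mtlet{\mx}{\mtone}{\mEin{p}\hw{\mEin{e}\hw{\mx}}}$ as $\mEin{p}\hw{\mEin{e}\hw{\mtone}}$ (and likewise handling the trivial context) so that the variable-level retraction law transfers to $\mtone$.

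Finally, I would remark that purity of embeddings (\cref{lem:emb-pure}) and termination of projections (\cref{prj-term}) are \emph{not} needed for this lemma: adjointness is a purely order-theoretic consequence of the two defining ep-pair laws together with reflexivity, congruence, and transitivity of logical approximation.
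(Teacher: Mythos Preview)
Your proof is correct and essentially identical to the paper's: both directions proceed by one congruence step on the assumption (wrapping in $\mEin{p}$ or $\mEin{e}$), one invocation of the relevant ep-pair law, and one use of transitivity. The paper presents this as a pair of inference trees and only remarks that the two directions are dual; your added paragraph on instantiating the variable-level ep-pair laws at open terms via reflexivity and congruence (or via \cref{lem:evctx-linear}) makes explicit a step the paper glosses over, and your observation that purity and termination are unnecessary here is correct.
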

\begin{proof}
  The two proofs are dual \ifshort so we show just the $\Rightarrow$
  implication.
  By the retraction property $\mtone \logltdyn
  \mEin{p}\hw{\mEin{e}{\mtone}}$, so by transitivity it is sufficient
  to show $\mEin{p}\hw{\mEin{e}{\mtone}} \logltdyn
  \mEin{p}\hw{\mttwo}$, which follows by congruence and the assumption\fi.

  \iflong
  \begin{mathpar}
    \inferrule*[right=Transitivity]
    {\inferrule*[right=EP Pair]
      {~}
      {\menv \vDash \mtone \logltdyn \mEin{p}\hw{\mEin{e}\hw{\mtone}} : \mAone}\\
      \inferrule*[right=Congruence]
      {\inferrule*[right=Assumption]{~}{\menv \vDash \mEin{e}\hw{\mtone} \logltdyn \mttwo : \mAtwo}}
      {\menv \vDash \mEin{p}\hw{\mEin{e}\hw{\mtone}}\logltdyn \mEin{p}\hw{\mttwo} : \mAone}
    }
    {\menv \vDash \mtone \logltdyn \mEin{p}\hw{\mttwo} : \mAone}
    
    \iflong
    \inferrule*[right=Transitivity]
    {\inferrule*[right=Congruence]
      {\inferrule*[right=Assumption]{~}{\menv \vDash \mEin{e}\hw\mtone \logltdyn \mEin{p}\hw{\mttwo} : \mAone}}
      {\menv \vDash \mtone \logltdyn \mEin{e}\hw{\mEin{p}\hw{\mttwo}} : \mAone}\\
     \inferrule*[right=EP Pair]
      {~}
      {\menv \vDash \mEin{e}\hw{\mEin{p}\hw{\mttwo}} \logltdyn \mttwo}
    }
    {\menv \vDash \mEin{e}\hw\mtone \logltdyn \mttwo : \mAtwo}
    \fi
  \end{mathpar}
  \fi
\end{proof}

\begin{lemma}[Adjointness on Inputs]
\label{lem:adj-inp}
If $\menv,\mxone:\mAone \vdash \mtone : \mB$ and $\menv,\mxtwo:\mAtwo
\vdash \mttwo : \mB$, and $\mEin{e},\mEin{p} : \mAone \eppair \mAtwo$,
then
\[ \tmlogapprox{\menv,\mxone:\mAone}{\mtone}{\mtlet{\mxtwo}{\mEin{e}\hw{\mxone}}{\mttwo}}{\mB} \quad \iff\quad  \tmlogapprox{\menv,\mxtwo:\mAtwo}{\mtlet{\mxone}{\mEin{p}\hw{\mxtwo}}{\mtone}}{\mttwo}{\mB} \]
\end{lemma}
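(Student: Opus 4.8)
The plan is to prove the two implications of the biconditional separately. The statement is the input-side---hence contravariant---analogue of the output-side adjointness \cref{lem:ep-adjoint}, and like that proof it is driven by the retraction and projection laws of the ep pair $(\mEin{e},\mEin{p})$ together with congruence (\cref{lem:cong}), transitivity (\cref{lem:trans}), linearity of evaluation contexts (\cref{lem:evctx-linear}), open $\beta$ (\cref{lem:open-beta}), and reflexivity (\cref{lem:fund-lemma}). Abbreviate the given judgements as (A) $\tmlogapprox{\menv,\mxone:\mAone}{\mtone}{\mtlet{\mxtwo}{\mEin{e}\hw{\mxone}}{\mttwo}}{\mB}$ and (B) $\tmlogapprox{\menv,\mxtwo:\mAtwo}{\mtlet{\mxone}{\mEin{p}\hw{\mxtwo}}{\mtone}}{\mttwo}{\mB}$. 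In each direction I will take the hypothesis, wrap both of its sides in a single let-binding using congruence, and then collapse the resulting two nested lets onto a composite $\mEin{e}\hw{\mEin{p}\hw{\cdot}}$ or $\mEin{p}\hw{\mEin{e}\hw{\cdot}}$ so that an ep-pair law applies.

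For (A) $\Rightarrow$ (B) I first $\alpha$-rename the inner binder of (A)'s right-hand side (whose binder is $\mxtwo$) to a fresh $\my$, so that its right-hand side reads $\mtlet{\my}{\mEin{e}\hw{\mxone}}{\mttwo[\my/\mxtwo]}$, and then wrap (A) with $\mtlet{\mxone}{\mEin{p}\hw{\mxtwo}}{[\cdot]}$, binding the free $\mxone$. By the let case of \cref{lem:cong}, using \cref{lem:fund-lemma} on the bound term $\mEin{p}\hw{\mxtwo}$ and weakening (A) to $\menv,\mxtwo:\mAtwo,\mxone:\mAone$, I obtain
\[ \menv,\mxtwo:\mAtwo \vDash \mtlet{\mxone}{\mEin{p}\hw{\mxtwo}}{\mtone} \logltdyn \mtlet{\mxone}{\mEin{p}\hw{\mxtwo}}{\mtlet{\my}{\mEin{e}\hw{\mxone}}{\mttwo[\my/\mxtwo]}} : \mB. \]
The crucial move is to read the right-hand side as $\mathcal D\hw{\mxone}$ sitting under $\mtlet{\mxone}{\mEin{p}\hw{\mxtwo}}{[\cdot]}$, where $\mathcal D = \mtlet{\my}{\mEin{e}\hw{[\cdot]}}{\mttwo[\my/\mxtwo]}$ is itself an evaluation context; linearity (\cref{lem:evctx-linear}) then fuses the two lets into $\mtlet{\my}{\mEin{e}\hw{\mEin{p}\hw{\mxtwo}}}{\mttwo[\my/\mxtwo]}$. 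Applying the projection law $\mEin{e}\hw{\mEin{p}\hw{\mxtwo}} \logltdyn \mxtwo$ under congruence, then the open $\beta$ step $\mtlet{\my}{\mxtwo}{\mttwo[\my/\mxtwo]} \equidyn \mttwo$, rewrites this to $\mttwo$, and transitivity assembles the chain into (B).

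The converse (B) $\Rightarrow$ (A) is dual. Here I $\alpha$-rename (B)'s inner binder $\mxone$ to a fresh $\my$ (its left-hand side becomes $\mtlet{\my}{\mEin{p}\hw{\mxtwo}}{\mtone[\my/\mxone]}$) and wrap with $\mtlet{\mxtwo}{\mEin{e}\hw{\mxone}}{[\cdot]}$, binding the free $\mxtwo$ and introducing the fresh free $\mxone$. Congruence, reflexivity and weakening give
\[ \menv,\mxone:\mAone \vDash \mtlet{\mxtwo}{\mEin{e}\hw{\mxone}}{\mtlet{\my}{\mEin{p}\hw{\mxtwo}}{\mtone[\my/\mxone]}} \logltdyn \mtlet{\mxtwo}{\mEin{e}\hw{\mxone}}{\mttwo} : \mB, \]
whose right-hand side is precisely the right-hand side of (A). Linearity now fuses the left-hand side into $\mtlet{\my}{\mEin{p}\hw{\mEin{e}\hw{\mxone}}}{\mtone[\my/\mxone]}$, and the retraction \emph{equivalence} $\mEin{p}\hw{\mEin{e}\hw{\mxone}} \equidyn \mxone$ with congruence and open $\beta$ collapses it to $\mtone$. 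Thus $\mtone$ is equivalent to the left-hand side above and approximates the right-hand side, so transitivity yields (A).

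I expect the only real obstacle to be the bookkeeping that exposes the ep-pair laws: the $\alpha$-renaming of inner binders needed to prevent the newly introduced input variable from being captured, and the recognition that a let whose bound subterm is $\mEin{e}\hw{[\cdot]}$ (resp. $\mEin{p}\hw{[\cdot]}$) is an evaluation context, so that \cref{lem:evctx-linear} merges the nested lets into the composite on which retraction/projection can be invoked. Once the lets are fused, the rest is forced. It is worth remarking that the directions line up cleanly: the forward implication consumes only the projection \emph{inequality} and therefore remains an approximation, whereas the backward implication uses the retraction \emph{equivalence}; and because \cref{lem:evctx-linear} holds for arbitrary terms, this argument needs neither purity (\cref{lem:emb-pure}) nor termination (\cref{prj-term}) of the casts.
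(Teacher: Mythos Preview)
Your proposal is correct and matches the paper's approach: the paper's own proof simply says ``By a similar argument to \cref{lem:ep-adjoint}'', and you have faithfully spelled out that argument---wrapping the hypothesis in a let with the opposite cast via congruence and reflexivity, fusing the nested lets via \cref{lem:evctx-linear}, and then discharging the composite with the projection (for $\Rightarrow$) or retraction (for $\Leftarrow$) law, exactly as in the output-side case.
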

\begin{proof}
  By a similar argument to \cref{lem:ep-adjoint}
\end{proof}

\begin{lemma}[Alternative Formulations of Logical Term Dynamism]
\label{lem:alternative}
  The following are equivalent
  \begin{enumerate}
  \item $\tmlogapprox{\sem\senvone}{\theemb\sAone\sAtwo{\mtone}}{\letembnovert{\mttwo}}{\sem\sAtwo}$
  \item $\tmlogapprox{\sem\senvone}{\mtone}{\letembnovert{\theprj\sAone\sAtwo{\mttwo}}}{\sem\sAtwo}$
  \item $\tmlogapprox{\sem\senvone}{\letprjnovert{\mtone}}{{\theprj\sAone\sAtwo{\mttwo}}}{\sem\sAtwo}$
  \item $\tmlogapprox{\sem\senvone}{\theemb\sAone\sAtwo{\letprjnovert{\mtone}}}{{\mttwo}}{\sem\sAtwo}$
  \end{enumerate}
\end{lemma}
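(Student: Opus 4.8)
The plan is to prove the chain $(1)\Leftrightarrow(2)\Leftrightarrow(3)\Leftrightarrow(4)$, where each link is a single application of one of the adjointness lemmas (\cref{lem:ep-adjoint} for the \emph{output} ep pair $\theemb\sAone\sAtwo,\theprj\sAone\sAtwo : \sem\sAone \eppair \sem\sAtwo$, and \cref{lem:adj-inp} for the \emph{environment} ep pair $\mE_{e,\senvone,\senvtwo},\mE_{p,\senvone,\senvtwo}$ hidden inside $\letembnovert{}$ and $\letprjnovert{}$), sometimes preceded by a commuting conversion (\cref{lem:comm-conv}) to slide an evaluation context past the environment let-bindings, with transitivity (\cref{lem:trans}) used to glue the rearrangement to the adjoint form. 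Since all four statements are connected by \emph{equivalences}, the chain suffices.

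For $(1)\Leftrightarrow(2)$ I would apply output adjointness (\cref{lem:ep-adjoint}) to the ep pair on $\sem\sAone \eppair \sem\sAtwo$, which turns the output embedding $\theemb\sAone\sAtwo$ on the left into the projection $\theprj\sAone\sAtwo$ wrapped around the \emph{entire} right-hand side, yielding $\mtone \logltdyn \theprj\sAone\sAtwo{\letembnovert{\mttwo}}$. Because $\theprj\sAone\sAtwo$ is an evaluation context and the environment let binds only the fresh variables of $\sem\senvtwo$ (which the projection context does not mention), a commuting conversion (\cref{lem:comm-conv}) gives $\theprj\sAone\sAtwo{\letembnovert{\mttwo}} \equidyn \letembnovert{\theprj\sAone\sAtwo{\mttwo}}$; congruence and \cref{lem:trans} then deliver $(2)$. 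For $(2)\Leftrightarrow(3)$ I would apply input adjointness (\cref{lem:adj-inp}) to move the environment embedding on the right over to an environment projection on the left, carrying $\mtone \logltdyn \letembnovert{\theprj\sAone\sAtwo{\mttwo}}$ to $\letprjnovert{\mtone} \logltdyn \theprj\sAone\sAtwo{\mttwo}$, which is exactly $(3)$; note this is the step that shifts the ambient environment from $\sem\senvone$ to $\sem\senvtwo$. Finally, for $(3)\Leftrightarrow(4)$ I would invoke output adjointness (\cref{lem:ep-adjoint}) once more, now trading the output projection on the right for an output embedding on the left, so that $\letprjnovert{\mtone} \logltdyn \theprj\sAone\sAtwo{\mttwo}$ becomes $\theemb\sAone\sAtwo{\letprjnovert{\mtone}} \logltdyn \mttwo$, i.e.\ $(4)$.

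The main obstacle is that the environment casts $\letembnovert{}$ and $\letprjnovert{}$ are \emph{composites}: nested sequences of let-bindings assembled pointwise from the per-variable ep pairs comprising $\mE_{e,\senvone,\senvtwo}$ and $\mE_{p,\senvone,\senvtwo}$, whereas \cref{lem:adj-inp} and \cref{lem:comm-conv} each handle a single binding. Consequently the environment-touching steps ($(1)\!\to\!(2)$'s commuting conversions and the whole of $(2)\!\leftrightarrow\!(3)$) must be iterated over the variables of $\senvone \ltdyn \senvtwo$, and the argument relies on the environment ep pair factoring as the pointwise product of its components so that the single-variable adjointness and commutations compose cleanly.

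Establishing this multi-variable form of input adjointness — equivalently, that $\letembnovert{}$ and $\letprjnovert{}$ constitute an ep pair on the product environment and are adjoint — is the one genuinely non-routine ingredient; it follows by induction on the length of $\senvone$, peeling off one variable at a time and appealing to \cref{lem:adj-inp}, together with \cref{lem:term-comm} to reorder the (terminating) embedding/projection bindings so that the outermost variable can be isolated. Once this is in hand, the three links above are mechanical, and the four statements are equivalent by transitivity of $\Leftrightarrow$.
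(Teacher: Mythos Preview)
Your proposal is correct and follows essentially the same approach as the paper, whose proof is the one-liner ``By induction on $\senvone$, using \cref{lem:ep-adjoint} and \cref{lem:adj-inp}.'' You have simply unpacked what that induction entails: the output-side steps $(1)\Leftrightarrow(2)$ and $(3)\Leftrightarrow(4)$ via \cref{lem:ep-adjoint}, the input-side step $(2)\Leftrightarrow(3)$ via iterated \cref{lem:adj-inp}, and the auxiliary commuting conversions needed to slide evaluation contexts past the environment let-bindings. Your invocation of \cref{lem:term-comm} is likely unnecessary---the nested lets in $\letembnovert{}$ and $\letprjnovert{}$ are already in a fixed order matching $\senvone$, so peeling off the outermost variable and applying \cref{lem:adj-inp} directly suffices without any reordering---but it does no harm.
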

\begin{proof}
  By induction on $\senvone$, using \cref{lem:ep-adjoint} and \lemref{lem:adj-inp}
\end{proof}

Finally, to prove the graduality theorem, we do an induction over all
the cases of syntactic term dynamism.
Most important is the cast case $\obcast\sAone\sBone\stone \ltdyn
\obcast\sAtwo\sBtwo\sttwo$ which is valid when $\sAone \ltdyn \sAtwo$
and $\sBone\ltdyn\sBtwo$.
We break up the proof into 4 atomic steps using the factorization of
general casts into an upcast followed by a downcast
(\cref{lem:up-down-factorization}): $\mE_{\obcast\sAone\sAtwo}
\equidyn \mectxt_{p,\sAtwo,\sdynty}\hw{\mE_{e,\sAone,\sdynty}}$.
The four steps are upcast on the left, downcast on the left, upcast on
the right, and downcast on the right.
These are presented as rules for logical dynamism in
\figref{fig:term-dynamism-casts}.
Each of the inference rules accounts for two cases.
The \textsc{Cast-Right} rule says first that if $\mtone \logltdyn \mttwo :
\sAone \ltdyn \sAtwo$ that it is OK to cast $\mttwo$ to $\sBtwo$, as
long as $\sBtwo$ is more dynamic than $\sAone$, and the cast is either
an upcast or downcast.
Here, our explicit inclusion of $\sAone \ltdyn \sBtwo$ in the syntax
of the term dynamism judgment should help: the rule says that adding
an upcast or downcast to $\mttwo$ results in a more dynamic term than
$\mtone$, \emph{whenever it is even sensible to ask}: i.e., if it were
not the case that $\sAone \ltdyn \sBtwo$, the judgment would not be
well-formed, so the judgment holds whenever it makes sense!
The \textsc{Cast-Left} rule is dual.

\begin{figure}
  \begin{mathpar}
    \inferrule*[right=Cast-Right]
    {\logtmdynr{\senvone}{\senvtwo}{\mtone}{\mttwo}{\sAone}{\sAtwo}
      \and \sAone \dynr \sBtwo\and
      (\sAtwo \ltdyn \sBtwo \vee \sBtwo \ltdyn \sAtwo)
    }
    {\logtmdynr{\senvone}{\senvtwo}{\mtone}{\mE_{\obcast\sAtwo\sBtwo}\hw{\mttwo}}{\sAone}{\sBtwo}}

    \inferrule*[right=Cast-Left]
    {\logtmdynr{\senvone}{\senvtwo}{\mtone}{\mttwo}{\sAone}{\sAtwo}
      \and \sBone \ltdyn \sAtwo
      \and (\sAone \ltdyn \sBone \vee \sBone \ltdyn \sAone)}
    {\logtmdynr{\senvone}{\senvtwo}{\mE_{\obcast{\sAone}{\sBone}}\hw{\mtone}}{\mttwo}{\sBone}{\sAtwo}}

  \end{mathpar}
  \caption{Term Dynamism Upcast, Downcast Rules}
  \label{fig:term-dynamism-casts}
\end{figure}

These four rules, combined with our factorization of casts into upcast
followed by downcast suffice to prove the congruence rule for casts
(we suppress the context $\senvone \ltdyn \senvtwo \vDash$, which is the
same in each line):
\begin{mathpar}
  \inferrule*[right=\cref{lem:up-down-factorization}]
  {
    \inferrule*[right=Cast-Right]{
      \inferrule*[right=Cast-Left]{
        \inferrule*[right=Cast-Left]{
          \inferrule*[right=Cast-Right]{
            {{{\sem\stone}} \ltdyn {{\sem\sttwo}} : \sAone \ltdyn \sAtwo}
          }
          {{{\sem\stone}} \ltdyn {\mE_{e,\sAtwo,\sdynty}\hw{\sem\sttwo}} : \sAone \ltdyn \sdynty}
        }
        {{{\mE_{e,\sAone,\sdynty}\hw{\sem\stone}} \ltdyn {\mE_{e,\sAtwo,\sdynty}\hw{\sem\sttwo}}} : \sdynty \ltdyn \sdynty}
      }
      {{\mE_{p,\sBone,\sdynty}\hw{\mE_{e,\sAone,\sdynty}\hw{\sem\stone}} \ltdyn {\mE_{e,\sAtwo,\sdynty}\hw{\sem\sttwo}}} : \sBone \ltdyn \sdynty}
    }
    {\mE_{p,\sBone,\sdynty}\hw{\mE_{e,\sAone,\sdynty}\hw{\sem\stone}} \ltdyn \mE_{p,\sBtwo,\sdynty}\hw{\mE_{e,\sAtwo,\sdynty}\hw{\sem\sttwo}} : \sBone \ltdyn \sBtwo}
  }
  {{\sem{\obcast\sAone\sBone\stone}} \ltdyn {\sem{\obcast{\sAtwo}{\sBtwo}\sttwo}} : {\sBone} \ltdyn {\sBtwo}}
\end{mathpar}

Next, we show the 4 rules are valid, as simple consequences of
the ep pair property and the decomposition theorem.
Also note that while there are technically 4 cases, each comes in a
pair where the proofs are exactly dual, so conceptually speaking there
are only 2 arguments.
\begin{lemma}[Upcast, Downcast Dynamism]
\label{lem:up-down-cases}
  The four rules in \figref{fig:term-dynamism-casts} are valid.
\end{lemma}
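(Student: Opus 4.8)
The plan is to prove each of the two rules \textsc{Cast-Right} and \textsc{Cast-Left} by splitting on the disjunction in its hypotheses, giving the upcast and downcast subcases. In every case I first unfold logical term dynamism to its definition, so the hypothesis becomes $\mE_{e,\sAone,\sAtwo}\hw{\mtone} \logltdyn \letembnovert{\mttwo}$ (recall that the judgment also carries the standing assumption $\sAone \ltdyn \sAtwo$, which I will use as the intermediate point for decomposition), and I rewrite the ``general'' cast $\mE_{\obcast{\sAtwo}{\sBtwo}}$ appearing in the goal into its embedding or projection form using \cref{lem:ud-are-casts}. The two workhorses are \cref{lem:decomposition}, to factor an embedding or projection through an intermediate type, and congruence of logical approximation (\cref{lem:cong}, with reflexivity \cref{lem:fund-lemma}), together with the commuting conversions (\cref{lem:comm-conv}) needed to slide a cast context in and out of the environment-embedding binders $\letembnovert{\cdot}$; transitivity (\cref{lem:trans}) stitches the equational rewrites onto the approximation coming from the hypothesis.

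For the upcast subcase of \textsc{Cast-Right} (where $\sAtwo \ltdyn \sBtwo$), the ordering $\sAone \ltdyn \sAtwo \ltdyn \sBtwo$ lets decomposition write $\mE_{e,\sAone,\sBtwo} \equidyn \mE_{e,\sAtwo,\sBtwo}\hw{\mE_{e,\sAone,\sAtwo}}$, so applying $\mE_{e,\sAtwo,\sBtwo}\hw{\cdot}$ to both sides of the hypothesis by congruence and then pushing it inside $\letembnovert{\cdot}$ by a commuting conversion yields the goal (after \cref{lem:ud-are-casts} renames $\mE_{e,\sAtwo,\sBtwo}$ as $\mE_{\obcast{\sAtwo}{\sBtwo}}$). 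The upcast subcase of \textsc{Cast-Left} (where $\sAone \ltdyn \sBone$) is purely equational: with $\sAone \ltdyn \sBone \ltdyn \sAtwo$, decomposition gives $\mE_{e,\sBone,\sAtwo}\hw{\mE_{e,\sAone,\sBone}\hw{\mtone}} \equidyn \mE_{e,\sAone,\sAtwo}\hw{\mtone}$, which is exactly the left-hand side of the hypothesis, so the goal follows by transitivity with no appeal to the direction of the relation.

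The downcast subcases are where the one-directional nature of $\logltdyn$ is used. In \textsc{Cast-Right} with $\sBtwo \ltdyn \sAtwo$, decomposition through $\sAone \ltdyn \sBtwo \ltdyn \sAtwo$ rewrites the hypothesis as $\mE_{e,\sBtwo,\sAtwo}\hw{\mE_{e,\sAone,\sBtwo}\hw{\mtone}} \logltdyn \letembnovert{\mttwo}$, and the Galois connection of the ep pair $(\mE_{e,\sBtwo,\sAtwo},\mE_{p,\sBtwo,\sAtwo}) : \sem\sBtwo \eppair \sem\sAtwo$ from \cref{lem:ep-adjoint} transposes this to $\mE_{e,\sAone,\sBtwo}\hw{\mtone} \logltdyn \mE_{p,\sBtwo,\sAtwo}\hw{\letembnovert{\mttwo}}$; a commuting conversion and \cref{lem:ud-are-casts} then deliver the goal. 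The dual subcase, \textsc{Cast-Left} with $\sBone \ltdyn \sAone$, uses decomposition through $\sBone \ltdyn \sAone \ltdyn \sAtwo$ to present the goal's left-hand side as $\mE_{e,\sAone,\sAtwo}\hw{\mE_{e,\sBone,\sAone}\hw{\mE_{p,\sBone,\sAone}\hw{\mtone}}}$; the projection property of $(\mE_{e,\sBone,\sAone},\mE_{p,\sBone,\sAone})$ gives $\mE_{e,\sBone,\sAone}\hw{\mE_{p,\sBone,\sAone}\hw{\mtone}} \logltdyn \mtone$, which congruence lifts through $\mE_{e,\sAone,\sAtwo}\hw{\cdot}$ and transitivity composes with the hypothesis.

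I expect the genuine mathematics to be light and the only friction to be bookkeeping: orienting each of the three-way type orderings so that decomposition factors through precisely the intermediate type at which the relevant ep pair (hence adjunction or projection) is stated, and moving cast contexts across the environment-embedding let-bindings without accidentally flipping the direction of approximation. Because each rule reduces to one use of decomposition plus one use of either congruence, adjointness, or the projection property, and because the \textsc{Cast-Right}/\textsc{Cast-Left} pair and the upcast/downcast pair are mirror images, in practice I would write out only the two representative arguments above and invoke symmetry for the remaining two.
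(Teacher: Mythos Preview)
Your proposal is correct, but it takes a somewhat different route from the paper. The paper begins each of the four cases by invoking \cref{lem:alternative} to pick whichever of the four equivalent formulations of logical term dynamism makes that case easiest; having done so, every case reduces uniformly to one application of decomposition (\cref{lem:decomposition}) followed by congruence and transitivity, with no direct appeal to adjointness or the projection property. You instead stay in the default formulation throughout (embedding on the output of $\mtone$, environment-embedding on the inputs of $\mttwo$) and compensate in the two downcast subcases by invoking the ep-pair machinery directly: adjointness (\cref{lem:ep-adjoint}) for \textsc{Cast-Right} with $\sBtwo \ltdyn \sAtwo$, and the projection inequality for \textsc{Cast-Left} with $\sBone \ltdyn \sAone$. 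Since \cref{lem:alternative} is itself proved from adjointness, the two approaches have the same underlying content; the paper's is a bit more uniform (all four cases look alike once the formulation is chosen), while yours makes the role of the ep-pair properties more explicit and avoids the extra indirection through \cref{lem:alternative}.
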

\begin{proof}
  In each case we choose which case of \cref{lem:alternative} is
  simplest.
  \begin{enumerate}
  \item \textsc{Cast-Left} with $\sAone \ltdyn \sBone \ltdyn
    \sAtwo$. We need to show
    $\mE_{e,\sBone,\sAtwo}\hw{\mE_{e,\sAone,\sBone}\hw{\mtone}} \ltdyn
    \letemb{\mttwo}$. By decomposition and congruence,
    $\mE_{e,\sBone,\sAtwo}\hw{\mE_{e,\sAone,\sBone}\hw{\mtone}}
    \equidyn \mE_{e,\sAone,\sAtwo}$ so the conclusion holds by
    transitivity and the premise.

  \item \textsc{Cast-Right} with $\sAone \ltdyn \sBtwo \ltdyn
    \sAtwo$. We need to show $\letprj{\mtone} \ltdyn
    \mE_{p,\sAone,\sBtwo}\hw{\mE_{p,\sBtwo,\sAtwo}\hw{\mttwo}}$.  By
    decomposition and congruence,
    $\mE_{p,\sAone,\sBtwo}\hw{\mE_{p,\sBtwo,\sAtwo}\hw{\mttwo}}
    \equidyn \mE_{p,\sAone,\sAtwo}\hw{\mttwo}$, so the conclusion
    holds by transitivity and the premise.
    
  \item \textsc{Cast-Left} with $\sBone \ltdyn \sAone \ltdyn
    \sAtwo$. We need to show $\mE_{p,\sBone,\sAone}\hw{\letprj\mtone}
    \ltdyn \mE_{p,\sBone,\sAtwo}\hw{\mttwo}$. By decomposition,
    $\mE_{p,\sBone,\sAtwo}\hw{\mttwo} \equidyn
    \mE_{p,\sBone,\sAone}\hw{\mE_{p,\sAone,\sAtwo}\hw{\mttwo}}$, so by
    transitivity it is sufficient to show
    \[ 
    \mE_{p,\sBone,\sAone}\hw{\letprjnovert\mtone}
    \ltdyn
    \mE_{p,\sBone,\sAone}\hw{\mE_{p,\sAone,\sAtwo}\hw{\mttwo}}
    \]
    which follows by congruence and the premise.    
  \item \textsc{Cast-Right} with $\sAone \ltdyn \sAtwo \ltdyn \sBtwo$.
    We need to show $\mE_{e,\sAone,\sBtwo}\hw{\mtone} \ltdyn
    \mE_{e,\sAtwo,\sBtwo}\hw{\letemb{\mttwo}}$.

    By decomposition, $\mE_{e,\sAone,\sBtwo}\hw{\mtone} \equidyn
    \mE_{e,\sAtwo,\sBtwo}\hw{\mE_{e,\sAone,\sAtwo}\hw{\mtone}}$, so
    by transitivity it is sufficient to show
    \[
    \mE_{e,\sAtwo,\sBtwo}\hw{\mE_{e,\sAone,\sAtwo}\hw{\mtone}}
    \ltdyn
    \mE_{e,\sAtwo,\sBtwo}\hw{\letembnovert{\mttwo}}
    \]
    which follows by congruence and the premise.
  \end{enumerate}
\end{proof}

\ifshort
The non-cast cases are too long to include here, but are included in
the extended version \cite{newahmed2018-extended}.
They are proven
using the definitions of the ep pairs for each type connective and the
lemmas of \secref{sec:lemmas}.
We note that the proofs are \emph{modular} in that for instance, the
proofs about function types only involve the functorial action of the
function type and do not depend on any other types being present in the
language.
\fi

Finally, we prove the graduality theorem by induction on syntactic
term dynamism derivations, finishing the proof of
\cref{thm:graduality}.
\begin{theorem}[Logical Graduality]
\label{thm:logical-graduality}
  If $\senvone \ltdyn \senvtwo \vdash \stone \ltdyn \sttwo : \sAone$, then
  $\logtmdynr{\senvone}{\senvtwo}{\sem\stone}{\sem\sttwo}{\sAone}{\sAtwo}$.
\end{theorem}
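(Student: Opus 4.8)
The plan is to proceed by induction on the derivation of syntactic term dynamism in \cref{fig:term-dynamism}. The rules split into a single \emph{cast} rule and a family of \emph{congruence} rules (variable, unit, pairing, the two injections, pattern-matching, case analysis, $\lambda$-abstraction, and application), and I would handle the two families with quite different tools. Throughout I unfold logical term dynamism to its definition, namely $\tmlogapprox{\sem\senvone}{\theemb{\sAone}{\sAtwo}{\sem\stone}}{\letembnovert{\sem\sttwo}}{\sem\sAtwo}$, which embeds the \emph{output} of the left-hand term and embeds the \emph{inputs} (the environment) of the right-hand term. Because \cref{lem:alternative} provides three further equivalent placements of these casts, at each step I would freely choose whichever of the four formulations makes the required rearrangement cleanest.

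The cast rule, $\obcast{\sAone}{\sBone}{\stone} \ltdyn \obcast{\sAtwo}{\sBtwo}{\sttwo}$, is exactly the derivation tree displayed just before \cref{lem:up-down-cases}, and I would reproduce it. First I factorize each general cast through $\sdynty$ using \cref{lem:up-down-factorization}, so that $\sem{\obcast{\sAone}{\sBone}{\stone}} \equidyn \mE_{p,\sBone,\sdynty}\hw{\mE_{e,\sAone,\sdynty}\hw{\sem\stone}}$ and symmetrically on the right. Then, starting from the inductive hypothesis at $\sAone \ltdyn \sAtwo$, the four atomic steps of \cref{lem:up-down-cases} (an embedding and a projection on each side) climb the derivation to $\sBone \ltdyn \sBtwo$, and a final appeal to the factorization equivalence closes the case. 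All of the genuine difficulty here has already been discharged in the earlier ep-pair lemmas, so this case is essentially bookkeeping.

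For the congruence cases involving \emph{introduction} forms I expect the argument to be routine. The ep-pair semantics of every connective is literally its functorial action (\cref{fig:type-dyn-cast-translation}), so pushing the output embedding through a pair, an injection, or a $\lambda$, and pushing the environment embeddings inward, reduces---by functoriality (\cref{lem:functor}), the congruence rules for the logical relation (\cref{lem:cong}), and transitivity---to the inductive hypotheses on the immediate subterms. The variable and unit cases follow directly from reflexivity (\cref{lem:fund-lemma}) together with the adjointness lemmas.

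The main obstacle will be the \emph{elimination} forms---pattern matching, case analysis, and application---where the continuation structure forces casts to cross binders. Here I would lean on the adjointness lemmas (\cref{lem:ep-adjoint} and \cref{lem:adj-inp}) to trade an embedding on an input for a projection, and on the commuting conversions (\cref{lem:comm-conv}) and linearity of evaluation contexts (\cref{lem:evctx-linear}) to relocate the scrutinee or operator into evaluation position. In the application case, for instance, I would pick the formulation of \cref{lem:alternative} that aligns the contravariant domain action of the function ep pair with the argument's cast; functoriality (\cref{lem:functor}) then collapses the composite embedding-then-projection on the domain to the identity, leaving precisely the inductive hypotheses for the operator and the operand. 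Purity of embeddings and termination of projections (\cref{lem:emb-pure} and \cref{prj-term}) are what license these call-by-value rearrangements, and I anticipate that keeping track of \emph{which} of the four cast placements to use at each binder is the most error-prone part of the development.
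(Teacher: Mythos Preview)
Your proposal is correct and follows essentially the same approach as the paper: induction on the term-dynamism derivation, with the cast case handled exactly as you describe (factorization through $\sdynty$ via \cref{lem:up-down-factorization}, then the four atomic steps of \cref{lem:up-down-cases}), and the congruence cases handled by the ep-pair toolkit of adjointness, commuting conversions, linearity of evaluation contexts, and purity of embeddings. One small refinement worth noting: in the elimination cases (case, matchpair, application) the paper's recurring move is to \emph{ep-expand} the scrutinee or operator---i.e., insert an identity via the retraction property $\mx \equidyn \mE_{p}\hw{\mE_{e}\hw{\mx}}$ and then unfold the definition of the connective's projection---rather than to ``collapse'' an existing composite via functoriality; this is the concrete mechanism that produces the projection on the bound variable you then trade for an embedding via \cref{lem:adj-inp}.
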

\iflong
\begin{proof} By induction on syntactic term dynamism rules.
  \begin{enumerate}
  \item To show To show
    $\inferrule{\logtmdynr{\senvone}{\senvtwo}{\sem\stone}{\sem\sttwo}{\sBone}{\sBtwo}}{\logtmdynr{\senvone}{\senvtwo}{\sem{\obcast\sAone\sBone\stone}}{\sem{\obcast{\sAtwo}{\sBtwo}\sttwo}}{\sBone}{\sBtwo}}$
      we use \cref{lem:up-down-cases} and the argument above.
  \item $\inferrule
    {{\senvone}\dynr{\senvtwo}\and
      \sAone\dynr\sBone\and
      \senvonepr \dynr \senvtwopr}
    {\semtmdynr{\senvone,\sxone:\sAone,\senvtwo}{\senvtwo,\sxtwo:\sAtwo,\senvtwopr}{\sxone}{\sxtwo}{\sAone}{\sAtwo}}$
    We need to show:
    \[ \sem{\senvone} \vDash \theemb{\sAone}{\sAtwo}{\mxone} \ltdyn
    \letemb{\mxtwo}\]

    Since embeddings are pure \cref{lem:emb-pure,lem:pure-subst} we
    can substitute them in and then the two sides are literally the
    same.
    \[ \letemb{\mxtwo} \equidyn \mxtwo[\theemb{\sAone^i}{\sAtwo^i}{\mxone^i}/\mxtwo^i] = \theemb{\sAone}{\sAtwo}{\mxone} \]

  \item $\inferrule
    {\semtmdynr{\senvone}{\senvtwo}{\stone}{\sttwo}{\sAone}{\sAtwo}\and \sjudgtyprec{\sAonepr}{\sAtwopr}}
    {\semtmdynr{\senvone}{\senvtwo}{\stinj{\stermone}}{\stinj{\stermtwo}}{\ssumty{\sAone}{\sAonepr}}{\ssumty{\sAtwo}{\sAtwopr}}}$
    Expanding definitions, we need to show:
    \[
    \mtcasevert{\mtinj{\sem{\stone}}}
           {\mx}{\mtinj({\theemb{\sAone}{\sAtwo}{\mx}})}
           {\mx}{\mtinjpr({\theemb{\sAonepr}{\sAtwopr}{\mxpr}})}
    \ltdyn
    \letemb{\sem{\sttwo}}
    \]
    By open $\beta$ (\cref{lem:open-beta}), the left side can be
    reduced, which we can then substitute into due to linearity of
    evaluation contexts (\cref{lem:evctx-linear}):
    \begin{align*}
      \mtcase{\mtinj{\sem{\stone}}}
           {\mx}{\mtinj({\theemb{\sAone}{\sAtwo}{\mx}})}
           {\mx}{\mtinjpr({\theemb{\sAonepr}{\sAtwopr}{\mxpr}})}
           & \equidyn \mtletvert{\mx}{\sem{\stone}}{\mtinj({\theemb{\sAone}{\sAtwo}{\mx}})}\\
           & \equidyn (\mtinj({\theemb{\sAone}{\sAtwo}{\mx}}))[\sem{\stone}/\mx]\\
           & = \mtinj({\theemb{\sAone}{\sAtwo}{\sem{\stone}}})
    \end{align*}
    So by transitivity it is sufficient to show
    \[ \mtinj({\theemb{\sAone}{\sAtwo}{\sem{\stone}}}) \ltdyn \mtinj{\left(\letemb{\sem{\sttwo}}\right)}\]
    which follows by congruence (\cref{lem:cong}).
  \item $\inferrule
    {\semtmdynr{\senvone}{\senvtwo}{\stone}{\sttwo}{\sAonepr}{\sAtwopr}\and \sjudgtyprec{\sAone}{\sAtwo}}
    {\semtmdynr{\senvone}{\senvtwo}{\stinjpr{\stermonepr}}{\stinjpr{\stermtwopr}}{\ssumty{\sAone}{\sAonepr}}{\ssumty{\sAtwo}{\sAtwopr}}}$
    essentially the same as the previous case.
  \item $\inferrule
    {\semtmdynr{\senvone}{\senvtwo}{\stone}{\sttwo}{\ssumty{\sAone}{\sAonepr}}{\ssumty{\sAtwo}{\sAtwopr}}\\
      \semtmdynr{\senvone,\sxone:\sAone}{\senvtwo,\sxtwo:\sAtwo}{\ssone}{\sstwo}{\sBone}{\sBtwo}\\
      \semtmdynr{\senvone,\sxonepr:\sAonepr}{\senvtwo,\sxtwopr:\sAtwopr}{\ssonepr}{\sstwopr}{\sBone}{\sBtwo}}
    {\semtmdynr{\senvone}{\senvtwo}{\stcase{\stone}{\sxone:\sAone}{\ssone}{\sxonepr:\sAonepr}{\ssonepr}}{\stcase{\sttwo}{\sxtwo:\sAtwo}{\sstwo}{\sxtwopr:\sAtwopr}{\sstwopr}}{\sBone}{\sBtwo}}$
    Expanding definitions, we need to show
    \[
    \theemb{\sBone}{\sBtwo}{\mtcasevert {\sem{\stone}}
      {\mtinj \mxone}{\sem{\msone}}
      {\mtinjpr \mxonepr}{\sem{\msonepr}}
    }
    \ltdyn
    \letemb{\mtcasevert{\sem{\sttwo}}{\mtinj\mxtwo}{\sem{\mstwo}}{\mtinjpr\mxtwopr}{\sem{\mstwopr}}}
    \]
    First, we do some simple rewrites: on the left side, we use a
    commuting conversion to push the embedding into the continuations:
    \[
    \theemb{\sBone}{\sBtwo}{\mtcasevert{\sem{\stone}}
      {\mtinj \mxone}{\sem{\msone}}
      {\mtinjpr \mxonepr}{\sem{\msonepr}}}
    \equidyn
    \mtcasevert{\sem{\stone}}
      {\mxone}{\theemb{\sBone}{\sBtwo}{\sem{\msone}}}
      {\mxonepr}{\theemb{\sBone}{\sBtwo}{\sem{\msonepr}}}
    \]
    And on the right side we use the fact that embeddings are pure and
    so can be moved freely:
    \[
    \letemb{\mtcasevert{\sem{\sttwo}}{\mtinj\mxtwo}{\sem{\mstwo}}{\mtinjpr\mxtwopr}{\sem{\mstwopr}}}
    \equidyn
    \mtcasevert{\letemb{\sem{\sttwo}}}
           {\mxtwo}{\letemb{\sem{\mstwo}}}
           {\mxtwopr}{\letemb{\sem{\mstwopr}}}
    \]
    Next as with many of the elim forms, we ``ep-expand'' the
    discriminee on the left side, and then simplify based on the
    definition of
    $\theprj{\sAone\splus\sAonepr}{\sAtwo\splus\sAtwopr}\cdot$, using
    the case-of-case commuting conversion and open $\beta$
    \cref{lem:comm-conv,lem:open-beta}:
    \begin{align*}
      \mtcasevert{\sem{\stone}}
      {\mxone}{\theemb{\sBone}{\sBtwo}{\sem{\msone}}}
      {\mxonepr}{\theemb{\sBone}{\sBtwo}{\sem{\msonepr}}}
      &\equidyn
      \mtcasevert{\theprj{\sAone\splus\sAonepr}{\sAtwo\splus\sAtwopr}{\theemb{\sAone\splus\sAonepr}{\sAtwo\splus\sAtwopr}{\stone}}}
      {\mxone}{\theemb{\sBone}{\sBtwo}{\sem{\msone}}}
      {\mxonepr}{\theemb{\sBone}{\sBtwo}{\sem{\msonepr}}}\\
      \text{(definition)}&= \mtcasevert{\left({\mtcasevert{\theemb{\sAone\splus\sAonepr}{\sAtwo\splus\sAtwopr}{\stone}}
          {\mxtwo}{\mtinj{\theprj{\sAone}{\sAtwo}{\mxtwo}}}
          {\mxtwopr}{\mtinj{\theprj{\sAonepr}{\sAtwopr}{\mxtwopr}}}
        }\right)}
      {\mxone}{\theemb{\sBone}{\sBtwo}{\sem{\msone}}}
      {\mxonepr}{\theemb{\sBone}{\sBtwo}{\sem{\msonepr}}}\\
      \text{(comm conv \ref{lem:comm-conv}, open $\beta$ \ref{lem:open-beta})}& \equidyn
      \mtcasevert{\theemb{\sAone\splus\sAonepr}{\sAtwo\splus\sAtwopr}{\stone}}
      {\mxtwo}{\mtletvert{\mxone}{\theprj{\sAone}{\sAtwo}{\mxtwo}}{{\theemb{\sBone}{\sBtwo}{\sem{\msone}}}}}
      {\mxtwopr}{\mtletvert{\mxonepr}{\theprj{\sAonepr}{\sAtwopr}{\mxtwopr}}{\theemb{\sBone}{\sBtwo}{\sem{\msonepr}}}}
    \end{align*}
    Then the final step follows by congruence and adjointness on
    inputs \cref{lem:cong,lem:adj-inp}:
    \[
    \mtcasevert{\theemb{\sAone\splus\sAonepr}{\sAtwo\splus\sAtwopr}{\stone}}
      {\mxtwo}{\mtletvert{\mxone}{\theprj{\sAone}{\sAtwo}{\mxtwo}}{\theemb{\sBone}{\sBtwo}{\sem{\msone}}}}
      {\mxtwopr}{\mtletvert{\mxonepr}{\theprj{\sAonepr}{\sAtwopr}{\mxtwopr}}{\theemb{\sBone}{\sBtwo}{\sem{\msonepr}}}}
      \ltdyn
    \mtcasevert{\letemb{\sem{\sttwo}}}
           {\mxtwo}{\letemb{\sem{\mstwo}}}
           {\mxtwopr}{\letemb{\sem{\mstwopr}}}
    \]

  \item $\inferrule{{\senvone}\dynr{\senvtwo}}
    {\semtmdynr{\senvone}{\senvtwo}{\stunit}{\stunit}{\sunitty}{\sunitty}}$.
    Expanding we need to show
    \[ \theemb{\sunitty}{\sunitty} \mtunit \ltdyn \letemb{\mtunit} \]
    By definition, the left side is just $\mtunit$ and the right side
    after a substitution, valid because embeddings are pure
    \cref{lem:emb-pure,lem:pure-subst}.

  \item $\inferrule
    {\semtmdynr{\senvone}{\senvtwo}{\stone}{\sttwo}{\sAone}{\sAtwo}\\
      \semtmdynr{\senvone}{\senvtwo}{\ssone}{\sstwo}{\sBone}{\sBtwo}}
    {\semtmdynr{\senvone}{\senvtwo}{\stpair{\stone}{\ssone}}{\stpair{\sttwo}{\sstwo}}{{\sAone} \stimes {\sBone}}{{\sAtwo} \stimes {\sBtwo}}}$. Expanding definitions, we need to show
    \[
    \theemb{\sAone \stimes \sBone}{\sAtwo \stimes \sBtwo}
           {\mtpair{\sem\stone}{\sem\ssone}}
    \ltdyn
    \letemb{\mtpair{\sem\sttwo}{\sem\sstwo}}
    \]
    On the right, we duplicate the embeddings, justified by
    \cref{lem:emb-pure,lem:pure-subst}, to set up congruence:
    \[
    \letemb{\mtpair{\sem\sttwo}{\sem\sstwo}}
    \equidyn
    \mtpair{\letemb{\sem\sttwo}}{\letemb{\sem\sstwo}}
    \]

    On the left, we use linearity of evaluation contexts to lift the
    terms out, then perform some open $\beta$ reductions and put the
    terms back in:
    \begin{align*}
      \theemb{\sAone \stimes \sBone}{\sAtwo \stimes \sBtwo}
             {\mtpair{\sem\stone}{\sem\ssone}}
      & \equidyn
      \mtletvert{\sx}{\sem\stone}{
      \mtletvert{\sy}{\sem\ssone}{
      \mtmatchpair{\sx}{\sy}{\mtpair{\sx}{\sy}}{\mtpair{\theemb{\sAone}{\sAtwo}{\sx}}{\theemb{\sBone}{\sBtwo}{\sy}}}
      }
      }\\
      \by{open $\beta$}{lem:open-beta}&\equidyn
      \mtletvert{\sx}{\sem\stone}{
      \mtletvert{\sy}{\sem\ssone}{
      {\mtpair{\theemb{\sAone}{\sAtwo}{\sx}}{\theemb{\sBone}{\sBtwo}{\sy}}}
      }
      }\\
      \by{linearity}{lem:evctx-linear}&\equidyn
      {\mtpair{\theemb{\sAone}{\sAtwo}{\sem\stone}}{\theemb{\sBone}{\sBtwo}{\sem\ssone}}}
    \end{align*}
    With the final step following by congruence (\cref{lem:cong}) and the premise:
    \[
    {\mtpair{\theemb{\sAone}{\sAtwo}{\sem\stone}}{\theemb{\sBone}{\sBtwo}{\sem\ssone}}}
    \ltdyn
    \mtpair{\letemb{\sem\sttwo}}{\letemb{\sem\sstwo}}
    \]
  \item $\inferrule
    {\semtmdynr{\senvone}{\senvtwo}{{\stone}}{{\sttwo}}{\spairty{\sAone}{\sAonepr}}{\spairty{\sAtwo}{\sAtwopr}}\\ \semtmdynr{\senvone,{\sxone:\sAone},{\sxonepr:\sApr}}{\senvtwo,{\sxtwo:\sAtwo},{\sxtwopr:\sAtwopr}}{{\stonepr}}{{\sttwopr}}{\sBone}{\sBtwo}
    }
    {\semtmdynr{\senvone}{\senvtwo}{\stmatchpair{\sxone:\sAone}{\sxonepr:\sApr}{\stone}{\stonepr}}{\stmatchpair{\sxtwo:\sAtwo}{\sxtwopr:\sAtwopr}{\sttwo}{\sttwopr}}{\sBone}{\sBtwo}}$ Expanding definitions, we need to show
    \[
    \sem\senvone\vDash
    \theemb{\sBone}{\sBtwo}{{\mtmatchpair{\mxone}{\mxonepr}{\sem\stone}{\sem\stonepr}}}
    \ltdyn
    \letemb{\mtmatchpair{\mxtwo}{\mxtwopr}{\sem\sttwo}{\sem\sttwopr}}
    : \sem\sBtwo
    \]
    On the right side, in anticipation of a use of congruence, we push
    the embeddings in \cref{lem:emb-pure,lem:pure-subst}:
    \[
    \letemb{\mtmatchpair{\mxtwo}{\mxtwopr}{\sem\sttwo}{\sem\sttwopr}}
    \equidyn
    \mtmatchpairvert{\mxtwo}{\mxtwopr}{\left(\letemb{\sem\sttwo}\right)}{\letemb{\sem\sttwopr}}
    \]
    On the left side, we perform a commuting conversion, ep
    expand the discriminee and do some open $\beta$ reductions to
    simplify the expression.
    \begin{align*}
      \theemb{\sBone}{\sBtwo}{{\mtmatchpair{\mxone}{\mxonepr}{\sem\stone}{\sem\stonepr}}}
      & \equidyn
      {\mtmatchpair{\mxone}{\mxonepr}{\sem\stone}{\theemb{\sBone}{\sBtwo}{\sem\stonepr}}}\\
      \by{ep pair}{lem:dyn-der-ep}& \equidyn
      {\mtmatchpairvert{\mxone}{\mxonepr}{\theprj{\spairty{\sAone}{\sAonepr}}{\spairty{\sAtwo}{\sAtwopr}}{\theemb{\spairty{\sAone}{\sAonepr}}{\spairty{\sAtwo}{\sAtwopr}}{\sem\stone}}}{\theemb{\sBone}{\sBtwo}{\sem\stonepr}}}\\
      \text{(definition)}&=
      {\mtmatchpairvert{\mxone}{\mxonepr}
        {\mtmatchpairvert{\mxtwo}{\mxtwopr}
          {\theemb{\spairty{\sAone}{\sAonepr}}{\spairty{\sAtwo}{\sAtwopr}}{\sem\stone}}
          {\mtpair{\theprj{\sAone}{\sAtwo}{\mxtwo}}{\theprj{\sAonepr}{\sAtwopr}{\mxtwo}}}}
        {\theemb{\sBone}{\sBtwo}{\sem\stonepr}}}\\
      \by{linearity}{lem:evctx-linear}&\equidyn
      {\mtmatchpairvert{\mxone}{\mxonepr}
        {\mtmatchpairvert{\mxtwo}{\mxtwopr}
          {\theemb{\spairty{\sAone}{\sAonepr}}{\spairty{\sAtwo}{\sAtwopr}}{\sem\stone}}
          {\mtletvert{\mxone}{\theprj{\sAone}{\sAtwo}{\mxtwo}}{
           \mtletvert{\mxonepr}{\theprj{\sAonepr}{\sAtwopr}{\mxtwopr}}
           {\mtpair{\mxone}{\mxonepr}}}}}
        {\theemb{\sBone}{\sBtwo}{\sem\stonepr}}}\\
      \by{comm. conv.}{lem:comm-conv}&\equidyn
      {\mtmatchpairvert{\mxtwo}{\mxtwopr}
          {\theemb{\spairty{\sAone}{\sAonepr}}{\spairty{\sAtwo}{\sAtwopr}}{\sem\stone}}
          {\mtletvert{\mxone}{\theprj{\sAone}{\sAtwo}{\mxtwo}}{
           \mtletvert{\mxonepr}{\theprj{\sAonepr}{\sAtwopr}{\mxtwopr}}
           {\mtmatchpairvert{\mxone}{\mxonepr}{\mtpair{\mxone}{\mxonepr}}
             {\theemb{\sBone}{\sBtwo}{\sem\stonepr}}}}}}\\
      \by{open $\beta$}{lem:open-beta}&\equidyn
      {\mtmatchpairvert{\mxtwo}{\mxtwopr}
          {\theemb{\spairty{\sAone}{\sAonepr}}{\spairty{\sAtwo}{\sAtwopr}}{\sem\stone}}
          {\mtletvert{\mxone}{\theprj{\sAone}{\sAtwo}{\mxtwo}}{
           \mtletvert{\mxonepr}{\theprj{\sAonepr}{\sAtwopr}{\mxtwopr}}
           {\theemb{\sBone}{\sBtwo}{\sem\stonepr}}}}}
    \end{align*}
    The final step is by congruence and adjointness on inputs (\cref{lem:cong,lem:adj-inp}):
    \[
    {\mtmatchpairvert{\mxtwo}{\mxtwopr}
          {\theemb{\spairty{\sAone}{\sAonepr}}{\spairty{\sAtwo}{\sAtwopr}}{\sem\stone}}
          {\mtletvert{\mxone}{\theprj{\sAone}{\sAtwo}{\mxtwo}}{
           \mtletvert{\mxonepr}{\theprj{\sAonepr}{\sAtwopr}{\mxtwopr}}
           {\theemb{\sBone}{\sBtwo}{\sem\stonepr}}}}}
    \ltdyn
    \mtmatchpairvert{\mxtwo}{\mxtwopr}{\left(\letemb{\sem\sttwo}\right)}{\letemb{\sem\sttwopr}}
    \]
    
  \item $\inferrule
    {\semtmdynr{{\senvone},{\svarone}:\sAone}{{\senvtwo},{\svartwo}:\sAtwo}{\stone}{\sttwo}{\sBone}{\sBtwo}}
    {\semtmdynr{\senvone}{\senvtwo}{\stfun{\svarone}{\sAone}{\stone}}{\stfun{\svartwo}{\sAtwo}{\sttwo}}{\sfunty{\sAone}{\sBone}}{\sfunty{\sAtwo}{\sBtwo}}}$.
    Expanding definitions, we need to show
    \[
    \sem\senvone \vDash
    \mtletvert{\mxin{f}}{\mtfun{\mvarone}{\sem\sAone}{\sem\stone}}
    {\mtfun{\mxtwo}{\mAtwo}
      {\theemb{\sBone}{\sBtwo}{\mxin{f}\,(\theprj{\sAone}{\sAtwo}{\mxtwo})}}}
    \ltdyn
    \letemb{\mtfun{\mxtwo}{\mAtwo}{\sem{\mttwo}}}
    \]

    First we simplify by performing some open $\beta$ reductions on
    the left and let-$\lambda$ equivalence and a commuting conversion
    (\cref{lem:open-beta,lem:let-lambda,lem:comm-conv}):
    \begin{align*}
    \mtletvert{\mxin{f}}{\mtfun{\mvarone}{\sem\sAone}{\sem\stone}}
    {\mtfun{\mxtwo}{\mAtwo}
      {\theemb{\sBone}{\sBtwo}{\mxin{f}\,(\theprj{\sAone}{\sAtwo}{\mxtwo})}}}
    &\equidyn {\mtfun{\mxtwo}{\mAtwo}
      {\theemb{\sBone}{\sBtwo}{\mtletvert{\mxone}{\theprj{\sAone}{\sAtwo}{\mxtwo}}{\sem \mtone}}}}\\
    &\equidyn {\mtfun{\mxtwo}{\mAtwo}{\mtletvert{\mxone}{\theprj{\sAone}{\sAtwo}{\mxtwo}}{\theemb{\sBone}{\sBtwo}{\sem \mtone}}}}
    \end{align*}
    and on the right, we move the embedding into the body, which is
    justified because embeddings are essentially values
    (\cref{lem:emb-pure,lem:pure-subst}):
    \[
    \letemb{\mtfun{\mxtwo}{\mAtwo}{\sem{\mttwo}}}
    \equidyn
    \mtfun{\mxtwo}{\mAtwo}{\letemb{\sem{\mttwo}}}
    \]
    The final step is justified by congruence \cref{lem:cong} and adjointness on inputs \cref{lem:adj-inp} and the premise:
    \[
    {\mtfun{\mxtwo}{\mAtwo}{\mtletvert{\mxone}{\theprj{\sAone}{\sAtwo}{\mxtwo}}{\theemb{\sBone}{\sBtwo}{\sem \mtone}}}}
    \equidyn
    \mtfun{\mxtwo}{\mAtwo}{\letemb{\sem{\mttwo}}}
    \]
    
  \item
    $\inferrule{\semtmdynr{\senvone}{\senvtwo}{\stone}{\sttwo}{\sfunty{\sAone}{\sBone}}{\sfunty{\sAtwo}{\sBtwo}}\and
    \semtmdynr{\senvone}{\senvtwo}{\ssone}{\sstwo}{{\sAone}}{{\sAtwo}}}
    {\semtmdynr{\senvone}{\senvtwo}{\stapp{\stone}{\ssone}}{\stapp{\sttwo}{\sstwo}}{\sBone}{\sBtwo}}$.
    Expanding definitions, we need to show
    \[
    \sem\senvone \vDash
    \theemb{\sBone}{\sBtwo}{\sem{\stone}\,\sem{\ssone}}
    \ltdyn
    \letemb{\sem{\sttwo}\,\sem{\sstwo}}
    : \sem{\sBtwo}
    \]
    First, we duplicate the embedding on the right hand side,
    justified by purity of embeddings, to set up a use of congruence
    later:
    \[
    \letemb{\sem{\sttwo}\,\sem{\sstwo}}
    \equidyn
    \left(\letemb{\sem{\sttwo}}\right)\left(\letemb{\sem{\sstwo}}\right)
    \]

    Next, we use linearity of evaluation contexts
    \cref{lem:evctx-linear} so that we can do reductions at the
    application site without worrying about evaluation order:
    \[
    \theemb{\sBone}{\sBtwo}{\sem{\stone}\,\sem{\ssone}}
    \equidyn
    \mtletvert{\mxin{f}}{\sem{\stone}}{
    \mtletvert{\mxin{a}}{\sem{\ssone}}{
    \theemb{\sBone}{\sBtwo}{\mxin{f}\,\mxin{a}}
    }}
    \]

    Next, we ep-expand $\mxin{f}$ (\cref{lem:dyn-der-ep}) and perform
    some $\beta$ reductions, use the ep property and then reverse the
    use of linearity.
    \begin{align*}
    \mtletvert{\mxin{f}}{\sem{\stone}}{
    \mtletvert{\mxin{a}}{\sem{\ssone}}{
    \theemb{\sBone}{\sBtwo}{\mxin{f}\,\mxin{a}}
    }}
    & \equidyn
    \mtletvert{\mxin{f}}{\sem{\stone}}{
    \mtletvert{\mxin{a}}{\sem{\ssone}}{
    \theemb{\sBone}{\sBtwo}
           {\theprj{\sAone\sto\sBone}{\sAtwo\sto\sBtwo}{\theemb{\sAone\sto\sBone}{\sAtwo\sto\sBtwo}{\mxin{f}}}\,\mxin{a}}}}\\
    \by{open $\beta$}{lem:open-beta}& \equidyn
    \mtletvert{\mxin{f}}{\sem{\stone}}{
    \mtletvert{\mxin{a}}{\sem{\ssone}}{
    \theemb{\sBone}{\sBtwo}{\theprj{\sBone}{\sBtwo}{\left({\theemb{\sAone\sto\sBone}{\sAtwo\sto\sBtwo}{\mxin{f}}}\right)\,\left(\theemb{\sAone}{\sAtwo}{\mxin{a}}\right)}}
    }}\\
    \by{ep pair}{lem:dyn-der-ep}& \ltdyn
    \mtletvert{\mxin{f}}{\sem{\stone}}{
    \mtletvert{\mxin{a}}{\sem{\ssone}}{
    \left({\theemb{\sAone\sto\sBone}{\sAtwo\sto\sBtwo}{\mxin{f}}}\right)\,\left(\theemb{\sAone}{\sAtwo}{\mxin{a}}\right)
    }}\\
    \by{linearity}{lem:evctx-linear}& \equidyn
    \left({\theemb{\sAone\sto\sBone}{\sAtwo\sto\sBtwo}{\sem \stone}}\right)\,\left(\theemb{\sAone}{\sAtwo}{\sem \ssone}\right)
    \end{align*}
    With the final step being congruence \cref{lem:cong}:
    \[
    \left({\theemb{\sAone\sto\sBone}{\sAtwo\sto\sBtwo}{\sem \stone}}\right)\,\left(\theemb{\sAone}{\sAtwo}{\sem \ssone}\right)
    \equidyn
    \left(\letemb{\sem{\sttwo}}\right)\left(\letemb{\sem{\sstwo}}\right)
    \]
  \end{enumerate}
\end{proof}
\fi

\section{Related Work and Discussion}
\label{section:related}

Our analysis of graduality as observational approximation and dynamism
as ep pairs builds on the axiomatic and denotational semantics of
graduality for a call-by-name language presented in
\cite{newlicata2018}.
The semantics there gives axioms of type and term dynamism that imply that
upcasts and downcasts are embedding-projection pairs.
Our analysis here is complementary: we present the graduality theorem
as a concrete property of a gradual language defined with an
operational semantics.
Our graduality logical relation should serve as a concrete
\emph{model} of a call-by-value version of gradual type theory,
similar to the call-by-name denotational models presented there.
Furthermore, we show here how this interpretation of graduality maps
back to a standard cast calculus presentation of gradual typing.

\paragraph{Graduality vs Gradual Guarantee}
\label{sec:rel:gradual-guarantee}

The notion of graduality we present here is based on the \emph{dynamic
  gradual guarantee} by Siek, Vitousek, Cimini, and Boyland
\cite{refined, boyland14}. The dynamic gradual guarantee says that
syntactic term dynamism is an \emph{invariant} of the operational
semantics up to error on the less dynamic side.
More precisely, if $\cdot \vdash t_1 \ltdyn t_2 : A_1 \ltdyn A_2$ then
either $t_1 \stepstar \mho$ or both $t_1,t_2$ diverge or $t_1
\stepstar v_1$ and $t_2 \stepstar v_2$ with $v_1 \ltdyn v_2$.
Observe that when restricting $A_1 = A_2 = 1$, this is precisely
the relation on closed programs out of which we build our definition
of semantic term dynamism.
We view their formulation of the dynamic gradual guarantee as a
syntactic \emph{proof technique} for proving graduality of the system.

Graduality should be easier to formulate for different presentations
of gradual typing because it does not require a second syntactic
notion of term dynamism for the implementation language.
In the proofs of the gradual guarantee in \citet{refined}, they have to
develop new rules for term dynamism for their cast calculus, that they
do not attempt to justify at an intuitive level.
Additionally, they have to change their translation from the gradual
surface language to the cast calculus, because the traditional
translation did not preserve the rigid syntactic formulation of term
dynamism.
In more detail, when a dynamically typed term $t : \dyn$ was applied to a
term $s : A$, in their original formulation this was translated as
\[ \sem{t~s} = (\langle(A \to \dyn) \Leftarrow \dyn\rangle\sem{t})\:\sem{s} \]
but if the term in function position had a function type $t' : \dyn \to \dyn$, it was translated as
\[ \sem{t'\,s} = (\sem{t}\:(\langle \dyn \Leftarrow A\rangle\sem{s})\]
But if $t' \ltdyn t$, we would not have $\sem{t' s} \ltdyn \sem{t s}$ because the function position on the left has type $\dyn \to \dyn$ which is \emph{more dynamic} than on the right which has $A \to \dyn$.
While changing this was perfectly reasonable to do to use their
syntactic proof method, we can see that from the \emph{semantic} point
of view of graduality there was nothing wrong with their original
translation and it could have been validated using a logical relation.

Another significant difference between our work and theirs is that we
identify the central role of embedding-projection pairs in graduality,
and take advantage of it in our proof.
As mentioned above, they add rules to term dynamism for the cast
calculus without justification.
These rules are the generalization of our \textsc{Cast-Right} and 
\textsc{Cast-Left} \emph{without} the restriction that the casts be
upcasts or downcasts:

\begin{mathpar}
    \inferrule*[right=Cast-Right']
    {\tmdynr{\senvone}{\senvtwo}{\stone}{\sttwo}{\sAone}{\sAtwo}
      \and \sAone \dynr \sBtwo}
    {\tmdynr{\senvone}{\senvtwo}{\stone}{{\obcast\sAtwo\sBtwo}{\sttwo}}{\sAone}{\sBtwo}}\quad
    \inferrule*[right=Cast-Left']
    {\tmdynr{\senvone}{\senvtwo}{\stone}{\sttwo}{\sAone}{\sAtwo}
      \and \sBone \ltdyn \sAtwo}
    {\tmdynr{\senvone}{\senvtwo}{{\obcast{\sAone}{\sBone}}{\stone}}{\sttwo}{\sBone}{\sAtwo}}
\end{mathpar}

These are valid rules in our system, but by identifying the subset of
upcasts and downcasts, we prove the validity of the rules from
earlier, intuitive rules: decomposition, congruence, and the ep-pair
properties.
Furthermore, while we do not take these rules as primitive it is
notable that these two rules imply that upcasts and downcasts are
adjoint---i.e., if $\sAone \ltdyn \sAtwo$, the following are provable for
$\st : \sAone$ and $\ss : \sAtwo$:
\[
\st \ltdyn \obcast{\sAtwo}{\sAone}{\obcast\sAone\sAtwo{\st}} \qquad
\obcast{\sAone}{\sAtwo}{\obcast\sAtwo\sAone{\ss}} \ltdyn \ss
\]

\citet{refined} also present a theorem called the
\emph{static} gradual guarantee that pertains to the type checking of
gradually typed programs.
The static gradual guarantee says that if a term $\Gamma \vdash t_1 :
A_1$ type checks, and $t_2$ is syntactically more dynamic, then $\Gamma
\vdash t_2 : A_2$ with a more dynamic type, i.e., $A_1 \sqsubseteq A_2$.
We view this as a \emph{corollary} to graduality. If type checking is
a compositional procedure that seeks to rule out dynamic type errors,
then if $t_1$ is syntactically less dynamic than $t_2$, then it is
also semantically less dynamic, meaning every type error in $t_2$'s
behavior was already present in $t_1$, so it should also type check.

\paragraph{Types as EP Pairs}
\label{section:related:retraction}

The interpretation of types as retracts of a single domain originated
in \citet{scott71} and is a common tool in denotational semantics,
especially in the presence of a convenient \emph{universal} domain.
A retraction is a pair of morphisms $s : A \to B$, $r : B \to A$
that satisfy the retraction property $r \circ s = \id_A$, but not
necessarily the projection property $s \circ r \errord
\id_B$. Thus ep pair semantics can be seen as a more refined
retraction semantics.
Retractions have been used to study interaction between typed and
untyped languages, e.g., see \citet{benton05:embedded, favonia17}.

Embedding-projection pairs are used extensively in domain theory as a
technical device for solving non-well-founded domain equations, such
as the semantics of a dynamic type.
In this paper, our error-approximation ep pairs do not play this role,
and instead the retraction and projection properties are desirable in
their own right for their intuitive meaning for type checking.

Many of the properties of our embedding-projection pairs are
anticipated in \citet{henglein94:dynamic-typing} and \citet{thatte90}.
\citet{henglein94:dynamic-typing} defines a language with a notion of \emph{coercion} $A
\rightsquigarrow B$ that corresponds to general casts, with primitives
of tagging $tc! : tc(\dyn,\ldots) \rightsquigarrow \dyn$ and untagging
$tc? : \dyn \rightsquigarrow tc(\dyn,\ldots)$ for every type
constructor ``$tc$''.
Crucially, Henglein notes that $tc!;tc?$ is the identity modulo efficiency
and that $tc?;tc!$ errors more than the identity.
Furthermore, they define classes of ``positive'' and ``negative''
coercions that correspond to embeddings and projections, respectively,
and a ``subtyping'' relation that is the same as type precision.
They then prove several theorems analogous to our results:
\begin{enumerate}
\item (Retraction) For any pair of positive coercion $p : A \rightsquigarrow B$,
  and negative coercion $n : B \rightsquigarrow A$, they show that
  $p;n$ is equal to the identity in their equational theory.
\item (Almost projection) Dually, they show that $n;p$ is equal to
  the identity \emph{assuming} that $tc?;tc!$ is equal to the identity
  for every type constructor.
\item They show every coercion factors as a positive cast to $\dyn$
  followed by a negative cast to $\dyn$.
\item They show that $A \leq B$ if and only if there exists a positive
  coercion $A \rightsquigarrow B$ and a negative coercion $B
  \rightsquigarrow A$.
\end{enumerate}
They also prove factorization results that are similar to our
factorization definition of semantic type precision, but it is unclear
if their theorem is stronger or weaker than ours.
One major difference is that their work is based on an equational
theory of casts, whereas ours is based on notions of observational
equivalence and approximation of a standard call-by-value language.
Furthermore, in defining our notion of observational error approximation,
we provide a more refined projection property, justifying their use of
the term ``safer'' to compare $p;e$ and the identity.

The system presented in \citet{thatte90}, called ``quasi-static typing''
is a precursor to gradual typing that inserts type annotations into
dynamically typed programs to make type checking explicit.
There they prove a desirable soundness theorem that says their type
insertion algorithm produces an explicitly coercing term that is
minimal in that it errors no more than the original dynamic term.
They prove this minimality theorem with respect to a partial order
$\sqsupseteq$ defined as a logical relation over a domain-theoretic
semantics that (for the types they defined) is analogous to our error
ordering for the operational semantics.
However, they do not define our operational formulation of the
ordering as contextual approximation, linked to the denotational
definition by the adequacy result, nor that any casts form embedding-projection
pairs with respect to this ordering.

Finally, we note that neither of these papers
\cite{henglein94:dynamic-typing,thatte90} extends the analysis to 
anything like graduality.

\paragraph{Semantics of Casts}
\label{section:related:casts}

Superficially similar to the embedding-projection pair semantics are
the \emph{threesome casts} of \citet{siek-wadler10}.
A threesome cast factorizes an arbitrary cast $A \Rightarrow B$
through a third type $C$ as a \emph{downcast} $A \Rightarrow C$
followed by an \emph{upcast} $C \Rightarrow B$, whereas ep-pair semantics
factorizes a cast as an \emph{upcast} $A \Rightarrow \dyn$ followed by a
\emph{downcast} $\dyn \Rightarrow B$.
Threesome casts can be used to implement gradual typing in a
space-efficient manner, the third type $C$ is used to collapse a
sequence of arbitrarily many casts into just the two.
In the general case, the threesome cast $A \Rightarrow C \Rightarrow
B$ is \emph{stronger} (fails more) than the direct cast $A \Rightarrow
B$.
This is the point of threesome casts: the middle type faithfully
represents a sequence of casts in minimal space.
EP pair semantics instead factorizes a cast $A \Rightarrow B$ into an
\emph{upcast} $A \Rightarrow \dyn$ followed by a \emph{downcast} $\dyn
\Rightarrow B$, a factorization already utilized in
\cite{henglein94:dynamic-typing}, and which we showed is
\emph{always} equivalent to the direct cast $A \Rightarrow B$.
We view the benefits of the techniques as orthogonal: the up-down
factorization helps to prove graduality, whereas the down-up
factorization helps implementation.
The fact that both techniques reduce reasoning about arbitrary casts
to just upcasts and downcasts supports the idea that upcasts and
downcasts are a fundamental aspect of gradual typing.

Recently, work on \emph{dependent interoperability} \cite{dagand16,
  dagand18} has identified Galois connections as a semantic
formulation for casting between more and less precise types in a
non-gradual dependently typed language, and conjectures that this
should relate to type dynamism.
We confirm their conjecture in showing that the casts in gradual
typing satisfy the slightly stronger property of being
embedding-projection pairs and have used it to explain the cast semantics
of gradual typing and graduality.
Furthermore, our analysis of the precision rules as compositional
constructions on ep pairs is directly analogous to their library,
which implements ``connections'' between, for instance, function types
given connections between the domains and codomains using Coq's
typeclass mechanism.

\paragraph{Pairs of Projections and Blame}
\label{section:related:projections}
One of the main inspirations for this work is the analysis of
contracts in \citet{findler-blume06}.
They decompose contracts in untyped languages as a pair of
``projections'', i.e., functions $c : \dyn \to \dyn$ satisfying $c
\errordof{\dyn\to\dyn} \id$.
However, they do not provide a rigorous definition or means to prove
this ordering for complex programs as we have.
There is a close relationship between such projections and ep pairs
(an instance of the relationship between adjunctions and (co)monads):
for any ep pair $e,p : A \eppair B$, $e\circ p : B \to B$ is a
projection.
However, we think this relationship is a red herring: instead we think
that a pair of projections is better understood as ep pairs
themselves.
The intuition they present is that one of the projections restricts
the behavior of the ``positive'' party (the term) and the other
restricts the behavior of the ``negative'' party (the continuation).
EP pairs are similar, the projection restricts the positive party by
directly checking, and the embedding restricts the negative party in
the function case by calling a projection on any value received from
its continuation.
However, in our current formulation, it does not even make sense to ask
if each component of our embedding-projection pairs is a projection
because the definition of a projection assumes that the domain and
codomain are the same (to define the composite $c \circ c$).
We conjecture that this can be made sensible by using a PER semantics
where types are relations on untyped values, so that the embedding and
projection have ``underlying'' untyped terms representing them, and
those are projections.

Their analysis of blame was adapted to gradual typing in
\citet{wadler-findler09} and plays a complementary role to our
analysis: they use the dynamism relation to help prove the
blame soundness theorem, whereas we use it to prove graduality.
The fact that they use essentially the same solution suggests there is
\mbox{a deeper connection between blame and graduality than is currently
understood.}

\paragraph{Gradualization}
\label{section:related:gradualization}
The Gradualizer \cite{gradualizer16,gradualizer17} and Abstracting
Gradual Typing (AGT) \cite{AGT} both seek to make language design
for gradually typed languages more systematic.
In doing so they make proving graduality far easier than our proof
technique possibly could: it holds by construction.
Furthermore, these systems also provide a surface-level syntax for
gradual typing and an explanation for gradual type checking, while we
do not address these at all.
However, the downside of their approaches is that they require a rigid
adherence to a predefined language framework.
While our gradual cast calculus as presented fits into this framework,
many gradually typed languages do not.
For instance, Typed Racket, the first gradually typed language ever
implemented \cite{tobin-hochstadt08}, is not given an operational
semantics in the style of a cast calculus, but rather is given a
semantics \emph{by translation} to an untyped language using
contracts.
We could prove the graduality of such a system by adapting our logical
relation to an untyped setting.

We hope in the future to explore the connections between the above
frameworks and our analysis of dynamism as embedding-projection pairs.
We conjecture that both Gradualizer and AGT by construction produce
upcasts and downcasts that satisfy the ep pair properties.
The AGT approach in particular has some similarities that stand out:
their formulation of type dynamism is based on an embedding-projection
pair between static types and sets of gradual types.
However, we are not sure if this is a coincidence or has a deeper
connection to our approach.

\section{Conclusion}
\label{sec:concl}
Graduality is a key property for gradually typed languages as it
validates programmer intuition that adding precise types only results
in stricter type checking.
Graduality is challenging to prove.
Moreover, it rests upon the language's
definition of type dynamism but there has been little guidance on
defining type dynamism, other than that graduality must hold.
We have given a semantics for type dynamism: $\sA \ltdyn \sB$ should
hold when the casts between $\sA, \sB$ form an embedding-projection
pair.
This allows for natural proofs of graduality using a logical relation
for observational error approximation.

Looking to the future, we would like to make use of our semantic
formulation of type dynamism based on ep pairs to design and analyze
gradual languages with advanced features such as parametric
polymorphism, effect tracking, and mutable state.
For parametric polymorphism in particular, we would like to
investigate whether our approach justifies any of the type-dynamism
definitions previously proposed \cite{ahmed17,igarashipoly17}, and the
possibility of proving both graduality and parametricity theorems with
a single logical relation.

\begin{acks}                            
We gratefully acknowledge the valuable feedback provided by Ben
Greenman and the anonymous reviewers.  Part of this work was done at
Inria Paris while Amal Ahmed was a Visiting Professor.

This material is based upon work supported by the National Science Foundation 
under grant CCF-1453796, and the European Research Council under ERC Starting
Grant SECOMP (715753). Any opinions, findings, and conclusions or
recommendations expressed in this material are those of the authors and do not 
necessarily reflect the views of our funding agencies. 

\end{acks}

\bibliography{max}


\begin{thebibliography}{35}


\ifx \showCODEN    \undefined \def \showCODEN     #1{\unskip}     \fi
\ifx \showDOI      \undefined \def \showDOI       #1{#1}\fi
\ifx \showISBNx    \undefined \def \showISBNx     #1{\unskip}     \fi
\ifx \showISBNxiii \undefined \def \showISBNxiii  #1{\unskip}     \fi
\ifx \showISSN     \undefined \def \showISSN      #1{\unskip}     \fi
\ifx \showLCCN     \undefined \def \showLCCN      #1{\unskip}     \fi
\ifx \shownote     \undefined \def \shownote      #1{#1}          \fi
\ifx \showarticletitle \undefined \def \showarticletitle #1{#1}   \fi
\ifx \showURL      \undefined \def \showURL       {\relax}        \fi
\providecommand\bibfield[2]{#2}
\providecommand\bibinfo[2]{#2}
\providecommand\natexlab[1]{#1}
\providecommand\showeprint[2][]{arXiv:#2}

\bibitem[\protect\citeauthoryear{Ahmed}{Ahmed}{2006}]%
        {ahmed06:lr}
\bibfield{author}{\bibinfo{person}{Amal Ahmed}.}
  \bibinfo{year}{2006}\natexlab{}.
\newblock \showarticletitle{Step-Indexed Syntactic Logical Relations for
  Recursive and Quantified Types}. In \bibinfo{booktitle}{\emph{European
  Symposium on Programming (ESOP)}}. \bibinfo{pages}{69--83}.
\newblock


\bibitem[\protect\citeauthoryear{Ahmed, Jamner, Siek, and Wadler}{Ahmed
  et~al\mbox{.}}{2017}]%
        {ahmed17}
\bibfield{author}{\bibinfo{person}{Amal Ahmed}, \bibinfo{person}{Dustin
  Jamner}, \bibinfo{person}{Jeremy~G. Siek}, {and} \bibinfo{person}{Philip
  Wadler}.} \bibinfo{year}{2017}\natexlab{}.
\newblock \showarticletitle{Theorems for Free for Free: Parametricity, With and
  Without Types}. In \bibinfo{booktitle}{\emph{{I}nternational {C}onference on
  {F}unctional {P}rogramming ({ICFP}), Oxford, United Kingdom}}.
\newblock


\bibitem[\protect\citeauthoryear{Allende, Calla{\'u}, Fabry, Tanter, and
  Denker}{Allende et~al\mbox{.}}{2013}]%
        {Allende:2013aa}
\bibfield{author}{\bibinfo{person}{Esteban Allende}, \bibinfo{person}{Oscar
  Calla{\'u}}, \bibinfo{person}{Johan Fabry}, \bibinfo{person}{{\'E}ric
  Tanter}, {and} \bibinfo{person}{Marcus Denker}.}
  \bibinfo{year}{2013}\natexlab{}.
\newblock \showarticletitle{Gradual Typing for {Smalltalk}}.
\newblock \bibinfo{journal}{\emph{Science of Computer Programming}}
  (\bibinfo{date}{Aug.} \bibinfo{year}{2013}).
\newblock
\newblock
\shownote{Available online.}


\bibitem[\protect\citeauthoryear{Ba\~{n}ados Schwerter, Garcia, and
  Tanter}{Ba\~{n}ados Schwerter et~al\mbox{.}}{2014}]%
        {gradeffects2014}
\bibfield{author}{\bibinfo{person}{Felipe Ba\~{n}ados Schwerter},
  \bibinfo{person}{Ronald Garcia}, {and} \bibinfo{person}{\'{E}ric Tanter}.}
  \bibinfo{year}{2014}\natexlab{}.
\newblock \showarticletitle{A Theory of Gradual Effect Systems}. In
  \bibinfo{booktitle}{\emph{Proceedings of the 19th ACM SIGPLAN International
  Conference on Functional Programming}} \emph{(\bibinfo{series}{ICFP '14})}.
  \bibinfo{pages}{283--295}.
\newblock


\bibitem[\protect\citeauthoryear{Benton}{Benton}{2005}]%
        {benton05:embedded}
\bibfield{author}{\bibinfo{person}{Nick Benton}.}
  \bibinfo{year}{2005}\natexlab{}.
\newblock \showarticletitle{Embedded Interpreters}.
\newblock \bibinfo{journal}{\emph{Journal of Functional Programming}}
  \bibinfo{volume}{15}, \bibinfo{number}{04} (\bibinfo{year}{2005}),
  \bibinfo{pages}{503--542}.
\newblock


\bibitem[\protect\citeauthoryear{Boyland}{Boyland}{2014}]%
        {boyland14}
\bibfield{author}{\bibinfo{person}{John~Tang Boyland}.}
  \bibinfo{year}{2014}\natexlab{}.
\newblock \showarticletitle{The Problem of Structural Type Tests in a
  Gradual-Typed Language}. In \bibinfo{booktitle}{\emph{Workshop on Foundations
  of Object-Oriented Languages (FOOL), informal proceedings}}.
\newblock


\bibitem[\protect\citeauthoryear{Cimini and Siek}{Cimini and Siek}{2016}]%
        {gradualizer16}
\bibfield{author}{\bibinfo{person}{Matteo Cimini} {and}
  \bibinfo{person}{Jeremy~G. Siek}.} \bibinfo{year}{2016}\natexlab{}.
\newblock \showarticletitle{The Gradualizer: A Methodology and Algorithm for
  Generating Gradual Type Systems}. In \bibinfo{booktitle}{\emph{Proceedings of
  the 43rd Annual ACM SIGPLAN-SIGACT Symposium on Principles of Programming
  Languages}} \emph{(\bibinfo{series}{POPL '16})}.
\newblock


\bibitem[\protect\citeauthoryear{Cimini and Siek}{Cimini and Siek}{2017}]%
        {gradualizer17}
\bibfield{author}{\bibinfo{person}{Matteo Cimini} {and}
  \bibinfo{person}{Jeremy~G. Siek}.} \bibinfo{year}{2017}\natexlab{}.
\newblock \showarticletitle{Automatically Generating the Dynamic Semantics of
  Gradually Typed Languages}. In \bibinfo{booktitle}{\emph{Proceedings of the
  44th ACM SIGPLAN Symposium on Principles of Programming Languages}}
  \emph{(\bibinfo{series}{POPL 2017})}. \bibinfo{pages}{789--803}.
\newblock


\bibitem[\protect\citeauthoryear{Dagand, Tabareau, and Tanter}{Dagand
  et~al\mbox{.}}{2016}]%
        {dagand16}
\bibfield{author}{\bibinfo{person}{Pierre-Evariste Dagand},
  \bibinfo{person}{Nicolas Tabareau}, {and} \bibinfo{person}{\'{E}ric Tanter}.}
  \bibinfo{year}{2016}\natexlab{}.
\newblock \showarticletitle{Partial Type Equivalences for Verified Dependent
  Interoperability} \emph{(\bibinfo{series}{ICFP 2016})}.
  \bibinfo{pages}{298--310}.
\newblock


\bibitem[\protect\citeauthoryear{Dagand, Tabareau, and Tanter}{Dagand
  et~al\mbox{.}}{2018}]%
        {dagand18}
\bibfield{author}{\bibinfo{person}{Pierr-\'{E}variste Dagand},
  \bibinfo{person}{Nicolas Tabareau}, {and} \bibinfo{person}{\'{E}ric Tanter}.}
  \bibinfo{year}{2018}\natexlab{}.
\newblock \showarticletitle{Foundations of dependent interoperability}.
\newblock \bibinfo{journal}{\emph{Journal of Functional Programming}}
  \bibinfo{volume}{28} (\bibinfo{year}{2018}), \bibinfo{pages}{e9}.
\newblock
\urldef\tempurl%
\url{https://doi.org/10.1017/S0956796818000011}
\showDOI{\tempurl}


\bibitem[\protect\citeauthoryear{(Favonia), Benton, and Harper}{(Favonia)
  et~al\mbox{.}}{2017}]%
        {favonia17}
\bibfield{author}{\bibinfo{person}{Keun-Bang~Hou (Favonia)},
  \bibinfo{person}{Nick Benton}, {and} \bibinfo{person}{Robert Harper}.}
  \bibinfo{year}{2017}\natexlab{}.
\newblock \showarticletitle{Correctness of compiling polymorphism to dynamic
  typing}.
\newblock \bibinfo{journal}{\emph{Journal of Functional Programming}}
  \bibinfo{volume}{27} (\bibinfo{year}{2017}).
\newblock


\bibitem[\protect\citeauthoryear{Felleisen and Hieb}{Felleisen and
  Hieb}{1992}]%
        {felleisen-hieb}
\bibfield{author}{\bibinfo{person}{Matthias Felleisen} {and}
  \bibinfo{person}{Robert Hieb}.} \bibinfo{year}{1992}\natexlab{}.
\newblock \showarticletitle{A Revised Report on the Syntactic Theories of
  Sequential Control and State}.
\newblock \bibinfo{journal}{\emph{Theor. Comput. Sci.}} \bibinfo{volume}{103},
  \bibinfo{number}{2} (\bibinfo{year}{1992}), \bibinfo{pages}{235--271}.
\newblock


\bibitem[\protect\citeauthoryear{Findler and Blume}{Findler and Blume}{2006}]%
        {findler-blume06}
\bibfield{author}{\bibinfo{person}{Robby Findler} {and}
  \bibinfo{person}{Matthias Blume}.} \bibinfo{year}{2006}\natexlab{}.
\newblock \showarticletitle{Contracts as Pairs of Projections}. In
  \bibinfo{booktitle}{\emph{International Symposium on Functional and Logic
  Programming (FLOPS)}}.
\newblock


\bibitem[\protect\citeauthoryear{Findler and Felleisen}{Findler and
  Felleisen}{2002}]%
        {findler-felleisen02}
\bibfield{author}{\bibinfo{person}{Robert~Bruce Findler} {and}
  \bibinfo{person}{Matthias Felleisen}.} \bibinfo{year}{2002}\natexlab{}.
\newblock \showarticletitle{Contracts for higher-order functions}. In
  \bibinfo{booktitle}{\emph{{I}nternational {C}onference on {F}unctional
  {P}rogramming ({ICFP}), Pittsburgh, Pennsylvania}}. \bibinfo{pages}{48--59}.
\newblock


\bibitem[\protect\citeauthoryear{F{\"u}hrmann}{F{\"u}hrmann}{1999}]%
        {fuhrmann1999direct}
\bibfield{author}{\bibinfo{person}{Carsten F{\"u}hrmann}.}
  \bibinfo{year}{1999}\natexlab{}.
\newblock \showarticletitle{Direct models of the computational
  lambda-calculus}.
\newblock \bibinfo{journal}{\emph{Electronic Notes in Theoretical Computer
  Science}}  \bibinfo{volume}{20} (\bibinfo{year}{1999}),
  \bibinfo{pages}{245--292}.
\newblock


\bibitem[\protect\citeauthoryear{Garcia, Clark, and Tanter}{Garcia
  et~al\mbox{.}}{2016a}]%
        {garcia16}
\bibfield{author}{\bibinfo{person}{Ronald Garcia}, \bibinfo{person}{Alison~M.
  Clark}, {and} \bibinfo{person}{Eric Tanter}.}
  \bibinfo{year}{2016}\natexlab{a}.
\newblock \showarticletitle{Abstracting Gradual Typing}. In
  \bibinfo{booktitle}{\emph{{ACM} {S}ymposium on {P}rinciples of {P}rogramming
  {L}anguages ({POPL}), St. Petersburg, Florida}}.
\newblock


\bibitem[\protect\citeauthoryear{Garcia, Clark, and Tanter}{Garcia
  et~al\mbox{.}}{2016b}]%
        {AGT}
\bibfield{author}{\bibinfo{person}{Ronald Garcia}, \bibinfo{person}{Alison~M.
  Clark}, {and} \bibinfo{person}{\'{E}ric Tanter}.}
  \bibinfo{year}{2016}\natexlab{b}.
\newblock \showarticletitle{Abstracting Gradual Typing}. In
  \bibinfo{booktitle}{\emph{Proceedings of the 43rd Annual ACM SIGPLAN-SIGACT
  Symposium on Principles of Programming Languages}}
  \emph{(\bibinfo{series}{POPL '16})}.
\newblock


\bibitem[\protect\citeauthoryear{Gronski, Knowles, Tomb, Freund, and
  Flanagan}{Gronski et~al\mbox{.}}{2006}]%
        {gronski06}
\bibfield{author}{\bibinfo{person}{Jessica Gronski}, \bibinfo{person}{Kenneth
  Knowles}, \bibinfo{person}{Aaron Tomb}, \bibinfo{person}{Stephen~N. Freund},
  {and} \bibinfo{person}{Cormac Flanagan}.} \bibinfo{year}{2006}\natexlab{}.
\newblock \showarticletitle{Sage: Hybrid Checking for Flexible Specifications}.
  In \bibinfo{booktitle}{\emph{Scheme and Functional Programming Workshop
  (Scheme)}}. \bibinfo{pages}{93--104}.
\newblock


\bibitem[\protect\citeauthoryear{Henglein}{Henglein}{1994}]%
        {henglein94:dynamic-typing}
\bibfield{author}{\bibinfo{person}{Fritz Henglein}.}
  \bibinfo{year}{1994}\natexlab{}.
\newblock \showarticletitle{Dynamic Typing: Syntax and Proof Theory}.
\newblock \bibinfo{journal}{\emph{Science of Computer Programming}}
  \bibinfo{volume}{22}, \bibinfo{number}{3} (\bibinfo{year}{1994}),
  \bibinfo{pages}{197--230}.
\newblock


\bibitem[\protect\citeauthoryear{Herman, Tomb, and Flanagan}{Herman
  et~al\mbox{.}}{2010}]%
        {herman-tomb-flanagan-2010}
\bibfield{author}{\bibinfo{person}{David Herman}, \bibinfo{person}{Aaron Tomb},
  {and} \bibinfo{person}{Cormac Flanagan}.} \bibinfo{year}{2010}\natexlab{}.
\newblock \showarticletitle{Space-efficient Gradual Typing}.
\newblock \bibinfo{journal}{\emph{Higher Order Symbol. Comput.}}
  \bibinfo{volume}{23}, \bibinfo{number}{2} (\bibinfo{date}{June}
  \bibinfo{year}{2010}).
\newblock


\bibitem[\protect\citeauthoryear{Igarashi, Thiemann, Vasconcelos, and
  Wadler}{Igarashi et~al\mbox{.}}{2017b}]%
        {igarashisession17}
\bibfield{author}{\bibinfo{person}{Atsushi Igarashi}, \bibinfo{person}{Peter
  Thiemann}, \bibinfo{person}{Vasco Vasconcelos}, {and} \bibinfo{person}{Philip
  Wadler}.} \bibinfo{year}{2017}\natexlab{b}.
\newblock \showarticletitle{Gradual Session Types}. In
  \bibinfo{booktitle}{\emph{{I}nternational {C}onference on {F}unctional
  {P}rogramming ({ICFP}), Oxford, United Kingdom}}.
\newblock


\bibitem[\protect\citeauthoryear{Igarashi, Sekiyama, and Igarashi}{Igarashi
  et~al\mbox{.}}{2017a}]%
        {igarashipoly17}
\bibfield{author}{\bibinfo{person}{Yuu Igarashi}, \bibinfo{person}{Taro
  Sekiyama}, {and} \bibinfo{person}{Atsushi Igarashi}.}
  \bibinfo{year}{2017}\natexlab{a}.
\newblock \showarticletitle{On Polymorphic Gradual Typing}. In
  \bibinfo{booktitle}{\emph{{I}nternational {C}onference on {F}unctional
  {P}rogramming ({ICFP}), Oxford, United Kingdom}}.
\newblock


\bibitem[\protect\citeauthoryear{Ina and Igarashi}{Ina and Igarashi}{2011}]%
        {Ina:2011zr}
\bibfield{author}{\bibinfo{person}{Lintaro Ina} {and} \bibinfo{person}{Atsushi
  Igarashi}.} \bibinfo{year}{2011}\natexlab{}.
\newblock \showarticletitle{Gradual typing for generics}. In
  \bibinfo{booktitle}{\emph{Proceedings of the 2011 ACM international
  conference on Object oriented programming systems languages and
  applications}} \emph{(\bibinfo{series}{OOPSLA '11})}.
\newblock


\bibitem[\protect\citeauthoryear{Lehmann and Tanter}{Lehmann and
  Tanter}{2017}]%
        {lehmann17}
\bibfield{author}{\bibinfo{person}{Nico Lehmann} {and}
  \bibinfo{person}{\'{E}ric Tanter}.} \bibinfo{year}{2017}\natexlab{}.
\newblock \showarticletitle{Gradual Refinement Types}. In
  \bibinfo{booktitle}{\emph{Proceedings of the 44th ACM SIGPLAN Symposium on
  Principles of Programming Languages}} \emph{(\bibinfo{series}{POPL 2017})}.
  \bibinfo{pages}{775--788}.
\newblock


\bibitem[\protect\citeauthoryear{New and Licata}{New and Licata}{2018}]%
        {newlicata2018}
\bibfield{author}{\bibinfo{person}{Max~S. New} {and} \bibinfo{person}{Daniel~R.
  Licata}.} \bibinfo{year}{2018}\natexlab{}.
\newblock \showarticletitle{{Call-by-Name Gradual Type Theory}}. In
  \bibinfo{booktitle}{\emph{3rd International Conference on Formal Structures
  for Computation and Deduction (FSCD 2018)}} \emph{(\bibinfo{series}{Leibniz
  International Proceedings in Informatics (LIPIcs)})},
  \bibfield{editor}{\bibinfo{person}{H{\'e}l{\`e}ne Kirchner}} (Ed.),
  Vol.~\bibinfo{volume}{108}.
\newblock
\urldef\tempurl%
\url{http://drops.dagstuhl.de/opus/volltexte/2018/9194}
\showURL{%
\tempurl}


\bibitem[\protect\citeauthoryear{Scott}{Scott}{1972}]%
        {scott71}
\bibfield{author}{\bibinfo{person}{Dana Scott}.}
  \bibinfo{year}{1972}\natexlab{}.
\newblock \showarticletitle{Continuous lattices}.
\newblock In \bibinfo{booktitle}{\emph{Toposes, algebraic geometry and logic}}.
  \bibinfo{pages}{97--136}.
\newblock


\bibitem[\protect\citeauthoryear{Siek, Vitousek, Cimini, and Boyland}{Siek
  et~al\mbox{.}}{2015}]%
        {refined}
\bibfield{author}{\bibinfo{person}{Jeremy Siek}, \bibinfo{person}{Micahel
  Vitousek}, \bibinfo{person}{Matteo Cimini}, {and} \bibinfo{person}{John~Tang
  Boyland}.} \bibinfo{year}{2015}\natexlab{}.
\newblock \showarticletitle{Refined Criteria for Gradual Typing}. In
  \bibinfo{booktitle}{\emph{1st Summit on Advances in Programming Languages}}
  \emph{(\bibinfo{series}{SNAPL 2015})}.
\newblock


\bibitem[\protect\citeauthoryear{Siek and Taha}{Siek and Taha}{2006}]%
        {siek-taha06}
\bibfield{author}{\bibinfo{person}{Jeremy~G. Siek} {and} \bibinfo{person}{Walid
  Taha}.} \bibinfo{year}{2006}\natexlab{}.
\newblock \showarticletitle{Gradual Typing for Functional Languages}. In
  \bibinfo{booktitle}{\emph{Scheme and Functional Programming Workshop
  (Scheme)}}. \bibinfo{pages}{81--92}.
\newblock


\bibitem[\protect\citeauthoryear{Siek and Wadler}{Siek and Wadler}{2010}]%
        {siek-wadler10}
\bibfield{author}{\bibinfo{person}{Jeremy~G. Siek} {and}
  \bibinfo{person}{Philip Wadler}.} \bibinfo{year}{2010}\natexlab{}.
\newblock \showarticletitle{Threesomes, with and without blame}. In
  \bibinfo{booktitle}{\emph{{ACM} {S}ymposium on {P}rinciples of {P}rogramming
  {L}anguages ({POPL}), Madrid, Spain}}. \bibinfo{pages}{365--376}.
\newblock


\bibitem[\protect\citeauthoryear{Swamy, Fournet, Rastogi, Bhargavan, Chen,
  Strub, and Bierman}{Swamy et~al\mbox{.}}{2014}]%
        {swamy14}
\bibfield{author}{\bibinfo{person}{Nikhil Swamy}, \bibinfo{person}{C{\'e}dric
  Fournet}, \bibinfo{person}{Aseem Rastogi}, \bibinfo{person}{Karthikeyan
  Bhargavan}, \bibinfo{person}{Juan Chen}, \bibinfo{person}{Pierre-Yves Strub},
  {and} \bibinfo{person}{Gavin~M. Bierman}.} \bibinfo{year}{2014}\natexlab{}.
\newblock \showarticletitle{Gradual typing embedded securely in
  {J}ava{S}cript}. In \bibinfo{booktitle}{\emph{{ACM} {S}ymposium on
  {P}rinciples of {P}rogramming {L}anguages ({POPL}), San Diego, California}}.
  \bibinfo{pages}{425--438}.
\newblock


\bibitem[\protect\citeauthoryear{Thatte}{Thatte}{1990}]%
        {thatte90}
\bibfield{author}{\bibinfo{person}{Satish Thatte}.}
  \bibinfo{year}{1990}\natexlab{}.
\newblock \showarticletitle{Quasi-static typing}. In
  \bibinfo{booktitle}{\emph{{ACM} {S}ymposium on {P}rinciples of {P}rogramming
  {L}anguages ({POPL})}}. \bibinfo{pages}{367--381}.
\newblock


\bibitem[\protect\citeauthoryear{Tobin-Hochstadt and Felleisen}{Tobin-Hochstadt
  and Felleisen}{2006}]%
        {tobin-hochstadt06}
\bibfield{author}{\bibinfo{person}{Sam Tobin-Hochstadt} {and}
  \bibinfo{person}{Matthias Felleisen}.} \bibinfo{year}{2006}\natexlab{}.
\newblock \showarticletitle{Interlanguage Migration: From Scripts to Programs}.
  In \bibinfo{booktitle}{\emph{Dynamic Languages Symposium (DLS)}}.
  \bibinfo{pages}{964--974}.
\newblock


\bibitem[\protect\citeauthoryear{Tobin-Hochstadt and Felleisen}{Tobin-Hochstadt
  and Felleisen}{2008}]%
        {tobin-hochstadt08}
\bibfield{author}{\bibinfo{person}{Sam Tobin-Hochstadt} {and}
  \bibinfo{person}{Matthias Felleisen}.} \bibinfo{year}{2008}\natexlab{}.
\newblock \showarticletitle{The Design and Implementation of Typed Scheme}. In
  \bibinfo{booktitle}{\emph{{ACM} {S}ymposium on {P}rinciples of {P}rogramming
  {L}anguages ({POPL}), San Francisco, California}}.
\newblock


\bibitem[\protect\citeauthoryear{Wadler and Findler}{Wadler and
  Findler}{2009}]%
        {wadler-findler09}
\bibfield{author}{\bibinfo{person}{Philip Wadler} {and}
  \bibinfo{person}{Robert~Bruce Findler}.} \bibinfo{year}{2009}\natexlab{}.
\newblock \showarticletitle{Well-typed programs can't be blamed}. In
  \bibinfo{booktitle}{\emph{European Symposium on Programming (ESOP)}}.
  \bibinfo{pages}{1--16}.
\newblock


\bibitem[\protect\citeauthoryear{Wolff, Garcia, Tanter, and Aldrich}{Wolff
  et~al\mbox{.}}{2011}]%
        {Wolff:2011:GT}
\bibfield{author}{\bibinfo{person}{Roger Wolff}, \bibinfo{person}{Ronald
  Garcia}, \bibinfo{person}{\'{E}ric Tanter}, {and} \bibinfo{person}{Jonathan
  Aldrich}.} \bibinfo{year}{2011}\natexlab{}.
\newblock \showarticletitle{Gradual Typestate}. In
  \bibinfo{booktitle}{\emph{Proceedings of the 25th European Conference on
  Object-oriented Programming}} \emph{(\bibinfo{series}{ECOOP'11})}.
\newblock


\end{thebibliography}


\end{document}
